\definecolor{darkgreen}{rgb}{0,0.5,0}
\newcommand{\remove}[1]{}
\newtheorem{theorem}{Theorem}[section]
\newtheorem{claim}[theorem]{Claim}
\newtheorem{lemma}[theorem]{Lemma}
\newtheorem{corollary}[theorem]{Corollary}
\newtheorem{definition}[theorem]{Definition}
\newtheorem{question}[theorem]{Question}
\newcommand{\poly}{\mathop\mathrm{poly}}
\newcommand{\local}{\textsc{Local}\xspace}
\newcommand{\congest}{\textsc{Congest}\xspace}
\newcommand{\mtodo}[1]{\textcolor{blue}{[TODO, Michal: #1]}}
\newcommand{\cov}{\mathsf{Cov}}
\newcommand{\cut}{\mathsf{Cut}}
\newcommand{\cutInfo}{\mathsf{CutInfo}}
\newcommand{\fragInfo}{\mathsf{FragInfo}}
\newcommand{\sketch}{\mathsf{Sketch}}
\newcommand\footnoteref[1]{\protected@xdef\@thefnmark{\ref{#1}}\@footnotemark}
\begin{document}

\begin{titlepage}

\title{A Nearly Time-Optimal Distributed Approximation of \\ Minimum Cost $k$-Edge-Connected Spanning Subgraph}
\author{Michal Dory\\ University of Haifa \\ {mdory@ds.haifa.ac.il} \\ \and Mohsen Ghaffari \\ MIT \\ {ghaffari@mit.edu}}
%\author{}
\date{}
\maketitle

\begin{abstract}
The minimum-cost $k$-edge-connected spanning subgraph ($k$-ECSS) problem is a generalization and strengthening of the well-studied minimum-cost spanning tree (MST) problem. While the round complexity of distributedly computing the latter has been well-understood, the former remains mostly open, especially as soon as $k\geq 3$. 

In this paper, we present the first distributed algorithm that computes an approximation of %the minimum-cost $k$-edge-connected spanning subgraph ($k$-ECSS) 
$k$-ECSS in sublinear time for general $k$. Concretely, we describe a randomized distributed algorithm that, in $\tilde{O}(k(D+k\sqrt{n}))$ rounds, computes a $k$-edge-connected spanning subgraph whose cost is within an $O(\log n\log k)$ factor of optimal. Here, $n$ and $D$ denote the number of vertices and diameter of the graph, respectively. This time complexity is nearly optimal for any $k=\poly(\log n)$, almost matching an $\tilde{\Omega}(D+\sqrt{n/k})$ lower bound.  Our algorithm is the first to achieve a sublinear round complexity for $k\geq 3$. We note that this case is considerably more challenging than the well-studied and well-understood $k=1$ case---better known as MST---and the closely related $k=2$ case. 

Our algorithm is based on reducing the $k$-ECSS problem to $k$ set cover instances, in which we gradually augment the connectivity of the spanning subgraph. To solve each set cover instance, we combine new structural observations on minimum cuts with graph sketching ideas. One key ingredient in our algorithm is a novel structural lemma that allows us to compress the information about \emph{all minimum cuts} in a graph into a succinct representation, which is computed in a decentralized fashion. We hope that this succinct representation may find applications in other computational settings or for other problems.
\end{abstract}
\thispagestyle{empty}
\newpage

\tableofcontents
\thispagestyle{empty}
\newpage

\end{titlepage}

\section{Introduction}
%\subsection{Background}  
The minimum cost $k$-edge connected spanning subgraph problem is one of the most central problems in the area of network design~\cite{jain2001factor, khuller1994biconnectivity,cheriyan2000approximating, gabow2009approximating,gabow2012iterated,goemans1994improved,chalermsook2022approximating}. Consider a network of computers with a prescribed list of possible connection links, where using each particular link incurs a certain cost---e.g., a monetary cost that should be paid to the Internet service provider. In the area of \textit{network design}, the task is to choose a set of links (e.g., which will be purchased from the provider) that minimizes our cost while meeting certain (connectivity) requirements. For instance, we might want that the chosen links form a connected spanning subgraph; this is exactly the well-known minimum-cost spanning tree (MST) problem. Or we might want a stronger subgraph, which remains connected even if a small number of edges fail. The latter is known as the \emph{minimum-cost $k$-edge-connected spanning subgraph} ($k$-ECSS) problem and is the focus of this paper.  More precisely, in the $k$-ECSS problem, the objective is to find a minimum cost $k$-edge-connected spanning subgraph $H$ of the given graph $G$. As standard, we say that a graph $H$ is $k$-edge-connected if it remains connected after the removal of any $k-1$ edges. Notice that MST is a special case of $k$-ECSS where $k=1$.

\paragraph{Centralized Algorithms.} Centralized approximation algorithms for $k$-ECSS are based on various techniques including a reduction of the problem to a matroid intersection problem in directed graphs \cite{khuller1994biconnectivity}, the primal-dual algorithm of Goemans et al. \cite{goemans1994improved}, and the iterative rounding algorithm of Jain \cite{jain2001factor}. At best these algorithms obtain a 2-approximation for weighted $k$-ECSS \cite{jain2001factor, khuller1994biconnectivity}. We note that these centralized algorithms are somewhat slow:  The fastest 2-approximation algorithm~\cite{khuller1994biconnectivity} takes $O(mnk)$ time. A recent work of Chalermsook et al.~\cite{chalermsook2022approximating} provides a $2+\epsilon$ approximation, for constant $\epsilon>0$, in $\tilde{O}(m+k^2 n^{1.5})$-time. More importantly for this paper, none of these algorithms is suitable for the distributed setting.

\paragraph{Distributed Algorithms.} Since a primary motivation for the $k$-ECSS problem comes from the reliability of networks--- e.g., with the goal of building a low-cost (communication) backbone resistant to failures---it is imperative to study the problem also from the perspective of distributed algorithms that allow the network of computers to autonomously compute the solution to the $k$-ECSS problem. We focus on the distributed \congest model: The communication network is equal to the input graph $G$, and vertices communicate by exchanging $O(\log{n})$ bit messages with their neighbors in synchronous rounds. The input and output are local. In the beginning, each vertex only knows the weights of the edges adjacent to it, and at the end, it should know which of the edges adjacent to it are in the spanning subgraph constructed. The main complexity measure is the number of communication rounds. 

While there has been some recent progress in understanding the complexity of $k$-ECSS in the \congest model, currently there are efficient algorithms only for very special cases of the problem, as we overview next. These special cases are significantly easier from a technical perspective. 

\begin{itemize}
    \item\textbf{Unweighted Graphs:} First, if one wants to solve the \emph{unweighted} version of the $k$-ECSS problem, a very simple algorithm based on iteratively constructing spanners gives an $O(1)$-approximation in $O(k \log^{1+o(1)}{n})$ time \cite{un_kECSS}. This was recently improved to a $(2+\epsilon)$-approximation in $\poly(\log n)$ rounds by Bezdirghin et al~\cite{bezdrighin2022deterministic}. Unfortunately, these approaches do not extend to weighted graphs at all, because in the weighted case even adding one redundant edge can be too expensive and might destroy the approximation guarantee. 
    
    \item\textbf{Weighted Graphs with $k=1$:} The case $k=1$ in weighted graphs is just the minimum weight spanning tree problem (MST), which is a widely studied and well-understood problem in distributed computing. See, e.g., \cite{gallager1983distributed,garay1998sublinear,kutten1998fast,elkin2006unconditional, DBLP:conf/podc/Elkin17, pandurangan2017time,fischer2021distributed}. It is well-known that the MST problem can be solved in $O(D + \sqrt{n} \log^*{n})$ time \cite{kutten1998fast} for graphs with $n$ vertices and diameter $D$, and that this time complexity is nearly tight, essentially matching an $\Omega(D + \sqrt{\frac{n}{\log{n}}})$ lower bound \cite{peleg2000near, elkin2006unconditional, sarma2012distributed}. 
    \item\textbf{Weighted Graphs with $k=2$:} For the case $k=2$, a recent line of research \cite{censor2020fast,dory2018distributed,dory2019improved} leads to a $(5+\epsilon)$-approximation in $\tilde{O}(D+\sqrt{n})$ time \cite{dory2019improved}, nearly matching the complexity of the MST problem. A 3-approximation can be obtained in $O(n)$ time \cite{censor2020fast}.
    
    \item\textbf{Weighted Graphs with $k\geq 3$:} For larger values of $k$, the fastest algorithms require at least \emph{linear} time. The first algorithm for the problem is an $O(knD)$-round $O(\log{k})$-approximation algorithm \cite{shadeh2009distributed} based on the primal-dual algorithm of Goemans et al. \cite{goemans1994improved}. A later result shows an $O(k \log{n})$-approximation in $\tilde{O}(kn)$ rounds \cite{dory2018distributed}. This complexity remains at least linear and also it is still way above the $\tilde{O}(D+\sqrt{n})$ complexity obtained for $k \leq 2$. 
\end{itemize}
Despite the above progress, the general question (general $k$ and arbitrary weights) remains mostly open. In particular, the following remains unanswered:

\begin{question} \label{question_kECSS}
Is it possible to solve the $k$-ECSS problem in sublinear time for $k \geq 3$?
\end{question}

As we will discuss later in-depth, the core of the technical challenge in answering this question lies in the following issue: in solving the $k$-ECSS problem, we somehow have to deal with \textit{all} the minimum size cuts of the hitherto selected subgraph and make sure to include edges so that no cut of size  $k'=k-1$ remains. The complexity of this task changes significantly once $k\geq 3$. The intuitive reason is that when the minimum cut size (in the subgraph selected so far) is $k'=k-1\geq 2$, the minimum cuts can have much more complex structures and overlaps, in comparison with the special cases where $k'\in\{0, 1\}$. In those two special cases, there are fairly simple characterizations of all min cuts, which open the road for efficient algorithms: In particular, when $k'=0$, minimum cuts are exactly defined by the connected components. When $k'=1$, if we fix any spanning tree $T$, minimum-cuts are exactly those edges $e$ of $T$ for which there is no other edge between the two components of $T\setminus \{e\}$. When $k\geq 3$ and thus $k'=k-1\geq 2$, minimum cuts can have much more complicated structures. We will discuss this case in \Cref{subsec:technique}.

\subsection{Our Results}

In this paper, we answer \Cref{question_kECSS} in the affirmative, proving the following.

\begin{restatable}{theorem}{mainThm} \label{thm:main}
There is a randomized distributed algorithm in the \congest model that computes an $O(\log{k}\log{n})$-approximation of the minimum cost $k$-edge-connected spanning subgraph problem in $\tilde{O}(k(D+k\sqrt{n}))$ time, with high probability.\footnote{As standard, the phrase \lq\lq with high probability\rq\rq\, (w.h.p) indicates probability at least $1-\frac{1}{n^c}$ for a constant $c\geq 1$.}
\end{restatable}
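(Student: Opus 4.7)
The plan is to reduce the $k$-ECSS problem to a chain of $k$ \emph{connectivity-augmentation} subproblems. Starting from the empty edge set $H_0$, I would iteratively construct $H_1 \subseteq H_2 \subseteq \cdots \subseteq H_k$, where each $H_i$ is $i$-edge-connected, by choosing a (near-)minimum-cost set of edges from $G \setminus H_{i-1}$ that \emph{covers} every cut of $H_{i-1}$ of size $i-1$ (the current minimum cuts). Because any $k$-ECSS is simultaneously $i$-edge-connected for every $i \le k$, the optimum of the $i$-th augmentation instance is at most the optimum $k$-ECSS cost, so summing an $O(\alpha)$-approximation across all levels naively gives $O(k\alpha)$. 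To shave one factor down to $O(\log k)$, I would group levels geometrically in the value of the residual connectivity gap and reuse an augmenting subgraph across a block of levels, so the total cost telescopes to $O(\log k)$ times the per-level approximation $O(\log n)$.

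The per-level augmentation subproblem is naturally a \emph{set cover} instance: the elements to be covered are the minimum cuts of $H_{i-1}$, the candidate sets are non-$H_{i-1}$ edges of $G$ (each edge $e$ covers every min cut it crosses) and the weights are the edge weights. A greedy-style or LP-based set cover distributed procedure then yields the $O(\log n)$ factor. The difficulty is that $H_{i-1}$ may have up to $\binom{n}{2}$ min cuts with a complicated cactus structure, so one cannot afford to enumerate them. The algorithm must instead be able, for each candidate edge $e$ and each partial cover chosen so far, to efficiently estimate the \emph{marginal} number of still-uncovered min cuts that $e$ would cover, in near-MST time $\tilde{O}(D+\sqrt{n})$.

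This is where the key structural contribution enters: a \textbf{succinct representation of all minimum cuts} of $H_{i-1}$. My plan here is to construct a low-diameter decomposition of $H_{i-1}$ into $\tilde{O}(\sqrt{n})$ fragments, each with a rooted spanning tree of diameter $\tilde{O}(\sqrt{n})$, and then prove that every minimum cut of $H_{i-1}$ can be encoded by (a) a small set of ``cut'' fragment-tree edges plus (b) $O(k)$ inter-fragment correction edges drawn from a short list. Local AGM-style linear sketches per vertex of size $\poly(\log n, k)$ can then be aggregated up each fragment tree and then across the fragment skeleton to evaluate, for any candidate augmenting edge, which min cuts it crosses. Together with the pipelined MST-style skeleton, one augmentation level costs $\tilde{O}(D + k\sqrt{n})$ rounds; the $k$ levels give the claimed $\tilde{O}(k(D+k\sqrt{n}))$.

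The main obstacle I anticipate is the structural lemma itself: showing that \emph{all} min cuts of an arbitrary weighted graph, not merely those of a single canonical family, admit a sketch of size $\poly(\log n, k)$ per vertex that is both (i) computable in $\tilde{O}(D + k\sqrt{n})$ rounds in \congest, and (ii) expressive enough to answer the marginal-coverage queries required by the set cover subroutine. Guaranteeing (ii) requires a careful exploitation of the cactus structure so that crossings of a candidate edge with all min cuts can be expressed as a fixed-arity arithmetic expression over the local sketches along its two endpoints' root-paths; this is the step I would expect to require the bulk of the technical work.
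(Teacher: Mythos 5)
Your top-level architecture coincides with the paper's: reduce $k$-ECSS to $k$ successive augmentation steps, treat each step as a set cover over the current minimum cuts, run a distributed greedy with $O(\log n)$ loss per step, and make the whole thing feasible by giving each candidate edge a fast $O(1)$-approximate count of the min cuts it covers via linear sketches aggregated over a fragment decomposition. However, the proposal leaves the two load-bearing pieces unresolved, and for both of them the mechanism you gesture at differs from (or is weaker than) what actually makes the argument go through. First, the succinct representation of all min cuts: you explicitly defer this ("the step I would expect to require the bulk of the technical work") and point toward the cactus structure, but no sublinear-round distributed cactus construction is known, and the paper deliberately avoids it. Instead it first applies Karger's tree-packing reduction so that every min cut $2$-respects one of $O(\log^{2.2} n)$ spanning trees, and then proves the key structural lemma: for a fixed tree edge $t$, all partners $t'$ with $\cut(t,t')=k$ lie on a single tree path $P_t$, and they are exactly the edges in $O(k)$ subpaths $S^t_i$ satisfying a single cover-value equation $\cov(t')=\cov(t)+k-2i$. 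This $O(k\log n)$-bit per-edge description, plus the identity $\cut(t,t')=\cov(t)+\cov(t')-2\cov(t,t')$ and the observation that $e$ covers $\{t,t'\}$ iff exactly one of $t,t'$ lies on $P_e$ (so XOR-linear sketches of cut names cancel precisely when both or neither edge is on $P_e$), is what makes the marginal-coverage queries answerable in $\tilde O(D+k\sqrt n)$ rounds. Without this lemma (or an equivalent), your plan does not yet constitute a proof.

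Second, the $O(\log k)$ factor. Your proposal to "group levels geometrically and reuse an augmenting subgraph across a block of levels" is not the paper's argument and I do not see how to make it sound: an edge set that raises connectivity from $i-1$ to $i$ covers cuts defined relative to $H_{i-1}$, and there is no reason it covers the (different) minimum cuts of the later subgraphs in the block, so nothing telescopes. The paper instead compares each level's greedy output to the \emph{fractional} optimum of that level: every cut of size $i-1$ in $H_{i-1}$ is crossed by at least $k-(i-1)$ edges of $H^*\setminus H_{i-1}$, so assigning $x(e)=1/(k-i+1)$ to those edges is a feasible fractional cover of cost at most $w(H^*)/(k-i+1)$; summing $\sum_{i}1/(k-i+1)=O(\log k)$ yields the claimed approximation. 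You should replace your block-reuse idea with this harmonic-sum LP argument (or something equivalently rigorous).
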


\paragraph{Remark on near-optimality  of time complexity.} For $k=\poly(\log n)$, we get a complexity of $\tilde{O}(D+\sqrt{n})$, nearly matching the complexity of previous algorithms for $k \leq 2$. The complexity is nearly tight due to a lower bound we show in \Cref{sec:lowerBounds}. Concretely, by extending lower bounds from \cite{sarma2012distributed,censor2020fast,DBLP:conf/wdag/GhaffariK13}, we prove that obtaining any polynomial approximation requires $\tilde{\Omega}(D+\sqrt{n})$ rounds for graphs that allow parallel edges, and $\tilde{\Omega}(D+\sqrt{n/k})$ rounds for simple graphs. Moreover, our algorithm is the first sublinear time algorithm for a wide range of parameters $k$.

\paragraph{Remark on approximation factor.} We remark that while there are centralized algorithms that obtain a 2-approximation for $k$-ECSS, obtaining such an approximation for non-centralized algorithms seems a highly challenging task. All the previous distributed approximation  algorithms for general $k$-ECSS had higher approximation ratios of $O(k\log{n})$ \cite{dory2018distributed} or $O(\log{k})$ \cite{shadeh2009distributed} (with a much higher $O(knd)$ time complexity). Such high approximation have been studied in other modern computation settings, e.g., in the online centralized setting, where an $O(k \log^3{n})$-approximation was obtained by Gupta et al.~\cite{gupta2012online}.
For a comparison with prior work, see \Cref{table_results}.

\begin{table}[h!]
\centering
\begin{tabular}{ |p{5.5cm}|p{2cm}|p{2.5cm}|p{3cm}|  }
 \hline
 %\multicolumn{4}{|c|}{Algorithms for weighted $k$-ECSS} \\
 %\hline
 Reference & Variant & Approximation & Time complexity\\
 \hline
  Kutten and Peleg \cite{kutten1998fast} & $k=1$ & exact & $\tilde{O}(D+\sqrt{n})$\\
  Censor-Hillel and Dory \cite{censor2020fast} & $k=2$  &  $3$ &  $O(n)$\\
  Dory and Ghaffari \cite{dory2019improved} & $k=2$  &  $5+\epsilon$ &  $\tilde{O}(D+\sqrt{n})$\\
  Shadeh \cite{shadeh2009distributed} & general $k$ & $O(\log{k})$ & $O(knD)$\\
  Dory \cite{dory2018distributed} & general $k$ & $O(k\log{n})$ & $\tilde{O}(kn)$\\
  \textbf{Here} & general $k$ & $O(\log{k}\log{n})$ & $\tilde{O}(k(D+k\sqrt{n}))$\\
 \hline
\end{tabular}
 \caption{\small Algorithms for weighted $k$-ECSS. Note that for $k\geq 3$, prior to this work, no sublinear time algorithm was known.  Also, we believe that cases with small values of $k$---e.g., constant or at most polylogarithmic---are arguably the most relevant cases, from a practical perspective.}
\label{table_results}
\end{table}

\paragraph{Remark on bicriteria approximation for larger $k$.} We also note that by using a simple combination of \Cref{thm:main} with Karger's well-known uniform edge sampling results~\cite{karger1993global}, we can obtain the following bicriteria algorithm, which has a better round complexity and approximation guarantee for larger $k$. For the proof, see \Cref{sec:puttingEverything}.

\begin{restatable}{corollary}{corBic}\label{crl:bicriteria}
There is a randomized distributed algorithm in the \congest model that computes a $k(1-o(1))$-edge-connected spanning subgraph whose cost is within an $O(\log n \log\log n)$ factor of the minimum cost $k$-edge-connected spanning subgraph, in $\tilde{O}(k(D+\sqrt{n}))$ time, with high probability.
\end{restatable}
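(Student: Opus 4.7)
The plan is to combine \Cref{thm:main} with Karger's uniform edge-sampling theorem via a random partition of the edges of $G$ into edge-disjoint buckets. Fix $\epsilon = 1/\log\log n$ and a polylogarithmic per-bucket target connectivity $k'' = \Theta(\log n/\epsilon^2) = \poly(\log n)$. If $k \leq O(k'')$, we simply invoke \Cref{thm:main} with parameter $k$: the round complexity is $\tilde O(k(D + k\sqrt n)) = \tilde O(D+\sqrt n)$, the approximation is $O(\log k \log n) = O(\log n \log\log n)$, and the subgraph is fully $k$-edge-connected, which already proves the corollary in this regime. In the interesting regime $k \gg k''$, set $t = \lfloor (1-\epsilon) k / k'' \rfloor = \Theta(k/k'')$ and assign each edge of $G$ independently and uniformly to one of $t$ buckets (a single round in which the lower-ID endpoint of each edge picks a random bucket and notifies its neighbor). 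Let $G_i = (V, E_i)$ be the induced subgraph on bucket $i$. Sequentially, for $i = 1,\dots,t$, invoke \Cref{thm:main} on $G_i$ with connectivity parameter $k''$ to produce $H_i$, and output $H = \bigcup_i H_i$.

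For correctness, each bucket $G_i$ is an independent $(1/t)$-sample of $G$, and $(1/t)\cdot k \geq k''/(1-\epsilon) = \Omega(\log n/\epsilon^2)$. Karger's min-cut sampling theorem, together with a union bound over the $t \leq \poly(n)$ buckets, yields w.h.p.\ and simultaneously for all $i$: (i) every cut in $G_i$ has size at least $(1-\epsilon)k/t \geq k''$, so $G_i$ is itself $k''$-edge-connected and \Cref{thm:main} is applicable; and (ii) for any fixed optimal $k$-ECSS $H^*$ of $G$ of cost $C$, the restriction $H^* \cap G_i$ is a $k''$-edge-connected spanning subgraph of $G_i$. From (ii), $\mathrm{OPT}_{k''}(G_i) \leq \mathrm{cost}(H^* \cap G_i)$, and since the buckets partition $H^*$, summing gives $\sum_i \mathrm{OPT}_{k''}(G_i) \leq C$ deterministically (so no per-bucket cost concentration is needed). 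Combined with the $O(\log n \log k'') = O(\log n \log\log n)$-approximation guarantee of \Cref{thm:main} in each bucket, this yields $\mathrm{cost}(H) \leq O(\log n \log\log n) \cdot C$. On the connectivity side, edge-disjointness of the $H_i$'s and the $k''$-connectivity of each give $|H \cap X| \geq t\, k'' \geq (1-\epsilon)k = k(1-o(1))$ for every cut $X$, so $H$ is $k(1-o(1))$-edge-connected.

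For the round complexity, each invocation of \Cref{thm:main} on $G_i$ (using the same vertex set $V$ and a BFS tree of $G$ for global coordination) has parameter $k'' = \poly(\log n)$, so its cost is $\tilde O(k''(D + k''\sqrt n)) = \tilde O(D + \sqrt n)$ rounds. Running the $t = O(k/\log n)$ invocations sequentially on pairwise edge-disjoint subgraphs gives the claimed total of $\tilde O(k(D + \sqrt n))$; sequential rather than parallel execution is used to avoid bandwidth contention at the shared BFS tree. The main subtlety, and the step I expect to require the most care, is ensuring that conclusions (i) and (ii) of Karger's theorem hold uniformly over all $t$ buckets; this is handled by taking the hidden constant in $k'' = \Theta(\log n/\epsilon^2)$ large enough that the per-bucket failure probability is $n^{-\omega(1)}$, and then union-bounding over the $\poly(n)$ buckets. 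A secondary technical point is that \Cref{thm:main} must be invoked treating $G_i$ as the ``input graph'' (whose optimal $k''$-ECSS cost we approximate) while still using the underlying communication network of $G$ to respect the global diameter $D$, which is straightforward given that $V$ is unchanged.
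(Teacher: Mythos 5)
Your proposal is correct and follows essentially the same route as the paper's proof: randomly partition the edges into $\Theta(k/\poly(\log n))$ buckets, run \Cref{thm:main} with a polylogarithmic connectivity target in each bucket, take the union, and use Karger's uniform sampling theorem both to certify that each bucket (and the restriction of the optimal solution to it) is sufficiently connected and to charge each bucket's cost to the corresponding portion of the optimum. Your write-up is somewhat more careful than the paper's on the bookkeeping (the union bound over buckets, the deterministic summation $\sum_i \mathrm{OPT}_{k''}(G_i)\leq \mathrm{cost}(H^*)$, and the applicability of \Cref{thm:main} to each $G_i$), but there is no substantive difference in approach.
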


%\footnote{\label{note1} We note that this complexity can be improved to  $\tilde{O}(D+\sqrt{nk}))$ using a simple rebalancing of the parameters of underlying MST algorithm\cite{kutten1998fast}. This is not easy to elaborate succinctly as it requires recalling the algorithm of \cite{kutten1998fast}. We thus defer the formal description to a full version of this paper. For the reader familiar with that algorithm, we note the change is to simply define \textit{large} fragments to be those with size greater than $\sqrt{nk}$ (instead of size greater than $\sqrt{n}$, as standard). Then, small fragments communicate only inside small fragments, which are in edge disjoint parts. There are overall $\tilde{O}(k) \cdot \frac{n}{\sqrt{nk}}=\tilde{O}(\sqrt{nk})$ large fragments and their messages can be pipelined using the global BFS tree in $\tilde{O}(D+\sqrt{nk}))$ time.}

\remove{
\begin{corollary}\label{crl:bicriteria}
There is a distributed algorithm that, in $\tilde{O}(k(D+\sqrt{n}))$ time\footnote{\label{note1} We note that this complexity can be improved to  $\tilde{O}(D+\sqrt{nk}))$ using a simple rebalancing of the parameters of underlying MST algorithm\cite{kutten1998fast}. This is not easy to elaborate succinctly as it requires recalling the algorithm of \cite{kutten1998fast}. We thus defer the formal description to a full version of this paper. For the reader familiar with that algorithm, we note the change is to simply define \textit{large} fragments to be those with size greater than $\sqrt{nk}$ (instead of size greater than $\sqrt{n}$, as standard). Then, small fragments communicate only inside small fragments, which are in edge-disjoint parts. There are overall $\tilde{O}(k) \cdot \frac{n}{\sqrt{nk}}=\tilde{O}(\sqrt{nk})$ large fragments and their messages can be pipelined using the global BFS tree in $\tilde{O}(D+\sqrt{nk}))$ time.} in the \congest model, computes a $k(1-o(1))$-edge-connected spanning subgraph whose cost is within an $O(\log n \log\log n)$ factor of the minimum cost $k$-edge-connected spanning subgraph, with high probability.
\end{corollary}
}

\remove{
\begin{proof}[Proof of \Cref{crl:bicriteria}]
Fix a value $\epsilon=\Theta(1/\log n)$. For $k=O(\log n/\epsilon^2)$, the result follows directly from \Cref{thm:main}. Suppose $k=\Omega(\log n/\epsilon^2)$. Let $\rho=C\log n/\epsilon^2$, where $C\geq 1$ is a large enough constant. Partition the edges of the graph randomly into $\frac{k}{\rho}$ parts, where each edge is placed in one of these parts randomly. In each part, we use \Cref{thm:main} to compute an $O(\log n\log\log n)$ approximation of $(\rho(1-\epsilon))$-connected spanning subgraph. The algorithm for each part would run in $\tilde{O}(D+\sqrt{n})$ rounds, and since there are no more than $k$ algorithms, we can run them all in $\tilde{O}(k(D+\sqrt{n}))$ rounds\footnoteref{note1}. Overall, this is a graph with $\frac{k}{\rho} \cdot \rho(1-\epsilon)= k(1-\epsilon)$ connectivity. Let us now argue about the cost. Let $H$ be the minimum-cost $k$-edge-connected spanning subgraph. Then, for each part $i\in [k/\rho]$, the edges of $H$ in part $i$ form a $\rho(1-\epsilon)$-edge connected spanning subgraph. This follows from Karger's random edge sampling result~\cite{karger1993global} which shows that if we sample edges randomly and independently such that the expected minimum-cut size is at least $C\log n/\epsilon^2$, then the deviation from expectation in the size of each cut is at most a $1\pm \epsilon$ factor. Hence, the algorithm that we had pays a cost within an $O(\log{n}\log{\rho})=O(\log n\log\log n)$ factor of the portion of $H$ in part $i$. This means, overall, our algorithm pays a cost within an $O(\log n\log\log n)$ factor of the cost of $H$.
\end{proof}
}

\subsection{Our Techniques}\label{subsec:technique}
While there are many centralized algorithms for $k$-ECSS, they do not seem applicable in the distributed setting. Instead, the high-level outline of our algorithm is to use a simple and natural approach, based on \cite{dory2018distributed}, where we iteratively augment the connectivity of the chosen subgraph, gradually increasing connectivity from $0$ to $k$. In particular, we start with an empty subgraph $H$, and we augment its connectivity in $k$ iterations, where in iteration $i$ we augment the connectivity from $i-1$ to $i$; this is indeed the technical core of the problem. This \textit{augmentation problem} can be seen as a special case of the \emph{set cover} problem, where the goal is to \emph{cover} cuts of size $i-1$ in $H$ by a low-cost set of edges, according to the following definition. 
Here we say that an edge $e$ covers a cut $C$ in $H$ if $(H \setminus C) \cup \{e\}$ is connected. 
To obtain a good approximation, our goal is to add a low-cost set of edges that augments the connectivity. 

To decide which edges to add, we compare edges according to their \emph{cost-effectiveness}, defined as $\rho(e)=\frac{|C_e|}{w(e)}$, where $C_e$ is the set of minimum cuts that $e$ covers in $H$, and $w(e)$ is the weight of $e$. Intuitively, adding edges with large values of cost-effectiveness allows us to cover many cuts while paying a low cost. To exploit this, we use an approach similar to greedy set cover solutions, in which per step we add to the augmentation a set of edges with near-maximal cost-effectiveness.

The main challenge in obtaining a fast algorithm using this approach is to design an efficient algorithm for \textit{estimating the cost-effectiveness of edges}, and that is exactly the main technical contribution of our paper. In \cite{dory2018distributed}, it is shown that in the special case that the minimum cut size is 1, the cost-effectiveness can be computed in $\tilde{O}(D+\sqrt{n})$ time. That leads to an $\tilde{O}(D+\sqrt{n})$-round algorithm for 2-ECSS. However, dealing with cuts of size at least 2 is much more challenging, due to the possibility of more complex interactions and overlaps between these cuts, and no sublinear-time or even linear time algorithm is known for general $k$.

%In \cite{dory2018distributed}, instead of directly dealing with this challenge in a distributed setting, the author resorts to a simple $\tilde{O}(kn)$-round algorithm that broadcasts the subgraph $H$ to the whole graph. However, obtaining a sublinear algorithm is fundamentally more challenging and requires a new approach.   

\paragraph{Our Approach for Approximating Cost-Effectivenesses.} Our main technical contribution in this paper is to provide a sublinear-time algorithm for approximating cost-effectivenesses. This involves presenting a structural lemma about min-cuts and providing a novel succinct representation of them, which can also be computed efficiently using a relatively simple and local scheme. We hope that these properties may find applications in other problems and/or computational settings.

Intuitively, to compute the cost-effectiveness, we may want to be able to learn about all minimum cuts in the graph $H$, and then let each edge $e$ check which ones of them it covers. However, a direct implementation of this approach seems quite expensive. We do something different. Our algorithm has two main parts. 
\begin{enumerate}
    \item \textbf{Succinct Representation of Minimum Cuts.} First, we show a structural lemma, which allows us to compress efficiently information about minimum cuts. Using it, each edge learns a small amount of information that represents the minimum cuts it participates in.
    \item \textbf{Estimating the Cost-Effectiveness via Graph Sketches.} Second, we show that we can use the information computed above combined with graph sketches to estimate for each edge how many cuts it covers.
\end{enumerate}
In \Cref{sec:overview}, we explain in detail the high-level ideas behind these ingredients.

%First, each edge learns a small amount of information that represents all the minimum cuts it participates in. Second, we use the information computed to estimate the cost-effectiveness of edges.
%We next discuss the key ingredients.

\remove{

\paragraph{Step 1: Reducing to 2-respecting Cuts.} First, we use a beautiful reduction of Karger \cite{karger2000minimum}, which allows us to work with only \emph{2-respecting cuts} with respect to a certain tree, instead of working with arbitrary minimum cuts. Here, 2-respecting cuts are those cuts that have at most 2 edges in a certain spanning tree. This gives more structure to the cuts that we have to investigate. However, working with 2-respecting cuts is still challenging. Only very recently it was shown by Dory et al. that \emph{one} minimum 2-respecting cut can be found efficiently in the distributed setting~\cite{DBLP:journals/corr/abs-2004-09129}. This was used by Dory et al. to obtain a near optimal distributed algorithm for the exact min cut problem. Their algorithm however crucially relies on the fact that \emph{only one min cut} should be found, and it uses divide and conquer approaches that focus only on a small number of options for the min cut. This leads to a fast algorithm for computing one minimum cut, but it ignores many alternative options for the min cut. Since in our algorithm we need to work with all the minimum cuts in the graph (to approximate the number of them covered by each edge), we need a new approach. 

\paragraph{Step 2: Succinct Representation of Minimum Cuts.} To be able to work with all minimum cuts in the graph, we prove a structural lemma that allows us to compress the cut information in a local way. Concretely, we show that each tree edge $t$ can learn $O(k \log{n})$ bits that represent all the minimum 2-respecting cuts of the from $\{t,t'\}$. We call this information $\cutInfo(t)$. This information contains a description of $O(k)$ subpaths in the tree, where in each one of them there is a simple criterion to check if a tree edge $t'$ is in a min 2-respecting cut with $t$. We emphasize that $t$ does not learn about all the min 2-respecting cuts $\{t,t'\}$, as this could be a lot of information. Instead, given $\cutInfo(t)$ and a tree edge $t'$, it is possible to deduce whether $\{t,t'\}$ is a min 2-respecting cut or not.
We show a distributed algorithm where all tree edges learn their compressed information in $\tilde{O}(D+k\sqrt{n})$ rounds of the \congest model.

\paragraph{Step 3: Estimating the Cost-Effectiveness via Graph Sketches.} We next use the information computed above to estimate the cost-effectiveness. For an edge $e$, let $P_e$ denote the unique tree path between the endpoints of $e$. The first observation is that an edge $e$ covers a min 2-respecting cut $\{t,t'\}$ if and only if exactly one of $t$ and $t'$ is in the tree path $P_e$. Naively, in order to compute its cost-effectiveness, $e$ may want to learn about all possible cuts $\{t,t'\}$ where at least one tree edge is in $P_e$. However, this is too expensive. Note that even the tree edges do not know the names of all the cuts they participate in. Instead, we bring to our construction the powerful tool of \emph{graph sketches} (see e.g.\cite{ahn2012analyzing}). We use sketches in a way that allows each edge $e$ to estimate the number of min 2-respecting cuts with exactly one edge in $P_e$. One main challenge in computing the sketches is that it requires that for each min 2-respecting cut $\{t,t'\}$ there will be vertices close to $t$ and $t'$ that know the name of the cut $\{t,t'\}$. We show that we can use the information $\cutInfo(t)$ computed above, and a small amount of communication in the graph to obtain this goal. This leads to an $\tilde{O}(D+k\sqrt{n})$-round $O(1)$-approximation algorithm for computing the cost-effectiveness. 

\paragraph{Conclusion.} The complexity of the $k$-ECSS algorithm is $\tilde{O}(k(D+k\sqrt{n}))$, as we have $k$ iterations where we augment the connectivity until we construct a $k$-edge-connected subgraph. The approximation is $O(\log{k}\log{n})$, based on showing that the approximation ratio obtained for each connectivity augmentation problem is $O(\log{n})$, as the algorithm is based on the greedy set cover algorithm.
A detailed overview of the algorithm appears in \Cref{sec:overview}.
}

\subsection{Additional Related Work}

\noindent\textbf{Minimum Cut.} The minimum cut problem is a well-studied problem. In the distributed \congest model, a rich line of work \cite{pritchard2011fast,DBLP:conf/wdag/GhaffariK13,NanongkaiS14,daga2019distributed,ghaffari2020faster,un_kECSS,DBLP:journals/corr/abs-2004-09129,ghaffari2022universally}, led to a near-optimal algorithm taking $\tilde{O}(D+\sqrt{n})$ time for computing the minimum cut in a weighted graph \cite{DBLP:journals/corr/abs-2004-09129}. We remark that distributed min cut algorithms usually focus on finding one minimum cut, while in our work here we had to work with all the minimum cuts in a graph. For this reason, we believe that our approach can find future applications for other problems related to minimum cuts, for example for estimating the number of minimum cuts in a graph.\\[-7pt]

\noindent\textbf{Connectivity Certificates.} Another related problem studied in the distributed setting is the problem of finding sparse connectivity certificates \cite{un_kECSS,bezdrighin2022deterministic,daga2019distributed}. A sparse connectivity certificate of a graph $G$, is a subgraph $H$ of size $O(kn)$ that is $k$-edge-connected if and only if $G$ is $k$-edge-connected. Sparse connectivity certificates can be used to obtain $O(1)$-approximation for the \emph{unweighted} version of $k$-ECSS. In particular, the construction in \cite{bezdrighin2022deterministic} gives a $(2+\epsilon)$-approximation for unweighted $k$-ECSS in $\poly(\log n)$ rounds. Obtaining this approximation in \emph{unweighted} graphs is based on showing that the subgraph built is sparse, and combine it with the fact that any $k$-edge-connected subgraph has at least $kn/2$ edges.
One common approach to build connectivity certificates is based on iteratively building spanning trees or spanners.  
While this approach can also be implemented in weighted graphs, unfortunately it does not lead to any meaningful approximation for weighted $k$-ECSS, the reason is that in the weighted case even adding one redundant edge can be too expensive, and it is not enough to count the number of edges in the subgraph constructed anymore.\\[-7pt]

\noindent\textbf{Fault-Tolerant Structures.} In the $k$-ECSS problem the goal is to build a low-cost subgraph that remains connected after edge failures. This is closely related to fault-tolerant graph structures, sparse graphs that maintain a certain functionality (for example, contain an MST or a spanner) after failures. Distributed algorithms for building fault-tolerant BFS and MST structures that are resilient to one edge failure appear in \cite{ghaffari2016near}, where distributed fault-tolerant BFS trees and distance preserves resilient to two edge failures appear in \cite{parter2020distributed}. Dealing with more failures is an open question.
Distributed algorithms for fault-tolerant spanners appear in \cite{dinitz2011fault,dinitz2020efficient,parter2022nearly}. 

\section{High-Level Description of the Algorithm} \label{sec:overview}

As explained in Section \ref{subsec:technique}, to solve $k$-ECSS, we follow a distributed greedy approach described in \cite{dory2018distributed}, that reduces the $k$-ECSS problem to the problem of estimating cost-effectiveness of edges. A detailed discussion of this reduction appears in \Cref{sec:setcover}. 
Our main technical contribution is providing a method that computes the cost-effectiveness of all edges efficiently. 

\begin{restatable}{theorem}{CostEfThm}[Cost-Effectiveness Approximation] \label{thm_cost_ef}
Let $H$ be an $i$-edge-connected spanning subgraph of a graph $G$. For an edge $e \not \in H$ define $C_e$ as the set of cuts of size $i$ in $H$ covered by $e$. Then in $\tilde{O}(D+i\sqrt{n})$ time each edge $e \not \in H$ learns an $O(1)$-approximation of the value $|C_e|$. In particular, edge $e$ learns an $O(1)$-approximation of its cost-effectiveness with respect to $H$, defined by $\rho(e)=\frac{|C_e|}{w(e)}.$ The algorithm works with high probability.
\end{restatable}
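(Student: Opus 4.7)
The plan is to follow the two-ingredient outline of \Cref{subsec:technique}: first compute a succinct, locally-stored representation of all minimum cuts of $H$, and then use graph sketches to let each non-tree edge estimate the number of minimum cuts it covers. For the first ingredient, I would begin with a standard Karger-style reduction: build $O(\log n)$ spanning trees of $H$ (using the Kutten--Peleg MST machinery together with randomized preprocessing) so that w.h.p.\ every minimum cut 2-respects at least one of them, and then run the remainder of the procedure on each such tree $T$ in parallel, taking the best estimate at the end. Fixing one $T$, I would then invoke the structural lemma that is the heart of the paper to produce, at every tree edge $t$, a string $\cutInfo(t)$ of $\tilde{O}(i)$ bits, encoding $O(i)$ tree subpaths and satisfying the query property: given $\cutInfo(t)$ and any tree edge $t'$, one can locally decide whether $\{t,t'\}$ is a minimum 2-respecting cut of $H$. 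That lemma will deliver all $\cutInfo(\cdot)$ within $\tilde{O}(D + i\sqrt{n})$ rounds.

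The second ingredient is a sketch-based cardinality estimator built on the elementary observation that a non-tree edge $e$ covers a minimum 2-respecting cut $\{t,t'\}$ if and only if exactly one of $t,t'$ lies on the tree path $P_e$. This parity criterion is tailor-made for $\ell_0$-sampling sketches of AGM type: assign each minimum 2-respecting cut a canonical identifier derived from the ordered pair of its tree edges; for each tree edge $s$, construct a sketch $\sketch(s)$, of size $\poly(\log n)$, over the universe of cut identifiers, marking every cut that contains $s$. Then the bitwise XOR $\bigoplus_{s\in P_e}\sketch(s)$ cancels cuts with $|P_e\cap C|=2$ and preserves exactly those with $|P_e\cap C|=1$, so a standard $\ell_0$-sampler extracts a uniformly random element from $C_e$ and, with geometric-scale subsampling, yields an $O(1)$-approximation of $|C_e|$. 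Aggregation of the $\sketch(s)$ along $P_e$ is a classical upward pass on $T$: root $T$, precompute ancestor-prefix XORs, and for $e=(u,v)$ combine the three prefix sketches at $u$, $v$, and their LCA. Pipelining these $\tilde{O}(i)$-bit messages over the Kutten--Peleg fragment/BFS-tree framework yields $\tilde{O}(D + i\sqrt{n})$ rounds, matching the stated budget and giving an $O(1)$-approximation of $\rho(e)=|C_e|/w(e)$ for every $e\notin H$ w.h.p.

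The step I expect to be the principal obstacle is the preparatory one: before sketches can be formed, both endpoints of every minimum 2-respecting cut must agree on a common canonical identifier. Although $\cutInfo(t)$ compactly describes $t$'s partners as $O(i)$ contiguous subpaths of $T$, actually shipping the identifiers to the corresponding $t'$'s, without blowing up congestion, is delicate. My plan is to exploit precisely the subpath structure guaranteed by the lemma: each of the $O(i)$ subpaths at $t$ can be announced to the corresponding interval of tree edges via one pipelined downward broadcast on $T$, and a symmetric upward pass lets the receiving side confirm the identifier; combined with the global BFS-tree backbone used for fragment-level coordination, the total exchange stays within $\tilde{O}(D + i\sqrt{n})$ rounds. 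Once cuts are consistently named the rest of the argument is a direct application of known sketching and tree-aggregation primitives, and the constant-factor guarantee follows from the standard analysis of $\ell_0$-sampling repeated over $O(\log n)$ independent scales, concluding the proof.
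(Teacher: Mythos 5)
Your outline matches the paper's architecture (Karger reduction to 2-respecting cuts, the succinct representation $\cutInfo(t)$, XOR-linear sketches whose path-XOR cancels cuts with both tree edges in $P_e$, and fragment-based aggregation along $P_e$), but the step you yourself flag as the principal obstacle is resolved incorrectly, and this is precisely where the paper's real work lies. Your plan to let every tree edge $t$ announce each of its $O(i)$ partner-subpaths ``via one pipelined downward broadcast on $T$'' does not fit the round budget: there are $n-1$ tree edges, each issuing $O(i)$ announcements, so $\Theta(ni)$ messages in total, and a single tree edge near the root may have to forward $\Omega(ni)$ of them. Worse, a single tree edge $t'$ can be a partner of $\Omega(n)$ different edges $t$ (e.g., in the cycle example every pair of tree edges is a min cut), so the receiving side alone forces linear congestion. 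The paper avoids this entirely: it never routes per-cut announcements through $T$. Instead (Lemma \ref{lem:names}) it requires only that \emph{some} vertex in the fragment of $t$ and \emph{some} vertex in the fragment of $t'$ know a common name, and establishes this by a case analysis on fragment pairs: if the two fragments share an internal edge, its two endpoints exchange the $O(k\sqrt{n})$ bits of $\cutInfo$ and cover values of their whole fragments and each deduces all cuts between the fragments locally; if not, structural claims (the decomposition of $\cov(t,t')$ into fragment-local terms, minimality of cover values on highways, and the uniqueness of the path $P_{t'}$) show the cuts between the two fragments form a product set $A\times B$, and synthetic names of the form $(F,F',i,j)$ are generated from locally computable ranks plus an $O(\sqrt{n})$-size global broadcast. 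Without an argument of this kind your sketches cannot be formed within $\tilde{O}(D+i\sqrt{n})$ rounds.

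A secondary issue: in the reduction you build trees so that every min cut 2-respects ``at least one'' of them and then take ``the best estimate at the end.'' Since the sets of cuts covered by $e$ that 2-respect different trees can be disjoint, taking a maximum over trees only certifies a $\Theta(\log^{2.2} n)$-approximation, not $O(1)$. The paper uses the stronger guarantee that each min cut 2-respects a constant fraction of the trees (\Cref{thm_karger}) and then \emph{sums} the per-tree counts and rescales (\Cref{lem:reduction}); you need that version to retain the constant factor.
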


\paragraph{Reduction to 2-respecting cuts.} To estimate the cost-effectiveness, we should estimate for each edge the number of cuts it covers. Our first step is to simplify the structure of the problem. We use Karger's well-known reduction~\cite{karger2000minimum}, which allows us to work with only \emph{2-respecting cuts} with respect to a certain tree, instead of working with arbitrary minimum cuts. We say that a cut $C$ \emph{2-respects} a spanning tree $T$ if $|C \cap T| \leq 2$. 
Based on \cite{karger2000minimum}, we show in \Cref{sec:preliminaries} that in order to estimate the cost-effectiveness of edges it is enough to estimate for each edge the number of min 2-respecting cuts it covers in a certain spanning tree $T$. A min 2-respecting cut is a minimum cut in $G$ that has at most 2 edges in $T$. For full details and proofs see \Cref{sec:preliminaries}.
\remove{
%\subsection{Reducing to 2-respecting Cuts}

First, we use Karger's well-known reduction~\cite{karger2000minimum}, which allows us to work with only \emph{2-respecting cuts} with respect to a certain tree, instead of working with arbitrary minimum cuts. %Here, 2-respecting cuts are those cuts that have at most 2 edges in a certain spanning tree. 
This gives more structure to the cuts that we have to investigate.
%In our algorithm we exploit structural properties of minimum cuts in a graph. To explain the idea, we first provide the related background on minimum cuts. 
%\paragraph{Minimum Cut via Tree Packing and 2-respecting Cuts.}
We say that a cut $C$ \emph{2-respects} a spanning tree $T$ if $|C \cap T| \leq 2$. 
In a seminal work \cite{karger2000minimum}, Karger showed that one can find a small set of spanning trees $T_1,...,T_{\ell}$ of a graph $H$ such that for any minimum cut $C$ in $H$, there is a tree $T_j$ in the set such that $C$ 2-respects $T_j$. Actually, a stronger statement holds: cut $C$ 2-respects a \emph{constant fraction} of the trees. This construction was also implemented in a distributed setting \cite{daga2019distributed,DBLP:journals/corr/abs-2004-09129}, giving the following.\footnote{See Appendix A in \cite{DBLP:journals/corr/abs-2004-09129} for a proof. In that paper, the description of the theorem is slightly different, as the paper focuses on finding one minimum cut, but from the lemmas that appear before it based on \cite{karger2000minimum}, \Cref{thm_karger} follows.}

\begin{theorem}[\cite{karger2000minimum,daga2019distributed}] \label{thm_karger}
Given a graph $H$, it is possible to find a set of $\ell=\Theta(\log^{2.2}n)$ spanning trees $T_1,T_2,...,T_{\ell}$ such that any minimum cut in $H$ 2-respects 1/3 fraction of the trees. The construction time is $\tilde{O}(D+\sqrt{n})$ time, and the algorithm works with high probability.
\end{theorem}

This allows us to focus our attention on \emph{min 2-respecting cuts}, minimum cuts that have at most 2 edges in a spanning tree $T_j$. In our algorithm, to estimate the cost-effectiveness of edges, we estimate the number of min 2-respecting cuts they cover in all the spanning trees $T_1,...,T_{\ell}$. Using \Cref{thm_karger}, we show the following (see \Cref{sec:preliminaries}).

\begin{restatable}{lemma}{reductionLemma}[\textbf{Reduction from Min Cuts to Min 2-respecting Cuts}] \label{lem:reduction}
Let $H$ be a graph with a spanning tree $T$.
Let $A$ be an $\alpha$-approximation algorithm for computing the number of min 2-respecting cuts in $H$ covered by each edge $e \not \in H$. If $A$ takes $S$ time and works w.h.p, then there is an $O(\alpha)$-approximation algorithm for computing the number of minimum cuts in $H$ covered by each edge $e \not \in H$ that takes $\tilde{O}(S+D+\sqrt{n})$ time and works w.h.p.
\end{restatable}
}
From now on we fix a spanning tree $T$ of a graph $H$ and focus on estimating the number of min 2-respecting cuts each edge $e \not \in T$ covers. %We denote by $i$ the size of the minimum cut in $H$. 
Note that any min 2-respecting cut has (at most) 2 tree edges $t,t'$, we identify the cut with these edges. A cut that has only one tree edge $t$ is called a 1-respecting cut. 

\subsection{Estimating the Cost-Effectiveness when $T$ is a Path}

To simplify the presentation, we start by focusing on the special case that $T$ is a path, which already allows us to present many of the key ingredients of the algorithm. 
For a non-tree edge $e=\{u,v\}$, we denote by $P_e$ the tree path between $u$ and $v$.
We start with the following simple observation that we prove in \Cref{sec:preliminaries}. See \Cref{pathCoverPic} for an illustration. 

\begin{restatable}{claim}{twoRespectCover} \label{claim_cut_cover}
A non-tree edge $e$ covers a 2-respecting cut $\{t,t'\}$ iff exactly one of $t,t'$ is in $P_e$.
\end{restatable}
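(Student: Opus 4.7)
The plan is to exploit the fact that $T$ is a path to give an explicit description of the cut $\{t,t'\}$, and then verify the equivalence by a short case analysis on where the endpoints of $e$ sit along the path.

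First, I would set up notation. Removing the two distinct tree edges $t$ and $t'$ from $T$ breaks the path into three contiguous subpaths; call their vertex sets $A$, $B$, $C$, where $B$ is the middle one, bordered by $t$ on one side and $t'$ on the other. Because the cut $\{t,t'\}$ has the tree edges $t$ and $t'$ and only those, the bipartition of $V(H)$ that this cut induces must be $\{B,\ A\cup C\}$: any other grouping of $A,B,C$ into two sides would either cut zero tree edges (if all three end up on the same side) or cut exactly one of $t,t'$ (splitting off $A$ alone from $B\cup C$, or $C$ alone from $A\cup B$), contradicting the assumption that the tree edges of the cut are precisely $\{t,t'\}$.

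Next, I would translate the covering condition. By definition, $e$ covers $\{t,t'\}$ iff adding $e$ to $H$ minus the cut edges reconnects the graph, which happens iff $e$ has exactly one endpoint in $B$ and the other in $A\cup C$. Meanwhile, since $T$ is a tree, for any tree edge $s$ one has $s\in P_e$ iff the two endpoints of $e$ lie on different sides of $s$ in $T$. Applied to our path, this gives $t\in P_e$ iff exactly one of the endpoints of $e$ lies in $A$, and $t'\in P_e$ iff exactly one of them lies in $C$.

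Then I would conclude by enumerating the (unordered) placements of $\{u,v\}$ among $A,B,C$. If both endpoints lie in the same part, $P_e$ stays in that part, so neither $t$ nor $t'$ is in $P_e$, and $e$ does not cross from $B$ to $A\cup C$. If $\{u,v\}\subseteq A\cup C$ with one endpoint in each of $A$ and $C$, then $P_e$ must traverse both $t$ and $t'$ (to pass through $B$), so both tree edges are in $P_e$, and again $e$ does not cross the cut. In the remaining cases—one endpoint in $B$ and the other in $A$, or one in $B$ and the other in $C$—exactly one of $t,t'$ lies on $P_e$, and $e$ does cross the cut. Matching these cases gives the desired ``iff.''

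I don't anticipate a real obstacle here; the only mildly subtle step is the first one, namely pinning down that the cut partition is $\{B,\,A\cup C\}$ rather than some other grouping, which uses that the tree restricted to each of $A$, $B$, $C$ is connected and that $t,t'$ are the only tree edges between these parts.
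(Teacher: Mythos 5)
Your case analysis is correct for the situation it treats, and it is essentially a formalization of the intuition the paper gives immediately after stating the claim in the overview: identify the three components of $T\setminus\{t,t'\}$, argue that the cut's bipartition must be the middle subpath versus the union of the two outer ones, and then check the placements of the endpoints of $e$. However, the claim is not restricted to paths. It is restated and proved in the preliminaries for an arbitrary spanning tree $T$, and the algorithm uses it in that generality; the path discussion in the overview is only an expository simplification. Your proof is explicitly built on the linear structure of the path in two places: the identification of the ``middle'' component $B$ bordered by $t$ on one side and $t'$ on the other, and the translation ``$t\in P_e$ iff exactly one endpoint of $e$ lies in $A$.'' When $t$ and $t'$ are \emph{orthogonal} tree edges of a general tree (neither an ancestor of the other), the three components of $T\setminus\{t,t'\}$ are the subtree below $t$, the subtree below $t'$, and the remainder $R$; the component incident to both cut edges is $R$, the forced bipartition is $R$ versus the union of the two subtrees, and the membership test for $t\in P_e$ must be phrased in terms of the subtree below $t$. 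An analogous case analysis still goes through (the structural fact that survives is that the unique component incident to both $t$ and $t'$ sits alone on one side), but as written your argument does not establish the claim for general $T$, which is what the paper needs.

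The paper's own proof sidesteps the configuration analysis entirely with a parity argument that is insensitive to the shape of $T$: the endpoints of each of $t$ and $t'$ lie on opposite sides of the cut, so as one walks along $P_e$ one switches sides exactly when crossing $t$ or when crossing $t'$; hence the two endpoints of $e$ end up on opposite sides of the cut (equivalently, $e$ crosses, i.e., covers, the cut) if and only if $P_e$ contains exactly one of $t,t'$. To repair your proof, either adopt that uniform argument or explicitly split into the nested and orthogonal configurations of $t,t'$ and redo the bipartition and membership statements for each.
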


We next give an intuition for the proof (for the case that $T$ is a path). First note that if $\{t,t'\}$ is a minimum 2-respecting cut, that represents a minimum cut $C$ in a graph $H$, then $\{t,t'\}$ are the only edges of $C$ in $T$, and $H \setminus C$ has exactly two connected components. We can identify them as follows. Removing $\{t,t'\}$ from $T$ disconnects $T$ to 3 sub-paths $P_1,P_2,P_3$, where the 2-respecting cut defined by $\{t,t'\}$ has 2 sides (as this is a minimum cut), one has the middle path $P_2$ and the other has the two other paths $P_1 \cup P_3$. The reason is that $\{t,t'\}$ are edges that cross the cut, and they already have one endpoint in $P_2$, hence their other endpoints are in the other side of the cut. Now the edges that cross the cut are exactly those that have one endpoint in $P_2$ and the other in $P_1 \cup P_3$, which are edges $e$ where exactly one of $\{t,t'\}$ is in $P_e$. The proof also generalizes to the case that $T$ is a general spanning tree (see \Cref{sec:preliminaries}).

\setlength{\intextsep}{0pt}
\begin{figure}[h]
\centering
\setlength{\abovecaptionskip}{0pt}
\setlength{\belowcaptionskip}{8pt}
\includegraphics[scale=0.5]{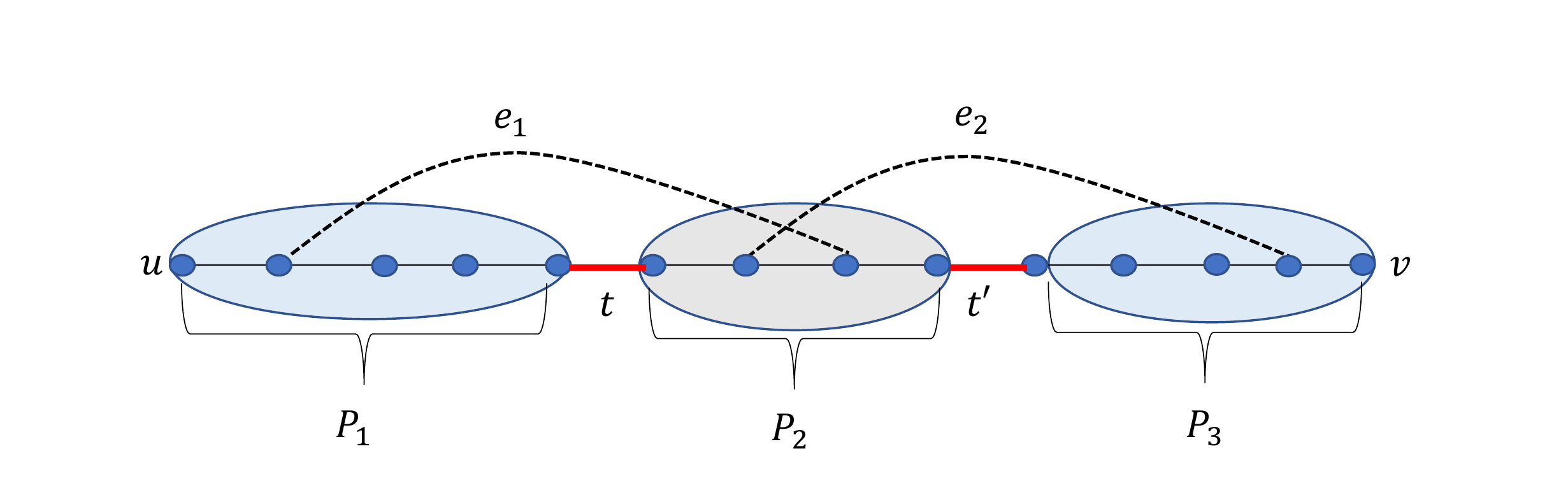}
 \caption{The tree $T$ is the path from $u$ to $v$. The edges $e_1,e_2$ are examples of non-tree edges that cover the 2-respecting cut $\{t,t'\}$. This 2-respecting cut is is the blue-grey cut in the figure, where one side is $P_1 \cup P_3$, and the other side is $P_2$.}
\label{pathCoverPic}
\end{figure}

As we show in \Cref{sec:preliminaries}, if the endpoints of $e$ already know about the min 2-respecting cut $\{t,t'\}$, then using \Cref{claim_cut_cover} and the tool of \emph{lowest common ancestor (LCA) labels} (see \Cref{sec:preliminaries}), we can check if $e$ covers the min 2-respecting cut $\{t,t'\}$. Basically we just need to compare the endpoints of $e,t,t'$ and identify if exactly one of $\{t,t'\}$ is in $P_e$, which can be done using LCA labels. Hence, a naive approach for estimating the cost-effectiveness could be first to learn about all the min 2-respecting cuts $\{t,t'\}$ in the graph, and then for each one of them to check if $e$ covers $\{t,t'\}$. 

%\paragraph{Estimating the Cost-Effectiveness: First Try.}
%It is easy to estimate the number of min 2-respecting cuts covered by $e$ if the whole graph knows the identity of all the min 2-respecting cuts. To do so, we can use the tool of \emph{lowest common ancestor (LCA) labels} (see \Cref{sec:preliminaries}) to identify for each edge which cuts it covers.

%We show later that if $e$ knows the edges $t,t'$ that define a min 2-respecting cut, there is a simple way to check if $e$ covers the cut. For this, we exploit the tool of \emph{lowest common ancestor (LCA) labels}.
%Hence, if we have a way to find all the min 2-respecting cuts in the graph and to let the whole graph learn about them, it would lead to a very simple algorithm for estimating the cost-effectiveness of edges. Each edge $e$ just checks which of the min 2-respecting cuts it covers. 

There are however two problems with this approach. It requires \emph{finding} all min 2-respecting cuts in the graph, and \emph{broadcasting} this information to the whole graph. It is not clear if it is possible to find  all min 2-respecting cuts efficiently. Very recently, it was shown that it is possible to find \emph{one} min 2-respecting cut in $\tilde{O}(D+\sqrt{n})$ time. This led to a near-optimal distributed min cut algorithm \cite{DBLP:journals/corr/abs-2004-09129}. However, their algorithm crucially relies on the fact that only one minimum cut should be found and it does not explore all possible options for min cuts. Moreover, even if there was a distributed algorithm that finds all possible cuts, the number of minimum cuts in the graph could be $\Omega(n^2)$, and hence broadcasting information about all min 2-respecting cuts to the whole graph cannot lead to an efficient algorithm.

We next describe our general approach to overcome the above problems. One of the key insights of this paper (which we hope may find applications in other computation settings and/or for other problems) is this: Instead of learning about all min 2-respecting cuts and broadcasting this information to the whole graph, we show that each tree edge $t$ can learn a \emph{compressed} description of the min 2-respecting cuts of the form $\{t,t'\}$. More concretely, we show that if the minimum cut size is $k$, then we can represent in $O(k \log{n})$ bits a compressed information of all minimum 2-respecting cuts $\{t,t'\}$ that a tree edge $t$ appears in. Second, we show that all tree edges $t$ can learn their compressed information efficiently, and we show how to use it to evaluate the cost-effectiveness of non-tree edges. We next provide more details on these ingredients.  

\subsubsection{Succinct Representation of Min Cuts}

In \Cref{sec:succinct_min_cuts}, we show that while there could be many minimum 2-respecting cuts, there is a succinct way to represent them. To get an intuition, think about a simple case that the graph $G$ is just a cycle, and the spanning tree $T$ is a path. Here any 2 edges in the graph are a cut, so listing all the cuts in the graph would take $\Omega(n^2)$ space. However, if we look at some tree edge $t$, there is a very simple way to represent all the 2-respecting cuts it participates in, these are just all the edges in $T$. We prove a structural lemma that generalizes this simple observation to a general tree with minimum cut of size $k$. 
Concretely, we show that each tree edge $t$ can learn $O(k \log{n})$ bits that represent all the minimum 2-respecting cuts of the from $\{t,t'\}$. We call this information $\cutInfo(t)$. This information contains a description of $O(k)$ subpaths in the tree, where in each one of them there is a simple criterion to check if a tree edge $t'$ is in a min 2-respecting cut with $t$. We emphasize that $t$ does not learn about all the min 2-respecting cuts $\{t,t'\}$, as this could be a lot of information. Instead, given $\cutInfo(t)$ and a tree edge $t'$, it is possible to deduce whether $\{t,t'\}$ is a min 2-respecting cut or not.
Let us fix some notation. We denote by $\cut(t,t')$ the size of the 2-respecting cut defined by the tree edges $t,t'$ (i.e., the number of edges that cross the cut). We denote by $\cov(t)$ the number of edges that \emph{cover} a tree edge $t$. Here we say that an edge $e=\{u,v\}$ covers a tree edge $t$ if $t$ is in the unique tree path $P_e$ between $u$ and $v$. In particular, $t$ covers $t$. We denote by $\cov(t,t')$ the number of edges that cover both $t$ and $t'$. From \Cref{claim_cut_cover}, we get that $\cut(t,t')=\cov(t)+\cov(t')-2\cov(t,t)$. This follows from the fact that the edges that cross (or cover) the cut are exactly the ones that cover exactly one of $t,t'$. We show the following.   

\begin{restatable}{lemma}{succinctCutsLemma}[\textbf{Succinct Representation of Min Cuts}] \label{min_cuts_lemma}
Let $G=(V,E)$ be a graph with minimum cut size $k$, and let $T$ be a spanning tree of $G$. Fix a tree edge $t \in T$. Then there exist $\ell = O(k)$ tuples $\{(S^t_i,c^t_i)\}_{1 \leq i \leq \ell}$ where each $S^t_i$ is a subpath in the tree, such that the edges $t'$ where $\cut(t,t')=k$ are exactly all edges in $E_t = \cup_{1 \leq i \leq \ell} \{ t' \in S^t_i | \cov(t') = c^t_i \}$.
\end{restatable}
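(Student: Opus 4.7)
The plan is to exploit the identity $\cut(t,t') = \cov(t) + \cov(t') - 2\cov(t,t')$ stated just before the lemma, which converts the condition $\cut(t,t')=k$ into the equivalent linear constraint
\[
\cov(t') \;-\; 2\cov(t,t') \;=\; k - \cov(t).
\]
The right-hand side depends only on $t$, so it suffices to partition the tree edges $t' \neq t$ into $O(k)$ subpaths of $T$ on each of which $\cov(t,\cdot)$ takes a single constant value $c^t_i$. On such a subpath the condition $\cut(t,t')=k$ becomes the single-value condition $\cov(t')=(k-\cov(t))+2c^t_i$, matching the form required by the lemma with the integer $c^t_i$ playing the role of $c^t_i$ and the underlying subpath playing the role of $S^t_i$.

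Next I would set up notation by rooting $T$ at an arbitrary vertex and letting $V_s$ denote the descendant side of the cut of any tree edge $s$. For $t'\neq t$, the edges split into three classes according to the relationship of $V_{t'}$ and $V_t$: (i) $V_{t'}\subseteq V_t$, (ii) $V_t \subseteq V_{t'}$, or (iii) $V_{t'}\cap V_t = \emptyset$. In each class $\cov(t,t')$ equals the number of non-tree edges between two explicit vertex sets (respectively $V_{t'}$ and $V\setminus V_t$; $V_t$ and $V\setminus V_{t'}$; $V_t$ and $V_{t'}$). Walking along any root-to-leaf path of $T$, one traverses these three classes in a fixed monotone order, and within each class $\cov(t,\cdot)$ is monotone in depth, changing only when the hanging subtree being gained or lost at the current edge contains an endpoint of a non-tree edge crossing $t$. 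Consequently, $\cov(t,\cdot)$ restricted to any root-to-leaf path is a monotone step function whose constancy regions are subpaths of $T$.

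The core technical step is to argue that among tree edges $t'$ with $\cut(t,t')=k$, the constancy regions needed to cover them number only $O(k)$ in total. Here I would invoke the structural theory of minimum cuts containing a fixed edge. By posimodularity of the graph cut function, the sides of minimum cuts that separate the endpoints of $t$ form a cross-free family: given two such minimum cut sides $A,B$, if $A\setminus B$ and $B\setminus A$ are both nonempty then they are themselves minimum cut sides and thus also cross $t$. A cross-free family corresponds to a tree-like hierarchy, and its branching complexity can be bounded by $O(k)$ by associating each incomparable ``branch'' of the hierarchy to one of the $k$ edges of a single designated minimum cut through $t$: more than $k$ pairwise incomparable chains through $t$ would force the existence of a cut of size strictly less than $k$, contradicting the minimum-cut hypothesis. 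Each such branch then contributes one tuple $(S^t_i,c^t_i)$.

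The main obstacle lies in this last structural step. Along a single root-to-leaf path, $\cov(t,\cdot)$ can take up to $\Theta(\cov(t))$ distinct values, and $\cov(t)$ itself is not bounded by any function of $k$, so a naive analysis yields far more than $O(k)$ subpaths; even restricting to values $c$ with $c \in [k/2,\cov(t)]$ (which is all the range allowed by $\cov(t'),\cov(t)\ge k$) is insufficient. The improvement to $O(k)$ genuinely requires the minimum-cut hypothesis, and I expect the cleanest route to be either via the cactus representation of minimum cuts, where minimum cuts through $t$ are encoded by a local structure whose complexity is controlled by $k$, or via a direct exchange argument on the laminar family of minimum cut sides through $t$, showing that any $k+1$ incomparable sides would produce a sub-$k$ cut. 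Making either route fully rigorous, and then translating the resulting $O(k)$ laminar branches back into $O(k)$ subpaths of $T$ with explicit constants $c^t_i$, is the crux of the proof.
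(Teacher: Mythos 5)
Your first step---rewriting $\cut(t,t')=k$ via $\cut(t,t')=\cov(t)+\cov(t')-2\cov(t,t')$ and reducing the lemma to partitioning the candidate edges $t'$ into $O(k)$ subpaths on which $\cov(t,\cdot)$ is constant---is exactly the reduction the paper uses. But the proposal then stalls at the step you yourself flag as the crux, and the route you sketch for it (posimodularity, cross-free families, the cactus) is not how the bound is obtained, nor is it carried out far enough to check. There are two missing ingredients. First, the paper does not work with all root-to-leaf paths: it invokes a structural lemma of Mukhopadhyay--Nanongkai (Lemma \ref{unique_path_lemma} in the paper, proved via the observation that any min 2-respecting partner $t'$ of $t$ satisfies $\cov(t,t')\geq \cov(t)/2$) showing that \emph{all} edges $t'$ with $\cut(t,t')=k$ lie on a single tree path $P_t$ through $t$. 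This collapses your three-class case analysis to two directed half-paths and is what makes ``subpath'' segments meaningful at all.

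Second---and this is the idea your proposal is missing---once restricted to $P_t$, you do not need to bound the total number of constancy regions of $\cov(t,\cdot)$ (which, as you correctly note, can be $\Theta(\cov(t))$ and is not controlled by $k$). Instead, sort the edges covering $t$ by how far their covered subpath reaches along $P_t$ (the paper formalizes ``how far'' via virtual endpoints $v_h(e,P_t),v_l(e,P_t)$ defined through LCAs), and let $v_1,\dots,v_k$ be the endpoints of the first $k$ such edges $e_1,\dots,e_k$. Any $t'$ on $P_t$ beyond $v_k$ is covered by none of $e_1,\dots,e_k$, so these $k$ edges each cross the cut $\{t,t'\}$, and $t'$ itself crosses it too; hence $\cut(t,t')\geq k+1$ and such $t'$ can be discarded outright. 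Only the $k-1$ segments $[v_i,v_{i+1}]$ per side survive, and on each of them $\cov(t,t')=\cov(t)-i$ is constant by construction, giving $c^t_i=\cov(t)+k-2i$ and $\ell=2(k-1)$. This is an entirely elementary counting argument; no laminar-family or cactus machinery is needed, and without some substitute for it your proposal does not establish the $O(k)$ bound.
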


Let us provide an intuition of the proof of this Succinct Representation Lemma when $T$ is a path (see \Cref{pathLemmaPicPrelim}). In this case, if we fix a tree edge $t$, any edge that covers $t$ has one endpoint to the left of $t$ and one endpoint to the right of $t$. We focus on 2-respecting cuts $\{t,t'\}$ where $t'$ is on the right of $t$. We sort all the edges that cover $t$ according to their right endpoint, from the closet to the furthest from $t$. Denote by $e_1,...,e_k$ the $k$ first edges in this sorted order (in particular, $e_1 = t$). Let $v_1,v_2,...,v_k$ be the right endpoints of $e_1,...,e_k$, respectively. The first observation is that $\cut(t,t')=k$ for an edge $t'$ to the right of $t$ only if $t'$ is on the subpath between $v_1$ to $v_k$, as otherwise, there are more than $k$ edges: $\{e_1=t,e_2,...,e_k,t'\}$ that cross the cut defined by $\{t,t'\}$. Moreover, we can show that all the edges $t'$ on the subpath $S_i$ between $v_i$ to $v_{i+1}$ are in a min 2-respecting cut with $t$ if and only if $\cov(t')=c^t_i = \cov(t)+k-2i$. The reason is as follows. First, if $\{t,t'\}$ is a min 2-respecting cut, then $\cut(t,t')=k$ (as the minimum cut has size $k$). Second, as explained above $\cut(t,t')=\cov(t)+\cov(t')-2\cov(t,t')$. Finally, by the definition of the subpath $S_i$, we get that for all edges $t' \in S_i$, $\cov(t,t')=\cov(t)-i$. The latter follows from the fact that the edges that cover $t$ and $t'$ are all the edges that cover $t$ expect of the $i$ edges $e_1,...,e_i$. This follows from the definition of $S_i$. From the above, a simple calculation shows that $t' \in S_i$ is in a min 2-respecting cut with $t$ if and only if $\cov(t') = \cov(t)+k-2i$. Hence we can divide the path between $v_1$ to $v_k$ to $k-1$ subpaths $S^t_i=[v_i,v_{i+1}]$, where in each one of them the edges $t'$ in a min 2-respecting cut with $t$ are exactly those where $\cov(t')=c^t_i$. We can use the same argument to represent cuts where $t'$ is on the left of $t$. Based on these ideas, we prove \Cref{min_cuts_lemma} for a path $T$. 

\setlength{\intextsep}{0pt}
\begin{figure}[h]
\centering
\setlength{\abovecaptionskip}{-4pt}
\setlength{\belowcaptionskip}{4pt}
\includegraphics[scale=0.4]{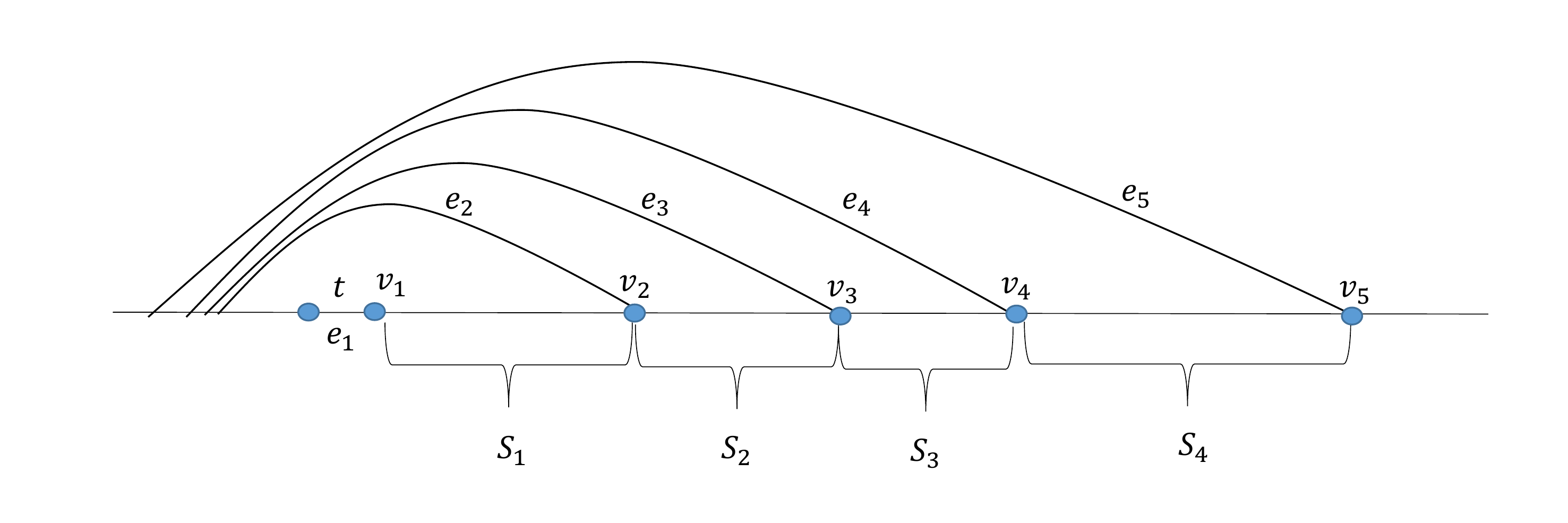}
 \caption{Illustration of the proof of \Cref{min_cuts_lemma}.}
\label{pathLemmaPicPrelim}
\end{figure}

\paragraph{Learning the Succinct Representation.} 
To allow all tree edges to learn the $O(k)$ tuples $\{(S^t_i,c^t_i)\}_{1 \leq i \leq \ell}$ that represent their cuts, the main task is to learn $O(k)$ edges that cover $t$ and are the \emph{best} according to a certain criterion. To explain the idea, assume first that each tree edge $t$ should learn only about one non-tree edge $e_t$, the non-tree edge that covers $t$ and has the right endpoint closest to $t$. To do so, we can break the tree $T$ to $O(\sqrt{n})$ subpaths of length $O(\sqrt{n})$. Let $P_t$ be the subpath that contains $t$. If the edge $e_t$ has at least one endpoint in $P_t$, then using communication inside $P_t$, the edge $t$ can learn about $e_t$. Actually all the tree edges $t$ in $P_t$ can learn about their edges $e_t$ in this case in $O(\sqrt{n})$ time, using the fact that $P_t$ has length $O(\sqrt{n})$ and the congestion is also bounded by $O(\sqrt{n})$ because there are just $O(\sqrt{n})$ tree edges in $P_t$. This can be done in all the $O(\sqrt{n})$ subpaths of $T$ simultaneously.  Another case is that the edge $e_t$ does not have any endpoint in $P_t$. Here the crucial observation is that the edge $e_t$ in this case does not depend on the specific choice of $t$, it is an edge that covers all tree edges in $P_t$ and has the right endpoint closest to $P_t$. There is only one such edge for the whole subpath $P_t$, and we can use global communication over a BFS tree to learn about this edge in $O(D)$ time. Since we have $O(\sqrt{n})$ subpaths $P_t$, then learning about the relevant edges for all subpaths takes $O(D+\sqrt{n})$ time. Following the same ideas, all the tree edges can learn about $O(k)$ best edges that cover them in $O(D+k\sqrt{n})$ time. Dealing with an arbitrary tree $T$ is more challenging and will be discussed later. 

\subsubsection{Estimating the Cost-Effectiveness via Graph Sketches}

We next explain how we use the cut information learned by edges to estimate the cost-effectiveness. To simplify the description, we first focus on a setting that each tree edge $t$ knows the full list of min 2-respecting cuts $\{t,t'\}$ it participates in. 
We focus on estimating the number of 2-respecting cuts that are covered by $e$ and have exactly two tree edges; handling 1-respecting cuts is easier. 

Recall that by \Cref{claim_cut_cover}, a non-tree edge $e$ covers a 2-respecting cut $\{t,t'\}$ iff exactly one of $t,t'$ is in $P_e$.
A naive algorithm to compute the cost-effectiveness using \Cref{claim_cut_cover} could be to go over all the tree edges $t$ in the path $P_e$, learn about all the min 2-respecting cuts $\{t,t'\}$ they participate in, and count in how many of them the second edge $t'$ in the cut is not in $P_e$. However, this requires collecting a lot of information  and is not efficient. We next show how using \emph{sketching} ideas we can estimate the number of edges that cross the cut. Graph sketches is a powerful tool that is used for example in connectivity algorithms in various settings (see for example \cite{ahn2012analyzing,kapron2013dynamic,kapralov2014spanners,king2015construction,mashreghi2021broadcast}). Many times it is used in order to find outgoing edges from a connected component. Here the property that distinguishes between an outgoing edge from a set $S$ and other edges is that the outgoing edges are exactly those that have exactly one endpoint in $S$. We show that similar type of ideas are helpful in our context. Intuitively, the reason is this: edge $e$ covers a cut $\{t,t'\}$ iff exactly one of $t $ and $t'$ is in the path $P_e$. This is similar to the property that outgoing edges have exactly one endpoint in a connected component. 

\paragraph{High-level Idea.} Each tree edge $t$ uses the information about the min 2-respecting cuts $\{t,t'\}$ to compute a \emph{sketch} $\sketch(t)$ of the min 2-respecting cuts, with the following properties. 

\begin{enumerate}
\item The sketch has poly-logarithmic size.
\item Given a tree path $P$, if we look at $\oplus_{t \in P} \sketch(t)$ we can obtain a constant approximation to the number of min 2-respecting cuts that have exactly one tree edge in $P$, with high probability. Here $\oplus$ represents the bitwise XOR of the sketches.\label{approxCuts}
\end{enumerate}

Given the sketches, we can obtain an efficient algorithm for approximating the cost-effectiveness of all non-tree edges, as follows.
Each non-tree edge $e=\{u,v\}$ should learn $\oplus_{t \in P_e} \sketch(t)$. This is an aggregate function of the values $\sketch(t)$ of edges in $P_e$. Such computations can be done using the tool of fragment decomposition as was shown in \cite{dory2018distributed}. 
From property \ref{approxCuts} of the sketches and from \Cref{claim_cut_cover}, this gives a constant approximation to the number of min 2-respecting cuts covered by $e$, as needed.

\paragraph{Estimating the Cost-Effectiveness without Full Information about Minimum Cuts.}

We next explain how to extend the above algorithm to a setting where tree edges do not know all their min 2-respecting cuts, but they have the succinct description of min 2-respecting cuts provided by \Cref{min_cuts_lemma}; we call this information $\cutInfo(t)$.
Note that for the above algorithm, tree edges do not need to know all their min 2-respecting cuts, they just need to be able to compute the values $\sketch(t)$. Our main goal is to show that this information can be computed efficiently.

The value $\sketch(t)$ is a compressed description of all the min 2-respecting cuts $\{t,t'\}$, think about it as an aggregate function of all the values $\{t,t'\}$. To be able to compute it we need that for each cut $\{t,t'\}$ either $t$ or a vertex close to $t$ will know the name of the cut $\{t,t'\}$. At a high-level, this is obtained as follows. First note that if a vertex $v$ knows $\cutInfo(t)$ and the value $\cov(t')$ for a tree edge $t'$ it can check if $\{t,t'\}$ is min 2-respecting cut (this just follows from the succinct description of cuts provided in \Cref{min_cuts_lemma}). Hence if for any tree edge $t'$ there is a vertex close by to $t$ that knows $\cutInfo(t),\cov(t')$, this vertex will know about the cut $\{t,t'\}$. %A naive way to use it is to broadcast the values $\cutInfo(t),\cov(t)$ of all tree edges to the whole graph. This allows the whole graph to learn about all min 2-respecting cuts, but requires $\Omega(kn)$ time for broadcasting the information.
Our approach is to broadcast the values $\cutInfo(t),\cov(t)$  to $O(\sqrt{n})$ vertices close to $t$, and additionally broadcast to the whole graph a small amount of information. We next provide more details on our approach. First, we again decompose the path $T$ into $O(\sqrt{n})$ subpaths of size $O(\sqrt{n})$. In each subpath, we let all the vertices of the subpath to learn the values $\{\cutInfo(t),\cov(t)\}$ of all tree edges in the subpath. As this is $O(k \log{n})$ bits per edge, this can be done in $O(k\sqrt{n})$ time. Now for any pair of subpaths $P_1,P_2$ where there is an edge $\{v,v'\}$ between them, the vertices $v$ and $v'$ can exchange between them the values $\{\cutInfo(t),\cov(t)\}$ of all tree edges in the subpaths $P_1,P_2$. By the definition of $\cutInfo(t)$, from this information $v$ and $v'$ can deduce the names of all cuts $\{t,t'\}$ where $t \in P_1, t' \in P_2$. Then for any edge $t \in P_1$, the vertex $v \in P_1$ can compute a sketch of all the cuts $\{t,t'\}$ it knows about. As sketches are aggregate functions, if different vertices in the subpath $P_1$ know about different cuts $\{t,t'\}$ learning the value $\sketch(t)$ can be done using a simple aggregate computation in the subpath. In this description we focused on the case that there is an edge between $P_1$ and $P_2$. The case there is no edge between them is slightly more delicate. Here we exploit structural properties of 2-respecting cuts to show that the min 2-respecting cuts have a very simple structure in this case, and we use a \emph{small amount} of global communication to let all tree edges learn enough information about these cuts to complete the computation of $\sketch(t)$. The full description of the algorithm appears in \Cref{sec:cost_ef}.

%We next provide more details on our approach. First, we decompose the graph into $O(\sqrt{n})$ fragments of size $O(\sqrt{n})$. In each fragment, we let all the vertices of the fragment learn the values $\{\cutInfo(t),\cov(t)\}$ of all tree edges in the fragment. As this is $O(k \log{n})$ bits per edge, this can be done in $O(k\sqrt{n})$ time. Now for any pair of fragments $F,F'$ where there is an edge $\{v,v'\}$ between them, the vertices $v$ and $v'$ can exchange between them the values $\{\cutInfo(t),\cov(t)\}$ of all tree edges in the fragments $F,F'$. By the definition of $\cutInfo(t)$, from this information $v$ and $v'$ can deduce the names of all cuts $\{t,t'\}$ where $t \in F, t' \in F'$. Then for any edge $t \in F$, the vertex $v \in F$ can compute a sketch of all the cuts $\{t,t'\}$ it knows about. As sketches are aggregate functions, if different vertices in the fragment $F$ know about different cuts $\{t,t'\}$ learning the value $\sketch(t)$ can be done using a simple aggregate computation in the fragment. In this description we focused on the case that there is an edge between $F$ and $F'$. The case there is no edge between them is slightly more delicate. Here we exploit structural properties of 2-respecting cuts to show that the min 2-respecting cuts have a very simple structure in this case, and we use a \emph{small amount} of global communication to let all tree edges learn enough information about these cuts to complete the computation of $\sketch(t)$. The full description of the algorithm appears in \Cref{sec:cost_ef}. 

\subsection{Estimating the Cost-Effectiveness for a General Spanning Tree $T$}

Most of the ideas described above also generalize to the case that the spanning tree $T$ is a general tree, but it requires additional ingredients and more complex proofs. We next discuss the main ingredients needed. 
First, Lemma \ref{min_cuts_lemma} about the succinct representation of min cuts holds also when $T$ is a general tree, but it requires a more involved proof. Note that when $T$ is a path we sorted edges according to their right endpoint, and our goal is to be able to do something similar for a general tree as well.
The high-level idea is to use an observation on the structure of 2-respecting cuts from \cite{mukhopadhyay2020weighted} that says that for each tree edge $t$, all the edges $t'$ where $\{t,t'\}$ defines a min 2-respecting cut are on one tree path $P_t$. Second, we sort the edges that cover $t$ according to the subpath they cover in $P_t$ and repeat a similar proof idea to the one we used in the case that $T$ is a path. See \Cref{sec:succinct_min_cuts} for the details.

To learn the succinct cut information, the main task again is to learn $O(k)$ edges that cover $t$ and are the \emph{best} according to a certain criterion.
Such computations can usually be done in $\tilde{O}(D+k\sqrt{n})$ time using the tool of \emph{fragment decomposition}. One challenge in our case is that when $T$ is a general tree the criterion defining the best edges is specific for each edge $t$ (as we sort the edges that cover $t$ according to a tree path $P_t$). To get an efficient algorithm, we use structural properties of cuts proved in \cite{DBLP:journals/corr/abs-2004-09129} that informally speaking imply that for \emph{many tree edges} the paths $P_t$ are the same. This allows us to bound the total amount of communication. For a more precise discussion and detailed description of the algorithm, see \Cref{sec:learn_cut}. 

Finally, to estimate the cost-effectiveness, we follow similar ideas to the ones described for the path case, where instead of breaking $T$ to $O(\sqrt{n})$ subpaths, we break the tree $T$ to $O(\sqrt{n})$ fragments of size $O(\sqrt{n})$ and distribute the cut information inside these fragments. As in the path case, it is easier to handle cuts $\{t,t'\}$ if there is an edge between the fragments of $t$ and $t'$. In the case that there is no such edge, we prove that the minimum cuts have a very special structure and we can broadcast a small amount of information to learn about these cuts. For full details see \Cref{sec:cost_ef}. 

\paragraph{Conclusion.} 
Based on the above ideas, we obtain an $O(1)$-approximation for the cost-effectiveness of edges in $\tilde{O}(D+k\sqrt{n})$ time. As we discuss in \Cref{sec:setcover}, this leads to an $O(\log{k}\log{n})$-approximation for $k$-ECSS in $\tilde{O}(k(D+k\sqrt{n}))$ time.
Intuitively, the $O(\log{n})$ term in the approximation comes from using a greedy approach to solve each one of the connectivity augmentation problems, as we add to the solution edges with near-optimal cost-effectiveness. The $O(\log{k})$ term comes from working in $k$ phases where we gradually augment the connectivity. The time complexity is based on showing that if we can estimate the cost-effectiveness in $\tilde{O}(D+k\sqrt{n})$ time, then we can augment the connectivity of $H$ from $i$ to $i+1$ in just poly-logarithmic number of iterations, where in each one of them the cost-effectiveness is computed. Summing up over all $k$ phases leads to $\tilde{O}(k(D+k\sqrt{n}))$ time complexity. For a detailed discussion and a proof see \Cref{sec:setcover}.

\section{Discussion}

In this paper, we provide a randomized algorithm with a near-optimal time complexity for the minimum cost $k$-edge-connected spanning subgraph problem for any constant $k$. Several interesting questions remain open. First, our algorithm obtains an $O(\log{k}\log{n})$-approximation, and a natural question is whether it is possible to obtain a constant approximation in a similar time complexity. While there are several centralized polynomial time algorithms that obtain a constant approximation to the problem \cite{goemans1994improved,jain2001factor,khuller1994biconnectivity,chalermsook2022approximating}, it seems challenging to implement them efficiently in a distributed setting (for example, a distributed implementation of the algorithm of Goemans et al. \cite{goemans1994improved} led to an $O(\log{k})$-approximation algorithm with complexity of $O(knD)$ rounds \cite{shadeh2009distributed}). Hence, obtaining an efficient algorithm with a constant approximation in the distributed setting seems to be a highly challenging task that will require a new approach for the problem. Another interesting question is whether it is possible to obtain a deterministic algorithm with similar guarantees.

\section*{Roadmap} The rest of the paper is organized as follows. \Cref{sec:preliminaries} discusses basic tools and claims used in our algorithm. \Cref{sec:setcover} describes the distributed greedy algorithm for augmenting the connectivity, that reduces the $k$-ECSS problem to the problem of estimating cost-effectiveness. The rest of the paper focuses on approximating the cost-effectiveness of edges. First, in \Cref{sec:succinct_min_cuts}, we prove \Cref{min_cuts_lemma} about the succinct representation of minimum cuts. In \Cref{sec:learn_cut}, we show how all tree edges learn the values $\cutInfo(t)$ and additional information useful for the algorithm. Finally, in  \Cref{sec:cost_ef}, we describe our algorithm for approximating the cost-effectiveness. Our lower bounds appear in \Cref{sec:lowerBounds}.

\section{Preliminaries} \label{sec:preliminaries}

\subsection{Cover Values and 2-respecting Cuts} \label{pre_cov_cut}

Given a spanning tree $T$, we say that an edge $e=\{u,v\}$ \emph{covers} a tree edge $t$ if $t$ in the unique tree path between $u$ and $v$.
We denote by $\cov(t)$ the number of edges that cover $t$.
Note that the edges that cover $t$ are exactly the edges that cross the 1-respecting cut defined by $t$ (see Figure \ref{covPic}). 
It follows that $\cut(t)=\cov(t)$, where $\cut(t)$ is the size of the 1-respecting cut defined by $t$. 

We next discuss 2-respecting cuts.
We denote by $\cov(t_1,t_2)$ the number of edges that cover both $t_1$ and $t_2$, and by $\cut(t_1,t_2)$ the value of the 2-respecting cut defined by $t_1,t_2$. The following is shown in \cite{DBLP:journals/corr/abs-2004-09129}, the proof is based on showing that the edges that cross the 2-respecting cut defined by $t_1,t_2$ are exactly the edges that cover exactly one of $t_1,t_2$. See Figure \ref{covPic} for illustration.

\begin{claim}[\cite{DBLP:journals/corr/abs-2004-09129}] \label{covClaim}
Let $t,t'$ be tree edges, then $\cut(t,t')=\cov(t)+\cov(t') - 2 \cov(t,t')$.
\end{claim}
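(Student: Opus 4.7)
The plan is to prove the identity by characterizing exactly which edges cross the 2-respecting cut defined by $t$ and $t'$, and then applying a simple inclusion-exclusion count.

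First, I would describe the two sides of the cut. Removing $t$ and $t'$ from $T$ disconnects it into three subtrees. Since $\{t,t'\}$ is a 2-respecting cut, these two tree edges are the only tree edges crossing the cut, so the three subtrees are partitioned into two nonempty sides $A,B$ in such a way that both $t$ and $t'$ have one endpoint in $A$ and one endpoint in $B$. This uniquely pins down $A$ and $B$ up to swapping.

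Next I would establish the key parity characterization: for any edge $e=\{u,v\}$ of $G$, $e$ crosses the cut iff $e$ covers exactly one of $t,t'$. To see this, walk along the tree path $P_e$ from $u$ to $v$. Every time $P_e$ traverses a tree edge that crosses the cut, the current side (between $A$ and $B$) flips. The only tree edges crossing the cut are $t$ and $t'$, so $u$ and $v$ lie in different sides (i.e.\ $e\in\cut(t,t')$) iff $P_e$ traverses an odd number of the edges $\{t,t'\}$, which in our case means exactly one of them. This is exactly the condition that $e$ covers exactly one of $t,t'$. The tree edges $t,t'$ themselves fit this description consistently: using the convention $P_t=\{t\}$, edge $t$ covers $t$ but not $t'$ (assuming $t\neq t'$), and symmetrically for $t'$, so both are correctly counted as cut edges.

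Finally I would close with a one-line inclusion-exclusion. Let $A_t$ and $A_{t'}$ denote the sets of edges covering $t$ and $t'$ respectively, so $|A_t|=\cov(t)$, $|A_{t'}|=\cov(t')$, and $|A_t\cap A_{t'}|=\cov(t,t')$. The set of edges covering exactly one of $t,t'$ has size
\[
|A_t|+|A_{t'}|-2|A_t\cap A_{t'}|=\cov(t)+\cov(t')-2\cov(t,t'),
\]
which by the previous step equals $\cut(t,t')$. There is no real obstacle here; the only delicate point is making sure the tree edges $t,t'$ themselves are accounted for in the covering count, which is handled by the convention that a tree edge covers itself.
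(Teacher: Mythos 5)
Your proposal is correct and follows exactly the route the paper indicates for this claim (and spells out in its proof of \Cref{claim_cut_cover}): identify the two sides of the 2-respecting cut, show via the side-switching argument along $P_e$ that an edge crosses the cut iff it covers exactly one of $t,t'$, and finish by inclusion-exclusion. Your handling of the convention that a tree edge covers itself also matches the paper's definition, so there is nothing to add.
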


\setlength{\intextsep}{0pt}
\begin{figure}[h]
\centering
\setlength{\abovecaptionskip}{-4pt}
\setlength{\belowcaptionskip}{4pt}
\includegraphics[scale=0.6]{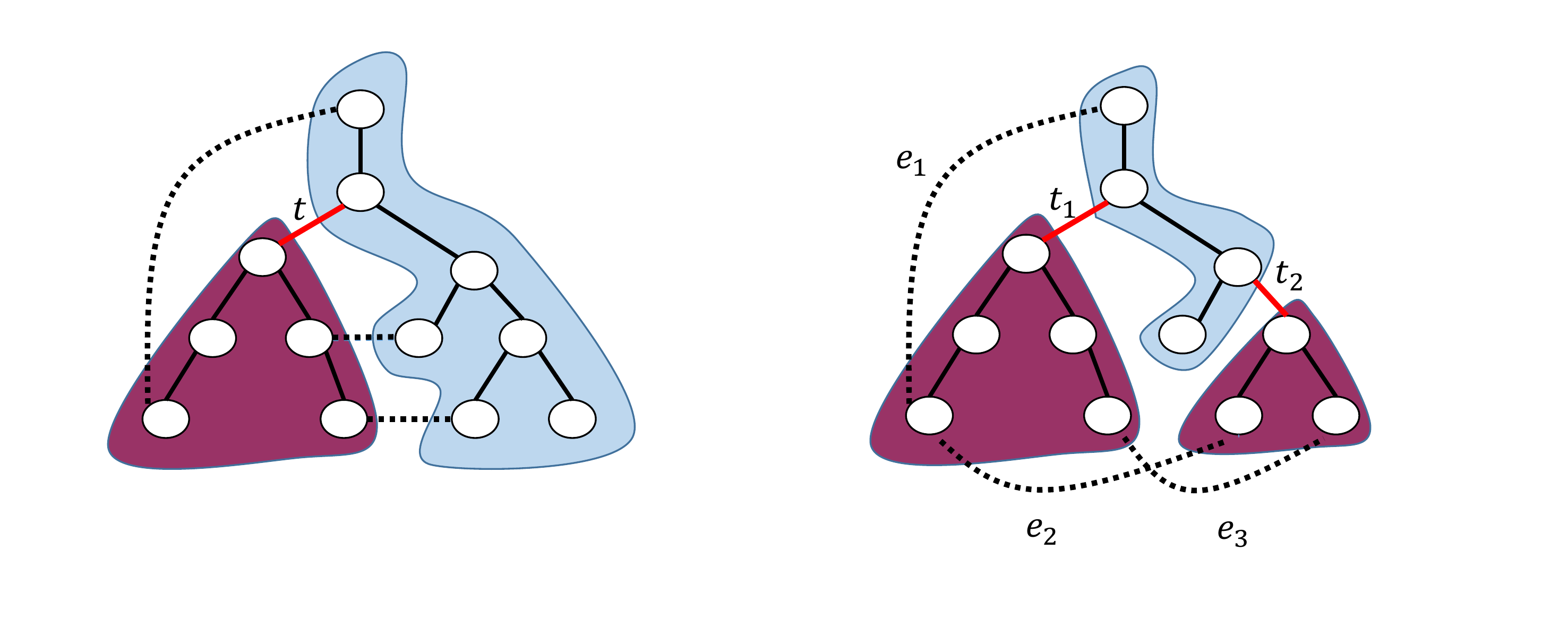}
 \caption{The left side illustrates the 1-respecting cut defined by $t$, the dotted edges are examples of edges that cover $t$. The right side illustrates the 2-respecting cut defined by $t_1,t_2$. The edges $e_2,e_3$ are edges that cover both $t_1,t_2$, where $e_1$ is an edge that covers $t_1$ but not $t_2$.}
\label{covPic}
\end{figure}

The following was shown in \cite{dory2018distributed}.

\begin{claim}[\cite{dory2018distributed}] \label{claimLearnCov}
In $\tilde{O}(D+\sqrt{n})$ time, all tree edges $t$ can learn $\cov(t)$.
\end{claim}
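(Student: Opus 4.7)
The plan is to express $\cov(t)$ as a subtree-counting quantity on $T$ and then invoke the standard Kutten--Peleg fragment-decomposition machinery to aggregate the counts. I would root $T$ at an arbitrary vertex and, for a tree edge $t$, write $x_t$ for its lower endpoint. Since the only tree edge that covers $t$ is $t$ itself, we have $\cov(t)=1+\nu(x_t)$, where $\nu(x)$ denotes the number of non-tree edges with exactly one endpoint in the subtree $T(x)$.

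To compute the $\nu(x)$'s, I would use the classical ``$+1,+1,-2$ at the LCA'' trick: for every non-tree edge $e=\{u,v\}$, place a contribution $+1$ at $u$, $+1$ at $v$, and $-2$ at $w=\mathrm{LCA}(u,v)$, and let $a(y)$ denote the sum of these contributions at vertex $y$. A short case analysis on whether $0$, $1$, or $2$ endpoints of $e$ lie in $T(x)$---using the fact that $w\in T(x)$ iff both endpoints of $e$ do---shows $\nu(x)=\sum_{y\in T(x)} a(y)$. Thus each $\cov(t)$ is recovered from a single subtree sum of the $a$-values on $T$.

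Implementing this uses three standard subroutines, each of which runs within $\tilde{O}(D+\sqrt n)$ rounds in \congest: (i)~compute LCA labels on $T$ so that each endpoint of a non-tree edge can identify the LCA of its two endpoints locally; (ii)~deliver the $-2$ contribution of each non-tree edge to the vertex carrying the corresponding LCA label; (iii)~perform the subtree-sum aggregation of $a(\cdot)$ on $T$. All three are carried out via the decomposition of $T$ into $O(\sqrt n)$ fragments of size $O(\sqrt n)$: inside each fragment everything is computed in $O(\sqrt n)$ rounds via local communication, while information that must cross fragment boundaries travels along the $O(\sqrt n)$-node skeleton tree, pipelined over a global BFS tree of $G$ in $\tilde{O}(D+\sqrt n)$ rounds.

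The main obstacle, and the reason the bound matches that of MST, is step~(iii): a naive bottom-up pass on $T$ would take $\Theta(\mathrm{diam}(T))=\Omega(n)$ rounds because $T$ is an arbitrary spanning tree, not a BFS tree. The resolution is exactly the fragment decomposition plus pipelining through the global BFS tree---the same tool that makes distributed MST solvable in $\tilde{O}(D+\sqrt n)$ rounds---which applies here without essential modification. Once the subtree sums are aggregated, each endpoint of each tree edge reads off its $\cov$ value, yielding the claimed round complexity.
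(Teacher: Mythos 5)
Your reconstruction is sound in its algebra and overall plan, and since the paper simply cites this claim from prior work without reproving it, your proof is a reasonable blind derivation. The identity $\cov(t)=1+\nu(x_t)$ is correct (the only tree edge covering $t$ is $t$ itself), and the ``$+1,+1,-2$ at the LCA'' potential $a(\cdot)$ satisfies $\nu(x)=\sum_{y\in T(x)}a(y)$ exactly as you argue: an edge with both endpoints in $T(x)$ contributes $1+1-2=0$, with one endpoint contributes $1$, with none contributes $0$. Steps (i) and (iii) are indeed the standard fragment-decomposition subroutines that the paper itself invokes elsewhere.

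The one place where your write-up glosses over a real difficulty is step (ii). Delivering the $-2$ of a non-tree edge $e=\{u,v\}$ to the vertex $w=\mathrm{LCA}(u,v)$ is \emph{not} a generic ``standard subroutine'': $w$ may lie in a fragment containing neither $u$ nor $v$, and naively there can be up to $\Theta(m)$ such remote deliveries. The step does go through, but only because of a structural observation you should make explicit. If $w\notin F_u\cup F_v$, then $w$ is an ancestor of both $r_{F_u}$ and $r_{F_v}$, and in fact $w=\mathrm{LCA}(r_{F_u},r_{F_v})$; since paths from fragment roots to the tree root travel only along highways and branch only at fragment boundary vertices, this LCA is always one of the $O(\sqrt n)$ skeleton-tree vertices. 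Hence all cross-fragment $-2$ contributions aggregate into $O(\sqrt n)$ skeleton-vertex counters, which can be summed over a global BFS tree in $\tilde O(D+\sqrt n)$ rounds; contributions whose LCA lies in $F_u$ (or $F_v$) are handled by a pipelined convergecast inside that fragment in $O(\sqrt n)$ rounds, with one designated endpoint per edge to avoid double counting. With this case split spelled out, your proof is complete; without it, the round complexity of step (ii) is not justified.
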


We next discuss a few useful observations used in our algorithm. 
For a non-tree edge $e=\{u,v\}$, we denote by $P_e$ the tree path between $u$ and $v$. 

%\begin{claim} \label{claim_cut_cover}
%A non-tree edge $e$ covers a 2-respecting cut $\{t,t'\}$ iff exactly one of $t,t'$ is in $P_e$.
%\end{claim}

\begin{claim} \label{claim_one_cut_cover}
A non-tree edge $e$ covers a 1-respecting cut defined by $t$ iff $t$ is in $P_e$.
\end{claim}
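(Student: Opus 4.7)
The plan is to unfold the definitions and identify the two sides of the 1-respecting cut with the two components that remain when $t$ is removed from $T$. Let $t=\{a,b\}$ be the tree edge under consideration. Since $T$ is a spanning tree, the graph $T \setminus \{t\}$ has exactly two connected components $A$ (containing $a$) and $B$ (containing $b$), and together these partition $V(G)$. By the definition of a 1-respecting cut, it consists precisely of the edges of $G$ with one endpoint in $A$ and the other in $B$. Combined with the definition of ``cover'' from the introduction ($e$ covers a cut $C$ iff $(G\setminus C)\cup\{e\}$ is connected), a non-tree edge $e=\{u,v\}$ covers this cut iff $e$ crosses the partition $(A,B)$, i.e., iff exactly one of $u,v$ lies in $A$.

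For the forward direction, I would assume $e$ covers the cut and, without loss of generality, take $u \in A$ and $v \in B$. The unique tree path $P_e$ must traverse at least one edge from $A$ to $B$; but $t$ is the only such tree edge (the components $A,B$ are defined precisely as $T\setminus\{t\}$), so $t \in P_e$.

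For the converse, I would assume $t \in P_e$ and argue that removing $t$ disconnects $u$ from $v$ in the tree, because the unique $u$--$v$ tree path passed through $t$. Hence $u$ and $v$ lie in different components of $T \setminus \{t\}$, one in $A$ and one in $B$, so $e$ crosses the partition and therefore covers the 1-respecting cut.

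I do not expect a real obstacle here: the claim is essentially a rephrasing of the observation already recorded in \Cref{pre_cov_cut} that the edges covering $t$ are exactly those crossing the 1-respecting cut, so the identity $\cov(t)=\cut(t)$ is just this claim stated numerically. The only subtlety worth double-checking is that the two tree components $A,B$ really are the two sides of the cut in $G$ (not merely in $T$); this holds because every $G$-edge between $A$ and $B$ is by construction in the cut, so $G \setminus C$ has $A$ and $B$ as its connected parts with respect to the cut, making ``crossing the cut'' and ``crossing $(A,B)$'' the same condition.
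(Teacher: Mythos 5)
Your proposal is correct and follows essentially the same route as the paper's proof: identify the two components of $T\setminus\{t\}$ as the two sides of the 1-respecting cut and use the fact that $t$ is the only tree edge between them, so that crossing the cut is equivalent to $t$ lying on $P_e$. The extra care you take in checking that the tree components coincide with the sides of the cut in $G$ is a fine (if minor) elaboration of what the paper states more tersely.
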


\begin{proof}
Removing $t$ from the tree disconnects $T$ to two connected components, one is the subtree below $t$, and the other is the rest of the tree. The edges that cross the cut are exactly the edges that have endpoints in these two components. Since $t$ is the only tree edge connecting these components, these are exactly all edges where $t$ in the tree path $P_e$.
\end{proof}

%\begin{restatable}{claim}{oneRespectCover} \label{claim_one_cut_cover}
%A non-tree edge $e$ covers a 1-respecting cut defined by $t$ iff $t$ is in $P_e$.
%\end{restatable}

\twoRespectCover*

\begin{proof}
Note that removing the edges of a minimum cut from a graph disconnects the graph into two connected components, we refer to them as the two sides of the cut.
Since $t,t'$ are edges in the cut, their endpoints are on the two different sides of the cut. If we take an edge $e=\{u,v\}$ that covers exactly one of $t,t'$, say $t$, then the path $P_e$ starts in the side of the cut of $u$, and then switches to the other side when it crosses $t$. It follows that $u$ and $v$ are on different sides of the cut, and that $e$ is an edge that crosses the cut. Any other edge either doesn't cover any of $t,t'$ or covers both of them. In the first case both $e$'s endpoints are on the same connected component in $T \setminus \{t,t'\}$, hence they are on the same side of the cut. In the second case, the path $P_e$ covered by $e$ starts on one side of the cut, and switches sides 2 times, when crossing $t$ and $t'$, hence both its endpoints are on the same side of the cut.
\end{proof}

%\begin{lemma}
%Let $H$ be a graph.
%If there is an $\alpha$-approximation algorithm for computing the number of min 2-respecting cuts in $H$ %covered by each edge $e \not \in H$ with respect to a spanning tree $T$ that takes $S$ time and works w.h.p, then there is an $O(\alpha)$-approximation algorithm for computing the number of minimum cuts in $H$ covered by each edge $e \not \in H$ that takes $\widetilde{O}(S+D+\sqrt{n})$ time and works w.h.p.
%\end{lemma}

We next show that it is enough to focus on 2-respecting cuts.
In a seminal work \cite{karger2000minimum}, Karger showed that one can find a small set of spanning trees $T_1,...,T_{\ell}$ of a graph $H$ such that for any minimum cut $C$ in $H$, there is a tree $T_j$ in the set such that $C$ 2-respects $T_j$. Actually, a stronger statement holds: cut $C$ 2-respects a \emph{constant fraction} of the trees. This construction was also implemented in a distributed setting \cite{daga2019distributed,DBLP:journals/corr/abs-2004-09129}, giving the following.\footnote{See Appendix A in \cite{DBLP:journals/corr/abs-2004-09129} for a proof. In that paper, the description of the theorem is slightly different, as the paper focuses on finding one minimum cut, but from the lemmas that appear before it based on \cite{karger2000minimum}, \Cref{thm_karger} follows.}

\begin{theorem}[\cite{karger2000minimum,daga2019distributed}] \label{thm_karger}
Given a graph $H$, it is possible to find a set of $\ell=\Theta(\log^{2.2}n)$ spanning trees $T_1,T_2,...,T_{\ell}$ such that any minimum cut in $H$ 2-respects 1/3 fraction of the trees. The construction time is $\tilde{O}(D+\sqrt{n})$ time, and the algorithm works with high probability.
\end{theorem}

This allows us to focus our attention on \emph{min 2-respecting cuts}, minimum cuts that have at most 2 edges in a spanning tree $T_j$. In our algorithm, to estimate the cost-effectiveness of edges, we estimate the number of min 2-respecting cuts they cover in all the spanning trees $T_1,...,T_{\ell}$. Using \Cref{thm_karger}, we show the following.

\begin{restatable}{lemma}{reductionLemma}[\textbf{Reduction from Min Cuts to Min 2-respecting Cuts}] \label{lem:reduction}
Let $H$ be a graph with a spanning tree $T$.
Let $A$ be an $\alpha$-approximation algorithm for computing the number of min 2-respecting cuts in $H$ covered by each edge $e \not \in H$. If $A$ takes $S$ time and works w.h.p, then there is an $O(\alpha)$-approximation algorithm for computing the number of minimum cuts in $H$ covered by each edge $e \not \in H$ that takes $\tilde{O}(S+D+\sqrt{n})$ time and works w.h.p.
\end{restatable}

\begin{proof}
From \Cref{thm_karger}, it is possible to find a set $T_1,...,T_{\ell}$ of spanning trees with $\ell = \Theta(\log^{2.2} n)$, such that for any minimum cut $C$ in the graph $G$, $C$ 2-respects 1/3 fraction of the spanning trees. The construction time is $\tilde{O}(D+\sqrt{n})$ time, and the algorithm works with high probability. By the end of the algorithm each vertex knows which edges are adjacent to it in each one of the spanning trees. 

Now we apply our $\alpha$-approximation algorithm to compute the number of min 2-respecting cuts covered by each edge $e \not \in H$ in each one of the spanning trees $T_1,...,T_{\ell}$. This takes $\tilde{O}(S)$ time. Denote by $c_i$ the number of min 2-respecting cuts covered by $e$ in the tree $T_i$, and denote by $c'_i$ the estimate of $c_i$ returned by the algorithm. Let $C$ be a minimum cut covered by $e$, then $C$ 2-respects at least $1/3$ of the trees $T_1,...,T_{\ell}$, it follows that in the sum $c_S = \sum_{i=1}^{\ell} c_i$ each minimum cut $C$ covered by $e$ is counted between $\ell/3$ to $\ell$ times. Also, any min 2-respecting cut is also a minimum cut in the original graph by definition. Hence the value $3c_S / \ell$ gives a 3-approximation to the number of minimum cuts covered by $e$. As we only have $\alpha$-approximations $c'_i$ to the values $c_i$, we can get $\alpha$-approximation to $c_S$ by computing $c'_S = \sum_{i=1}^{\ell} c'_i$. Now taking $3c'_S / \ell$ gives a $3 \alpha$-approximation to the number of minimum cuts covered by $e$, as needed.
\end{proof}

We say that $t$ and $t'$ are orthogonal tree edges if they are on different root to leaf paths in the tree. 

\begin{claim} \label{subtreeEdge}
Let $t$ and $t'$ be orthogonal tree edges.
Let $T_t$ be the subtree of $T$ below the edge $t$. If $\{t,t'\}$ is min 2-respecting cut, then there is an edge between the subtree $T_t$ and the subtree $T_{t'}$.
\end{claim}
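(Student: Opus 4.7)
The plan is to exploit the standard fact that both sides of a minimum cut induce connected subgraphs, combined with a careful analysis of the three tree-components produced by deleting the pair $\{t,t'\}$.

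First, I would set up the tree structure. Since $t$ and $t'$ are orthogonal, neither is an ancestor of the other, so the subtrees $T_t$ and $T_{t'}$ are vertex-disjoint and neither contains the other's ``parent'' endpoint. Removing $\{t,t'\}$ from $T$ therefore splits the tree into exactly three connected pieces: $T_t$, $T_{t'}$, and the remaining part $R := V \setminus (V(T_t) \cup V(T_{t'}))$, where $R$ contains the root and the upper endpoints of both $t$ and $t'$.

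Next I would identify the two sides of the 2-respecting cut. Let $(S,\bar S)$ denote the bipartition induced by $\{t,t'\}$. Because $t$ crosses the cut, its two endpoints lie on opposite sides; the lower endpoint lies in $T_t$ and the upper endpoint lies in $R$, so $T_t$ and $R$ are on different sides. The same argument applied to $t'$ shows $T_{t'}$ and $R$ are on different sides. Hence (up to relabeling) $S = V(T_t) \cup V(T_{t'})$ and $\bar S = R$.

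Now I would invoke the minimality of the cut. Since $\{t,t'\}$ is a minimum cut of $H$, both $H[S]$ and $H[\bar S]$ must be connected; otherwise one could take a proper connected component of a disconnected side and obtain a strictly smaller cut, contradicting minimality. Applied to $S = V(T_t) \cup V(T_{t'})$, the subgraph $H[S]$ is the union of the two vertex-disjoint connected pieces $H[V(T_t)]$ and $H[V(T_{t'})]$ (each is connected through its own tree edges, which are still present since $t$ and $t'$ are the only edges of the cut and they are not internal to either subtree) plus any edges of $H$ running between them. Connectedness of $H[S]$ therefore forces the existence of at least one edge of $H$ with one endpoint in $T_t$ and the other in $T_{t'}$, which is exactly the claim.

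The only subtlety worth double-checking is the ``minimum cut implies both sides connected'' step, which I would phrase as a one-line standard argument rather than a lemma: if $H[\bar S]$ or $H[S]$ had several components, restricting the cut to one component strictly decreases the number of crossing edges, contradicting $|C|=k$. No other obstacles arise; the rest is bookkeeping about which tree-pieces sit on which side of the cut.
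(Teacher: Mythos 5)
Your proof is correct, but it takes a genuinely different route from the paper's. The paper derives the claim from the covering inequality $\cov(t,t')\geq \cov(t)/2$ (cited from \cite{mukhopadhyay2020weighted,DBLP:journals/corr/abs-2004-09129} and reproved in the appendix via $\cut(t,t')=\cov(t)+\cov(t')-2\cov(t,t')\leq\cov(t')$): since some edge covers both $t$ and $t'$, and any edge covering both orthogonal edges must have one endpoint in each of the disjoint subtrees $T_t$ and $T_{t'}$, the connecting edge exists. You instead identify the bipartition of the cut as $\bigl(V(T_t)\cup V(T_{t'}),\,R\bigr)$ and invoke the standard fact that both sides of a minimum cut induce connected subgraphs, so $H[V(T_t)\cup V(T_{t'})]$ being connected forces an edge between the two subtrees. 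Your identification of the sides is right (it matches the paper's convention that the crossing edges are exactly those covering exactly one of $t,t'$), and the ``both sides connected'' step is sound since a min 2-respecting cut is by definition a minimum cut of the graph. The trade-off: your argument is more elementary and self-contained (no covering lemma needed), while the paper's yields the quantitatively stronger conclusion that at least $\cov(t)/2$ edges run between $T_t$ and $T_{t'}$, and reuses a lemma the paper needs anyway for \Cref{unique_path_lemma}. For the claim as stated, either suffices.
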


\begin{proof}
It is shown in \cite{mukhopadhyay2020weighted,DBLP:journals/corr/abs-2004-09129} that if $\{t,t'\}$ is a min 2-respecting cut, then there is an edge that covers both $t$ and $t'$.\footnote{They show something stronger, that at least half of the edges that cover $t$ cover also $t'$.} Any edge that covers a tree edge $t$ has one endpoint in the subtree $T_t$, as $T_t$ is one of the connected components in the 1-respecting cut defined by $t$. Hence, if there is an edge $e$ that covers $t$ and $t'$ and they are orthogonal, the subtrees $T_t$ and $T_{t'}$ are disjoint, and hence $e$ connects the two subtrees. 
\end{proof}

\subsection{Fragment Decomposition} \label{sec:frag}

Given a spanning tree $T$, a fragment decomposition is a decomposition of the tree into fragments with nice properties. We use the fragment decomposition from \cite{DBLP:journals/corr/abs-2004-09129}, that is based on earlier constructions from \cite{ghaffari2016near,dory2018distributed,dory2019improved}. 

In this construction, the tree is decomposed into $O(\sqrt{n})$ \emph{edge-disjoint} fragments, each of size $O(\sqrt{n})$. Each fragment $F$ has a very specific structure, it has two special vertices $r_F$ and $d_F$, that are called the root and the unique descendant of the fragment. The fragment is composed of one main path between $r_F$ and $d_F$, called the \emph{highway} of the fragment, and additional subtrees attached to the highway. The edges on these subtrees are called \emph{non-highway} edges. The only vertices in the fragment that could be connected by a tree edge to another fragment are $r_F$ and $d_F$, they can also be part of other fragments. Other vertices of the fragment are internal vertices, that are not connected by a tree edge to any other fragment. See Figure \ref{fragPic} for illustration.

The structure of the fragments is captured by a \emph{skeleton tree}. In this tree the vertices correspond to all vertices that are either $r_F$ or $d_F$ in at least one fragment, and the edges correspond to the highways of the fragments. As there are $O(\sqrt{n})$ fragments, the description of the tree has $O(\sqrt{n})$ size, and we can broadcast it to the whole graph in $O(D+\sqrt{n})$ time. The construction of the fragment decomposition takes $\tilde{O}(D+\sqrt{n})$ time. 

\setlength{\intextsep}{0pt}
\begin{figure}[h]
\centering
\setlength{\abovecaptionskip}{-4pt}
\setlength{\belowcaptionskip}{4pt}
\includegraphics[scale=0.6]{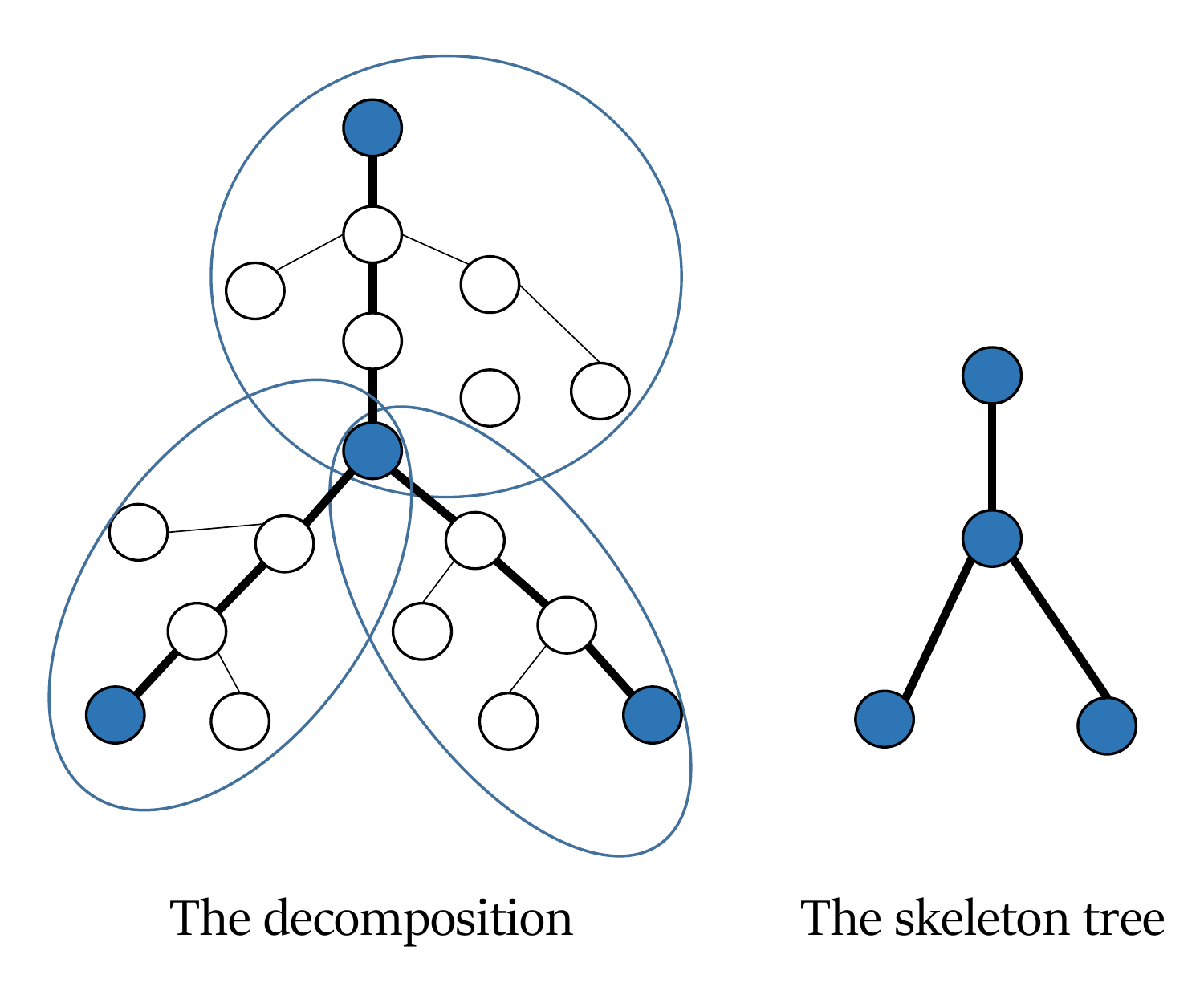}
 \caption{Illustration of the fragment decomposition.}
\label{fragPic}
\end{figure}

\begin{theorem}[\cite{DBLP:journals/corr/abs-2004-09129}]
Given a spanning tree $T$, the fragment decomposition can be computed in $\tilde{O}(D+\sqrt{n})$ time. By the end of the computation, all vertices know their fragment, and the complete structure of the skeleton tree. 
\end{theorem}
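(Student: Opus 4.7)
The plan is to adapt standard distributed tree-decomposition techniques built on a BFS backbone. I would begin by computing a BFS tree $B$ of $G$ in $O(D)$ rounds; this serves as the backbone for all global broadcasts and pipelined aggregations, so that any collection of $O(\sqrt{n})$ tokens can be upcast or downcast on $B$ in $\tilde{O}(D+\sqrt{n})$ rounds.

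Next, I would root $T$ at an arbitrary vertex and, by a convergecast along $T$, compute the subtree size $s(v)$ at every vertex. Using these sizes, I would select a set $M$ of \emph{markers} of size $O(\sqrt{n})$ such that along every root-to-leaf path the number of tree edges between consecutive markers is $O(\sqrt{n})$. A concrete rule is to mark $v$ whenever $\lfloor s(v)/\sqrt{n}\rfloor$ differs from that of its parent, plus additional markers inserted by walking up from deep still-unmarked leaves in chunks of $\sqrt{n}$ edges. Given $M$, the fragments are then defined by pairs $(r_F,d_F)$ of nested markers: $d_F$ is a marker, $r_F$ is its lowest strict marker ancestor, the highway is the $r_F$-to-$d_F$ tree path, and the non-highway edges are all maximal subtrees hanging off interior vertices of that path that contain no further marker. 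Edge-disjointness follows because every tree edge has a unique lowest marker ancestor, and the size bound $O(\sqrt{n})$ per fragment follows from the $\lfloor s(\cdot)/\sqrt{n}\rfloor$ rule.

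Once $M$ and the fragments are defined, each marker collects fragment-local information by communicating for $O(\sqrt{n})$ rounds over $T$, and then forwards a constant-size record (its own identity, the identities of $r_F$ and $d_F$, and the highway length) to the root of $B$. Because $|M|=O(\sqrt{n})$, the entire skeleton tree fits in $O(\sqrt{n})$ tokens, and the root broadcasts it back over $B$ in $\tilde{O}(D+\sqrt{n})$ rounds. Each vertex learns which fragment(s) it belongs to from the local fragment aggregation, and learns the global skeleton from this broadcast.

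The main obstacle will be choosing $M$ so that the three invariants hold simultaneously: a total of $O(\sqrt{n})$ markers, fragment size $O(\sqrt{n})$, and the structural invariant that each fragment has a single top marker $r_F$ and a single bottom marker $d_F$ with all other vertices internal. The naive subtree-size threshold can fail when a vertex has many children of size just below $\sqrt{n}$, producing either too many markers or fragments with multiple descendants. The fix, in the spirit of classical heavy-path-style constructions, is to additionally mark any vertex with two or more children of size at least $\sqrt{n}$ so that the decomposition splits cleanly at branchings, and to cap long thin paths by inserting artificial markers during a depth sweep. Establishing that this combined rule yields $O(\sqrt{n})$ markers, preserves the unique-descendant structure, and is computable in $\tilde{O}(D+\sqrt{n})$ rounds using only $B$-based aggregations is the technical heart of the argument.
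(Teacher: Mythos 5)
This statement is imported from \cite{DBLP:journals/corr/abs-2004-09129} (building on \cite{ghaffari2016near,dory2018distributed,dory2019improved}); the paper gives no proof of its own, so your construction has to stand on its own merits, and it has two genuine gaps. The first is the very first nontrivial step: ``by a convergecast along $T$, compute the subtree size $s(v)$ at every vertex.'' A convergecast along $T$ takes time proportional to the \emph{depth of $T$}, not to $D$. The spanning tree $T$ here is an arbitrary tree (an MST, or one of Karger's sampled trees), and its depth can be $\Theta(n)$ even when the graph diameter is $O(1)$ --- this tension is precisely why every $\tilde{O}(D+\sqrt{n})$ primitive on spanning trees in this literature is nontrivial. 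The standard resolution is not to compute global subtree sizes at all, but to build the base partition into $O(\sqrt{n})$ connected pieces of size $O(\sqrt{n})$ by a Kutten--Peleg style bottom-up controlled merging (or the construction of \cite{ghaffari2016near}), which runs in $\tilde{O}(\sqrt{n})$ rounds because each intermediate piece has size, hence diameter, $O(\sqrt{n})$; only \emph{after} that partition exists can one combine per-piece counts over the BFS tree. Your plan uses the decomposition's conclusion (cheap aggregation over $T$) as an ingredient in its own construction.

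The second gap is in the marker rule itself. Marking $v$ when $\lfloor s(v)/\sqrt{n}\rfloor$ differs from its parent's value, plus marking branchings with two or more children of subtree size at least $\sqrt{n}$, does not bound the fragment size: consider a vertex with $\Theta(n)$ children whose subtrees each have size below $\sqrt{n}$ and contain no markers (e.g., $T$ is a star). Then only that vertex is marked and a single fragment of size $\Theta(n)$ results; neither of your proposed fixes (splitting at large branchings, walking up from deep leaves) touches this case, since there are no large branchings and no deep leaves. Handling it requires grouping the small marker-free subtrees hanging off a highway into batches of total size $\Theta(\sqrt{n})$, which in turn creates several fragments sharing the same $r_F$ --- a case your ``one fragment per marker $d_F$'' accounting does not cover. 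Together with your own acknowledgment that the simultaneous verification of the three invariants is deferred, the proposal as written does not yet establish the theorem.
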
   

One useful property of the fragment decomposition is that it allows non-tree edges to compute aggregate functions of tree edges that they cover efficiently. For a proof see Claim 2.5 in \cite{dory2020distributed}.\footnote{The proof focuses on the case that the information $m_t$ of tree edges is of size $O(\log{n})$, if the size is poly-logarithmic, the same proof idea works and it adds a poly-logarithmic term to the time complexity.} Similar proof for special cases appears in \cite{dory2018distributed}.

\begin{claim}[\cite{dory2020distributed}] \label{aggregate_Pe}
Assume that each tree edge $t$ has an information $m_t$ of poly-logarithmic size. In $\tilde{O}(D+\sqrt{n})$ time, all the non-tree edges $e$ can learn an aggregate function of the values $\{m_t\}_{t \in P_e}$.
\end{claim}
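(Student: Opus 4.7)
The plan is to combine the fragment decomposition with two communication phases: a \emph{local} phase carried out in parallel inside each fragment, and a \emph{global} phase that pipelines aggregates along the skeleton tree. Throughout, I will rely on the fact that $m_t$ has poly-logarithmic size, so a single edge can transmit $m_t$, or a constant-size aggregate of several such values, in $\poly(\log n)$ rounds.

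First, compute the fragment decomposition in $\tilde{O}(D+\sqrt{n})$ time. Then, in parallel inside every fragment $F$, I would run a convergecast/broadcast on the tree-subgraph induced by $F$ to compute, for each vertex $w \in F$, the aggregate $A_F(w,r_F)$ of the values $m_t$ along the tree path inside $F$ from $w$ to $r_F$, and similarly $A_F(w,d_F)$ to $d_F$. In addition, I compute the ``highway aggregate'' $H(F)$ of $m_t$ along the highway from $r_F$ to $d_F$. Because each fragment has $O(\sqrt{n})$ vertices and edges and the messages are of poly-logarithmic size, standard upcast/downcast on the fragment's tree finishes in $\tilde{O}(\sqrt{n})$ rounds; running this concurrently in all fragments is possible since the fragments are edge-disjoint.

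In the global phase, the $O(\sqrt{n})$ values $H(F)$ are gathered to the root of a global BFS tree and then broadcast to all vertices, together with the structure of the skeleton tree, via standard pipelining in $\tilde{O}(D+\sqrt{n})$ rounds. After this step, every vertex knows the full skeleton and the highway aggregate $H(F)$ of every fragment. For a non-tree edge $e=\{u,v\}$, the path $P_e$ then decomposes uniquely into a segment inside $F_u$ from $u$ to a boundary vertex in $\{r_{F_u},d_{F_u}\}$, a sequence of complete highways determined by the path in the skeleton tree between those boundary vertices and those of $F_v$, and a symmetric segment inside $F_v$ ending at $v$. Vertex $u$ already possesses $A_{F_u}(u,\cdot)$ from the local phase and every $H(F)$ from the global phase, and it exchanges $A_{F_v}(v,\cdot)$ with $v$ across the edge $e$ itself, so $u$ and $v$ can assemble the aggregate along $P_e$ with no further communication.

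The step I expect to be the main obstacle is the boundary bookkeeping: ensuring that each tree edge is counted exactly once and that the decomposition above is applied correctly in the corner cases, namely when $u$ and $v$ lie in the same fragment, when $P_e$ uses only a prefix of a highway instead of the full highway, or when the LCA of $u$ and $v$ in $T$ falls on a highway rather than in a fragment interior. This is handled by using the LCA labels on the skeleton tree (together with the known endpoints of each fragment) to identify, per non-tree edge, which of a constant number of cases applies; in each case $u$ combines the appropriate values $A_{F_u}(u,\cdot)$, $A_{F_v}(v,\cdot)$ and a subset of the $H(F)$'s along the identified skeleton path, and (when needed) an additional partial-highway aggregate that can also be precomputed in the local phase. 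Summing up, all edges learn the desired aggregate in $\tilde{O}(D+\sqrt{n})$ rounds.
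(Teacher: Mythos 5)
Your proposal is correct and follows essentially the same route as the proof the paper points to (Claim 2.5 of \cite{dory2020distributed}): decompose $P_e$ via the fragment decomposition into two fragment-internal segments plus a sequence of complete highways of intermediate fragments, handle the former by $\tilde{O}(\sqrt{n})$-round computations run in parallel inside the edge-disjoint fragments, and handle the latter by pipelining the $O(\sqrt{n})$ highway aggregates over a global BFS tree. The one point worth making explicit is the case $F_u=F_v$ (and, for non-invertible aggregates such as $\min$, any combination of to-boundary aggregates that would require ``cancellation''); this is resolved within the same round budget by broadcasting all $O(\sqrt{n})$ poly-logarithmic-size values $m_t$ of a fragment to all of its vertices, after which each endpoint can aggregate any intra-fragment subpath locally.
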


We next discuss a few simple observations on the fragment decomposition used in our algorithm.

\begin{claim} \label{globalCover}
Let $e$ be an edge where both its endpoints are not internal vertices in the fragment $F$ (they can be either outside the fragment, or the root or unique descendant of the fragment), then $e$ either covers exactly all tree edges in the highway $h_F$ of $F$, or does not cover any edge of $F$.
\end{claim}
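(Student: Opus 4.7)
The plan is to analyze the intersection $P_e \cap F$ of the tree path $P_e$ with the subtree $F$, and argue that it is either empty or equal to the entire highway $h_F$; this is exactly the claimed dichotomy.

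First I would observe that since $F$ is a connected subtree of $T$ and $P_e$ is a simple path in $T$, the set of edges of $P_e$ lying in $F$ forms a single (possibly empty) contiguous subpath $Q$ of $P_e$. Indeed, if $P_e$ left $F$ and later re-entered it, then the two ``visits'' together with the unique path inside $F$ joining their entry and exit vertices would form a cycle in $T$, which is impossible.

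Next I would classify the two endpoints of $Q$, assuming $Q$ is nonempty. Each endpoint $x$ of $Q$ is either an endpoint of $P_e$ itself (so $x \in \{u,v\}$), or a vertex at which $P_e$ transitions from inside $F$ to outside $F$ and is therefore incident to a tree edge leaving $F$. In the first case, the hypothesis on $e$ forces $x \in \{r_F, d_F\}$, since $u$ and $v$ are not internal vertices of $F$ and $x$ does lie in $F$. In the second case, the defining property of the fragment decomposition --- that only $r_F$ and $d_F$ can be endpoints of tree edges leaving $F$ --- again forces $x \in \{r_F, d_F\}$.

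Having pinned down the endpoints of $Q$, the claim follows quickly. If $Q$ is empty, then $e$ covers no edge of $F$. Otherwise its two endpoints both lie in $\{r_F, d_F\}$ and cannot coincide (else $Q$ would have length $0$), so they must be exactly $r_F$ and $d_F$. Since the tree path between $r_F$ and $d_F$ in $T$ is unique and equals the highway $h_F$ by definition, $Q = h_F$, meaning $e$ covers precisely the highway edges and no non-highway edge of $F$.

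There is no real technical obstacle here: the statement is essentially a structural unpacking of the defining properties of the fragment decomposition. The only point that deserves to be spelled out carefully is the ``no re-entry'' observation in the first step, which is what guarantees $P_e \cap F$ is a single subpath rather than a disjoint union of segments.
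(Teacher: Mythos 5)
Your proof is correct and follows essentially the same approach as the paper's (which compresses the whole argument into one sentence): the path $P_e$ can only enter or leave $F$ through $r_F$ or $d_F$, and the non-internal hypothesis on the endpoints of $e$ forces the boundary of $P_e\cap F$ into $\{r_F,d_F\}$ as well, so the intersection is empty or equals the highway. Your write-up just spells out the ``single contiguous subpath'' observation that the paper leaves implicit.
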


\begin{proof}
Since both the endpoints of $e$ are not internal vertices in $F$, then the tree path covered by $e$ either does not go through $F$, or enters and leaves $F$ using the vertices $r_F$ and $d_F$ that are the only ones connected to vertices outside $F$, in this case $e$ covers exactly the highway of $F$.
\end{proof}

We denote by $LCA(u,v)$ the lowest common ancestor of $u$ and $v$.
The following claim follows from the structure of the fragment decomposition.

\begin{claim} \label{lcaFrag}
Let $u,v$ be two vertices that are not internal vertices in the fragment $F$ (they can either be outside the fragment, or the root or descendant of $F$), then $LCA(u,v)$ is not an internal vertex in $F$.
\end{claim}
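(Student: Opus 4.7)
The plan is to argue by contradiction. Suppose $w=LCA(u,v)$ is internal to $F$; I want to derive a contradiction. First, the easy case $u=w$ or $v=w$ immediately forces one of $u,v$ to be internal to $F$, contradicting the hypothesis. So I may assume $u$ and $v$ are strict descendants of $w$, lying in the subtrees rooted at two distinct children $c_u$ and $c_v$ of $w$.

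The core step is the structural observation that, if $c$ is a child of $w$ and $d_F$ is not in the subtree rooted at $c$, then the entire subtree rooted at $c$ is contained in $F$. To prove this I would first note that $c\in F$, because $w$ is internal to $F$ and so all tree edges incident to $w$ stay inside $F$. If the subtree of $c$ contained some vertex $x\notin F$, the path from $c$ to $x$ in $T$ would cross the boundary of $F$ at an edge $(y,y')$ with $y\in F$, $y'\notin F$, and $y'$ a child of $y$. The defining property of the fragment decomposition forces $y\in\{r_F,d_F\}$; but $y$ is a descendant of $c$ and hence of $w$, while $r_F$ is a strict ancestor of $w$ (it is the root of $F$ and $w\neq r_F$ since $w$ is internal), so the only possibility is $y=d_F$, contradicting the assumption.

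To finish, note that $c_u$ and $c_v$ are distinct children of $w$, so their subtrees are disjoint and $d_F$ can belong to at most one of them; assume without loss of generality that $d_F$ is not in the subtree of $c_v$. The structural observation then gives $v\in F$, so since $v$ is not internal we have $v\in\{r_F,d_F\}$. But $r_F$ is an ancestor of $w$ and therefore not in the subtree of $c_v$, and $d_F$ is excluded from that subtree by choice, so neither option is available, yielding the desired contradiction.

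The main point to be careful about is to use only the abstract defining property that $r_F$ and $d_F$ are the only vertices of $F$ with tree edges leaving $F$, rather than leaning on the finer highway/hanging-subtree structure. Once $r_F$ is correctly placed above $w$ and $d_F$ is seen to inhabit at most one of the child subtrees of $w$, the argument reduces to routine ancestor/descendant reasoning in $T$.
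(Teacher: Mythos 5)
Your proof is correct: the paper states \Cref{lcaFrag} without proof (asserting only that it ``follows from the structure of the fragment decomposition''), and your argument is the natural way to fill that in. The key steps --- that an internal vertex $w$ has all its tree neighbors inside $F$, that any downward path leaving $F$ must exit through $d_F$ (since $r_F$ is a strict ancestor of $w$), and that $d_F$ can lie in at most one child subtree of $w$ --- are all valid consequences of the stated properties of the decomposition, so the contradiction goes through.
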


%Let $u,v$ be two vertices in the fragments $F$ and $F'$, then $LCA(u,v)$ is either a vertex in $F$ or in $F'$ or is the unique root or descendant of another fragment.

\subsection{LCA Labels} \label{sec:lca}

Given a spanning tree $T$, the tool of lowest common ancestor (LCA) labels allows to give the vertices of the graph small labels of size $O(\log{n})$ bits such that given the labels of two vertices $u$ and $v$ we can infer the label of $LCA(u,v)$ directly from their labels. During our algorithm LCA labels will be useful for many tasks, such as checking if an edge $e$ covers a tree edge $t$, and checking if a vertex $w$ is inside the tree path between $u$ and $v$. When we work with a specific tree $T$, we use the LCA labels as the unique ids of vertices, and usually assume that if a vertex $v$ learns about a vertex $u$ it learns the LCA label of $u$.

We use the construction from \cite{censor2020fast,dory2020distributed} (see Lemma 2.9 in \cite{dory2020distributed}). This construction uses the fragment decomposition discussed above as one of its ingredients, and for each vertex the id of its fragment is part of its LCA label, we will exploit this property in our algorithm. 

\begin{lemma}[\cite{censor2020fast}]
In $O(D+\sqrt{n} \log^*{n})$ rounds, we can assign the vertices of a tree $T$ labels of size
$O(\log{n})$, such that given the labels of two vertices $u$ and $v$, we can infer the label of their LCA. 
\end{lemma}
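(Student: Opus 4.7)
The plan is to bootstrap the LCA labeling from the fragment decomposition of \Cref{sec:frag}. That decomposition partitions $T$ into $O(\sqrt{n})$ edge-disjoint fragments of size $O(\sqrt{n})$, and the skeleton tree $S$ (whose nodes are the roots and unique descendants of fragments and whose edges are the highways) has only $O(\sqrt{n})$ vertices and can be made globally known in $\tilde{O}(D+\sqrt{n})$ rounds. The key observation is that $LCA_T(u,v)$ is either contained in a single fragment (when $u$ and $v$ share a fragment) or is determined by the skeleton LCA of the two fragments together with a small amount of information local to each. Hence it suffices to combine a classical LCA labeling of $S$ with a local LCA labeling per fragment.

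I would proceed in three stages. First, invoke the fragment decomposition in $\tilde{O}(D+\sqrt{n})$ rounds, so that every vertex knows its fragment and the entire skeleton tree. Second, since $S$ is common knowledge, every vertex locally computes an $O(\log n)$-bit LCA labeling of $S$ at no additional communication cost, using any standard sequential tree-LCA-labeling scheme. Third, inside each fragment $F$, gather its structure at $r_F$ along the highway and its hanging subtrees in $O(\sqrt{n})$ rounds, run a sequential $O(\log n)$-bit LCA labeling of $F$ there (the source of the $\log^* n$ factor in standard small-label tree-LCA constructions), and broadcast the resulting labels back within $F$ in another $O(\sqrt{n})$ rounds; this is done for all fragments in parallel because they are edge-disjoint. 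The final label of a vertex $v\in F$ is the concatenation of its fragment id, the skeleton labels of $r_F$ and $d_F$, and its local label inside $F$, totalling $O(\log n)$ bits since there are $O(\sqrt{n})$ fragments and each fragment has $O(\sqrt{n})$ vertices.

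To recover $LCA_T(u,v)$ from the labels alone, if the fragment ids coincide, apply the local decoder inside the shared fragment. Otherwise, use the skeleton portion of the labels to compute the skeleton LCA $z$ of the two fragments; $LCA_T(u,v)$ then either lies at $z$, on the highway of a fragment incident to $z$, or strictly above $u$ or $v$ within $F_u$ or $F_v$, and which alternative occurs, and where exactly, is fully determined by the encoded positions of $u$ and $v$ relative to their fragment roots, descendants, and highways. The main technical obstacle is precisely this cross-fragment case: one has to ensure that the local labels not only encode intra-fragment LCAs but also indicate, for a vertex $v$, whether its upward path leaves the fragment through $r_F$ and where along the highway it does so, so that this can be combined consistently with the skeleton LCA to pinpoint the true LCA. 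Since both the number of fragments and the size of each fragment are $O(\sqrt{n})$, all required fields fit in $O(\log n)$ bits, and the total round complexity is $O(D+\sqrt{n}\log^* n)$, dominated by broadcasting the skeleton and the parallel local label computations.
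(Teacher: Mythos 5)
The paper does not prove this lemma; it imports it verbatim from \cite{censor2020fast} (see also Lemma 2.9 in \cite{dory2020distributed}) and only remarks that the cited construction is built on the fragment decomposition and embeds the fragment id into each label. Your reconstruction follows exactly that design --- per-fragment local LCA labels combined with an LCA labeling of the globally known skeleton tree, with the fragment id and the fragment's boundary information concatenated into the label --- so in spirit it matches the construction the paper relies on, and your case analysis for the cross-fragment decoding (LCA is either inside a shared fragment, a skeleton vertex, or the point where the other vertex's root-path meets the highway of $F_u$ or $F_v$, i.e.\ $LCA_{F_u}(u,d_{F_u})$) is the right one; you correctly flag that encoding this highway attachment point is where the remaining work lies.

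One small correction: the $\log^* n$ factor does not come from ``standard small-label tree-LCA constructions'' --- sequential LCA labeling schemes \`a la Alstrup et al.\ carry no such factor. It comes from the distributed construction of the decomposition itself (the Kutten--Peleg/GKP-style fragment-merging used to build the $O(\sqrt{n})$-size parts), which is also why the round bound in the lemma reads $O(D+\sqrt{n}\log^* n)$ rather than $O(D+\sqrt{n})$. Also, to literally satisfy the statement ``infer the \emph{label} of the LCA,'' you should make the label of a skeleton vertex recoverable from its skeleton label alone (e.g., by making the skeleton label a canonical part of such a vertex's full label), since in the cross-fragment case the LCA may lie in a fragment containing neither $u$ nor $v$.
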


The following useful property is shown in \cite{censor2020fast}.

\begin{claim}[\cite{censor2020fast}] \label{claimCheckCovLCA} 
Given the LCA labels of a tree edge $t$, and an edge $e$, one can deduce if $e$ covers $t$.
\end{claim}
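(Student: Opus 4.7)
The plan is to reduce \emph{covers} to a pair of ancestry tests, each of which the LCA labels are designed to support.

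First I would recast the coverage condition combinatorially. Let $t=\{a,b\}$ with $a$ the deeper endpoint (child) and $b=p(a)$ its parent. Removing $t$ from $T$ splits $V$ into the subtree $T_a$ rooted at $a$ and its complement $V\setminus T_a$. By \Cref{claim_one_cut_cover}, $e=\{u,v\}$ covers $t$ iff exactly one of $u,v$ lies in $T_a$. Moreover, a vertex $w$ lies in $T_a$ iff $a$ is an ancestor of $w$, which by definition of LCA is equivalent to $LCA(w,a)=a$. Thus the coverage test reduces to checking
\[
\bigl[LCA(u,a)=a\bigr]\ \oplus\ \bigl[LCA(v,a)=a\bigr].
\]

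Next I would argue that the LCA labels suffice to perform this test. By the contract of LCA labels (\Cref{sec:lca}), from the labels of any two vertices we can compute the label of their LCA; since the labels are unique identifiers of vertices, we can then test the equation $LCA(\cdot,\cdot)=a$ by comparing labels. Finally, to identify which endpoint of $t$ is the deeper one $a$, I would compute $LCA(a,b)$ from the labels of the two endpoints of $t$; because $a$ and $b$ are adjacent in $T$, one of them is an ancestor of the other, so $LCA(a,b)$ equals the shallower endpoint $b$, and the other endpoint is $a$.

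Putting these together yields a procedure that uses only the LCA labels of the four relevant vertices (the endpoints of $e$ and of $t$) to decide whether $e$ covers $t$. The main thing to verify carefully is the ancestor-test step, namely that $w\in T_a$ iff $LCA(w,a)=a$; this is immediate from the definition of the lowest common ancestor (an ancestor of $w$ that is also an ancestor of $a$ and is deepest). No communication or distributed argument is required, since the claim is purely about what can be deduced from the labels themselves.
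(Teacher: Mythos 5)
Your proof is correct. The paper does not actually prove this claim---it imports it from \cite{censor2020fast} as a black box---so there is no in-paper argument to compare against; your reduction of the coverage test to two ancestry queries (using that $w$ lies in the subtree $T_a$ iff $LCA(w,a)=a$), together with one extra LCA query on the endpoints of $t$ to identify its child endpoint, is the standard argument and is complete.
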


\subsection{Basic Notation and Tools}

Given a spanning tree $T$, we denote by $p(u)$ the parent of $u$ in the tree. We say that two vertices $u$ and $v$ in the tree are \emph{orthogonal} if they are on different tree paths, i.e., if $u$ is not an ancestor of $v$ and vice verse. 

During our algorithm, we sometimes broadcast information in a spanning tree $T$. For example, in the subtrees of the fragments decomposition or in a global BFS tree. Broadcasting $\ell$ pieces of information can be easily done in $O(h+\ell)$ time, where $h$ is the height of the tree (see for example \cite{peleg2000distributed}). 

\begin{claim} \label{claim_broadcast}
Assume that there are $\ell$ pieces of information of $O(\log{n})$ bits that are initially distributed between a subset of vertices in a tree $T$ of height $h$. In $O(h+\ell)$ time all vertices in the tree can learn all $\ell$ pieces of information. 
\end{claim}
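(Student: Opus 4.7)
The plan is to use a standard two-phase pipelined scheme on $T$: first \emph{convergecast} all $\ell$ pieces to a designated root, and then \emph{broadcast} them down from the root. I would fix an arbitrary root $r$ of $T$ known to all vertices (any vertex with a globally agreed-upon identifier suffices), and orient every tree edge towards $r$.

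For the convergecast phase, I would impose a global linear order on the $\ell$ pieces at the outset, say by lexicographic order on pairs (originating vertex ID, local index of the piece at that vertex); since each piece is $O(\log n)$ bits and each vertex initially knows only its own pieces, this ordering is implicit and requires no extra communication. Each vertex maintains a FIFO queue of pieces it knows but has not yet forwarded. At every round, each non-root vertex with a non-empty queue sends the smallest piece in its queue (according to the global order) to its parent; the parent enqueues the piece unless it has already seen it. The running time is bounded by $h + \ell - 1$ via the classical pipelined-upcast argument: by induction on $j$, the $j$-th piece in the global order reaches $r$ by round $h + j - 1$, since along its path up to $r$ only the $j-1$ higher-priority pieces can ever delay it, contributing at most $j-1$ extra waiting rounds in total.

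For the broadcast phase, once $r$ holds all $\ell$ pieces, it sends them to its children one piece per round per edge, again in the same global order; each internal vertex relays pieces to its own children as soon as it receives them. A straightforward induction on depth shows that every vertex at depth $d$ learns the $j$-th piece by round $d + j$, so the last piece reaches every vertex by round at most $h + \ell$. Summing the two phases gives the claimed $O(h + \ell)$ round bound.

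The only real subtlety is the scheduling argument in the convergecast phase, since many messages may simultaneously contend for the same upward edge; the smallest-priority-first rule combined with the inductive analysis sketched above handles this cleanly (this is essentially the standard upcast lemma, e.g., from Peleg's book). Everything else is a routine pipelining exercise.
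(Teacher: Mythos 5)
Your proposal is correct and is exactly the standard pipelined convergecast-then-broadcast argument that the paper relies on: the paper gives no proof of its own for this claim and simply cites Peleg's book, which contains precisely this upcast/downcast lemma. Nothing further is needed.
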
 

For example, if we use a BFS tree we get a complexity of $O(D+\ell)$ time. We assume in the algorithm that we have a global BFS tree, such a tree can be constructed in $O(D)$ time \cite{peleg2000distributed}.

\section{Distributed Greedy Algorithm for Connectivity Augmentation} \label{sec:setcover}

%\mtodo{add here explanations removed from overview}

In this section, we discuss the distributed greedy algorithm for augmenting the connectivity. We prove that if we can estimate the cost-effectiveness of edges in $\tilde{O}(D+k\sqrt{n})$ time, we get an $O(\log{k}\log{n})$-approximation for $k$-ECSS in $\tilde{O}(k(D+k\sqrt{n}))$ time.
The algorithm generally follows the framework in \cite{dory2018distributed}. There are slight changes since in our case we approximate the cost-effectiveness where in \cite{dory2018distributed} it is computed exactly. We also prove that the approach from \cite{dory2018distributed} actually leads to an $O(\log{k}\log{n})$-approximation for $k$-ECSS, improving the $O(k \log{n})$-approximation guarantee shown in \cite{dory2018distributed}.

To find a low-cost $k$-edge-connected spanning subgraph of a graph $G$, we work in iterations, where in iteration $i$ we are given an $(i-1)$-edge-connected spanning subgraph $H_{i-1}$ of $G$ and we augment it to be $i$-edge-connected by adding a low-cost set of edges from $G$ to $H_{i-1}$. For example, in the first iteration $H_0 = \emptyset$ and we can compute a minimum spanning tree $H_1$ to solve the problem. In general, in iteration $i$, our goal is to solve the following augmentation problem we call $Aug_i$.

\begin{definition} [\textbf{Augmentation Problem}]
In the augmentation problem $Aug_i$ we are given an $i$-edge-connected graph $G=(V,E)$ and an $(i-1)$-edge-connected spanning subgraph $H_{i-1}$ of $G$. The goal is to find a minimum cost set of edges $A \subseteq E$ such that $H_i=H_{i-1} \cup A$ is $i$-edge-connected.
\end{definition}

An $\alpha$-approximation to $Aug_i$ is a set of edges $A$ such that $H_{i-1} \cup A$ is $i$-edge-connected, and $w(A) \leq \alpha w(A^*)$, where $A^*$ is an optimal augmentation. It is easy to see that if we can get an $\alpha$-approximation solution to $Aug_i$ for all $i$, then we get a $(k\alpha)$-approximation to $k$-ECSS, as the optimal solution $H^*$ for $k$-ECSS can be used to augment any subgraph $H$ to be $k$-edge-connected. This implies that $w(A^*_i) \leq w(H^*)$ for any $i$, where $A^*_i$ is the optimal solution for $Aug_i$. Using a tighter analysis, we can show that the achieved approximation for $k$-ECSS is in fact $(\alpha\log{k})$. Hence, our main goal is to approximate $Aug_i$.

\paragraph{Solving the Augmentation Problem via a Set Cover Algorithm.} 

The problem $Aug_i$ can be seen as a special case of the set cover problem, where our goal is to cover cuts of size $i-1$ in $H_{i-1}$. Concretely, as $H_{i-1}$ is $(i-1)$-edge-connected, the minimum cut size in $H_{i-1}$ is at least $i-1$. To augment the connectivity of $H_{i-1}$ to $i$ our goal is to add edges that cover all cuts that have size $i-1$ in $H_{i-1}$, where an edge $e \not \in H_{i-1}$ \emph{covers} a cut $C$ if $(H_{i-1} \setminus C) \cup \{e \}$ is connected. 

%\begin{definition} (\textbf{An Edge Covering A Cut})
%Let $H$ be a subgraph of $G$, and let $C$ be a cut in $H$.
%An edge $e \not \in H$ \emph{covers} the cut $C$ if $(H \setminus C) \cup \{e \}$ is connected. 
%\end{definition}

 %where an edge $e \not \in H_{i-1}$ \emph{covers} a cut $C$ if $(H_{i-1} \setminus C) \cup \{e \}$ is connected. 

\paragraph{Centralized Algorithm for this Special Set Cover Problem.} A simple centralized $O(\log{n})$-approximation for $Aug_i$ can be obtained using the simple greedy algorithm for set cover, comparing edges according to the number of cuts they cover. Let us elaborate. We denote by $C_e$ the set of cuts of size $i-1$ in $H_{i-1}$ covered by the edge $e \not \in H_{i-1}$. We define the cost-effectiveness of an edge $e \not \in {H_{i-1}}$ to be $\frac{|C_e|}{w(e)}$ where $w(e)$ is the weight of $e$. Intuitively our goal is to add edges with high values of cost-effectiveness to the augmentation, because that allows to cover many cuts while paying a low cost. This is indeed the idea of the centralized greedy algorithm. In this algorithm, at each step, we add an edge with maximum cost-effectiveness to the augmentation $A$, and we continue until $H_{i-1} \cup A$ is $i$-edge-connected. %To implement this in polynomial time in the centralized setting we need an algorithm for computing the cost-effectiveness of edges. One naive way to compute it in polynomial time could be to first find all $O(n^2)$ minimum cuts in the graph $H_{i-1}$ (this can be obtained for example using the minimum cut algorithms of Karger \cite{karger1993global, karger2000minimum}). Then, for each edge $e \not \in H_{i-1}$, we can check which minimum cuts it covers. 

\paragraph{Distributed Algorithms for this Special Set Cover Problem.} To implement this approach in a distributed setting we follow the approach in \cite{dory2018distributed}, based on a distributed set cover algorithm. At a high-level, all edges with near-maximal cost-effectiveness values are considered as \emph{candidates} to be added to the augmentation, and the algorithm uses randomization to choose which ones of them to add, such that many cuts are covered, but we do not add many edges to cover the same cuts. The whole algorithm requires only poly-logarithmic number of iterations to terminate, where in each iteration the cost-effectiveness of edges should be computed. We next discuss the algorithm in detail.

\remove{
\paragraph{Distributed Algorithms for this Special Set Cover Problem.} There are two main challenges in implementing the above approach in a \emph{distributed} setting:
\begin{itemize}
    \item[(I)] First, the greedy algorithm is inherently sequential as it adds only one edge to the augmentation at each step. To be able to get an efficient distributed algorithm we would like to add many edges in parallel. 
    \item[(II)] Second, the algorithm relies on computing the cost-effectiveness of edges, which is crucial for obtaining an $O(\log{n})$-approximation. However, in a distributed setting it is not clear how to compute the cost-effectiveness efficiently, as it requires each edge to learn about the number of cuts of size $i-1$ it covers in the graph $H_{i-1}$. 
\end{itemize}

The first challenge was addressed in \cite{dory2018distributed}, based on a distributed set cover algorithm. At a high-level, all edges with near-maximal cost-effectiveness values are considered as \emph{candidates} to be added to the augmentation, and the algorithm uses randomization to choose which ones of them to add, such that many cuts are covered, but we do not add many edges to cover the same cuts. The whole algorithm requires only poly-logarithmic number of iterations to terminate, where in each iteration the cost-effectiveness of edges should be computed. A detailed discussion of the algorithm appears in \Cref{sec:setcover}. 

}

\subsection{The Algorithm for $Aug_k$}

Recall that the input to $Aug_k$ is a $k$-edge-connected graph $G$, and $(k-1)$-edge-connected subgraph $H$ of $G$, and our goal is to find a low-cost set of edges $A$ from $G$ such that $H \cup A$ is $k$-edge-connected. During the algorithm we maintain an augmentation $A$ where initially $A = \emptyset$. In the greedy algorithm, the candidates to be added to $A$ are edges with near-maximal cost-effectiveness, where the cost-effectiveness $\rho(e)$ of an edge $e$ is equal to $\frac{|C_e|}{w(e)}$ where $C_e$ is the set of cuts of size $k-1$ in $H \cup A$ covered by $e$. 

%To understand the algorithm, it is helpful to think about the following bipartite graph with sides $X,Y$, where $X$ are all the edges $e \not \in H \cup A$, and $Y$ are all the cuts of size $(k-1)$ in $H \cup A$. An edge $e \in X$ is connected to a cut $C \in Y$ if $e$ covers $C$. The degree of an edge $e \in X$ is the number of cuts covered by $e$, which affects its cost-effectiveness. The degree of a cut $C \in Y$ is the number of possible edges that cover it. 

Our algorithm is composed of $O(\log{n})$ \emph{epochs}, where in epoch $i$ we handle edges with cost-effectiveness about $M/2^i$, where $M$ is the maximum possible value for cost-effectiveness. We assume for simplicity that the weights are positive integers polynomial in $n$.\footnote{Dealing with zero-weight edges is easy, as we can add all of them to $A$ without increasing the cost of the solution.} Since the number of minimum cuts in a graph is at most $O(n^2)$ and the minimum weight is 1, we have that $M = O(n^2)$. We denote by $w_{max}$ the maximum weight. During epoch $i$ edges with cost-effectiveness about $M/2^i$ are considered as \emph{candidates} to be added to the augmentation. When the epoch ends the maximum cost-effectiveness value decreases by a constant factor. 

Each epoch is composed of $O(\log{n})$ \emph{phases}, where in phase $j$ our goal is to handle cuts that are covered by about $m/2^j$ candidates. We denote by $deg(C)$ the number of candidates that cover a cut $C$. By the end of phase $j$ all the cuts of degree about $m/2^j$ are covered, and the maximum degree $deg(C)$ decreases by a constant factor. 

To get a good approximation, we want to guarantee two properties. First, that the candidates are only edges with near-optimal cost-effectiveness. This is important to get the $O(\log{n})$-approximation guarantee of the greedy centralized algorithm. Second, that we do not add too many candidates to cover the same cuts. Note that in the centralized algorithm, only one edge is added to the augmentation at each time, and ideally we want to make sure that also in our case at most $O(1)$ candidates are added to cover each cut. To obtain this, we work as follows. Recall that in phase $j$, $deg(C) \leq m/2^j$. Note that since there are $deg(C)$ candidates that cover $C$, if we add each one of them to the augmentation with probability $1/deg(C)$, in expectation we add one candidate to cover $C$. Hence, to cover cuts of degree about $m/2^j$, we add candidates to the augmentation with probability $2^j/m$. To guarantee that all these cuts are indeed covered with high probability we repeat the process $O(\log{n})$ times. It is important to note that for the algorithm to work we do not need to know the degrees of cuts (that seem difficult to compute). However, the algorithm guarantees that we cover all cuts of degree about $m/2^j$, even if we do not know their identity. The algorithm is summarized in Algorithm \ref{alg}. The value $c$ is a constant that depends on the analysis.
One small technical detail is that our algorithm for approximating cost-effectiveness assumes that the minimum cut in the graph $H \cup A$ is indeed $k-1$ and not larger. To make sure that this is the case, we terminate the algorithm in the first iteration that $H \cup A$ is $k$-edge-connected, we can check this using the exact min cut algorithm in \cite{DBLP:journals/corr/abs-2004-09129}.

\begin{algorithm}
\caption{Greedy Distributed Algorithm for $Aug_k$.}\label{alg}
\begin{algorithmic}[1]
\Statex Input: a $k$-edge-connected graph $G$ and a $(k-1)$-edge-connected spanning subgraph $H$ of $G$.
\Statex Output: a set of edges $A$ such that $H \cup A$ is $k$-edge-connected.
\State $A \gets \emptyset$
\For{$i=1,...,\log{(M w_{max})}$} 
\State If $H \cup A$ is $k$-edge-connected, terminate
\State For each edge $e \not \in H \cup A$, compute $\alpha$-approximation $\rho'(e)$ of $\rho(e)$
\State Define $Candidates_i$ to be the set of edges $e$ where $\rho'(e) \geq \frac{M}{\alpha 2^i}$
\For{$j=0,...,\log{m}$}
\For{$\ell=1,...,c\log{n}$}
\State If $H \cup A$ is $k$-edge-connected, terminate
\State For each edge $e \in Candidates_i \setminus A$, compute $\alpha$-approximation $\rho'(e)$ of $\rho(e)$
\State Remove $e$ from $Candidates_i$ if $\rho'(e) < \frac{M}{\alpha 2^i}$ or if $e \in A$
\State Each $e \in Candidates_i$ is added to $A$ with probability $2^j/m$
\EndFor
\EndFor
\EndFor
\end{algorithmic}
\end{algorithm} 

\subsection{Analysis}

We start with simple observations on Algorithm \ref{alg}.

\begin{claim} \label{claim:cost_ef_decrease}
By the end of epoch $i$, the maximum cost-effectiveness of an edge $e \not \in H \cup A$ is smaller than $M/2^i$.
\end{claim}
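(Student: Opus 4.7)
My plan is to argue by contradiction, exploiting the fact that in the last phase $j=\log m$ of epoch $i$, the sampling probability $2^j/m$ equals $1$, so that phase acts deterministically on whatever candidates survive.

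I would first establish a monotonicity fact: the true cost-effectiveness $\rho(e)=|C_e|/w(e)$ is non-increasing during the algorithm. The reason is that adding an edge to $A$ can only cause some size-$(k-1)$ cuts of $H\cup A$ to gain an edge and drop out of the family $C_e$; no new $(k-1)$-cuts are ever created. Hence $|C_e|$, and thus $\rho(e)$, only decreases over time for every fixed $e$.

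Next, I would assume for contradiction that some $e\notin H\cup A$ at the end of epoch $i$ satisfies $\rho(e)\geq M/2^i$. Because $e$ stays outside $H\cup A$ throughout the epoch, monotonicity forces $\rho(e)\geq M/2^i$ at every point of epoch $i$. The $\alpha$-approximation guarantee then yields $\rho'(e)\geq \rho(e)/\alpha\geq M/(\alpha 2^i)$ at every recomputation in epoch $i$, so $e$ enters $\mathrm{Candidates}_i$ on line~5 and is never removed by the threshold test on line~10 (nor by the membership test, since $e\notin A$). If the algorithm happens to terminate inside this epoch, it must be because $H\cup A$ is already $k$-edge-connected, in which case $C_{e'}=\emptyset$ and $\rho(e')=0<M/2^i$ for every $e'$; so the claim holds trivially in that branch. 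Otherwise, the first iteration of phase $j=\log m$ is executed, and in it every surviving candidate, including $e$, is added to $A$ with probability $2^{\log m}/m=1$, i.e.\ deterministically. This contradicts $e\notin A$ at the end of the epoch.

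The only subtlety I foresee is that $\rho'(e)$ is a randomized estimate, so in principle a bad estimation event could cause $e$ to be spuriously dropped from $\mathrm{Candidates}_i$. I would address this by conditioning on the high-probability success event of \Cref{thm_cost_ef} and taking a union bound over the $\mathrm{polylog}(n)$ cost-effectiveness computations made in the epoch; under that event the one-sided approximation bound $\rho'(e)\geq \rho(e)/\alpha$ holds deterministically throughout, and the argument above goes through verbatim.
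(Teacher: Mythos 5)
Your proof is correct and follows essentially the same route as the paper's: monotonicity of $\rho(e)$, the $\alpha$-approximation guarantee keeping $e$ in $\mathrm{Candidates}_i$, and the deterministic addition in phase $j=\log m$ yielding the contradiction. Your extra care about early termination and conditioning on the success of the estimator is a welcome tightening of details the paper leaves implicit, but it does not change the argument.
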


\begin{proof}
First note that the cost-effectiveness of an edge can only decrease during the algorithm, as cuts of size $k-1$ in $H \cup A$ covered by $e$ can be covered when we add edges to $A$. Let $e$ be an edge that has cost-effectiveness at least $M/2^i$ at the beginning of epoch $i$, and that is not added to $A$ in epoch $i$ or before it. As long as $\rho(e)\geq M/2^i$, then $\rho'(e) \geq \frac{M}{\alpha 2^i}$, as $\rho'(e)$ is an $\alpha$-approximation to $\rho(e)$. This means that $e \in Candidates_i$ as long as $\rho(e)\geq M/2^i$. In particular, if $\rho(e)\geq M/2^i$ in phase $j=\log{m}$, then $e$ is added to $A$ with probability 1, a contradiction. Hence, the cost-effectiveness of $e$ decreases until the end of the phase. 
\end{proof}

\begin{claim}
By the end of the algorithm, $H \cup A$ is $k$-edge-connected.
\end{claim}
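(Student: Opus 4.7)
The plan is to argue by contradiction: assume that when the algorithm terminates, $H \cup A$ is not $k$-edge-connected, and then show that the very last inner iteration would have forced us to add an edge covering an otherwise uncovered cut of size $k-1$, giving a contradiction. First I would dispose of the trivial case: if the algorithm ever exits through the early-termination check on Line~3 or Line~8, then $H \cup A$ is $k$-edge-connected by that check, and there is nothing to prove. So assume that the algorithm runs through all of its epochs and phases to completion.

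Next I would pin down a witness. Since $H$ is $(k-1)$-edge-connected and $H \subseteq H \cup A$, every cut in $H \cup A$ has size at least $k-1$. By assumption there is a cut $C$ with exactly $k-1$ edges in $H \cup A$. Since $G$ is $k$-edge-connected, $C$ has at least $k$ edges in $G$, so there exists $e \in G \setminus (H \cup A)$ crossing (and thus covering) $C$. Consequently $|C_e| \geq 1$, which gives
\[
\rho(e) \;=\; \frac{|C_e|}{w(e)} \;\geq\; \frac{1}{w_{\max}}.
\]

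Now I would analyze the state of the algorithm in the final iteration, corresponding to epoch $i = \lceil \log(M w_{\max}) \rceil$ and phase $j = \lceil \log m \rceil$. The candidate threshold in epoch $i$ is
\[
\frac{M}{\alpha\, 2^{i}} \;\leq\; \frac{1}{\alpha\, w_{\max}}.
\]
Using the $\alpha$-approximation guarantee $\rho'(e) \geq \rho(e)/\alpha$, we get $\rho'(e) \geq 1/(\alpha w_{\max})$, so $e$ meets the candidate condition whenever it is not yet in $A$ (exactly as used in \Cref{claim:cost_ef_decrease}). The sampling probability in the last phase is $2^{j}/m \geq 1$, so the first iteration of the innermost loop, immediately after recomputing candidates on Lines~9--10, adds $e$ to $A$ with probability~$1$. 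Once $e \in A$, the cut $C$ is covered, contradicting the assumption that $C$ had size $k-1$ in $H \cup A$ at termination.

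The main obstacle I anticipate is a minor book-keeping subtlety: I need to know that $e$ is in fact available as a candidate in the last iteration rather than having been removed beforehand. This is handled by observing that $\rho(e)$ can decrease only when new cuts in $C_e$ become covered, and by hypothesis $C \in C_e$ remains uncovered throughout the run. Hence $\rho(e) \geq 1/w_{\max}$ persists, so $\rho'(e) \geq 1/(\alpha w_{\max})$ persists, so $e$ is never filtered out on Line~10 in epoch $i$. The only other way $e$ could fail to be a candidate is if it had already been added to $A$ in an earlier step, but then $C$ is already covered by $e$, yielding the same contradiction. This closes the argument.
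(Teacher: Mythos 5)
Your proof is correct and follows essentially the same route as the paper's: the paper cites \Cref{claim:cost_ef_decrease} to conclude that after the last epoch every edge outside $H\cup A$ has $\rho(e)<1/w_{\max}$ (hence $\rho(e)=0$), whereas you inline that claim's argument for the final epoch/phase, where the threshold drops below $1/(\alpha w_{\max})$ and the sampling probability reaches $1$. The witness argument (a size-$(k-1)$ cut in $H\cup A$ must, by $k$-edge-connectivity of $G$, be covered by some edge $e\notin H\cup A$ with $\rho(e)\geq 1/w_{\max}$) is identical in both.
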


\begin{proof}
From Claim \ref{claim:cost_ef_decrease}, after the last phase, the maximum cost-effectiveness of $e \not \in H \cup A$ is smaller than $1/w_{max}$. This means that $\rho(e)=0$ for all edges not in $H \cup A$. This implies that all cuts of size $k-1$ are covered. If $C$ is a cut of size $k-1$ in $H$, since $G$ is $k$-edge-connected there is at least one edge $e \not \in H$ that covers the cut. As long as $C$ is not covered, $\rho(e) \geq 1/w_{max}$. By the end of the algorithm either $e \in A$ or $\rho(e)=0$, in both cases $C$ is covered.
\end{proof}

We next analyze the approximation ratio of the algorithm. The analysis generally follows \cite{dory2018distributed} with slight changes.
For the analysis we assign costs to cuts of size $k-1$ in $H$ according to the epoch they are covered in. For a cut $C$ of size $k-1$ in $H$, that is first covered in epoch $i$, we define $cost(C)=2^i/M$. 
The proof is based on showing that $w(A) \approx O(\alpha^2) \sum_C cost(C) \leq O(\alpha^2 \log{n}) w(A^*)$, where $w(A)$ is the weight of the augmentation $A$ and $A^*$ is an optimal augmentation. The proof follows \cite{dory2018distributed} with the following changes. First, we adapt the proof to the case that we only have approximations to the cost-effectiveness (this adds $O(\alpha^2)$ term to the approximation obtained). Second, we compare the solution to an optimal \emph{fractional} solution. While this does not affect the approximation we obtain for $Aug_k$, it allows us later to show that the final approximation obtained for $k$-ECSS is $O(\log{k} \log{n})$, improving the $O(k\log{n})$ guarantee shown in \cite{dory2018distributed}. We start by discussing the proof of the right inequality.

We say that $A'$ is a fractional solution for $Aug_k$ if $A'$ assigns values $x(e) \geq 0$ to all edges, such that the following holds. If $C$ is a cut of size $k-1$ in $H$, and $E_C$ are the edges in $G \setminus H$ that cover $C$ we have $\sum_{e \in E_C} x(e) \geq 1$ (this is analogous to the requirement that all cuts are covered). An optimal fractional solution is a solution that minimizes $\sum_{e \not \in H} w(e) x(e)$. For a fractional solution $A'$, we define $w(A')=\sum_{e \not \in H} w(e) x(e)$, where $x(e)$ are the values given by the solution $A'$. We prove the following. The proof is based on the classical analysis of the greedy set cover algorithm, and is deferred to Appendix \ref{sec:app_setcover}. 

\begin{restatable}{claim}{ApproxClaim}\label{claim_approx1}
$\sum_C cost(C) \leq O(\log{n}) w(A^*)$, where $A^*$ is an optimal fractional solution to $Aug_k$.
\end{restatable}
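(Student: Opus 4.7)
My plan is to use the classical greedy set cover charging argument, adapted here to a fractional optimum and to the fact that cost-effectiveness is only computed approximately (which is already accounted for via the epoch structure in Algorithm~\ref{alg}). The key observation is that the definition $\mathrm{cost}(C)=2^i/M$ already implicitly reflects an upper bound $2^i/M$ on the quantity $w(e)/|C_e|$ at the moment $e$ could have covered $C$ in epoch $i$, so what we really need is to bound, for each epoch $i$, the number of cuts covered during that epoch by something of the order $w(A^*)\cdot M/2^{i-1}$.

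Concretely, for a fixed epoch $i$ I would proceed as follows. Let $\mathcal{U}_i$ be the set of cuts of size $k-1$ in $H$ that are still uncovered (by $A$) at the start of epoch $i$, and note that the cuts first covered in epoch $i$ form a subset $\bar{C}_i\subseteq\mathcal{U}_i$. For an uncovered cut $C\in\mathcal{U}_i$, every edge $e\in A$ fails to cover $C$ by definition, so feasibility of $A^*$ gives $\sum_{e\in E_C\setminus A} x(e)\geq 1$. Summing this over $\mathcal{U}_i$ and swapping the order of summation yields
\[
|\mathcal{U}_i|\;\leq\;\sum_{e\notin H\cup A} x(e)\cdot \bigl|\{C\in\mathcal{U}_i: e\in E_C\}\bigr|\;=\;\sum_{e\notin H\cup A} x(e)\cdot |C_e|,
\]
where $|C_e|$ is taken with respect to the current $H\cup A$ (note that the cuts of size $k-1$ in $H\cup A$ are exactly the uncovered cuts, which is the identification we are implicitly using). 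Now I invoke Claim~\ref{claim:cost_ef_decrease} applied to epoch $i-1$: every edge $e\notin H\cup A$ at this moment satisfies $\rho(e)=|C_e|/w(e)<M/2^{i-1}$ (with the convention that at the start of epoch $1$ the trivial bound $\rho(e)\leq M$ holds). Substituting, $|\mathcal{U}_i|< (M/2^{i-1})\sum_{e\notin H} x(e)w(e)=(M/2^{i-1})\cdot w(A^*)$.

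Multiplying by the per-cut charge $2^i/M$ gives $\sum_{C\in\bar{C}_i}\mathrm{cost}(C)\leq |\bar{C}_i|\cdot 2^i/M\leq 2w(A^*)$. Finally, since the algorithm uses $O(\log(Mw_{\max}))=O(\log n)$ epochs (using $M=O(n^2)$ and $w_{\max}=\mathrm{poly}(n)$), summing the above bound over all epochs yields $\sum_C \mathrm{cost}(C)\leq O(\log n)\cdot w(A^*)$, as required.

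The only subtle point, and the one I would double-check carefully in a full write-up, is the first step where I identify cuts of size $k-1$ in $H\cup A$ with the uncovered cuts of $H$: I need to verify that an edge $e\notin H\cup A$ that covers an uncovered cut $C$ of $H$ contributes to $|C_e|$ as defined in the algorithm (i.e., that $C$ is indeed a cut of size $k-1$ in $H\cup A$ when no edge of $A$ covers it, and conversely that cuts of size $k-1$ in $H\cup A$ are uncovered cuts of size $k-1$ in $H$). Both directions follow from the definition of ``cover'': if $A\cap E_C=\emptyset$ then removing $C$ disconnects $H\cup A$, and conversely a size-$(k-1)$ cut of $H\cup A$ must be a size-$(k-1)$ cut of $H$ whose edges are all in $H$. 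Once this identification is in place, the charging argument is entirely standard and the rest is the two-line computation above.
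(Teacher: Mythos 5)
Your proof is correct, but it takes a genuinely different route from the paper's. The paper uses the classical per-edge harmonic-sum charging: it fixes an edge $e$, orders the cuts $C_1,\dots,C_\ell$ covered by $e$ according to when the algorithm covers them, uses Claim~\ref{claim:cost_ef_decrease} to deduce $cost(C_j)\le w(e)/(\ell-j+1)$, sums the harmonic series to get $\sum_{C\in S_e}cost(C)=O(\log \ell)\,w(e)=O(\log n)\,w(e)$, and only then invokes the fractional feasibility constraint $\sum_{e\in E_C}x(e)\ge 1$ a single time to distribute the total cost over the edges weighted by $x(e)$. You instead run a per-epoch counting argument: you bound the number of still-uncovered cuts at the start of epoch $i$ by $(M/2^{i-1})\,w(A^*)$, combining feasibility of $A^*$ with the upper bound $|C_e|/w(e)<M/2^{i-1}$ that Claim~\ref{claim:cost_ef_decrease} gives at the end of epoch $i-1$, so that the cuts first covered in epoch $i$ contribute at most $2\,w(A^*)$; the $O(\log n)$ factor then comes from the number of epochs $\log(Mw_{max})$ rather than from a harmonic sum. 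Both arguments rest on the same two pillars (Claim~\ref{claim:cost_ef_decrease} and the fractional covering constraint), and both need the identification of size-$(k-1)$ cuts of $H\cup A$ with the size-$(k-1)$ cuts of $H$ not yet covered by $A$ --- the paper uses it implicitly when it equates $\rho(e)$ with $(\ell-j+1)/w(e)$, and you verify it explicitly, correctly. The paper's version yields the slightly stronger per-edge statement $\sum_{C\in S_e}cost(C)=O(\log n)\,w(e)$; yours is closer in spirit to a dual-fitting analysis and makes the role of the epoch discretization more transparent. Either argument is valid.
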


The next claim follows from \cite{dory2018distributed}, the main difference in our case is the extra $O(\alpha^2)$ term that comes from the fact we approximate the cost-effectiveness, and not compute it exactly. See Appendix \ref{sec:app_setcover} for details. 

\begin{restatable}{claim}{ApproxExpect} \label{claim_approx2}
$E[w(A)] = O(\alpha^2) E[\sum_C cost(C)].$
\end{restatable}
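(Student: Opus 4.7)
The plan is to adapt the analysis of the analogous claim from \cite{dory2018distributed}, with the only structural change being that the cost-effectiveness values used by the algorithm are $\alpha$-approximate rather than exact. The key quantitative step is to establish, for every edge $e$ that is added to $A$ during epoch $i$, the bound $w(e)\le \alpha^{2}\cdot (2^{i}/M)\cdot |C_e|$, where $C_e$ is the set of still-uncovered size-$(k-1)$ cuts of $H\cup A$ covered by $e$ at the moment of addition. This follows because $e\in Candidates_i$ at that moment forces $\rho'(e)\ge M/(\alpha 2^{i})$, and the guarantee $\rho(e)\ge \rho'(e)/\alpha$ then gives $\rho(e)\ge M/(\alpha^{2}2^{i})$; rearranging $\rho(e)=|C_e|/w(e)$ yields the bound. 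The two factors of $\alpha$ — one from the slackened threshold $M/(\alpha 2^{i})$ and one from the quality of $\rho'$ — are precisely what produces the extra $\alpha^{2}$ compared to the exact analysis of \cite{dory2018distributed}.

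Next I would analyze a single iteration inside phase $j$ of epoch $i$, in which each candidate is added independently with probability $p=2^{j}/m$. I would maintain the phase invariant from \cite{dory2018distributed}: at the start of phase $j$, every still-active cut $C$ satisfies $deg_{Cand}(C)\le O(m/2^{j})$, because cuts of higher degree among candidates were already eliminated in earlier phases. Using the edge-weight bound above and swapping the order of summation,
\[
\mathbb{E}[\text{weight added in iteration}]=\sum_{e\in Cand}w(e)\,p\;\le\;\frac{\alpha^{2}\,2^{i}\,p}{M}\sum_{C\text{ active}} deg_{Cand}(C)\;\le\;\frac{O(\alpha^{2})\,2^{i}}{M}\cdot |\text{active cuts}|,
\]
since $deg_{Cand}(C)\cdot p=O(1)$ under the invariant. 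On the other hand, any active cut with $deg_{Cand}(C)\ge m/2^{j+1}$ is covered in this iteration with probability $1-(1-p)^{m/2^{j+1}}=\Omega(1)$, and each such cut, once first covered in epoch $i$, contributes $cost(C)=2^{i}/M$ to $\sum_C cost(C)$. Hence the ratio of expected weight added to expected cost of newly covered cuts in a single iteration is $O(\alpha^{2})$.

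Summing over the $c\log n$ iterations within each phase, over all $O(\log m)$ phases within each epoch, and over all $O(\log(Mw_{\max}))$ epochs, gives $\mathbb{E}[w(A)]\le O(\alpha^{2})\,\mathbb{E}\!\left[\sum_C cost(C)\right]$, as required. The main obstacle is justifying the phase invariant in a probabilistic sense: it only holds with high probability, not deterministically. The standard remedy, which I would formalize exactly as in \cite{dory2018distributed}, is to choose the constant $c$ large enough that for any fixed cut of degree in the targeted window the probability of remaining uncovered throughout a full phase is at most $n^{-\Omega(1)}$; a union bound over the at most $O(n^{2})$ relevant cuts then makes the invariant hold throughout the execution with high probability. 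Conditioning on this event leaves the expectation analysis intact, and the contribution from the low-probability failure event is absorbed into the $O(\alpha^{2})$ factor since $w(A)\le\mathrm{poly}(n)\cdot w_{\max}$.
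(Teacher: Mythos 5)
Your proof follows the same overall strategy as the paper's: fix a single iteration, use the $\alpha^2$ slack coming from the approximate cost-effectiveness to bound $w(e)\le \alpha^2 (2^i/M)|C_e|$, and compare the expected weight added per iteration to the expected cost charged to newly covered cuts. The derivation of the factor $\alpha^2$ from the two sources of slack (threshold $M/(\alpha 2^i)$ and quality $\rho(e)\ge \rho'(e)/\alpha$) is exactly what the paper does.

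However, there is a genuine gap in the main displayed inequality chain and the sentence that follows it. You bound
\[
\mathbb{E}[\text{weight added in iteration}] \;\le\; \frac{\alpha^{2}\,2^{i}\,p}{M}\sum_{C\ \text{active}} deg_{Cand}(C)\;\le\;\frac{O(\alpha^{2})\,2^{i}}{M}\cdot |\text{active cuts}|,
\]
and then try to close the argument by observing that cuts with $deg_{Cand}(C)\ge m/2^{j+1}$ are covered this iteration with constant probability. But that only lower-bounds the expected cost charged by a constant times the number of \emph{high-degree} active cuts, whereas your upper bound on the expected weight added is proportional to \emph{all} active cuts. Nothing in your argument shows that most active cuts have degree at least $m/2^{j+1}$; there could be many low-degree active cuts, and then the expected cost of newly covered cuts would be much smaller than $|\text{active cuts}|$, so the claimed $O(\alpha^2)$ ratio does not follow. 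The second $\le$ in your display is exactly where the slack is introduced (it trades the tight $p\sum_C deg_{Cand}(C)$ for the loose $|\text{active cuts}|$), and the coverage argument for high-degree cuts is not strong enough to recover it.

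The fix is to keep working with $p\sum_C deg_{Cand}(C)$ and observe that, under the phase invariant $p\cdot deg_{Cand}(C)\le 1$, each active cut $C$ is newly covered with probability $1-(1-p)^{deg_{Cand}(C)}\ge \tfrac{1}{2}\,p\cdot deg_{Cand}(C)$, so
\[
\mathbb{E}[\#\text{newly covered cuts}] \;\ge\; \tfrac12\, p\sum_C deg_{Cand}(C),
\]
which matches your upper bound on the weight and gives the $O(\alpha^2)$ ratio directly. Equivalently, note that $p\sum_C deg_{Cand}(C)=\mathbb{E}\bigl[\sum_{e\in A_\ell}|C_e|\bigr]$, and the paper simply cites the fact from \cite{dory2018distributed} that $\mathbb{E}\bigl[\sum_{e\in A_\ell}|C_e|\bigr]\le 3\,\mathbb{E}[|CUT_\ell|]$, i.e., each newly covered cut is covered by $O(1)$ added candidates in expectation. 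This is the precise step your argument needs and does not yet supply.
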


%\mtodo{can probably get $O(\alpha)$ if the approximation algorithm has one-sided error. If $\alpha$ is constant it doesn't really matter.}

From Claims \ref{claim_approx1} and \ref{claim_approx2} we get that $E[w(A)] = O(\alpha^2 \log{n}) w(A^*).$ This gives approximation of $O(\alpha^2 \log{n})$ in expectation to $Aug_k$. We next show that this leads to an approximation of $O(\alpha^2 \log{k}\log{n})$ to $k$-ECSS.
Note the our algorithm for $Aug_k$ has $O(\log^3{n})$ iterations where in each iteration the only complex computation is checking if a graph is $k$-edge-connected and approximating the cost-effectiveness of edges. The first task can be solved in $\tilde{O}(D+\sqrt{n})$ time using the exact min cut algorithm in \cite{DBLP:journals/corr/abs-2004-09129}. In later sections we show that we get an $O(1)$-approximation of the cost-effectiveness of all edges in $\tilde{O}(D+k\sqrt{n})$ time. Based on this, we get the following. %\mtodo{specify $\alpha$? assume that it's constant in the theorem}

%\ThmGreedyAugi*

\begin{theorem} \label{thm_greedy_augi}
Assume that there is an $O(1)$-approximation algorithm for computing the cost-effectiveness of edges that takes $\tilde{O}(D+i\sqrt{n})$ time where $i$ is the the size of the minimum cut in the graph $H$ we augment.
Then there is an $O(\log{k} \log{n})$-approximation algorithm for $k$-ECSS that runs in $\tilde{O}(k(D+k\sqrt{n}))$ time. The algorithm works with high probability.
\end{theorem}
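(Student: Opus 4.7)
The plan is to run the augmentation Algorithm \ref{alg} sequentially for $i=1,2,\dots,k$, starting from $H_0=\emptyset$ and producing an $i$-edge-connected subgraph $H_i=H_{i-1}\cup A_i$ at each step. The final output is $H_k$. Two things must be established: an approximation guarantee of $O(\log k\log n)$ against $w(H^*)$, the optimum $k$-ECSS, and a total round complexity of $\tilde{O}(k(D+k\sqrt{n}))$.

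\textbf{Time complexity.} For each fixed $i$, Algorithm \ref{alg} performs $O(\log(Mw_{\max}))\cdot O(\log m)\cdot O(\log n)=O(\log^3 n)$ inner iterations (recall $M=O(n^2)$ and weights are polynomially bounded). Each inner iteration performs two nontrivial primitives: a $k$-edge-connectivity test, which by the exact min-cut algorithm of \cite{DBLP:journals/corr/abs-2004-09129} takes $\tilde O(D+\sqrt n)$ rounds, and a call to the cost-effectiveness approximation, which by hypothesis takes $\tilde O(D+i\sqrt n)\le\tilde O(D+k\sqrt n)$ rounds. Multiplying by $O(\log^3 n)$ inner iterations and summing over $i=1,\dots,k$ yields $\tilde O(k(D+k\sqrt n))$ rounds.

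\textbf{Approximation.} By Claims \ref{claim_approx1} and \ref{claim_approx2}, one iteration of Algorithm \ref{alg} on input $H_{i-1}$ returns an augmentation $A_i$ with $\mathbb E[w(A_i)]\le O(\alpha^2\log n)\cdot w(A_i^*)$, where $A_i^*$ is an optimum \emph{fractional} solution to $Aug_i$. With $\alpha=O(1)$ this gives $\mathbb E[w(A_i)]\le O(\log n)\cdot w(A_i^*)$. The key quantitative step is to bound $\sum_{i=1}^k w(A_i^*)$ by $O(\log k)\cdot w(H^*)$. For this I will use the following fractional certificate: define $x(e)=1/(k-i+1)$ for every $e\in H^*\setminus H_{i-1}$ and $x(e)=0$ otherwise. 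For any cut $C$ of size $i-1$ in $H_{i-1}$, since $H^*$ is $k$-edge-connected, $H^*$ has at least $k$ edges across $C$; at most $|C|=i-1$ of them can lie in $H_{i-1}$, so at least $k-(i-1)=k-i+1$ of them lie in $H^*\setminus H_{i-1}$, making $x$ a feasible fractional cover. Hence $w(A_i^*)\le w(H^*)/(k-i+1)$, and summing gives
\[
\sum_{i=1}^k w(A_i^*)\;\le\;\sum_{i=1}^k\frac{w(H^*)}{k-i+1}\;=\;H_k\cdot w(H^*)\;=\;O(\log k)\cdot w(H^*).
\]
Combining this with $\mathbb E[w(A_i)]\le O(\log n)w(A_i^*)$ and linearity of expectation yields $\mathbb E[w(H_k)]=\sum_i \mathbb E[w(A_i)]\le O(\log k\log n)\cdot w(H^*)$. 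A standard repetition plus Markov argument (or a direct Chernoff/martingale concentration on the cost, run with an extra $O(\log n)$ repetitions inside each phase as already built into Algorithm \ref{alg}) upgrades this to a high-probability statement. Finally, correctness — that $H_k$ is actually $k$-edge-connected — follows from the termination guarantee of Algorithm \ref{alg} for each $Aug_i$, already proved earlier in this section.

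\textbf{Main obstacle.} The technically delicate point is the $O(\log k)$ factor: the naive bound $w(A_i^*)\le w(H^*)$ only yields $O(k\log n)$, matching \cite{dory2018distributed}. The improvement hinges on passing to a fractional relaxation and exploiting that after $i-1$ rounds of augmentation each residual cut is crossed by at least $k-i+1$ edges of the optimal $k$-ECSS outside $H_{i-1}$; this is precisely what Claim \ref{claim_approx1} allows, since it is stated against an optimal fractional rather than integral augmentation. Putting the three pieces together — fractional LP relaxation for the $\log k$ factor, Claims \ref{claim_approx1}–\ref{claim_approx2} for the $\log n$ factor, and the per-iteration runtime bound — yields the stated theorem.
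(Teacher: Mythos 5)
Your proposal is correct and follows essentially the same route as the paper: the same per-iteration runtime accounting, the same use of Claims \ref{claim_approx1} and \ref{claim_approx2}, and the same fractional certificate $x(e)=1/(k-i+1)$ on $H^*\setminus H_{i-1}$ yielding the harmonic-sum $O(\log k)$ factor. The only small divergence is that the paper handles the first step ($i=1$) by directly computing an MST rather than invoking the generic augmentation routine on the disconnected $H_0=\emptyset$ (where the spanning-tree-based cost-effectiveness machinery does not apply), and it obtains the high-probability guarantee by repeating the whole algorithm $O(\log n)$ times and keeping the cheapest solution.
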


\begin{proof}
To solve $k$-ECSS we start with an empty subgraph $H$. To augment its connectivity to 1 we build an MST $T^*$. Next, we work in iterations $i=2,...,k$, where in iteration $i$ we augment the connectivity of $H$ from $i-1$ to $i$ using our $O(\log{n})$-approximation for $Aug_i$. Our final graph $H$ is $k$-edge-connected (assuming that the original input graph is $k$-edge-connected), and the time complexity is $\tilde{O}(k(D+k\sqrt{n}))$, as solving $Aug_i$ takes $\tilde{O}(D+i\sqrt{n})$ time, as it requires $O(\log^3{n})$ iterations where in each one we approximate the cost-effectiveness in $\tilde{O}(D+i\sqrt{n})$ time, and also compute the min cut in the graph $H \cup A$ in $\tilde{O}(D+\sqrt{n})$ time using the algorithm in \cite{DBLP:journals/corr/abs-2004-09129}. 

We next analyze the approximation ratio obtained. We denote by $H_{i-1}$ the graph $H$ at the beginning of iteration $i$.
Let $H^*$ be a minimum-cost $k$-edge-connected spanning subgraph. First, $w(T^*) \leq w(H^*)$, where $T^*$ is the minimum spanning tree of the graph, as $H^*$ is a spanning subgraph. Denote by $A_i$ the set of edges we add to $H$ in iteration $i$. Recall that $E[w(A_i)] \leq O( \log{n}) w(A^*)$, where $A^*$ is an optimal fractional solution to $Aug_i$ with the input graph $H_{i-1}$. We show that there is a fractional solution $A'_i$ to $Aug_i$ of cost at most $\frac{1}{k-(i-1)} w(H^*)$. The fractional solution is defined as follows. For any edge $e \in H^* \setminus H_{i-1}$, we set $x(e)=\frac{1}{k-(i-1)}$. For all other edges $x(e)=0$. The cost of the solution is clearly at most $\frac{1}{k-(i-1)} w(H^*)$. We next show that this is a valid fractional solution. Let $C$ be a cut of size $i-1$ in $H_{i-1}$. We need to show that $\sum_{e \in E_C} x(e) \geq 1$. Since $C$ is a minimum cut in $H_{i-1}$, removing the edges of $C$ from $H_{i-1}$ splits it to two connected components $U, V \setminus U$. Since $H^*$ is $k$-edge-connected there are at least $k$ edges in $H^*$ between $U$ and $V \setminus U$. Moreover, as $C$ is a cut of size $i-1$ in $H_{i-1}$, at most $i-1$ of these edges are in $H_{i-1}$. Hence, there are at least $k-(i-1)$ edges that cover $C$ in $H^* \setminus H_{i-1}$. Since for each of these edges we have $x(e) = \frac{1}{k-(i-1)}$, we get that $\sum_{e \in E_C} x(e) \geq 1$ as needed. This gives $E[w(A_i)]\leq O( \log n) \frac{1}{k-(i-1)} w(H^*)$. If we sum over all iterations, we get that $$E[w(H)] \leq w(T^*) + O(\log{n}) \sum_{i=2}^{k} \frac{1}{k-(i-1)} w(H^*) = O( \log{k} \log{n}) w(H^*).$$

This provides $O(\log{k}\log{n})$-approximation in expectation. To get the approximation guarantee w.h.p we can just repeat the algorithm $O(\log{n})$ times and take the lowest cost solution obtained. Note that in $O(D)$ time we can learn the cost of the solution by an aggregate computation in a BFS tree. This completes the proof.
\end{proof}

\section{Succinct Representation of Min Cuts} \label{sec:succinct_min_cuts}
A key ingredient of our algorithm is a structural lemma that shows that for each tree edge $t$, we can store $O(k \log{n})$ bits that represent all the min 2-respecting cuts containing $t$. In more detail, if we fix an edge $t$, all the edges $t'$ such that $\cut(t,t')=k$ can be divided to $\ell = O(k)$ segments $S^t_1,...,S^t_{\ell}$ with corresponding cover values $c^t_1,...,c^t_{\ell}$
such that the edges in $S^t_i$ where $\cut(t,t')=k$ are exactly all edges in $S^t_i$ where $\cov(t')=c^t_i$.

\succinctCutsLemma*

%\begin{lemma} \label{min_cuts_lemma}
%Let $G=(V,E)$ be a graph with minimum cut of size $k$, and let $T$ be a spanning tree of $G$. Fix a tree edge $t \in T$. Then there exist $\ell = %O(k)$ tuples $\{(S^t_i,c^t_i)\}_{1 \leq i \leq \ell}$ such that the edges $t'$ where $\cut(t,t')=k$ are exactly all edges in $E_t = \cup_{1 \leq i %\leq \ell} \{ t' \in S^t_i | \cov(t') = c^t_i \}$.
%\end{lemma}

To simplify the presentation, we first give a proof for the case that $T$ is a path, and later show how to extend the proof to a general tree. 

\begin{proof}[Proof for a path $T$.]
Fix a tree edge $t$. For the proof it is convenient to give an orientation to the path $T$ and focus first on min cuts $\{t,t'\}$ where $t'$ is on the right of $t$. We look at all edges that cover $t$, each such edge has one endpoint to the left of $t$ and one endpoint to the right of $t$, we sort these edges according to the right endpoint, from the closest to the furthest from $t$. Let $e_1,...,e_k$ be the first $k$ edges according to this order, and let $v_1,...,v_k$ be the corresponding right endpoints of $e_1,...,e_k$. See Figure \ref{pathLemmaPic} for illustration. Note that any edge $t'$ to the right of $v_k$ cannot be in a min 2-respecting cut with $t$, as there are more than $k$ edges crossing the cut defined by $\{t,t'\}$: the edges $e_1,...,e_k$ that cover $t$ and not $t'$, and the edge $t'$ that covers $t'$ and not $t$. Also, note that there are indeed at least $k$ edges that cover $t$ as $\cov(t) \geq k$ because the min cut has size $k$. For $1 \leq i \leq k-1$, we define $S^t_i$ to be the segment in $T$ between $v_i$ to $v_{i+1}$. As $t$ is also an edge that covers $t$, the vertex $v_1$ is the right endpoint of $t$. Also, it may be the case that some segments are empty in the case there are multiple edges that cover $t$ and have the same right endpoint. We define $c^t_i = \cov(t) + k - 2i$.

We next show that the tuples $(S^t_i,c^t_i)$ capture all the min 2-respecting cuts of the form $\{t,t'\}$ where $t'$ is on the right of $t$. As explained above, all edges $t'$ in a min 2-respecting cut with $t$ are inside the segment between $v_1$ to $v_k$, hence they are contained in one of the segments $S^t_i$. Let $t'$ be an edge in $S^t_i$, we show that $\cut(t,t')=k$ iff $\cov(t')= c^t_i$. Recall that by Claim \ref{covClaim}, $\cut(t,t')=\cov(t)+\cov(t') - 2 \cov(t,t')$. Now by the definition of the sets $S^t_i$, any edge $t'$ inside $S^t_i$ is covered by all edges that cover $t$ except $e_1,...,e_i$ that have their right endpoint before $S^t_i$. This follows as any edge that covers $t$ has one endpoint on the left of $t$ and one endpoint on the right of $t$, hence it covers $t'$ iff this right endpoint is on the right of $t'$. Hence, by definition $\cov(t,t')=\cov(t)-i$ for all edges in $S^t_i$. If $\cut(t,t')=k$, we have that $k = \cov(t)+\cov(t') - 2 \cov(t,t') = \cov(t)+\cov(t') - 2 (\cov(t) - i)$. This gives $\cov(t') = \cov(t) + k - 2i = c^t_i$, as needed.

This completes the proof for cuts of the right of $t$, to handle cuts where $t'$ is on left of $t$ we can repeat the same arguments where we replace right by left. Overall we get $2(k-1) = O(k)$ tuples $(S^t_i,c^t_i)$ that capture all the min 2-respecting cuts that contain $t$.
\end{proof}

\setlength{\intextsep}{0pt}
\begin{figure}[h]
\centering
\setlength{\abovecaptionskip}{0pt}
\setlength{\belowcaptionskip}{0pt}
\includegraphics[scale=0.6]{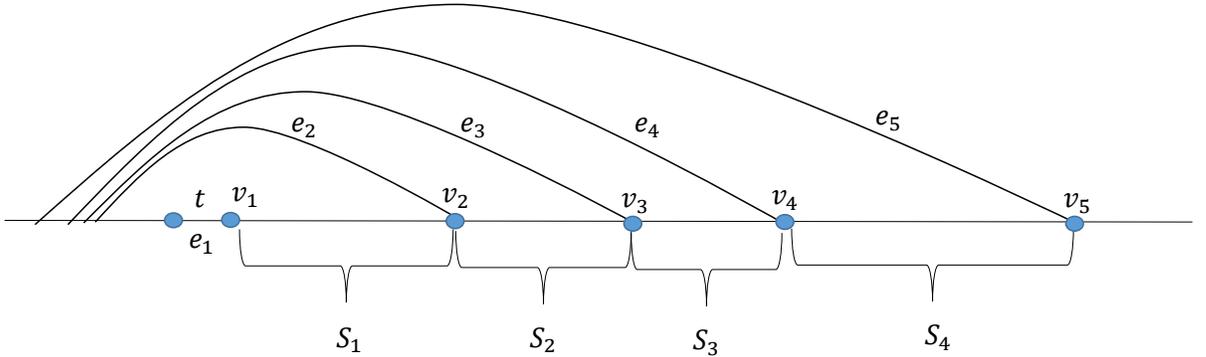}
 \caption{Illustration of the proof of Lemma \ref{min_cuts_lemma}. A tree edge $t' \in S_i$ is in a min 2-respecting cut with $t$ iff $\cov(t')=c^t_i= \cov(t) + k - 2i$.}
\label{pathLemmaPic}
\end{figure}

\subsection{A Proof for a General Tree}

We next generalize the proof to a general tree $T$. There are two main ingredients. First, we use structural properties of cuts shown in \cite{mukhopadhyay2020weighted, DBLP:journals/corr/abs-2004-09129} to focus our attention only on one tree path $P_t$ per edge $t$. Second, while now non-tree edges that cover $t$ may have endpoints outside $P_t$, we show that we can compare them and sort them according to the subpath of $P_t$ they cover. This allows us to follow the main proof idea of the path case also when dealing with a general tree. We next discuss this in detail.
First, we use the following. 

\begin{restatable}{lemma}{uniquePath}\cite{mukhopadhyay2020weighted}\label{unique_path_lemma}
Let $G$ be a graph with minimum cut of size $k$, and let $T$ be a spanning tree of $G$. For each tree edge $t$, all the edges $\{t'|\cut(t,t')=k\}$ are in one tree path $P_t$ containing $t$. 
\end{restatable}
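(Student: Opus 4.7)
The plan is to argue by contradiction. Suppose $t_1, t_2$ are tree edges with $\cut(t, t_1) = \cut(t, t_2) = k$ but $t, t_1, t_2$ do not all lie on a single tree path. Removing these three edges decomposes $T$ into four pieces; lying on a common tree path is equivalent to these pieces being linearly linked by $t, t_1, t_2$. Our assumption therefore forces a \emph{star} link-graph: one central piece $C$ joined by $t, t_1, t_2$ to three leaf pieces, which I denote $L_t, L_{t_1}, L_{t_2}$ according to which tree edge joins them to $C$. A brief enumeration shows that, up to swapping the names of $t_1$ and $t_2$, exactly three star configurations can occur: (i) $t_1, t_2$ are orthogonal descendant edges of $t$, with $C = T_t \setminus (T_{t_1} \cup T_{t_2})$; (ii) one of $t_1, t_2$---say $t_1$---is an ancestor edge of both $t$ and $t_2$ which are orthogonal below $t_1$, with $C = T_{t_1} \setminus (T_t \cup T_{t_2})$; and (iii) $t, t_1, t_2$ are pairwise orthogonal, each being the uppermost edge of its own subtree, with $C = V \setminus (T_t \cup T_{t_1} \cup T_{t_2})$.

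Write $e(A, B)$ for the number of edges of $G$ with one endpoint in $A$ and the other in $B$, and let $\Sigma = \sum_{\{A, B\}} e(A, B)$ be the sum over the six unordered pairs of the four star pieces. The first identity I would establish is
\begin{equation*}
\cov(t) + \cov(t_1) + \cov(t_2) + e(C, V \setminus C) \;=\; 2 \Sigma,
\end{equation*}
valid in all three sub-cases by a routine check that each $e(A, B)$ appears in exactly two of the four cuts on the left. Each of those four cuts is a cut of $G$ and hence has size at least $k$, so $\Sigma \geq 2k$. The second identity comes from determining the two sides of each $2$-respecting cut: because the ``third'' tree edge $t_j$ is not allowed to cross $\cut(t, t_i)$, its endpoint-pieces $C$ and $L_{t_j}$ must lie on the same side, which forces $\cut(t, t_i) = e\bigl(L_t \cup L_{t_i},\, L_{t_j} \cup C\bigr)$. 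Summing this over $i \in \{1,2\}$ and re-collecting terms gives
\begin{equation*}
\cut(t, t_1) + \cut(t, t_2) \;=\; \Sigma + e(C, L_t) + e(L_{t_1}, L_{t_2}).
\end{equation*}

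Combining the two identities with the hypothesis $\cut(t, t_1) + \cut(t, t_2) = 2k$ yields $e(C, L_t) + e(L_{t_1}, L_{t_2}) \leq 2k - \Sigma \leq 0$. But by the star structure the tree edge $t$ itself has one endpoint in $C$ and the other in $L_t$, so $e(C, L_t) \geq 1$, contradicting the displayed bound. I expect the main obstacle to be the star-case enumeration together with the uniform verification of the two identities across the three sub-cases; both are essentially bookkeeping once the labelling of the four pieces is fixed. The heart of the argument is the observation that in every star sub-case the ``extra'' terms in the expression for $\cut(t, t_1) + \cut(t, t_2)$ are precisely $e(C, L_t)$ and $e(L_{t_1}, L_{t_2})$, the first of which necessarily contains the tree edge $t$.
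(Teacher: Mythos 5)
Your proof is correct, but it takes a genuinely different route from the paper's. The paper first records the observation (its Claim \ref{claimInteresting}) that for any min 2-respecting cut $\{t,t'\}$ one has $\cov(t,t')\geq \cov(t)/2$, which follows in one line from $\cut(t,t')=\cov(t)+\cov(t')-2\cov(t,t')\leq \cov(t')$. It then argues: if no tree path contains $t,t_1,t_2$, no single edge can cover all three, so the ($\geq \cov(t)/2$)-sized sets of edges covering $\{t,t_1\}$ and covering $\{t,t_2\}$ are disjoint; adding the edge $t$ itself, which covers $t$ but neither $t_1$ nor $t_2$, yields more than $\cov(t)$ edges covering $t$ --- a contradiction. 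Your argument instead works directly with the four-piece star decomposition and two exact counting identities, deriving $e(C,L_t)+e(L_{t_1},L_{t_2})\leq 0$ and contradicting it via the tree edge $t$. Both proofs ultimately cash in the same unit of slack (the tree edge $t$ itself), and both are sound; yours is more bookkeeping but fully self-contained and makes the exact edge accounting transparent, while the paper's is shorter and reuses the half-covering fact, which it also needs elsewhere (e.g.\ in Claim \ref{subtreeEdge}). One small simplification available to you: your identities depend only on the star structure of the link graph, not on how the central piece sits relative to the root, so the enumeration of sub-cases (i)--(iii) can be dropped entirely.
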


The proof follows from \cite{mukhopadhyay2020weighted}, we add a proof in Appendix \ref{app_proofs} for completeness. 

\paragraph{Comparing edges.} In the path case we compared different edges that cover $t$ according to their right endpoint.
Ideally, we want to follow a similar argument now with respect to the path $P_t$.
However, the edges that cover $t$ can have their endpoints outside $P_t$. To deal with it, we compare edges according to the subpath they cover in $P_t$. For doing so, we need the following definitions.

Let $t=\{u,p(u)\}$ be a tree edge, and let $P$ be a tree path where $t \in P$. Let $e$ be an edge that covers $t$. We next define two virtual endpoints $v_h(e,P),v_l(e,P)$ in $P$ such that $e$ covers exactly the subpath of $P$ between $v_l(e,P)$ and $v_h(e,P)$. To do so, we first take a closer look at $P$. The path $P$ has one part below $u$ and one part that starts above $p(u)$ and ends either above $p(u)$ or orthogonal to $p(u)$. We call these parts $P^l$ and $P^h$ respectively. Let $\{p_l,p_h\}$ be the endpoints of $P$ where $p_l$ is the endpoint below $u$. Similarly, let $\{v_l,v_h\}$ be the endpoints of $e$ where $v_l$ is the endpoint below $u$ (as $e$ is an edge that covers $t$ it has one such endpoint).

We define $v_l(e,P) = LCA(p_l,v_l)$, since $u$ is an ancestor of both $p_l$ and $v_l$, $v_l(e,P)$ is a vertex in $P$.
The definition of $v_h(e,P)$ requires two steps. First, let $a_e = LCA(p(u),v_h)$ and $a_P = LCA(p(u),p_h)$. If $a_e \neq a_P$, and $a_e$ is below $a_P$, we define $v_h(e,P)=a_e$. If $a_e \neq a_P$, and $a_P$ is below $a_e$, we define $v_h(e,P)=a_P$. Otherwise, $a_e = a_P$, and we define $v_h(e,P) = LCA(p_h,v_h)$. See Figure \ref{comparePathsPic} for illustration. 

%maybe we don't want to have a defintion specific for the choice of $t$

\begin{claim} \label{lca_claim}
The tree path between $v_l(e,P)$ and $v_h(e,P)$ is exactly the subpath of $P$ covered by $e$. Moreover, the subpath between $v_l(e,P)$ to $u$ is the subpath of $P^l$ covered by $e$, and the subpath between $p(u)$ to $v_h(e,P)$ is the subpath of $P^h$ covered by $e$.
\end{claim}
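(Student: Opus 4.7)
My plan is to reduce the claim to analyzing the intersections of $P$ and $P_e$ separately in the subtree rooted at $u$ and in the rest of the tree (above $p(u)$), then glue the two pieces together via the common edge $t$. Since both $P$ and $P_e$ contain $t$, the intersection $P \cap P_e$ is a single subpath and splits naturally into a lower portion inside the subtree of $u$ and an upper portion outside it, so it suffices to identify each endpoint.

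For the lower portion, both the part of $P_e$ below $u$ and the part of $P^l$ are ascending paths in the subtree of $u$ that terminate at $u$: the first goes from $v_l$ up to $u$, the second from $p_l$ up to $u$. Two ascending tree paths into a common vertex share exactly the ancestor chain from their LCA up to that vertex, so the intersection is precisely the segment from $u$ down to $LCA(p_l, v_l) = v_l(e,P)$. This immediately gives the second statement.

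For the upper portion, the part of $P_e$ above $p(u)$ ascends from $p(u)$ to $a_e = LCA(p(u), v_h)$ and then descends to $v_h$, while $P^h$ ascends from $p(u)$ to $a_P = LCA(p(u), p_h)$ and then descends to $p_h$. I would split into cases on the position of $a_e$ relative to $a_P$ along the ancestor chain of $p(u)$. If $a_e$ lies strictly below $a_P$, the shared ascending segment ends at $a_e$, where $P_e$ peels off downward into a subtree distinct from the one containing the remainder of $P^h$, so the intersection terminates at $a_e = v_h(e,P)$. The case $a_P$ strictly below $a_e$ is symmetric, terminating at $a_P = v_h(e,P)$. When $a_e = a_P$, both paths pivot at the same apex and their descending legs share a common prefix that continues down exactly to $LCA(p_h, v_h) = v_h(e,P)$, by the same ``two ascending paths into a common vertex'' argument applied to the descending legs viewed as ascending out of $v_h$ and $p_h$. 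In every case the intersection endpoint coincides with the defined $v_h(e,P)$, which yields the third statement; concatenating the lower and upper portions through $t$ proves the first.

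The main obstacle I anticipate is the bookkeeping of degenerate subcases, especially: $v_l = u$ (empty lower leg), $v_h$ an ancestor of $p(u)$ so that $a_e = v_h$ and the descending leg of $P_e$ collapses, the analogous collapse for $P^h$ when $p_h$ is an ancestor of $p(u)$, and the coincidence $a_e = a_P = v_h$ or $= p_h$. Each fits the framework above via the correct LCA identity, but the uniform treatment requires stating the invariant ``the intersection of two tree paths sharing a common vertex is the segment between that vertex and the LCA of their other endpoints'' carefully enough to cover trivial segments. Once that lemma is isolated and applied once below $u$ and twice above $p(u)$ (once to the ascending parts, once to the descending parts when $a_e = a_P$), the proof is immediate.
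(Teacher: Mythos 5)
Your proof is correct and follows essentially the same route as the paper's: split the intersection of $P_e$ with $P$ into the part below $u$ and the part above $p(u)$, identify the lower endpoint as $LCA(p_l,v_l)$ via the divergence point of two ascending paths into $u$, and handle the upper part by the same three-way case analysis on the relative positions of $a_e$ and $a_P$. The degenerate subcases you flag are likewise glossed over in the paper, so nothing further is needed.
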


\setlength{\intextsep}{0pt}
\begin{figure}[h]
\centering
\setlength{\abovecaptionskip}{0pt}
\setlength{\belowcaptionskip}{0pt}
\includegraphics[scale=0.6]{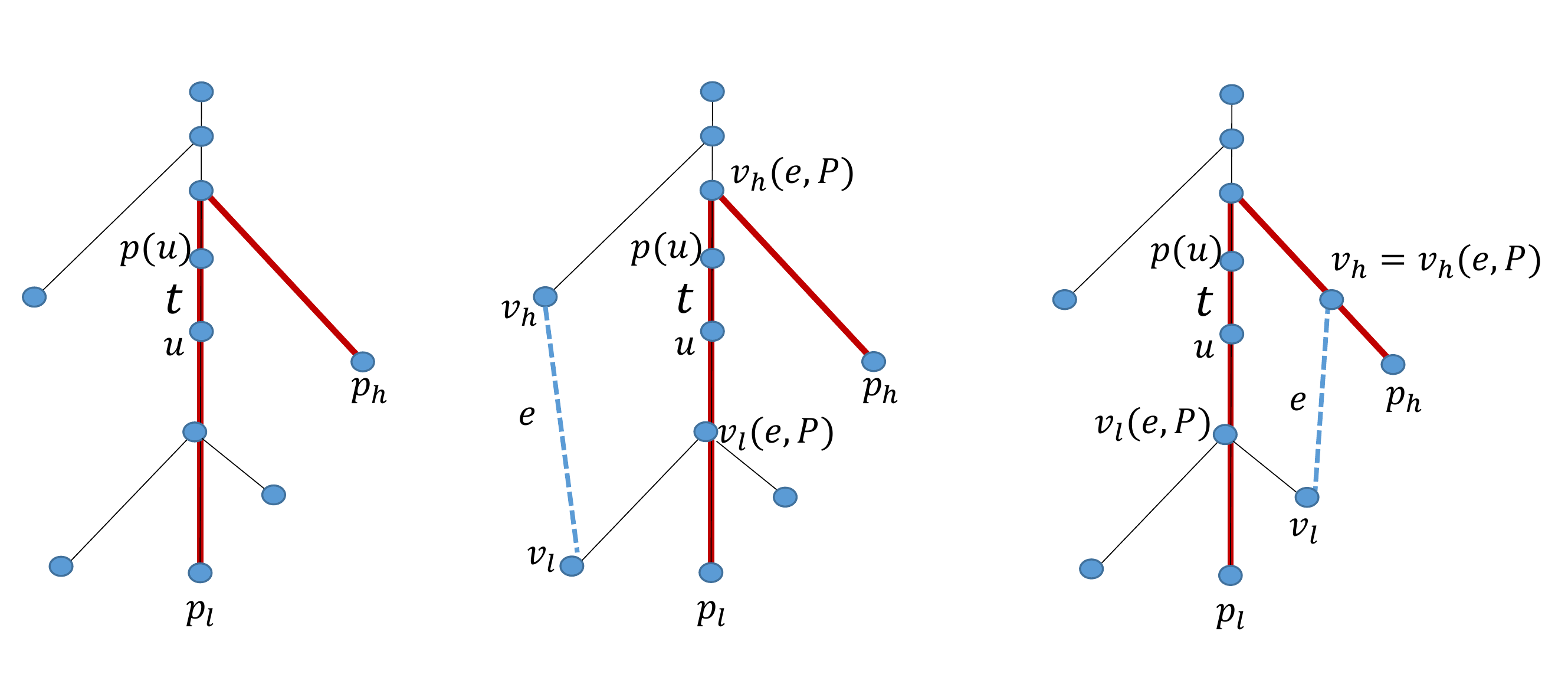}
 \caption{The path $P$ is the bold path between $p_l$ to $p_h$. On the right there are illustrations of several edges $e$ and the corresponding endpoints $v_l(e,P),v_h(e,P)$.}
\label{comparePathsPic}
\end{figure}

\begin{proof}
The edge $e=\{v_l,v_h\}$ covers $t=\{u,p(u)\}$ and the tree paths $P_1,P_2$ between $v_l$ to $u$ and between $p(u)$ to $v_h$, respectively. We next show that the intersection of $P_1,P_2$ with $P$ is exactly captured by the vertices $v_l(e,P)$ and $v_h(e,P)$. 

First, $P_1$ and $P^l$ are two subpaths below $t$ that start at $u$, they may have some joint edges before they diverge to two different subpaths that end at $v_l$ and $p_l$, the point they diverge is exactly $LCA(v_l,p_l)=v_l(e,P)$. Hence, $e$ covers exactly the subpath of $P^l$ between $v_l(e,P)$ and $u$. 

The paths $P^h$ and $P_2$ are both paths that start above $p(u)$ and have one part above $p(u)$ and one orthogonal to $p(u)$. More concretely, $P^h$ is the path between $p(u)$ and $p_h$. It has one part between $p(u)$ and $LCA(p(u),p_h)=a_P$ and the second part between $a_P$ and $p_h$. Similarly, the path $P_2$ has one part between $p(u)$ and $LCA(p(u),v_h)=a_e$ and the second part between $a_e$ and $v_h$. If $a_e \neq a_P$ then the second parts of the two paths are disjoint, and the intersection of the first parts is all edges above $p(u)$ that are below $a_e$ and $a_P$. By the definition of $v_h(e,P)$ this is exactly the subpath between $p(u)$ and $v_h(e,P)$. If $a_e = a_P$, then the first parts of the two paths are equal, and they may have some joint edges below $a_P$, until they diverge to two subpaths that end at $p_h$ and $v_h$. The point they diverge is exactly $LCA(p_h,v_h)$, which is some vertex in the subtree rooted at $a_P=a_e$, as this is an ancestor of both vertices. By definition, $v_h(e,P)=LCA(p_h,v_h)$ in the case that $a_e=a_P$, and indeed $e$ covers exactly the subpath of $P^h$ between $p(u)$ to $v_h(e,P)$. This path has two parts, one from $p(u)$ to $a_P$, and the second from $a_P$ to $v_h(e,P)$.   
\end{proof}

From the definition of $v_l(e,P),v_h(e,P)$, we can compute them using LCA computations. Note that LCA computations can also allow checking which endpoint of a path $P$ or an edge $e$ is below $t$, or check if two vertices are in the same tree path (in which case one is ancestor of the other), and if so determine which one of them is higher in the tree. 

\begin{claim} \label{claimLCAcomputeEndpoints}
Given the LCA labels of the endpoints $\{p_l,p_h\}$ of a path $P$ that contains the tree edge $t=\{u,p(u)\}$, the LCA labels of the endpoints $\{v_l,v_h\}$ of an edge $e$ that covers the tree edge $t$, and the LCA labels of $\{u,p(u)\}$ one can deduce the LCA labels of the vertices $v_l(e,P),v_h(e,P)$.
\end{claim}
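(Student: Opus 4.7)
The plan is to reduce everything to repeated applications of the LCA function: since each vertex's LCA label uniquely determines its own identity and, combined with another label, the identity of their LCA, the only work is to express each of $v_l(e,P)$ and $v_h(e,P)$ as a bounded-depth expression in LCA operations on vertices whose labels we already have. A recurring sub-primitive is the ancestry test: for two vertices $x,y$, we have $LCA(x,y)=x$ iff $x$ is an ancestor of $y$, and this check is available given the labels. So I will first orient the data (figuring out which endpoint is which), then assemble $v_l(e,P)$ and $v_h(e,P)$ directly from the definitions.

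The first step is to recover the roles of the four endpoints from just the unordered pairs $\{p_l,p_h\}$ and $\{v_l,v_h\}$. Since $p_l$ is in the subtree of $u$ and $p_h$ is not, I identify $p_l$ as the unique endpoint $q\in\{p_l,p_h\}$ with $LCA(u,q)=u$; symmetrically I identify $v_l$ among the endpoints of $e$ using the same ancestry test with $u$. Because $P$ contains $t$ and $e$ covers $t$, exactly one endpoint on each side passes this test, so the disambiguation is unambiguous. I then compute $v_l(e,P)=LCA(p_l,v_l)$ directly from the labels, which is one of the allowed operations.

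For $v_h(e,P)$, the plan is to follow the three-case definition literally. First compute $a_e=LCA(p(u),v_h)$ and $a_P=LCA(p(u),p_h)$ from the labels. Both $a_e$ and $a_P$ are ancestors of $p(u)$, hence lie on a common root-to-$p(u)$ path, so to compare them I just test $LCA(a_e,a_P)$: if it equals $a_P$ then $a_e$ is below $a_P$ (and I output $a_e$), if it equals $a_e$ then $a_P$ is below $a_e$ (and I output $a_P$), and if the two are equal then I fall into the last case and output $LCA(p_h,v_h)$, again a direct LCA of two known labels. All three branches produce a label we can obtain from at most a constant number of LCA queries on the supplied labels.

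The only point that needs a moment of care is verifying that the roles assumed in the definitions of $v_l(e,P)$ and $v_h(e,P)$ are exactly the ones we recovered via the ancestry test with $u$; this is where the hypothesis ``$e$ covers $t=\{u,p(u)\}$'' and ``$t\in P$'' are used, to guarantee that precisely one endpoint in each pair is a strict descendant of $u$. Once this is observed, the procedure above produces the LCA labels of $v_l(e,P)$ and $v_h(e,P)$ using only a constant number of LCA queries, proving the claim.
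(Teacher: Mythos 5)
Your proof is correct and takes the same route the paper intends: the paper states this claim without proof, treating it as immediate from the definitions of $v_l(e,P),v_h(e,P)$ together with the remark that LCA labels support ancestry tests and comparisons along a root-to-leaf path. Your write-up just spells out those constant-many LCA queries (disambiguating $p_l$ from $p_h$ and $v_l$ from $v_h$ via the ancestry test with $u$, then following the three-case definition of $v_h(e,P)$), and does so correctly.
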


We can now give a proof for Lemma \ref{min_cuts_lemma} where $T$ is a general tree.

\begin{proof}[Proof of Lemma \ref{min_cuts_lemma}]
Fix a tree edge $t=\{u,p(u)\}$. From Lemma \ref{unique_path_lemma}, all the edges where $\cut(t,t')=k$ are in one tree path $P_t$ containing $t$. This path has one subpath $P_t^l$ below $t$ starting at $u$, and another subpath $P_t^h$ starting at $p(u)$. We first focus on edges $t'$ in the first subpath $P_t^l$. We look at all edges that cover $t$ and compare them according to the subpath they cover in $P_t^l$. From Claim \ref{lca_claim}, we know that an edge $e$ that covers $t$, covers exactly the subpath of $P_t^l$ between $v_l(e,P_t)$ and $u$. Note that all vertices $v_l(e,P_t)$ for any edge $e$ that covers $t$ are vertices in the path $P_t^l$ and hence we can compare them. We sort the edges $e$ that cover $t$ according to the vertices $v_l(e,P_t)$ from the closest to the furthest from $t$. Let $\{e_1,...,e_k\}$ be the first $k$ edges in the sorted order, and let $v_i = v_l(e_i,P_t)$.

First, all edges $t' \in P_t^l$ such that $\cut(t,t')=k$ are on the subpath of $P_t^l$ between $v_1$ to $v_k$. For this, first note that $v_1=u$ as $t=\{u,p(u)\}$ is an edge that covers $t$, and $v_l(t,P_t)=u$, and this is clearly the closest vertex to $t$ in $P_t^l$. Second, for any edge $t' \in P_t^l$ that is below $v_k$, there are more than $k$ edges that cross the cut defined by $\{t,t'\}$. This follows since the edges $\{e_1,...,e_k\}$ are edges that cover $t$ and not $t'$, as the subpath of $P_t^l$ they cover is between $u$ to a vertex $v_i$ that is above $t'$. Also, the edge $t'$ covers $t'$ and not $t$, hence $\cut(t,t') > k$ for these edges.

We divide the edges $t' \in P_t^l$ between $v_1$ to $v_k$ to $k-1$ subsets, where $S^t_i$ is the subpath between $v_i$ to $v_{i+1}$ (it can be empty if $v_i = v_{i+1}$). Let $c^t_i = \cov(t) + k - 2i$. We next show that the tuples $(S^t_i,c^t_i)$ capture all min 2-respecting cuts $\{t,t'\}$ where $t' \in P_t^l$. 
Let $t' \in S^t_i$, we show that $\cut(t,t')=k$ iff $\cov(t')= c^t_i$. By Claim \ref{covClaim}, $\cut(t,t')=\cov(t)+\cov(t') - 2 \cov(t,t')$.
First, we show that for all edges $t' \in S^t_i$, the value $\cov(t,t')$ is fixed and equal to $\cov(t) - i$. Recall that we sorted the edges that cover $t$ according to the subpath they cover in $P_t^l$. Any edge that covers $t$ except $\{e_1,...,e_i\}$ covers the subpath between $u$ to $v_{i+1}$ (and possibly additional edges). As $t' \in S^t_i$ is in the subpath between $v_i$ to $v_{i+1}$ it is covered by all edges that cover $t$ except $\{e_1,...,e_i\}$ that cover subpaths that end before $S^t_i$. Hence, we have $\cov(t,t')=\cov(t)-i$. This gives $\cut(t,t')=\cov(t)+\cov(t') - 2 \cov(t,t')= \cov(t)+\cov(t') - 2 (\cov(t) - i)$. This implies that $\cut(t,t')=k$ iff $\cov(t') = \cov(t) + k - 2i = c^t_i$, as needed. 

By now we dealt with all min 2-respecting cuts where $t' \in P_t^l$. Similarly, we can define $k-1$ tuples $(S^t_{i+(k-1)},c^t_{i+(k-1)})$ that capture min 2-respecting cuts where $t' \in P_t^h$. Here, we sort the edges $e$ covering $t$ according to the vertices $v_h(e,P_t)$ from the closest to furthest from $t$, and again define $c^t_{i+(k-1)} = \cov(t) + k - 2i$. Recall that $v_h(e,P_t)$ is defined such that $e$ covers exactly the subpath of $P_t^h$ between $p(u)$ and $v_h(e,P_t)$. Following the same arguments gives that $\cut(t,t')=k$ for $t' \in P_t^h$ iff there exists an index $i$ where $t' \in S^t_{i+(k-1)}$ and $\cov(t')=c^t_{i+(k-1)}$. This completes the proof.
\end{proof} 

\section{Learning the Cut Information} \label{sec:learn_cut}

We showed that for each tree edge $t$, we can represent in $O(k \log{n})$ bits information about all min 2-respecting cuts that contain $t$, where $k$ is the value of the min cut. We next show that all tree edges $t$ can learn this information efficiently, as well as additional useful information such as the number of min 2-respecting cuts $\{t,t'\}$ where $t'$ is in a certain fragment. For the algorithm, we decompose the tree into fragments using the fragment decomposition from Section \ref{sec:frag}. We also apply on the tree the LCA labeling scheme from Section \ref{sec:lca}.

We denote by $\cutInfo(t)$ the $O(k \log{n})$ information represented by the tuples $\{(S^t_i,c^t_i)\}_{1 \leq i \leq \ell}$ from Lemma \ref{min_cuts_lemma}. Note that $S^t_i$ is a subpath in the tree represented by two vertices, $c^t_i$ is an integer bounded by $m=O(n^2)$, and $\ell = 2(k-1)$, hence the information indeed can be represented in $O(k \log{n})$ bits. When we learn about a subpath $S^t_i$, we will see that we learn the LCA labels of the endpoints of $S^t_i$, which in particular include also the fragments of the endpoints, as explained in Section \ref{sec:lca}.
In the proof of Lemma \ref{min_cuts_lemma}, we divided the tuples $\{(S^t_i,c^t_i)\}_{1 \leq i \leq \ell}$ to two parts, one is the tuples representing min cuts $\{t,t'\}$ where the tree edge $t'$ is in the path $P_t^l$ below $t$, and the second where the tree edge $t'$ is in the path $P_t^h$, above or orthogonal to $t$. We denote by $\cutInfo_l(t)$, and $\cutInfo_h(t)$ the tuples $\{(S^t_i,c^t_i)\}$ divided to these two parts, such that $\cutInfo(t)= \cutInfo_l(t) \cup \cutInfo_h(t)$. %In our algorithm, in some cases it is enough to learn only about one of these parts. 
We denote by $\fragInfo(t,F)$ the number of min cuts $\{t,t'\}$ where $t'$ is in the fragment $F$, and by $\fragInfo_h(t,F)$ the number of min cuts $\{t,t'\}$ where $t'$ is in the \emph{highway} of the fragment $F$. Recall that each fragment is composed of a main path that is called the \emph{highway} of the fragment, and additional sub-trees attached to it called \emph{non-highways}, see Section \ref{sec:frag} for the full details.
We next show how tree edges learn the information $\cutInfo(t)$, $\fragInfo(t,F)$ and $\fragInfo_h(t,F)$.
We show two algorithms, one for tree edges $t$ in non-highways, and one for tree edges in highways.
We first discuss a few useful observations.

\begin{claim} \label{claimCompare}
Let $e,e'$ be two edges that cover a tree edge $t$, and let $P$ be a path that contains $t$. We can use LCA computations to compare the vertices $v_h(e,P),v_h(e',P)$, and check which one is closer to $t$. Similarly, we can compare $v_l(e,P),v_l(e',P)$ and check which one is closer to $t$.
\end{claim}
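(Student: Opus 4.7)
The plan is to observe that both comparisons reduce to standard ancestor/descendant queries, which are well known to be answerable from LCA labels alone, once we recall where the virtual endpoints live inside $P$.

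First I would dispose of the $v_l$ comparison. By \Cref{lca_claim} applied to $e$ and $e'$, both $v_l(e,P)$ and $v_l(e',P)$ lie on the subpath $P^l$ of $P$ below $u$; concretely each is defined as $LCA(p_l,\cdot)$ where the second argument is a descendant of $u$, so each is an ancestor of $p_l$ and a descendant of $u$. Because $P^l$ is a descending root-to-leaf path from $u$ to $p_l$, any two of its vertices are in ancestor-descendant relation, and the one closer to $t$ is the ancestor. Thus $v_l(e,P)$ is closer to $t$ than $v_l(e',P)$ iff $LCA(v_l(e,P),v_l(e',P))=v_l(e,P)$, which is a single LCA computation on labels we already have.

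For the $v_h$ comparison the path $P^h$ from $p(u)$ to $p_h$ has two pieces: an ascending piece from $p(u)$ up to $a_P=LCA(p(u),p_h)$, and a descending piece from $a_P$ down to $p_h$. I would first compute $a_P$ with one LCA call. Next, for each $x\in\{v_h(e,P),v_h(e',P)\}$, classify $x$ as ascending or descending using the test $LCA(x,p(u))=x$: by the case analysis in the definition of $v_h(\cdot,P)$, an ascending point is an ancestor of $p(u)$, while a descending point is a proper descendant of $a_P$ with $LCA(x,p_h)=x$. With the classification in hand, the comparison splits into three easy cases, each of which is an LCA query on $x$ and $y=v_h(e',P)$: if both are ascending, both are ancestors of $p(u)$ and the one closer to $p(u)$ is the deeper, so $x$ wins iff $LCA(x,y)=y$; if both are descending, both are ancestors of $p_h$, and the one closer to $p(u)$ along $P^h$ is the one closer to $a_P$, which is the ancestor, so $x$ wins iff $LCA(x,y)=x$; if one is ascending and the other descending, the ascending one is strictly closer to $p(u)$ (any descending vertex reaches $p(u)$ along $P^h$ only by first going back up to $a_P$).

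The only mild subtlety is the boundary case $v_h(\cdot,P)=a_P$, but $a_P$ is the shared endpoint of the two pieces, so either classification yields the same verdict and all the above tests remain consistent. Since every comparison above is carried out using only the LCA labels of $u$, $p(u)$, $p_l$, $p_h$, the endpoints of $e$ and $e'$, and the derived vertices $v_l(\cdot,P)$, $v_h(\cdot,P)$, $a_P$, the entire procedure uses only a constant number of LCA computations, proving the claim. There is no substantive obstacle here beyond carefully tracking which piece of $P$ each virtual endpoint lies on; the real work was already done in \Cref{lca_claim} and in the definitions of $v_l(e,P)$, $v_h(e,P)$.
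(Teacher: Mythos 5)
Your proof is correct and follows essentially the same route as the paper's: for $v_h$ you classify each virtual endpoint as lying on the ascending part of $P^h$ (ancestor of $p(u)$) or the part below $a_P$ via the test $LCA(x,p(u))=x$, and then resolve the three resulting cases with single LCA queries exactly as the paper does, with the $v_l$ case reducing to a single ancestor test. The extra care you take with the boundary vertex $a_P$ is a harmless refinement of the same argument.
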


\begin{proof}
Let  $v=v_h(e,P),v'=v_h(e',P)$. Both $v$ and $v'$ are vertices in $P^h$. The path $P^h$ starts in the vertex $p(u)$, it has a part above $p(u)$ that ends in the vertex $a_P = LCA(p(u),p_h)$, and then a part below $a_P$. We can check in which part $v,v'$ are using LCA computations, as the first part is above $p(u)$ and the second is not, so if $LCA(p(u),v)=v$ we know that $v$ is in the first part, and otherwise it is in the second part. Now if $v,v'$ are in different parts, the closer to $t$ is the one in the first part. Otherwise, if they are both in the first part, above $p(u)$, the closer to $t$ is the lower in the path, which can be checked using LCA computations: $LCA(v,v')=v$ iff $v'$ is the lower. Finally, if they are both in the second part, orthogonal to $p(u)$, the closer to $t$ is the higher, which again can be checked using LCA computations. To conclude, using the LCA labels alone we can compare the vertices $v_h(e,P)$ and see which one is closer to $t$. Using similar ideas we can compare $v_l(e,P),v_l(e',P)$ (here we just need to check which one is higher in the tree).
\end{proof}

\begin{claim} \label{claimFrag}
Given $\cutInfo(t)$, the values $\{(t',\cov(t'))\}_{t' \in F}$ and all LCA labels of vertices in $F$, as well as indication of which tree edges are part of the highway of $F$, one can deduce all the min 2-respecting cuts of the form $\{t,t'\}$ for $t' \in F$. In particular, it can deduce $\fragInfo(t,F)$ and $\fragInfo_h(t,F)$. 
\end{claim}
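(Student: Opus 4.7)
The plan is to reduce the claim to a mechanical, per-edge check that uses only the ingredients in the hypothesis, with \Cref{min_cuts_lemma} supplying the characterization. By that lemma, a pair $\{t,t'\}$ is a min 2-respecting cut if and only if there exists some tuple $(S^t_i,c^t_i)\in\cutInfo(t)$ such that $t'\in S^t_i$ \emph{and} $\cov(t')=c^t_i$. Since $\cutInfo(t)$ encodes all $O(k)$ tuples together with the LCA labels of the two endpoints of each subpath $S^t_i$, and we are given $\cov(t')$ and the LCA label of every $t'\in F$, the entire check can be performed locally.

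\textbf{Steps.} Concretely, I would iterate over every tree edge $t'\in F$, and for each $t'$ iterate over the $\ell=O(k)$ tuples $(S^t_i,c^t_i)\in\cutInfo(t)$. For each tuple I would test two conditions: (a) whether $t'$ lies on the tree subpath $S^t_i$, a test performed using the LCA labels of $t'$ and of the two endpoints of $S^t_i$; and (b) whether the integer $\cov(t')$ equals the integer $c^t_i$. If some tuple passes both tests, then \Cref{min_cuts_lemma} certifies that $\{t,t'\}$ is a min 2-respecting cut, and otherwise it is not. Aggregating this decision over $t'\in F$ produces the exact list of min 2-respecting cuts of the form $\{t,t'\}$ with $t'\in F$, and its cardinality is $\fragInfo(t,F)$. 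Restricting the aggregation to those $t'\in F$ flagged as highway edges, using the provided highway indication, yields $\fragInfo_h(t,F)$.

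\textbf{Main obstacle.} The only nontrivial ingredient is the subpath-membership test in step (a): deciding from LCA labels alone whether a tree edge $t'$ lies on the tree path between two given vertices. This is the same primitive used in \Cref{claimCheckCovLCA} to decide whether a non-tree edge covers $t$; I would invoke it here by viewing the two LCA-labeled endpoints of $S^t_i$ as the endpoints of a virtual edge and asking whether $t'$ is on the corresponding tree path. Beyond this standard LCA manipulation, the argument is a direct application of \Cref{min_cuts_lemma} and requires no new structural observations.
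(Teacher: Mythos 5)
Your proposal is correct, and it reaches the conclusion by a more direct route than the paper. The paper's proof fixes a tuple $(S^t_j,c^t_j)$ and explicitly computes the intersection $S^t_j\cap F$ via a three-way case analysis on whether the endpoints $v_j,v_{j+1}$ of the segment lie inside $F$, outside $F$, or one of each, using the skeleton tree to decide (in the "both outside" case) whether the segment passes through the highway $h_F$. You instead invert the iteration: for each $t'\in F$ you test membership $t'\in S^t_i$ directly from the LCA labels of $t'$'s endpoints and of $v_i,v_{i+1}$ (the latter being part of $\cutInfo(t)$ by construction), exactly the primitive of \Cref{claimCheckCovLCA} applied to the virtual edge $\{v_i,v_{i+1}\}$, and then compare $\cov(t')$ with $c^t_i$. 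Since \Cref{min_cuts_lemma} states that $\{t,t'\}$ is a min 2-respecting cut iff some tuple passes both tests, and the claim is purely about what can be deduced from the given data (not about communication), your per-edge check is a complete proof; it even has the small advantage of using only the inputs listed in the hypothesis, whereas the paper additionally leans on the globally known skeleton tree. The paper's intersection-based formulation is not wasted effort, though: the view of "$S^t_i\cap F$ is either empty, the whole highway, or a prefix ending at $r_F$/$d_F$" is precisely what gets reused later (Steps 3 and 4 of \Cref{lem_nh_frag_info} and the highway case) where per-edge information about $F$ is not available locally and one must reason about whole-highway intersections.
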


\begin{proof}
The information $\cutInfo(t)$ is a set of tuples $(S^t_j,c^t_j)$, for each one of the tuples we would like to check which edges in $S^t_j$ with cover value $c^t_j$ are in the fragment $F$. Summing the number of such edges for all $j$ gives exactly the number of min 2-respecting cuts $\{t,t'\}$ where $t' \in F$ from the definition of $\cutInfo(t)$, this gives $\fragInfo(t,F)$. To get $\fragInfo_h(t,F)$ we sum only cuts where $t'$ is on the highway. We next explain the computation for one such tuple $(S^t_j,c^t_j)$. The segment $S^t_j$ is the tree path between $v_j$ to $v_{j+1}$. The information $\cutInfo(t)$ has the LCA labels and fragments of $v_j$ and $v_{j+1}$. We use this information to identify the intersection of $S^t_j$ and $F$ (that may be empty), as follows. 
\begin{itemize}
\item {\textbf{Case 1: $v_j$ and $v_{j+1}$ are vertices in $F$}. Here the segment $S^t_j$ is contained entirely inside $F$. Since we know the complete structure of the fragment $F$ and the cover values of all edges, we can just check which edges in the segment $S^t_j$ have cover value $c^t_j$ in this case.}
\item {\textbf{Case 2: $v_j$ and $v_{j+1}$ are vertices outside $F$}. Here, we use the structure of the skeleton tree and the fragments of $v_j$ and $v_{j+1}$ to identify the tree path between them in the skeleton tree, denote this path by $P_S$. Let $r_F,d_F$ be the root and unique descendant of $F$. Recall that the skeleton tree has one edge $e_F= \{r_F,d_F\}$ that represents the highway of $F$. If $P_S$ does not contain $e_F$, it follows that the intersection $S^t_j \cap F$ is empty. Otherwise, $S^t_j \cap F$ is exactly the highway of $F$.
As we know the cover values of edges in the highway, we can check which of them have cover value $c^t_j$, which completes this case.}
\item {\textbf{Case 3: exactly one of $v_j$ and $v_{j+1}$ is in $F$}. Assume w.l.o.g that $v_j \in F$. Here we use the structure of the skeleton tree, and fragment $F'$ of $v_{j+1}$ to identify the closest vertex to $v_{j+1}$ in $F$ (for this, we look at the tree path between $F'$ and $F$). This vertex can only be the root $r_F$ or descendant $d_F$ of $F$ as they are the only vertices that connect directly to vertices outside the fragment. Assume w.l.o.g that $r_F$ is the closest vertex to $v_{j+1}$. It follows that the intersection $S^t_j \cap F$ is the tree path between $r_F$ and $v_j$. As we know the complete structure of $F$ and the cover values of all edges in the fragment, we can check which edges in this path have cover value $c^t_j$, as needed.}
\end{itemize}
%Doing this computation for all tuples $(S^t_j,c^t_j)$ results in the computation of $\fragInfo(t,F)$.
\end{proof}

\subsection{The Algorithm for Non-Highways}

We next describe the algorithm for learning $\cutInfo(t)$, $\fragInfo(t,F)$ and $\fragInfo_h(t,F)$ for non-highways. For simplicity, we first present the algorithm for the case that each tree edge $t$ already knows the tree path $P_t$, and we later explain how to overcome this assumption.

\begin{lemma} \label{nh_cut_info_lem}
Assume that each non-highway edge $t$ knows the path $P_t$. Then all the non-highway tree edges $t$ can learn $\cutInfo(t)$ in $O(k\sqrt{n})$ time. 
\end{lemma}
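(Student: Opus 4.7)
The plan is to gather, for each non-highway tree edge $t \in F$, the $O(k)$ ``best'' covering edges that determine $\cutInfo(t)$ via \Cref{min_cuts_lemma}: namely, the $k$ edges $e$ covering $t$ whose virtual endpoint $v_l(e, P_t)$ is closest to $u$ on $P_t^l$, and the $k$ edges with $v_h(e, P_t)$ closest to $p(u)$ on $P_t^h$, writing $t = \{u, p(u)\}$. Once $u$ holds these two lists together with $\cov(t)$ (available from \Cref{claimLearnCov}), it assembles the tuples $(S^t_i, c^t_i)$ exactly as in the proof of \Cref{min_cuts_lemma}.

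The crucial locality property is \Cref{globalCover}: every non-tree edge covering a non-highway $t \in F$ has at least one endpoint that is an internal vertex of $F$. Thus all data we need already lives at internal vertices of $F$. As setup, each $v \in F$ exchanges LCA labels with its non-tree neighbors in one round. Then we broadcast inside $F$ the LCA-label pair of $P_t$ for every non-highway $t \in F$; since $F$ contains only $O(\sqrt n)$ such edges and its spanning structure has depth $O(\sqrt n)$, this pipelined broadcast finishes in $O(\sqrt n)$ rounds. After this setup, for every incident non-tree edge $e$ and every non-highway $t \in F$, an internal $v \in F$ can check whether $e$ covers $t$ (\Cref{claimCheckCovLCA}), compute $v_l(e, P_t)$ and $v_h(e, P_t)$ (\Cref{claimLCAcomputeEndpoints}), and compare two virtual endpoints on $P_t$ by closeness to $u$ (\Cref{claimCompare}).

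The main step is a pipelined convergecast along a spanning tree of $F$ rooted at $r_F$. For each non-highway $t \in F$ and each of the two criteria, every vertex $v$ maintains a running top-$k$ list of covering edges whose reporting endpoint lies in the subtree of $v$; when $v$ sends upward to its parent, it merges its children's lists together with the contributions from its own incident non-tree edges and forwards the resulting top-$k$. Because the comparison needed for the merge is purely local once $P_t$ is known at every vertex of $F$ (by the broadcast above), all these convergecasts can be interleaved. Each fragment-tree edge carries at most $O(k)$ tokens of $O(\log n)$ bits per non-highway $t$ in $F$, hence $O(k\sqrt n)$ tokens in total; combined with the $O(\sqrt n)$ depth of $F$, standard pipelining completes the convergecast in $O(k\sqrt n)$ rounds. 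A symmetric downward pipelined broadcast from $r_F$ then delivers each $t$'s two top-$k$ lists to its endpoint $u$ in another $O(k\sqrt n)$ rounds.

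The main obstacle is the correctness of the top-$k$ merge at intermediate vertices, since the ordering depends on $t$'s specific path $P_t$. Broadcasting every $P_t$ into $F$ up front resolves this: after the broadcast, every vertex can evaluate and compare virtual endpoints for any covering edge it sees, so top-$k$ merging at each level is well defined; meanwhile \Cref{globalCover} guarantees completeness, because every covering edge of a non-highway $t$ has an internal endpoint in $F$ and is therefore seen by the convergecast.
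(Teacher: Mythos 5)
Your proposal is correct and follows essentially the same approach as the paper: broadcast the path descriptions $P_t$ (and LCA labels) within each fragment, then run pipelined top-$k$ convergecasts to collect, for each non-highway edge $t$ and each of the two criteria, the $k$ covering edges with closest virtual endpoints, using \Cref{globalCover}-type locality to guarantee completeness. The only cosmetic difference is that the paper convergecasts within the subtree $T_u$ below $t$ directly to $u$ (so each covering edge is reported exactly once, by its unique endpoint below $t$), whereas you convergecast over the whole fragment to $r_F$ and broadcast back down, which works equally well provided the merge keeps distinct edges (e.g., by fixing the reporting endpoint to be the one below $t$), since a duplicate in the top-$k$ list would shift the indices $i$ used in $c^t_i=\cov(t)+k-2i$.
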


\begin{proof}
Fix a non-highway tree edge $t=\{u,p(u)\}$. We start by computing $\cutInfo_h(t)$. 
Following the proof of Lemma \ref{min_cuts_lemma}, to compute the tuples $\{(S^t_i,c^t_i)\}$ in $\cutInfo_h(t)$ we should sort the edges $e$ covering $t$ according to the vertices $v_h(e,P_t)$, from the closest to furthest from $t$, and learn about the $k$ first edges $e_1,...,e_k$ in the sorted order and the corresponding vertices $v_1=v_h(e_1,P_t),...,v_k=v_h(e_k,P_t)$. Then, we define $S^t_i$ to be the subpath between $v_i$ to $v_{i+1}$, and $c^t_i = \cov(t) + k - 2i$. The value $\cov(t)$ is known to $t$ by Claim \ref{claimLearnCov}, hence to compute $\{(S^t_i,c^t_i)\}$ it just needs to learn about the edges $e_1,...,e_k$ and vertices $v_1,...,v_k$. 

To do so, first note that given the LCA labels of the edges $e,t$, and the endpoints of the path $P_t$ we can check if $e$ covers $t$ and if so, compute the LCA label of the vertex $v_h(e,P_t)$ using Claims \ref{claimCheckCovLCA} and \ref{claimLCAcomputeEndpoints}. 
We can also use the LCA labels to compare two vertices $v=v_h(e,P_t),v'=v_h(e',P_t)$ and see which one is closer to $t$, using Claim \ref{claimCompare}.
Moreover, as $t$ is a non-highway edge, any edge that covers it has one endpoint in the subtree $T_u$ below $t$ (rooted at $u$) that is in the same fragment of $t$. Hence, for each such edge, there is one vertex in $T_u$ that knows about it. This suggests the following algorithm. Each vertex in $T_u$ first looks at the edges adjacent to it, checks which ones cover $t$, and compare them according to the vertices $v_h(e,P_t)$. Then, we start sending suggestions for $t$, a vertex of height $j$ in the tree, sends to its parent in round $j+i$ the $i$'th best edge it knows about according to the sorted order, including edges adjacent to it and edges it received from its children. The $k$ first edges in the sorted order (and their corresponding vertices $v_h(e,P_t)$) get to $u$ in at most $O(\sqrt{n}+k)$ rounds. For this, note that a vertex of height $j$, learns about the $i$'th best edge in its subtree until round $j+i$, as it can only be delayed by $i-1$ better edges. Hence, in $O(\sqrt{n}+k)$ time $t$ learns $\cutInfo_h(t)$. The whole computation is inside the fragment of $t$.

To let all non-highway edges $t$ learn $\cutInfo_h(t)$, we work in all fragments simultaneously. In each fragment $F$, first all non-highway edges $t$ send to the whole fragment the LCA labels of $t$ and the endpoints $\{p_l,p_h\}$ of $P_t$, this is $O(\sqrt{n})$ pieces of information hence takes $O(\sqrt{n})$ time. Then, using pipelining, we let each edge $t$ learn about the $k$ best edges that cover it as described above. This takes $O(\sqrt{n} + k\sqrt{n})=O(k \sqrt{n})$ time overall, as each one of the $O(\sqrt{n})$ edges should learn $O(k)$ pieces of information. 

Computing $\cutInfo_l(t)$ can be done exactly in the same manner, but now comparing edges according to $v_l(e,P_t)$ instead of $v_h(e,P_t)$.  
\end{proof}

For a tree edge $t$, we denote by $F_t$ the fragment of $t$. 

\begin{lemma} \label{lem_nh_frag_info}
Assume that each non-highway edge $t$ knows the path $P_t$. Then all the non-highway tree edges $t$ can learn $\fragInfo(t,F)$ and $\fragInfo_h(t,F)$ for all fragments $F$ in $O(D+k\sqrt{n})$ time.
\end{lemma}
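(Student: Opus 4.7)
The plan is to reduce the computation of $\fragInfo(t,F)$ and $\fragInfo_h(t,F)$ to an application of Claim~\ref{claimFrag} at a vertex that simultaneously knows $\cutInfo(t)$ and the structure (LCA labels, cover values, highway indicator) of $F$. So the work decomposes into: (a) get $\cutInfo(t)$ to $t$; (b) get each fragment's structure to the vertices that need it; (c) invoke Claim~\ref{claimFrag}.

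First I would invoke Lemma~\ref{nh_cut_info_lem} so that every non-highway $t$ learns $\cutInfo(t)$ in $O(k\sqrt n)$ rounds; by inspection of that proof, $\cutInfo(t)$ is known to every vertex of $F_t$ as a byproduct. Next, in parallel inside each fragment $F$, I would broadcast the LCA labels of the $O(\sqrt n)$ vertices of $F$, the cover value $\cov(t')$ of every tree edge $t'\in F$, and the highway/non-highway indicator; by Claim~\ref{claim_broadcast} this takes $O(\sqrt n)$ rounds. After this preparation, every non-highway $t\in F_t$ already has everything needed to apply Claim~\ref{claimFrag} to $F=F_t$, so $\fragInfo(t,F_t)$ and $\fragInfo_h(t,F_t)$ are obtained by a purely local computation.

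For a fragment $F\neq F_t$ I would split on whether any original-graph edge joins $F_t$ and $F$. In the adjacent case, some edge $(v,w)$ has $v\in F_t$ and $w\in F$; I would have $v$ and $w$ swap their fragments' $O(\sqrt n\log n)$-bit structural summaries over that single link in $O(\sqrt n)$ rounds, with all inter-fragment edges running in parallel. Since $v$ already holds $\cutInfo(t)$ for every non-highway $t\in F_t$ from the previous stage, $v$ then applies Claim~\ref{claimFrag} to produce $\fragInfo(t,F)$ and $\fragInfo_h(t,F)$ for all such $t$, and broadcasts the $O(\sqrt n)$ resulting values back through $F_t$ in another $O(\sqrt n)$ rounds. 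In the non-adjacent case I would exploit Claim~\ref{subtreeEdge}: an orthogonal min 2-respecting cut $\{t,t'\}$ with $t\in F_t$ non-highway and $t'\in F$ demands a graph edge between $T_t\subseteq F_t$ and $T_{t'}$, so non-adjacency of $F_t$ and $F$ forces $T_{t'}$ to leave $F$; this pins $t'$ onto the highway of $F$ and forces $P_t\cap F$ to coincide with the entire highway. Combined with Claim~\ref{globalCover} (external covering edges treat the highway as a unit), this reduces the per-fragment information needed for the non-adjacent case to a very compact highway summary, which I would pipeline through a global BFS tree in $\tilde O(D+\sqrt n)$ rounds; a symmetric argument using the same subtree restriction handles the ancestor-descendant $t,t'$ case not covered by Claim~\ref{subtreeEdge}.

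Summing the three stages gives $\tilde O(D+k\sqrt n)$ rounds as claimed. The main obstacle I anticipate is the non-adjacent case: I must argue carefully that the highway-only structure of the surviving cuts really does compress the per-fragment summary down to $\tilde O(1)$ (or at worst $\tilde O(k)$) bits, so that the global broadcast stays within $\tilde O(D+\sqrt n)$ rather than blowing up to $\tilde O(D+n)$ from a naive dump of each fragment's $O(\sqrt n\log n)$-bit structure. The secondary technical point is verifying the ancestor-descendant $t,t'$ case, where Claim~\ref{subtreeEdge} does not apply as stated and the restriction on $P_t\cap F$ has to be reproved using that $t$ is non-highway and $T_t\subseteq F_t$.
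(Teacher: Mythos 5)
Your architecture (local for $F_t$, pairwise edge-exchange for adjacent fragments, global broadcast of a compact summary for non-adjacent ones) matches the paper's skeleton, but there are two genuine gaps. The first is the one you flag yourself: the non-adjacent case is not actually closed, and it is the crux of the lemma. The paper's resolution is that when no internal edge joins $F_t$ and $F$, every edge covering a non-highway $t\in F_t$ that reaches $F$ covers all of $h_F$ (Claim~\ref{globalCover}), so $\cov(t,t')$ is a single value $\cov(t,F)$ for all $t'\in h_F$, computable by $t$ via an aggregate inside $F_t$; then $\cut(t,t')=k$ forces $\cov(t')=k-\cov(t)+2\cov(t,F)$, and a minimality argument shows this value must equal $c_F$, the \emph{minimum} cover value on $h_F$ (otherwise the highway edge attaining the minimum would yield a cut smaller than $k$). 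Hence $\fragInfo(t,F)\in\{0,n_F\}$ and the only globally broadcast data is the pair $(c_F,n_F)$ per fragment, i.e.\ $O(\sqrt n)$ words total. Without this argument your "very compact highway summary" is unjustified and the round bound does not follow. (Your worry about an ancestor--descendant case is moot: non-highway edges in distinct fragments are automatically orthogonal, so Claim~\ref{subtreeEdge} covers the contradiction argument.)

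The second gap is in your adjacent case: handling \emph{every} adjacent fragment by edge-exchange and then broadcasting the resulting $\fragInfo(t,F)$ values back through $F_t$ can require $\Omega(n)$ words inside one fragment, since a single $P_t$ may traverse $\Theta(\sqrt n)$ fragments and each of the $\Theta(\sqrt n)$ tree edges of $F_t$ may have nonzero $\fragInfo(t,F)$ for $\Theta(\sqrt n)$ fragments $F$. The paper avoids this by performing the edge-exchange only for the $O(k)$ \emph{special} fragments of each $t$ (those containing a segment endpoint $v_i$ as an internal vertex, plus $F_t$ and the extremal fragments of $P_t$), so only $O(k\sqrt n)$ values are broadcast per fragment; for all remaining fragments it proves $P_t\cap F$ is either empty or exactly $h_F$ sitting inside a single segment $S^t_i$, so $\fragInfo(t,F)$ is again $0$ or $n_F$ and locally computable from $(c_F,n_F)$ and $\cutInfo(t)$. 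Relatedly, you should restrict the exchange to \emph{internal} edges as the paper does: a boundary vertex ($r_F$ or $d_F$) can belong to many fragments, so an arbitrary inter-fragment edge may be the selected link for many pairs, destroying the claimed parallelism, whereas internal edges for distinct pairs are disjoint.
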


\begin{proof}
The general proof idea is as follows. For each non-highway tree edge $t$, we define $O(k)$ special fragments where the segments $S^t_i$ start or end. With these special fragments $F$ we would like to talk directly via an edge between $F_t$ and $F$ (if exists) and use it to compute $\fragInfo(t,F)$ and $\fragInfo_h(t,F)$. 
For non-special fragments or special fragments where there is no direct edge between $F_t$ and $F$, we show that it is enough to broadcast $O(\sqrt{n})$ information to the whole graph to compute the values $\fragInfo(t,F)$ and $\fragInfo_h(t,F)$. We next describe the algorithm in detail.  

\paragraph{Step 1: Broadcasting information.}
We broadcast the following information. First, in each fragment $F$, we let all vertices in $F$ learn $\{t,\cov(t),\cutInfo(t)\}$ for all tree edges $t \in F$, we also learn which ones of the tree edges are highways. As this is at most $O(k \log{n})$ bits of information per edge, we can broadcast the whole information inside each fragment in $O(k\sqrt{n})$ time.
Second, we let all the vertices in the graph learn the following information about each fragment. Let $c_F$ be the minimum cover value of an edge in the highway $h_F$ of $F$, and let $n_F$ be the number of edges $t$ in $h_F$ such that $\cov(t)=c_F$. The values $(c_F,n_F)$ can be computed locally in $F$ in $O(\sqrt{n})$ time by scanning the highway.
We let all vertices in the graph learn $(c_F,n_F)$ for all fragments $F$. As there are $O(\sqrt{n})$ fragments, broadcasting this information over a BFS tree takes $O(D+\sqrt{n})$ time from Claim \ref{claim_broadcast}. 

\paragraph{Step 2: Special fragments.}  
For a non-highway tree edge $t$, we define $O(k)$ special fragments as follows. First, $F_t$ is a special fragment, as well as the first and last fragments in the path $P_t$. Second, recall that each segment $S^t_i$ is a segment between two vertices $v_i$ and $v_{i+1}$. Let $F_i$ be the fragment of $v_i$. If $v_i$ is an internal vertex in the fragment (i.e., not the root or the descendant of the fragment), the fragment $F_i$ is special. Overall we have $O(k)$ special fragments, as we have $O(k)$ segments $S^t_i$. Note that $t$ knows the fragments $F_i$ as it knows the LCA labels of the vertices $v_i$, and the fragment $F_i$ is part of the label of $v_i$ (see Section \ref{sec:lca}). 

%Recall that when we defined the segments $S^t_i$ we found $k$ edges $e_1,...,e_k$ and corresponding vertices $v_1=v_h(e_1,P_t),...,v_k=v_h(e_k,P_t)$ such that $S^t_i$ is the segment between $v_i$ to $v_{i+1}$.
%For a non-highway tree edge $t$, we define $O(k)$ special fragments as follows. Recall that when we defined the segments $S^t_i$ we found $k$ edges $e_1,...,e_k$ and corresponding vertices $v_1=v_h(e_1,P_t),...,v_k=v_h(e_k,P_t)$ such that $S^t_i$ is the segment between $v_i$ to $v_{i+1}$. 

For each special fragment $F$, we would like to use an edge between $F_t$ to $F$ to compute the values $\fragInfo(t,F)$ and $\fragInfo_h(t,F)$. We first focus on special fragments $F$ where there is indeed an edge between $F_t$ and $F$, and later discuss the case there is no edge. For technical reasons, we focus on the existence of \emph{internal edges} between the fragments, where we say that an edge between two fragments is \emph{internal} if it connects two internal vertices in the fragments. A useful property of internal edges is that such edges that connect different pairs of fragments are disjoint, as any internal vertex only belongs to one fragment. Note that we can check in $O(\sqrt{n})$ time if there is an internal edge from the fragment $F_t$ to each one of the $O(\sqrt{n})$ fragments in the graph, and if so find such an edge, by aggregate computations in the fragment. Let $e=\{v,w\}$ be an internal edge connecting $v \in F_t$ and $w \in F$. Since $v \in F_t$ it knows $\cutInfo(t)$ as we broadcast this information to the whole fragment. Also, as $w \in F$ it knows the complete structure of $F$ and all the cover values of edges in the fragment, as we broadcast this information to the whole fragment. By Claim \ref{claimFrag}, using this information $v$ and $w$ can deduce $\fragInfo(t,F)$ and $\fragInfo_h(t,F)$. This only requires $v$ to send to $w$ the value $\cutInfo(t)$, and then $w$ can compute $\fragInfo(t,F)$ and $\fragInfo_h(t,F)$ and send it to $v$ that can later send it to $t$. 

For each special fragment $F$, we use an internal edge to $F_t$ if exists to compute $\fragInfo(t,F)$ and $\fragInfo_h(t,F)$. Then we send the information computed to $t$ by broadcast in the fragment. Note that each tree edge $t$ should learn at most $O(k)$ values, hence overall we can broadcast the whole information to all $O(\sqrt{n})$ tree edges $t$ in the fragment in $O(k \sqrt{n})$ time. We can work in parallel in different fragments, as internal edges connecting different fragments are disjoint, resulting in $O(k \sqrt{n})$ complexity for this part.

\paragraph{Step 3: Special fragments with no internal edge.} Here we show that if there are no internal edges between $F_t$ and $F$, then it is easy to compute $\fragInfo(t,F)$ and $\fragInfo_h(t,F)$. This is the only step in the proof where we use the fact that $t$ is a non-highway edge. First, if $t' \in F$ such that $\cut(t,t')=k$, then $t'$ is a highway edge. To see this, assume to the contrary that $t'$ is a non-highway edge, note that non-highway edges in different fragments are orthogonal to each other from the structure of the fragment decomposition. If $\{t,t'\}$ is a min 2-respecting cut, then from Claim \ref{subtreeEdge} there is an edge between the subtree below $t$ and the subtree below $t'$. However, these trees are composed of internal vertices of the fragment (as $t$ and $t'$ are non-highways) implying there is an internal edge between $F_t$ and $F$, a contradiction. 

Also, for any non-tree edge $e$ that covers $t$, $e$ either covers all the highway of $F$ or covers none of the edges in $F$. This holds as one of the endpoints of $e$ is an internal vertex in $F_t \neq F$ (as it is in the subtree below $t$) and the other is not an internal vertex in $F$, as there are no internal edges between the fragments. Hence, both the endpoints of $e$ are not internal vertices in $F$. Then, from Claim \ref{globalCover}, $e$ either covers exactly the highway of $F$ or does not cover any edge of $F$.
This implies that the value $\cov(t,t')$ for $t'$ in the highway of $F$ does not depend on the specific choice of $t'$, denote this value by by $\cov(t,F)$. This value can be computed by $t$ as follows. This is the cost of all edges that cover $t$ and $h_F$. We can run an aggregate computation in $F_t$ to sum the costs of these edges: at the end each vertex $u \in F_t$ learns the cost of edges that have an endpoint below $u$ in $F_t$ and cover $h_F$. For the vertex $u$ such that $t=\{u,p(u)\}$ this is exactly $\cov(t,F)$. Hence all non-highway edges $t \in F$ learn $\cov(t,F)$. Moreover, we can compute these values for all fragments $F$ where there are no internal edges between $F_t$ to $F$, as there are $O(\sqrt{n})$ fragments, the whole computation takes $O(\sqrt{n})$ time using pipelining. Now $\cut(t,t')=k$ for $t' \in h_F$ iff $\cov(t)+\cov(t')-2\cov(t,t')=k$. As $\cov(t,t')=\cov(t,F)$, this holds iff $\cov(t')=k-\cov(t)+2\cov(t,F)$. Moreover, if $\{t,t'\}$ is min 2-respecting cut, then $\cov(t')=c_F$ must be the minimum cover value of an edge in $h_F$, as otherwise, for an edge with $\cov(t'')=c_F$, we get that $\cut(t,t'')=\cov(t)+c_F-2\cov(t,F) < \cov(t) + \cov(t') - 2\cov(t,F) = k$, contradiction to the fact $k$ is the minimum cut value. To conclude, to compute $\fragInfo(t,F)$ that is equal to $\fragInfo_h(t,F)$ in this case, $t$ should learn how many edges in $h_F$ have $\cov(t')=k-\cov(t)+2\cov(t,F)=c_F$. If $k-\cov(t)+2\cov(t,F) \neq c_F$, there are no such edges. Otherwise, the number of such edges is exactly $n_F$, the number of edges in $h_F$ with $\cov(t')=c_F$. As all vertices know $c_F$ and $n_F$ for all fragments, then $t$ can compute this locally using $\cov(t)$ and $\cov(t,F)$.

\paragraph{Step 4: Non-special fragments.} 
We first show that if $F$ is a non-special fragment for $t$, then the only edges $t' \in F$ that may be part of a min 2-respecting cut $\{t,t'\}$ are highway edges. Additionally, we show that the highway of $F$ is either completely contained in one of the segments $S^t_i$ or its intersection with all segments $S^t_i$ is empty (in which case $\fragInfo(t,F)$=0). For this, first look at the tree path $P_t$, this path contains all the edges $t'$ such that $\cut(t,t')=k$ by definition. From the structure of the skeleton tree, any tree path in the graph between two fragments $F_1$ and $F_2$ can only contain complete highways of other fragments. Hence, except of the first and last fragments of $P_t$ that are special, the intersection of $P_t$ with any other fragment $F$ is either $h_F$ or empty. Moreover, from the structure of the skeleton tree and from the first and last fragments of $P_t$, $t$ can deduce all the fragments that intersect the path $P_t$. Let $F$ be a non-special fragment such that $F \cap P = h_F$. As $F$ is non-special, we know that all vertices $v_i$ are not internal vertices of $F$. If we look at the path $P_t$, as it contains $h_F$, we can find an index $i$ such that $v_i$ is in $P_t$ before $h_F$, and $v_{i+1}$ is in $P_t$ after $h_F$. Or in other words, there is a unique segment $S^t_i$ such that $h_F \subseteq S^t_i$. Moreover, $t$ can compute this segment as it knows the fragments of all vertices $v_i$. Now, from Lemma \ref{min_cuts_lemma}, we have that for $t' \in S^t_i$, $\cut(t,t')=k$ iff $\cov(t')=c^t_i$. Moreover, for any edge in $S^t_i$, we have $\cov(t') \geq c^t_i$. For this, first note that $\cov(t,t')$ is fixed for all edges in $S^t_i$ as explained in the proof of Lemma \ref{min_cuts_lemma}. Hence if we have an edge $t'' \in S^t_i$ where $\cov(t'') < c^t_i$, we get that $\cut(t,t'')=\cov(t) + \cov(t'') - 2\cov(t,t'') < \cov(t) + c^t_i - 2\cov(t,t') = \cut(t,t')= k$, where $t'$ is an edge in $S^t_i$ with $\cut(t,t')=k$, contradicting the fact that $k$ is the min cut value. To conclude, we need to check how many edges $t' \in h_F$ have $\cov(t')=c^t_i$. If the minimum cover value $c_F$ of an edge in $h_F$ is greater than $c^t_i$, there are no such edges. Otherwise $c_F=c^t_i$ (it cannot be smaller from the discussion above), and we have that $\fragInfo(t,F)=n_F$. As $t$ knows the values $c_F,n_F$ for all fragments and the values $c^t_i$, it can deduce $\fragInfo(t,F)$ for all non-special fragments. As explained, all edges $t'$ in a min 2-respecting cut with $t$ in this case are highway edges, hence $\fragInfo_h(t,F)=\fragInfo(t,F)$ in this case.
\end{proof}

\paragraph{Dealing with the case that $P_t$ is not initially known.}

During the proof we assumed for simplicity that the path $P_t$ is known to the edge $t$. We now explain how to extend the algorithm to the case it is not known. In \cite{DBLP:journals/corr/abs-2004-09129}, the authors show how each tree edge learns a set of $\poly(\log{n})$ paths that includes the path $P_t$ (the path $P_t$ is called the \emph{interesting path} for $t$, where the paths found are called the \emph{potentially interesting} paths). The time complexity of the algorithm in \cite{DBLP:journals/corr/abs-2004-09129} is $\tilde{O}(D+\sqrt{n})$. We can extend our algorithm to work with this information, with a poly-logarithmic overhead in the complexity. Here, we just repeat the algorithm poly-logarithmic number of times with all the possible choices for the path $P_t$. Note that if we run the algorithm for $t$ with a path $P \neq P_t$, it follows from the proof that if we learn about a tuple $(S^t_i,c^t_i)$, all the edges $t' \in S^t_i$ where $\cov(t')=c^t_i$ are such that $\cut(t,t')=k$. The only difference is that if we work with a path $P \neq P_t$ we are not guaranteed that the tuples computed contain all the tree edges $t'$ where $\cut(t,t')=k$, where for the right choice for $P_t$ we are guaranteed to find all cuts. 
Since in the algorithm we compute $\fragInfo(t,F)$ for all fragments $F$, each tree edge $t$ can compute the number of min 2-respecting cuts $\{t,t'\}$ found for any choice of path $P$. Then it chooses $P_t$ as the path where this number is maximal. This path $P_t$ necessarily contains all edges $t'$ where $\cut(t,t')=k$, because we know that all the cuts found are indeed min cuts, and we know that one of the paths $t$ tries is the right path $P_t$. Trying poly-logarithmic number of options adds poly-logarithmic overhead to the complexity. Note that from the discussion above, after trying all options the tree $t$ learns about the path $P_t$, giving the following. 

\begin{lemma} \label{cut_info_pt}
All the non-highway tree edges $t$ can learn $\cutInfo(t)$, $\fragInfo(t,F)$ and $\fragInfo_h(t,F)$ for all $F$, and the identity of the path $P_t$ in $\tilde{O}(D+k\sqrt{n})$ time.
\end{lemma}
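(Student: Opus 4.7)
The plan is to invoke the algorithm of \cite{DBLP:journals/corr/abs-2004-09129} which, for every tree edge $t$, produces a set $\mathcal{P}_t$ of $\poly(\log n)$ candidate ``potentially interesting'' paths containing the true interesting path $P_t$, in $\tilde{O}(D+\sqrt{n})$ rounds. With this set in hand, I would run the procedures of \Cref{nh_cut_info_lem} and \Cref{lem_nh_frag_info} once per candidate path $P \in \mathcal{P}_t$, treating $P$ in place of $P_t$. Each single run costs $\tilde{O}(D+k\sqrt{n})$ rounds, and since $|\mathcal{P}_t| = \poly\log n$ and the runs can be scheduled in parallel with only a poly-logarithmic overhead in the broadcast/pipelining steps (each broadcast inside a fragment carries a $\poly\log n$ factor more information, and the BFS-tree broadcasts of $(c_F,n_F)$ values can be reused across runs), the total running time remains $\tilde{O}(D+k\sqrt{n})$.

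Next I would verify the one-sided correctness needed for the selection step: inspecting the proof of \Cref{min_cuts_lemma} (and its use in \Cref{nh_cut_info_lem}), the identification ``$t' \in S^t_i$ with $\cov(t') = c^t_i$ implies $\cut(t,t') = k$'' goes through for \emph{any} path $P$ containing $t$, because the argument only uses the sort order induced by $v_l(e,P)$ and $v_h(e,P)$ together with \Cref{covClaim}. Therefore every cut discovered in the run associated to an arbitrary candidate $P$ is a genuine minimum 2-respecting cut. In the opposite direction, completeness — that \emph{all} min 2-respecting cuts through $t$ are captured — requires \Cref{unique_path_lemma}, which forces the capturing path to equal $P_t$.

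Putting these two directions together gives a clean selection rule: for each candidate $P \in \mathcal{P}_t$, the edge $t$ computes the total number of cuts found in that run, namely $N(P) := \sum_F \fragInfo(t,F)$ obtained from the $P$-run. Runs associated to wrong candidates undercount (they may miss some cuts on $P_t$ they do not see) while the run on $P_t$ captures every cut, so $N(P)$ is maximized precisely at $P = P_t$. The tree edge $t$ then selects the maximizer, learns the identity of $P_t$, and outputs the $\cutInfo(t)$, $\fragInfo(t,F)$, $\fragInfo_h(t,F)$ values produced in that winning run.

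The main obstacle is simply to confirm that running the pipelines $\poly\log n$ times in parallel does not exceed $\tilde{O}(D + k\sqrt{n})$: the intra-fragment broadcast of $\{t, \cov(t), \cutInfo(t)\}$ expands to $\tilde{O}(k)$ bits per edge (instead of $O(k\log n)$), keeping the total at $\tilde{O}(k\sqrt{n})$; the BFS-tree broadcast of $(c_F,n_F)$ is candidate-independent and is not repeated; and the per-fragment aggregations used to compute $\cov(t,F)$ and to identify internal edges to special fragments pipeline with the same asymptotic cost. No new communication step is introduced, so the bound of \Cref{lem_nh_frag_info} carries through up to poly-logarithmic factors already absorbed in the $\tilde{O}(\cdot)$ notation.
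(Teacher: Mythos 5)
Your proposal is correct and follows essentially the same route as the paper: obtain the $\poly(\log n)$ candidate paths from \cite{DBLP:journals/corr/abs-2004-09129}, rerun the algorithms of \Cref{nh_cut_info_lem} and \Cref{lem_nh_frag_info} per candidate with polylogarithmic overhead, use the one-sided soundness of the tuples to argue every reported cut is genuine, and select the candidate maximizing $\sum_F \fragInfo(t,F)$. The only cosmetic difference is that the paper claims merely that any maximizer captures all cuts (ties are harmless), rather than that the maximum is attained precisely at $P_t$.
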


\subsection{The Algorithm for Highways}

We next explain how highway edges learn the cut information. 
As explained above, in \cite{DBLP:journals/corr/abs-2004-09129} it is shown that each edge $t$ learns a set of $O(\log^2{n})$ paths that contains the path $P_t$. 

\begin{lemma}[\cite{DBLP:journals/corr/abs-2004-09129}] \label{lem:highway_internal}
In $\tilde{O}(D+\sqrt{n})$ time, each tree edge $t$ learns about a set of paths of size $O(\log^2{n})$ that contains the path $P_t$.
\end{lemma}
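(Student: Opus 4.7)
Given the attribution to \cite{DBLP:journals/corr/abs-2004-09129}, my plan mirrors the general strategy used there for distributed 2-respecting cut computation, organized around the structural tools already set up in the present paper.

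The plan is to build the candidate set via a hierarchical decomposition of $T$, namely a heavy path decomposition, which partitions the edges of $T$ into $O(\log n)$ disjoint heavy-path families such that any tree path crosses only $O(\log n)$ heavy paths. By Lemma~\ref{unique_path_lemma}, the edges $\{t' : \cut(t,t')=k\}$ lie on a single tree path $P_t$ through $t$. Consequently $P_t$ is determined, at each level of the decomposition, by (at most) one branching point: the vertex at which $P_t$ leaves the current heavy path and enters a subtree on the other side. If each such branching point can be identified using aggregate cover statistics, producing one candidate per level yields $O(\log n)$ candidates. A further $O(\log n)$ factor arises from a binary-search style refinement inside each heavy path — essentially a second recursive halving to pinpoint the right branch vertex — giving $O(\log^2 n)$ candidates in total.

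For the distributed implementation, at each level I would pipeline aggregations of the cover values $\cov(\cdot)$ and the relevant pairwise quantities $\cov(\cdot,\cdot)$ along the heavy paths, using the fragment decomposition of Section~\ref{sec:frag} together with the subroutines in Claim~\ref{claimLearnCov} and Claim~\ref{aggregate_Pe}. Each fragment of size $O(\sqrt{n})$ performs its internal aggregations locally, and the $O(\sqrt{n})$ inter-fragment quantities are pipelined over a global BFS tree in $\tilde{O}(D+\sqrt{n})$ rounds. Processing all $O(\log n)$ decomposition levels (sequentially, with polylog overhead, or in parallel over edge-disjoint heavy paths) keeps the total running time within $\tilde{O}(D+\sqrt{n})$.

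The main obstacle is the correctness argument: I must show that the candidate path selected at each level, read off from local aggregates, really does agree with $P_t$ along that level whenever $P_t$ interacts with it. This reduces to an algebraic argument driven by Claim~\ref{covClaim}, which says $\cut(t,t')=\cov(t)+\cov(t')-2\cov(t,t')$. Because $k$ is the minimum cut size, the edges $t'$ with $\cut(t,t')=k$ are exactly the minimizers of $\cov(t')-2\cov(t,t')$, so once a heavy path $H$ is fixed, the relevant branching vertex along $H$ is detectable from monotonic aggregate sweeps of these quantities. A secondary difficulty is ensuring that the $O(\log^2 n)$ candidate paths are actually deliverable to $t$ within the time budget; this is handled by routing candidate announcements through $t$'s fragment and pipelining, so that each edge receives at most $\tilde{O}(1)$ messages per level.
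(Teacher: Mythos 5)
First, note that the paper does not prove this lemma at all: it is imported verbatim from \cite{DBLP:journals/corr/abs-2004-09129} (the distributed min-cut paper), so there is no in-paper proof to match your proposal against. Judged on its own terms, your sketch assembles several of the right ingredients (the single-path structure from \Cref{unique_path_lemma}, a logarithmic hierarchical decomposition of $T$, and the expectation that the $O(\log^2 n)$ bound comes from two stacked logarithmic factors), but it has a genuine gap at its core.

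The gap is in the candidate-identification step. You propose to locate the branching vertex of $P_t$ on each heavy path by ``monotonic aggregate sweeps'' of the quantity $\cov(t')-2\cov(t,t')$. But $\cov(t,t')$ is a \emph{pairwise} quantity over $\Theta(n^2)$ pairs of tree edges, and computing (or even sweeping) it for all pairs is exactly the expensive operation that the whole ``potentially interesting paths'' machinery is designed to avoid; you cannot presuppose access to it when the goal is to produce the candidate set in $\tilde{O}(D+\sqrt n)$ rounds. The argument in \cite{DBLP:journals/corr/abs-2004-09129} instead goes through the half-covering property (\Cref{claimInteresting} in this paper: if $\{t,t'\}$ is a min 2-respecting cut then $\cov(t,t')\geq\cov(t)/2$), which lets one call a path ``interesting'' for $t$ using only a majority condition on the edges covering $t$, and then a separate \emph{interesting-path counting lemma} (Lemma 5.13 there, also invoked for \Cref{lem:highway_bound}) shows that pairwise-orthogonal interesting paths avoiding a fixed path number only $O(\log n)$; combined with an $O(\log n)$-depth layering of the tree this yields the $O(\log^2 n)$ bound and, crucially, bounds the total congestion because many tree edges share the same candidate paths. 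Your sketch invokes neither the half-covering characterization nor any counting argument, and you yourself flag correctness and deliverability as unresolved obstacles, so as written the proposal does not constitute a proof.
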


For the highway case, we need a stronger guarantee that also follows from \cite{DBLP:journals/corr/abs-2004-09129}. 

\begin{lemma}[\cite{DBLP:journals/corr/abs-2004-09129}] \label{lem:highway_bound}
Let $F$ be a fragment with highway $h_F$. The edges of each highway can learn a set of paths $Q_h$ of size $O(\log^2 {n})$ that contains all the paths $\{P_t^h | t \in h_F, v_h \not \in F \}$ and a set of paths $Q_l$ of size $O(\log^2 {n})$ that contains all the paths  $\{P_t^l | t \in h_F, v_l \not \in F \}$. The time complexity is $\tilde{O}(D+\sqrt{n})$.
\end{lemma}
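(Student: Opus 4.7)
The plan is to strengthen the per-edge construction of Lemma \ref{lem:highway_internal} by exploiting the fragment geometry so that its poly-logarithmic candidate pool can be \emph{shared} across an entire highway rather than produced separately for each edge. The key structural observation is: if $t \in h_F$ and $v_h(P_t) \not\in F$, then the portion of $P_t^h$ lying outside $F$ must exit $F$ through the fragment root $r_F$, since $r_F$ is the unique vertex through which the highway connects to the tree above (all internal highway vertices have their non-highway neighbors inside $F$ by definition of the fragment decomposition). Symmetrically, if $v_l(P_t) \not\in F$, the outside portion of $P_t^l$ exits $F$ through the unique descendant $d_F$. Thus the family $\{P_t^h : t \in h_F, v_h \not\in F\}$ factors as (i) the sub-highway from $t$ up to $r_F$, plus (ii) one ``external continuation'' hanging off $r_F$; the analogous factorization holds for $P_t^l$ with $d_F$.

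Given this factorization I would proceed in three stages. First, run the global algorithm of \cite{DBLP:journals/corr/abs-2004-09129} underlying Lemma \ref{lem:highway_internal} in $\tilde O(D+\sqrt n)$ rounds, and arrange that $r_F$ (resp.\ $d_F$) ends up holding the $O(\log^2 n)$-size pool of candidate upward (resp.\ downward) extensions used by that algorithm at the boundary of $F$. Second, inside each fragment in parallel, broadcast these two poly-log lists along the $O(\sqrt n)$ vertices of $h_F$; since the lists have polylog total bit-size and distinct fragments use edge-disjoint subgraphs, this costs $\tilde O(\sqrt n)$ rounds overall. Third, every highway edge $t$ assembles $Q_h$ locally by concatenating, to each broadcast external continuation $C$ from $r_F$, the sub-highway from $t$ up to $r_F$ (which $t$ already knows, as highway topology is local), and assembles $Q_l$ analogously using $d_F$. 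Each of $Q_h, Q_l$ then has size $O(\log^2 n)$ and contains the corresponding $P_t^h$ or $P_t^l$ whenever the required endpoint lies outside $F$.

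The main obstacle is verifying the structural invariant I am using as a black box: that the poly-log candidate pool produced by the recursion of \cite{DBLP:journals/corr/abs-2004-09129} at $r_F$ (resp.\ $d_F$) contains the outside portion of $P_t^h$ (resp.\ $P_t^l$) for \emph{every} $t \in h_F$ simultaneously, not merely for a single client edge incident to the boundary. Concretely, one has to reopen the recursion in \cite{DBLP:journals/corr/abs-2004-09129} and check that its candidate set at a vertex $v$ depends only on $v$ together with the direction of continuation, and not on the identity of the tree edge whose interesting path is being sought. If the recursion is anchored at vertices in this way the claim follows immediately; otherwise a minor reparameterization decouples the recursion from the client edge without affecting either the $O(\log^2 n)$ pool size or the $\tilde O(D+\sqrt n)$ time bound, and the lemma follows by the same analysis as Lemma \ref{lem:highway_internal}.
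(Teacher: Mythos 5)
The paper does not actually prove this lemma: it is imported wholesale from \cite{DBLP:journals/corr/abs-2004-09129}, with the one-line justification that it ``follows as a corollary of Lemma 5.13 [the interesting-path counting lemma] in that paper, as is also explained there.'' So the relevant comparison is between your sketch and that counting argument. Your structural observations are correct and useful: since $r_F$ and $d_F$ are the only vertices of $F$ with tree edges to other fragments, any $P_t^h$ with $v_h\not\in F$ must consist of the sub-highway from $t$ up to $r_F$ followed by an external continuation hanging off $r_F$, and symmetrically for $P_t^l$ and $d_F$. This correctly reduces the lemma to bounding, by $O(\log^2 n)$, the number of distinct external continuations needed at $r_F$ (resp.\ $d_F$) to cover all $t\in h_F$ simultaneously; the in-fragment broadcast and local concatenation you describe then give the stated round complexity.

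The genuine gap is that this bound on the number of external continuations \emph{is} the entire content of the lemma, and you defer it to an unverified ``structural invariant'' of the recursion in \cite{DBLP:journals/corr/abs-2004-09129}, with a fallback (``a minor reparameterization decouples the recursion from the client edge'') that cannot simply be asserted. No purely algorithmic or geometric argument can yield $O(\log^2 n)$ here: for an arbitrary family of tree paths exiting through $r_F$, the number of distinct continuations can be $\Omega(n)$ (one per edge of $h_F$, each turning into a different branch above $r_F$). The bound holds only because the paths in question are \emph{interesting} paths for min 2-respecting cuts: by \Cref{claimInteresting}, if $\{t,t'\}$ is a min $2$-respecting cut then at least half of the edges covering $t$ also cover $t'$, and the counting lemma of \cite{DBLP:journals/corr/abs-2004-09129} uses this to show that any collection of pairwise-orthogonal external interesting paths for $h_F$ has size $O(\log n)$; a layering decomposition of the (generally non-orthogonal) continuations into $O(\log n)$ orthogonal layers then gives the $O(\log^2 n)$ bound. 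Your proposal never invokes any property of minimum cuts, so this step is missing rather than merely unelaborated. To close the gap you would need to either carry out that counting-plus-layering argument or cite the specific lemma of \cite{DBLP:journals/corr/abs-2004-09129} that performs it, rather than a generic property of its recursion.
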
 

This lemma follows as a corollary of Lemma 5.13 in \cite{DBLP:journals/corr/abs-2004-09129} as is also explained in \cite{DBLP:journals/corr/abs-2004-09129}.
Lemma \ref{lem:highway_bound} would allow us to deal efficiently with paths $P_t$ that have endpoints outside $F_t$, dealing with paths that are contained in $F_t$ will be similar to the non-highway case.

In our algorithm, initially tree edges do not know $P_t$, but just a poly-logarithmic number of options for the path $P_t$, that include the correct path $P_t$. We denote by $\cutInfo(t,P)$ the value $\cutInfo(t)$ when $t$ assumes that the path $P_t=P$. For the correct option $\cutInfo(t,P_t)=\cutInfo(t)$. Similarly we define $\fragInfo(t,F,P)$ to be $\fragInfo(t,F)$ where $t$ assumes that $P_t=P$. During our algorithm each tree edge computes the values $\cutInfo(t,P)$, $\fragInfo(t,F,P)$ and $\fragInfo_h(t,F,P)$ for $O(\log^2{n})$ options for the path $P$ that contain the correct choice $P_t$. All the values computed indeed describe min cuts containing $t$, but if $P \neq P_t$, we may not find all cuts. Eventually, $t$ can identify the correct path $P_t$ as it chooses the path where $\sum_{F} \fragInfo(t,F,P)$ is maximal.

%Again we assume for simplicity that $t$ knows the path $P_t$, and later explain how to overcome this assumption.
%We use the following lemma that follows from \cite{DBLP:journals/corr/abs-2004-09129}.
%Recall that for an edge $t$, all the edges $\{t,t'\}$ in a min 2-respecting cut with $t$ are on one path $P_t$, and we denoted by $P_t^h,P_t^l$ the parts of the path that are above and orthogonal to $t$ and below $t$, respectively, such that $v_h,v_l$ are the endpoints $P_t$ in these paths. Note that there is some flexibility in choosing the paths $P_t$, if a certain path $P$ contains all the edges $t'$ such that $\cut(t,t')=k$, then also any path that contains $P$ has this property, and we can choose $P_t$ as any one of them. We show the following. 

%\mtodo{look only at the part outside the fragment}

%\mtodo{move to appendix}

\remove{
\begin{lemma}
Let $F$ be a fragment with highway $h_F$. For an edge $t \in h_F$, denote by $v_h(t),v_l(t)$ the endpoints of $P_t$ that are above or orthogonal to $t$ and below $t$, respectively.
We can choose the paths $P_t$ such that the sets $\{v_h(t) | t \in h_F,  v_h(t) \not \in F \}$, $\{v_l(t) | t \in h_F,  v_l(t) \not \in F \}$ have size $O(\log^2{n})$.
\end{lemma}

\begin{proof}
We start with an arbitrary choice of the paths $P_t$, and we later refine it to meet the statement of the lemma.
For a tree edge $t$, we call the path $P_t$ and any other tree path $P'$ that contains $P_t$ an interesting path for $t$, and say that $t$ is interested in $P'$. We say that $P'$ is an interesting path for $h_F$ if there is a tree edge $t \in h_F$ that is interested in $P'$.
Lemma 5.13 in \cite{DBLP:journals/corr/abs-2004-09129} (Interesting path counting lemma) says that if we take a path $P$ (in our case, $h_F$) and a set of interesting paths $\cal{P}$ for $h_F$ such that these paths do not intersect $P=h_F$ and these paths are pairwise orthogonal then $|\cal{P}|=O(\log{n})$. 

To prove the lemma, we work as follows. First, we look at the set of paths $\{P_t^l | t \in h_F,  v_l(t) \not \in F \}$, for each such path we define the path $ext(P_t^l)$ that contains only the part of $P_t^l$ that is outside $h_F$, the paths $ext(P_t^l)$ do not intersect $h_F$ by their definition. Next we want to make sure that they are orthogonal to each other. For this, first we remove redundant paths from the set $\{ext(P_t^l) | t \in h_F,  v_l(t) \not \in F \}$, as follows. If $v_l(t_1),v_l(t_2)$ are two endpoints of paths where $v_l(t_1)$ is ancestor of $v_l(t_2)$, then by definition $ext(P_{t_1}^l) \subseteq ext(P_{t_2}^l)$, and hence we can remove $ext(P_{t_1}^l)$ from the set. As $ext(P_{t_1}^l) \subseteq ext(P_{t_2}^l)$, then the path $ext(P_{t_2})$ is also an interesting path for $t_1$, and we can replace the original path $P_{t_1}$ by a path the ends at $v_l(t_2)$ instead of $v_l(t_1)$. After this process, we get a smaller set of paths where all the endpoints $v_l(t)$ of the paths are orthogonal to each other. This still does not guarantee that the paths $\{ext(P_t^l) | t \in h_F,  v_l(t) \not \in F \}$ are orthogonal to each other, for example, as they are all paths below $h_F$, the higher parts of them may be equal. To make sure we deal with orthogonal paths, we break each path $ext(P_t^l)$ to $O(\log{n})$ parts following a layering described also in \cite{DBLP:journals/corr/abs-2004-09129}. To do so, we first look at the subtree below $d_F$ (the lowest vertex in $h_F$), and look at its intersection with $\{ext(P_t^l) | t \in h_F,  v_l(t) \not \in F \}$. This defines a subtree, we call it $T'$. We break this path into $O(\log{n})$ layers, as follows. The first layer has all the paths from the leaves of $T'$ (the vertices $v_l(t)$) and their first ancestor that has more than one child
\end{proof}
}

For the tree edge $t$ we denote by $Q_t$ the set of $O(\log^2{n})$ potential candidates for $P_t$ that $t$ learns about in Lemmas \ref{lem:highway_internal} and \ref{lem:highway_bound}. 

\begin{lemma} \label{lem:cut_info_highway}
In $\tilde{O}(D+k\sqrt{n})$ time, each highway tree edge $t$ can learn $\cutInfo(t,P)$ for each $P \in Q_t$. 
\end{lemma}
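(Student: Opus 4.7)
The plan is to adapt the non-highway algorithm from Lemma \ref{nh_cut_info_lem} to the highway case, where the main new difficulty is that the subtree rooted at $u$ (for $t = \{u, p(u)\} \in h_F$) can span multiple fragments, so aggregation cannot be confined to $F_t$. For each highway edge $t$ and each candidate path $P \in Q_t$, the task still reduces, by the construction in the proof of Lemma \ref{min_cuts_lemma}, to identifying the $k$ covering edges $e$ whose virtual endpoints $v_h(e,P)$ (resp.\ $v_l(e,P)$) are closest to $t$ along $P^h$ (resp.\ $P^l$); $\cutInfo(t,P)$ is then produced locally at $t$ from these $O(k)$ edges together with the already-known value $\cov(t)$.

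First I would have every vertex in $F$ learn the LCA labels of each highway edge $t \in h_F$ together with the endpoints of every path in $Q_t$. Since $|h_F| = O(\sqrt{n})$ and $|Q_t| = O(\log^2 n)$, this is $\tilde O(\sqrt{n})$ information and can be broadcast inside $F$ in $\tilde O(\sqrt{n})$ rounds. Using Claims \ref{claimCheckCovLCA}, \ref{claimLCAcomputeEndpoints}, and \ref{claimCompare}, any vertex incident to a potentially covering edge $e$ can then locally decide whether $e$ covers $t$ and, if so, compute and compare the virtual endpoints needed for the sort.

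Next I would split each $P \in Q_t$ into its two halves, $P^l$ (below $t$) and $P^h$ (above or orthogonal to $t$), and treat them separately. For each half whose relevant endpoint lies \emph{inside} $F$ (call this Type~A), the covering edges yielding the $k$ smallest virtual endpoints can be harvested by a pipelined convergecast inside $F_t$, exactly as in the proof of Lemma \ref{nh_cut_info_lem}; pipelining over all $O(\sqrt{n})$ highway edges of $F$, all $O(\log^2 n)$ paths in $Q_t$, and all $O(k)$ slots yields $\tilde O(k\sqrt{n})$ rounds. For each half whose endpoint lies \emph{outside} $F$ (Type~B), Lemma \ref{lem:highway_bound} guarantees that the set of such half-paths arising across the entire highway $h_F$ belongs to a common collection $Q_h$ or $Q_l$ of size only $O(\log^2 n)$. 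Broadcasting these shared half-paths globally costs $\tilde O(D+\sqrt{n})$; after that, for each of the $O(\log^2 n)$ shared half-paths and each highway edge $t$, I run a global pipelined convergecast over the BFS tree (using the skeleton tree to locate contributors) delivering to $t$ its top-$k$ covering edges. Summed over the $\tilde O(\sqrt{n})$ highway edges of the whole graph, the total number of messages to pipeline is $\tilde O(k\sqrt{n})$, fitting in $\tilde O(D + k\sqrt{n})$ rounds.

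The main obstacle is the Type~B case: a covering edge $e$ of a highway edge $t$ can have endpoints arbitrarily far from $F_t$, so a naive scheme would force each highway edge to gather information from the entire graph, and the total traffic would explode. The decisive saving is Lemma \ref{lem:highway_bound}, which compresses all ``external'' half-paths relevant to a single highway into $O(\log^2 n)$ \emph{shared} half-paths, so the global communication can be amortized across all $|h_F|$ highway edges of $F$ simultaneously rather than paid per edge. Once the top-$k$ covering edges have been delivered to $t$ for each $P \in Q_t$, converting them into the $O(k)$ tuples $\{(S^t_i, c^t_i)\}$ constituting $\cutInfo(t,P)$ is an immediate local computation, following the definitions in the proof of Lemma \ref{min_cuts_lemma}.
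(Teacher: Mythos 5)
Your overall architecture matches the paper's (intra-fragment convergecast for edges with an internal endpoint in $F_t$, plus Lemma \ref{lem:highway_bound} to bound the number of ``external'' half-paths and amortize global communication), but there is a genuine gap in how you handle covering edges with \emph{no} internal endpoint in $F_t$ --- the paper calls these \emph{global} edges --- and this gap breaks both your Type~A case and your Type~B time bound. You correctly note at the outset that for a highway edge $t$ the subtree below $t$ spans many fragments, so a covering edge of $t$ need not touch $F_t$ at all; but your Type~A procedure then harvests the top-$k$ covering edges by a convergecast inside $F_t$ ``exactly as in Lemma \ref{nh_cut_info_lem}.'' That convergecast only sees edges with an internal endpoint in $F_t$, so it misses every global covering edge, and the resulting segments $S^t_i$ would be wrong. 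The missing structural ingredient is Claim \ref{globalCover}: a global edge covers either the entire highway $h_{F_t}$ or nothing, hence every global edge covering $t$ covers exactly the segment of $P^h$ from $p(u)$ to $LCA(p(u),v_h)$ and contributes the \emph{same} virtual endpoint $v'=LCA(p(u),v_h)$. So in the case $v_h\in F_t$ one only needs the \emph{count} of global covering edges (one BFS aggregation per fragment, $O(D+\sqrt n)$ total), inserted $global_t$ times at position $v'$ into the sorted list of local virtual endpoints.

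The same omission undermines your Type~B accounting. There are $O(\sqrt n)$ highways but up to $\Theta(n)$ highway \emph{edges} in total (each of the $O(\sqrt n)$ fragments has a highway of length up to $O(\sqrt n)$), so running a separate global convergecast that delivers a $t$-specific top-$k$ list to each highway edge costs $\Theta(kn)$ messages over the BFS tree --- linear time, not $\tilde O(D+k\sqrt n)$. The saving is again Claim \ref{globalCover}: for a fixed external half-path $P^h$, the set of global covering edges, their virtual endpoints $v_h(e,P^h)$, and hence the top-$k$ global list are identical for all $t\in h_F$; so the global work is one $O(D+k)$ computation per (fragment, shared half-path) pair, i.e.\ $\tilde O(k\sqrt n)$ messages in total, pipelined in $\tilde O(D+k\sqrt n)$ rounds, with each edge $t$ then merging this shared global list with its own local list. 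Without this per-path (rather than per-edge) amortization of the global contribution, the lemma's round bound is not met.
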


\begin{proof}
Fix a highway tree edge $t=\{u,p(u)\}$ in a fragment $F_t$. 
We focus on learning $\cutInfo_h(t,P)$, learning $\cutInfo_l(t,P)$ can be done in a similar manner. We divide into cases depending if the endpoint $v_h$ of $P$ is inside $F_t$ or not. 
 
\paragraph{Case 1: $v_h \in F_t$.}
Here for each tree edge $t$ we fix one path $P \in Q_t$ with $v_h \in F_t$ (if exists). Repeating the algorithm $O(\log^2{n})$ times would cover all paths $P$ in this case, and would add $O(\log^2{n})$ term to the complexity.
%In case $v_h \in F$, we work as follows. 
As explained before, our goal is to learn about $k$ edges $e_1,e_2,...,e_k$ and corresponding endpoints $v_1,...,v_k$ that define the segments $S^t_i$. These edges may have an internal endpoint inside $F_t$ or not. We first learn about $k$ relevant edges $e'_1,...,e'_k$ that have at least one internal endpoint in $F_t$. Here our goal is to learn about $k$ edges that cover $t$ and their endpoints $v_h(e,P)$ are the closest to $t$. This can be done exactly as in the proof of the non-highway case (Lemma \ref{nh_cut_info_lem}), and takes $O(k \sqrt{n})$ time for all tree edges $t$. We next consider edges that both their endpoints are not internal vertices in $F_t$, we call them \emph{global} edges. Here we use Claim \ref{globalCover} that states that every edge that both its endpoints are not internal vertices in $F_t$ either covers the complete highway of $F_t$ or does not cover any edge of $F_t$. This means that for any global edge that covers $t$, the part of $P^h$ it covers is exactly $P^h \cap h_F$, where $h_F$ is the highway of $F_t$, this path is exactly the segment from $p(u)$ to $LCA(p(u),v_h)$ (this contains exactly all highway edges above $t=\{u,p(u)\}$ in $P_t^h$). Let $v'=LCA(p(u),v_h)$. To determine the segments $S^t_i$, we want $t$ to learn how many global edges cover $t$, denote this number by $global_t$. This information together with the edges $e'_1,...,e'_k$ computed before allows to determine the vertices $v_1,...,v_k$ that define the segments $S^t_i$. To do so, we sort the vertices $v_h(e'_1,P),...,v_h(e'_k,P),v'$ according to their closeness to $t$, where the vertex $v'$ appears $global_t$ times (once per each relevant edge that covers the segment between $t$ and $v'$). Finally we take the $k$ closest endpoints according to this order, and they define the segments $S^t_i$. Note that the only information we needed to learn about global edges is the number of edges that cover the whole highway $h_F$, this information is not specific for $t$. To learn this information we can use global communication over a BFS tree to sum the number of such edges. Doing this for all fragments $F$ takes $O(D+\sqrt{n})$ time. Hence, overall the computation in this case takes $\tilde{O}(D+k\sqrt{n})$ time.

\paragraph{Case 2: $v_h \not \in F_t$.}
We next consider the case that $v_h$ is outside $F_t$. We first explain how to handle one highway edge $t$ and fix one path $P \in Q_t$ from this case, and then show how to do the computation efficiently for all highway edges.
To determine the segments $S^t_i$, exactly as the previous case we start by learning the $k$ closest edges to $t$ that cover $P^h$ and have at least one internal endpoint in $F_t$, denote them by $e'_1,...,e'_k$. We next turn our attention to edges that cover $P^h$ and both their endpoints are not internal vertices in $F_t$. Each such edge covers the whole highway $h_F$. For each such edge we can compute $v_h(e,P^h)$.
We can use a global communication over a BFS tree to learn the $k$ global edges where the vertices $v_h(e,P^h)$ are the closest to $F_t$. This takes $O(D+k)$ time for one path $P^h$. 
At the end of the computation, the tree edge $t$ can determine the segments $S^t_i$ by sorting the endpoints $\{v_h(e'_i,P)\}_{1 \leq i \leq k}$ it learned about locally, and the endpoints $v_h(e,P)$ of the $k$ global edges closest to $F_t$ it learns in the global communication. Taking the $k$ closest vertices defines the segments $S^t_i$ as needed. 

We next explain how to do the computation efficiently for all highway edges. As in previous cases, finding the local edges can be done for all edges in $O(k\sqrt{n})$ time, the main challenge is finding the global edges. If for each edge $t$, there is a different path $P^h$ to consider, this part can take linear time, which is too expensive. Here we exploit Lemma \ref{lem:highway_bound} that implies that there is a set $Q_h$ of $O(\log^2{n})$ size that contains all the paths $\{P_t^h | t \in h_F, v_h \not \in F \}$. Hence, we just need to consider $O(\log^2{n})$ paths for all different highway edges in $h_F$.
%The main ingredient to overcome this, is a structural lemma from \cite{DBLP:journals/corr/abs-2004-09129}, that implies that the set $\{P_t^h | t \in h_F,  v_h \not \in F_t \}$ has a poly-logarithmic size. In other words, there are only poly-logarithmic different paths $P_t^h$ to consider for all highway edges of the fragment $F_t$. 
Hence, and since there are $O(\sqrt{n})$ fragments, we can compute $S^t_i$ for all highway edges in $\tilde{O}(D+k\sqrt{n})$ time via pipelining. 
%Here we use \mtodo{add} that imply that there are only $polylog(n)$ paths $P_t^h$ such that $v_h$ is outside $F_t$ that include all the paths $P_t^h$ for all $t \in h_F$ from this case. The fact we only need to consider small number of paths for all the tree edges in the highway, would be crucial to obtain an efficient algorithm for this case. 
%To determine the segments $S^t_i$, exactly as the previous case we start by learning the $k$ closest edges to $t$ that cover $P_t^h$ and have at least one endpoint in $F$, denote them by $e'_1,...,e'_k$. We next turn our attention to edges that cover $P_t^h$ and have both endpoints outside $F$. Each such edge covers the whole highway $h_F$. For each such edge we can compute $v_h(e,P_t^h)$.
%We can use a global communication over a BFS tree to learn the $k$ global edges where the vertices $v_h(e,P_t^h)$ are the closest to $F_t$. This takes $O(D+k)$ time for one path $P_t^h$. 
%Since there are only $polylog(n)$ paths $P_t^h$ we need to consider for each fragment, repeating the computation for all fragments and all relevant paths $P_t^h$ takes $\tilde{O}(D+k\sqrt{n})$ time via pipelining. 
%At the end of the computation, each tree edge $t$ can determine the segments $S^t_i$ by sorting the endpoints $\{v_h(e'_i)\}_{1 \leq i \leq k}$ it learned about locally, and the endpoints $v_h(e,P_t^h)$ of the $k$ global edges closest to $F_t$ it learns in the global communication. Taking the $k$ closest vertices defines the segments $S^t_i$ as needed.  
\end{proof}

\begin{lemma}
All the highway tree edges $t$ can learn $\fragInfo(t,F,P)$ and $\fragInfo_h(t,F,P)$ for all fragments $F$ and $P \in Q_t$ in $\tilde{O}(D+k\sqrt{n})$ time.
\end{lemma}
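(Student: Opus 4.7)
The proof extends Lemma \ref{lem_nh_frag_info} to highway edges and the enlarged candidate set $Q_t$ of size $O(\log^2 n)$ guaranteed by Lemmas \ref{lem:highway_internal} and \ref{lem:highway_bound}. First, within each fragment $F$ we broadcast, for every highway edge $t \in h_F$ and every $P \in Q_t$, the pair $(\cov(t), \cutInfo(t,P))$ as well as the LCA labels and the full structure of $F$; this totals $\tilde{O}(k\sqrt{n})$ bits per fragment and completes in $\tilde{O}(k\sqrt{n})$ parallel rounds. We also broadcast the pairs $(c_F, n_F)$ for all fragments globally along a BFS tree in $\tilde{O}(D+\sqrt{n})$ rounds.

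For each pair $(t,P)$, I would classify the fragments into special and non-special exactly as in Lemma \ref{lem_nh_frag_info}. Non-special fragments $F$ on $P$ satisfy $P \cap F = h_F$, so $\fragInfo(t,F,P)=\fragInfo_h(t,F,P)$ is either $0$ or $n_F$ according to whether the broadcast value $c_F$ matches the target $c^t_i$ of the unique segment $S^t_i$ containing $h_F$; $t$ evaluates this locally. For a special fragment $F$ sharing an internal edge $\{v,w\}$ with $F_t$ ($v \in F_t$, $w \in F$), the vertex $v$ transmits $\cutInfo(t,P)$ across the edge, and $w$ uses Claim \ref{claimFrag} together with the description of $F$ it already holds to return $\fragInfo(t,F,P)$ and $\fragInfo_h(t,F,P)$. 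Since internal edges between distinct pairs of fragments are vertex-disjoint and each $t$ has $O(k)$ special fragments per $P$ and $O(\log^2 n)$ paths, pipelining keeps these exchanges within $\tilde{O}(k\sqrt{n})$ rounds. Each $t$ ultimately learns only $\tilde{O}(k)$ nonzero values across all $(F,P)$ from this step, so the intra-fragment broadcast of answers back to $t$ also fits in $\tilde{O}(k\sqrt{n})$ rounds.

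The main obstacle is the case of special fragments $F$ with no internal edge to $F_t$. For non-highway $t$, Lemma \ref{lem_nh_frag_info} used that $T_u$ consists entirely of internal vertices of $F_t$ to conclude every candidate $t' \in F$ must lie on $h_F$; this argument fails for a highway $t$ since $T_u$ may span other fragments. To handle this, I would combine Claim \ref{globalCover}, which forces every non-tree edge covering $t$ and having no internal endpoint in $F_t$ to cover exactly $h_{F_t}$ (or none of it), with Lemma \ref{lem:highway_bound}, which confines the outer endpoints of the paths $\{P^h_t, P^l_t : t \in h_{F_t}\}$ to a shared family $Q_h \cup Q_l$ of $O(\log^2 n)$ paths. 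For each $P' \in Q_h \cup Q_l$ and each fragment on $P'$ I would aggregate globally over the BFS tree, per relevant cover value $c^t_i$, the number of non-tree edges that cover $h_{F_t}$ and whose outer endpoint falls into that fragment with that cover value. Each such aggregation costs $\tilde{O}(D+\sqrt{n})$ rounds, and $\tilde{O}(k)$ of them per fragment $F_t$ suffice, summing to $\tilde{O}(D+k\sqrt{n})$ overall. From these aggregates together with the locally held $\cutInfo(t,P)$, every highway edge $t$ deduces $\fragInfo(t,F,P)$ and $\fragInfo_h(t,F,P)$ for the remaining special fragments, completing the proof within the stated round budget.
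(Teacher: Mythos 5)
Your setup (intra-fragment broadcasts, global $(c_F,n_F)$ broadcast, the special/non-special classification, non-special fragments, and special fragments reachable by an internal edge) matches the paper's proof, which indeed reduces everything to Step 3 of the non-highway lemma. The genuine gap is in your treatment of special fragments $F$ with no internal edge to $F_t$. What $t$ must ultimately compute there is, for each segment $S^t_i$ meeting $F$, the number of \emph{tree} edges $t'$ in $S^t_i\cap F$ with $\cov(t')=c^t_i$. Your proposed BFS-tree aggregation counts \emph{non-tree} edges covering $h_{F_t}$ by which fragment their outer endpoint lands in; that (at best) tells $t$ how many segment boundaries fall inside $F$, but it does not yield the per-sub-segment statistics over tree-edge cover values that replace the single pair $(c_F,n_F)$ once $h_F$ is subdivided by segment endpoints $v_i$ interior to $F$. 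Moreover, the values $c^t_i=\cov(t)+k-2i$ vary with $t$, so "one aggregation per relevant cover value" is not a well-defined global computation shared across the $O(\sqrt{n})$ highway edges of $F_t$.

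The missing structural step is the following: when there is no internal edge between $F_t$ and $F$ and $F\neq F_{last}$, any edge $e$ covering $t$ with $v_h(e,P^h)$ an \emph{internal} vertex of $F$ must be a global edge (if $e$ had an internal endpoint in $F_t$, its other endpoint is not internal to $F$, and then $v_h(e,P^h)$ is an LCA of vertices none of which is internal to $F$, contradicting Claim \ref{lcaFrag}). Hence the delimiters inside $F$ are exactly the endpoints $u_i=v_h(e^g_i,P^h)$ of the $k$ global edges, which depend only on $P^h$ and not on $t$; this is what lets one broadcast, once per pair $(P^h,F)$, the $O(k)$ tuples $(S,c_S,n_S)$ giving the minimum tree-edge cover value and its multiplicity on each delimited subpath, for a total of $\tilde O(D+k\sqrt n)$ using the $O(\log^2 n)$ bound of Lemma \ref{lem:highway_bound}. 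Your proof also omits the case $F=F_{last}$, which needs separate care: there $P\cap F$ is not contained in $h_F$ (so $\fragInfo$ and $\fragInfo_h$ differ), and local edges \emph{can} place $v_h(e,P^h)$ at an internal vertex of $F_{last}$ -- but always at the single $t$-independent vertex $u_L=LCA(p_h,d_F)$, which must be added as an extra delimiter. Without these two arguments the round-complexity claim for the no-internal-edge case does not go through.
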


\begin{proof}
For the proof, we again first fix one path $P \in Q_t$ for each edge $t$.
The proof generally follows the proof of Lemma \ref{lem_nh_frag_info}, most of the steps in the proof apply for highway tree edges as well. In particular, dealing with special fragments where there is a direct internal edge to them, or dealing with non-special fragments can be done exactly the same. The only difference is in Step 3 which we describe now. 
In Step 3, the tree edge $t$ learns $\fragInfo(t,F,P)$ for special fragments $F$ where there is no internal edge between $F_t$ and $F$. We focus on learning about min 2-respecting cuts $\{t,t'\}$ where $t' \in P^h$. Dealing with the case that $t' \in P^l$ can be done in the same manner. Note that we only need to consider edges $t$ where $P^h$ ends outside $F_t$, as otherwise the claim trivially holds. We first show that there is only a limited number of options for the identity of the fragment $F$. If $F \neq F_t$ is a special fragment for $t$ where $F \cap P^h \neq \emptyset$ it can be either the last fragment $F_{last}$ of $P^h$, or there is at least one internal vertex $v_i \in F$ that is an endpoint of one of the segments $S^t_i$. We first deal with the latter case.

\paragraph{Case 1: $F \neq F_{last}$.} Here, by the definition of the segments $S^t_i$, there is an edge $e$ that covers $t=\{u,p(u)\}$ where $v_h(e,P^h)=v_i \in F$. We show that if $F \neq F_{last}$, then $e$ must be a global edge found in the global communication phase described in the proof of Lemma \ref{lem:cut_info_highway}. 
Assume to the contrary that $e=\{v_l,v_h\}$ is an edge that has an internal endpoint in $F_t$. Note that the second endpoint of $e$ cannot be an internal vertex in $F$, as there are no internal edges between $F_t$ and $F$. Let $p_h$ be the endpoint of $P^h$ in $F_{last}$, and $v_h$ be the endpoint of $e$ that is above or orthogonal to $t$. From the discussion above $v_h$ is not an internal vertex in $F$.
Recall that by the definition of $v_i=v_h(e,P^h)$ it can be equal either to $a_e = LCA(p(u),v_h), a_P = LCA(p(u),p_h)$ or $LCA(p_h,v_h)$, where $p(u) \in F_t \neq F, p_h \in F_{last} \neq F$ and $v_h$ is not an internal vertex in $F$, in all cases we get that $v_h(e,P^h)$ cannot be an internal vertex in $F$, as the LCA of two vertices that are not internal vertices in $F$ cannot be an internal vertex in $F$ by Claim \ref{lcaFrag}, a contradiction.

To conclude, we have that if $v_h(e,P^h) \not \in F_{last}$ then it is an endpoint of one of the $k$ global edges covering $t$ that were found in the global communication phase. Note that the identity of $F_{last}$ and of the $k$ global edges depends only on $P^h$, and not on the specific choice of $t$. Hence, for any $P^h$ there are only at most $O(k)$ options for the fragment $F$. We call these fragments the special fragments of $P^h$. For any special fragment of $P^h$, we would like to learn globally the following information. Let $e^g_1,...,e^g_k$ be the global edges found in the global communication phase, and let $u_i=v_h(e^g_i,P^h)$. We first broadcast the information $(P^h,F_t,u_1,...,u_k)$ to the whole graph, which allows the whole graph learn the identity of the special fragments of $P^h$. For a special fragment $F \neq F_{last}$ of $P^h$ where there are no internal edges between $F$ and $F_t$, let $u_{i_1},...,u_{i_t}$ be the vertices of $u_1,...,u_k$ inside $F$ (there is at least one by the definition of special fragment), ordered by closeness to $t$. Let $a_F$ and $ b_F$ be the closest and furthest vertices from $t$ on the highway of $F$ (one is the root and one is the unique descendant of the fragment). We look at the subpaths $a_F - u_{i_1}, u_{i_1}-u_{i_2},...,u_{i_t}-b_F$
and for each one of those subpaths $S$, we find the minimum cover value of the subpath, $c_S$, and the number of times it appears in the subpath $S$. Each vertex in $F$ can learn this information locally as it knows the cover values of all edges in $F$. We then broadcast the information $(S,c_S,n_S)$ for all relevant subpaths to the whole graph. Note that for any path $P^h$ we overall broadcast $O(k)$ messages of $O(\log{n})$ bits, and since there are $O(\log^2{n})$ paths $P^h$ relevant for each fragment from Lemma \ref{lem:highway_bound}, the whole communication takes $\tilde{O}(D+k\sqrt{n})$ time. 

We now show that this information suffices for a tree edge $t$ to learn $\fragInfo(t,F,P)$ and $\fragInfo_h(t,F,P)$, for a special fragment $F \neq F_{last}$ (with respect to $P$) with no internal edges between $F_t$ and $F$. First note that the paths $P^h$ that could be relevant for $t$ are among the $O(\log^2 {n})$ paths $P^h$ relevant for the fragment (described in Lemma \ref{lem:highway_bound}), hence $t$ learned information about them as described above. We next fix one such path $P^h$. As explained above, $F$ must be one of the special fragments of $P^h$. Let $S^t_i$ be a segment that intersects $F$. As explained above all the edges $e$ that cover $t$ where $v_h(e,P^h) \in F$ must be global edges. It follows that the segment $S^t_i \cap F$ is one of the subpaths $S$ of $F$ computed before, and that $t$ knows the values $(S,c_S,n_S)$. Now edges $t' \in S$ define a cut of size $k$ with $t$ iff $\cov(t')=c^t_i$. Also, for all edges in $S^t_i$ the value $\cov(t,t')$ is fixed, which implies that $\cov(t') \geq c^t_i$ (as $k$ is the min cut value). It means that there are edges in $S$ where $\cov(t')=c^t_i$ iff $c^t_i=c_S$, in which case the number of these edges is $n_S$. Hence $t$ can determine the number of min 2-respecting cuts in any segment $S^t_i \cap F$. Note that it can also check for any segment $S$ if it is contained in $S^t_i$ using the LCA labels of the endpoints of a segment. Note that all the cuts we learn about are cuts where $t'$ in on a highway in this case, hence $\fragInfo_h(t,F,P) = \fragInfo(t,F,P)$. This completes the case that $F \neq F_{last}.$

\paragraph{Case 2: $F = F_{last}$.}
We are left with the case that $F=F_{last}$ and there are no internal edges between $F$ and $F_t$ (as otherwise this case is already covered in Step 2). 
We would like to follow a similar algorithm with slight changes. Note that the edges $e_1,e_2,...,e_k$ that define the segments $S^t_i$ are either global edges $e^g_i$ or local edges with at least one internal endpoint in $F_t$. Let $e$ be a local edge that covers $t$ where $v_h(e,P^h)$ is an internal vertex in $F_{last}$. Note that the second endpoint of $e$ cannot be internal vertex in $F_{last}$, as there are no internal edges between $F_t$ and $F_{last}$. We claim that the vertex $v_h(e,P^h)$ is equal to $LCA(p_h,d_F)$, where $p_h$ is the endpoint of $P^h$ in $F_{last}$ and $d_F$ is the unique descendant of $F_{last}$. In other words, $v_h(e,P^h)$ does not depend on the specific choice of $e$ and $t$, just on the path $P^h$. To show this, note that both $e$'s endpoints are not internal vertices in $F_{last}$, but $e$ also covers some part of $F_{last}$ as $v_h(e,P^h)$ is an internal vertex in $F_{last}$. In this case, it follows from Claim \ref{globalCover} that $e$ covers exactly the highway $h_F$ of $F_{last}$. The intersection $h_F \cap P^h$ is the subpath of $P^h$ in the highway. Let $a_F$ be the closest vertex to $t$ on the highway (could be either the root $r_F$ or unique descendant $d_F$ of $F_{last}$), then $P^h \cap F_{last}$ is the subpath between $a_F$ to $p_h$. Its intersection with the highway is exactly the path between $a_F$ to $LCA(p_h,d_F)$, showing that $e$ covers exactly the subpath of $F_{last}$ between $a_F$ to $LCA(p_h,d_F)$, as needed. Denote by $u_L = LCA(p_h,d_F)$. Recall that $u_1,u_2,...,u_k$ were the endpoints $v_h(e^g_i,P^h)$ of global edges. Let $u_{i_1},...,u_{i_t}$ be the vertices of $u_1,...,u_k,u_L$ that have endpoint in $F_{last}$, sorted according to closeness to $t$. We again want $t$ to learn the information $(S,c_S,n_S)$ on each one of the segments $a_F-u_{i_1}$,$u_{i_2}-u_{i_3}$,...,$u_{i_t}-p_h$ where $c_S,n_S$ are defined as before. The computation can be done exactly as in the previous case, the only difference is that now $u_L$ is also one of the endpoints, and $p_h$ (the end of $P^h$) is the end of the last segment. As $u_L=v_h(e,P^h)$ is the endpoint of all local edges $e$ that cover $t$ where $v_h(e,P^h)$ is an internal vertex in $P^h$, and the vertices $u_i=v_h(e^g_i,P^h)$ are the endpoints of all relevant global edges, we have that all endpoints of the segments $S^t_i$ in $F_{last}$ are in the segments $S$ computed. Also, as the vertices $u_{i_j}$ do not depend on the specific choice of $t$, we can do the computation in $\tilde{O}(D+k\sqrt{n})$ time for all tree edges $t$, as described in the previous case. Note that $t$ can also distinguish between cuts where $t'$ is on a highway and cuts where it is not on a highway, as all the segments between $a_F$ to $u_L$ are on the highway, and the segments between $u_L$ and $p_h$ are non-highways, this allows to deduce both $\fragInfo_h(t,F,P)$ and $\fragInfo(t,F,P)$. This completes the proof.  
\end{proof}

To conclude, each highway edge $t$ learns the values $\cutInfo(t,P),\fragInfo(t,F,P)$ and $\fragInfo_h(t,F,P)$ for at most $O(\log^2{n})$ paths $P$ that contain the path $P_t$. To identify the correct path $P+t$, it chooses the path where $\sum_{F} \fragInfo(t,F,P)$ is maximal. This identifies the correct path $P_t$ as all cuts $t$ learns about are min 2-respecting cuts, and with the correct choice of $P_t$ it learns about all possible cuts, so the sum is maximal. By definition, $\cutInfo(t)=\cutInfo(t,P_t), \fragInfo(t,F) = \fragInfo(t,F,P_t)$ and $\fragInfo_h(t,F)=\fragInfo(t,F,P_t)$. This gives the following.

\begin{lemma} \label{cut_info_pt_highway}
All the highway tree edges $t$ can learn $\cutInfo(t)$, $\fragInfo(t,F)$ and $\fragInfo_h(t,F)$ for all $F$, and the identity of the path $P_t$ in $\tilde{O}(D+k\sqrt{n})$ time.
\end{lemma}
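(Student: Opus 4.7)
The plan is to combine the two preceding lemmas with a selection step that identifies the correct path $P_t$ from the polylogarithmic set $Q_t$ of candidates. By Lemma~\ref{lem:highway_internal} and Lemma~\ref{lem:highway_bound}, in $\tilde O(D+\sqrt n)$ time every highway edge $t$ learns a set $Q_t$ of $O(\log^2 n)$ candidate tree paths that is guaranteed to contain the true path $P_t$. The strategy is then to run both of the previous computations for every candidate $P\in Q_t$ in parallel.

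Concretely, I would first invoke Lemma~\ref{lem:cut_info_highway} to have each highway edge $t$ compute $\cutInfo(t,P)$ for every $P\in Q_t$, which costs $\tilde O(D+k\sqrt n)$ since the lemma already accommodates the full list $Q_t$. Next, I would invoke the preceding lemma (the $\fragInfo$ computation for highways) for every $P\in Q_t$, yielding $\fragInfo(t,F,P)$ and $\fragInfo_h(t,F,P)$ for all fragments $F$; again this costs $\tilde O(D+k\sqrt n)$ because the proof of that lemma only uses the $O(\log^2 n)$ candidate paths per highway that are supplied by Lemma~\ref{lem:highway_bound}, and the per-edge overhead is absorbed by the polylog factor. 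At this point $t$ has the necessary data for each candidate $P$.

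The selection step is that $t$ locally picks $P_t$ to be the $P\in Q_t$ maximizing $\sum_F \fragInfo(t,F,P)$. Correctness follows from two observations. First, every tuple $(S^t_i,c^t_i)$ in $\cutInfo(t,P)$ that the algorithm produces does describe genuine min $2$-respecting cuts containing $t$: this is baked into the proof of Lemma~\ref{min_cuts_lemma} and relied on when running the algorithm with any $P$, so $\sum_F \fragInfo(t,F,P)$ is always a valid lower bound on the true number of min $2$-respecting cuts through $t$. Second, by Lemma~\ref{unique_path_lemma} the true $P_t$ is in $Q_t$, and when we run the computation with $P=P_t$ we actually recover all such cuts (this is the conclusion of Lemma~\ref{min_cuts_lemma}); hence the maximum is attained exactly at $P_t$ (ties occur only among paths that yield the same information). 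By definition we then have $\cutInfo(t)=\cutInfo(t,P_t)$, $\fragInfo(t,F)=\fragInfo(t,F,P_t)$, and $\fragInfo_h(t,F)=\fragInfo_h(t,F,P_t)$, as needed.

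The main obstacle I anticipate is controlling congestion when running the two previous lemmas simultaneously for all $O(\log^2 n)$ candidates: one must verify that the pipelining arguments used in the proofs (especially the global broadcasts indexed by special fragments and the local distribution of $\cutInfo$ within each fragment) still go through with only a polylogarithmic blowup, rather than an $O(\log^2 n)$ multiplicative cost being paid on top of $O(k\sqrt n)$ per candidate in a way that becomes $\tilde O(k\sqrt n \log^2 n)$ and still $\tilde O(D+k\sqrt n)$. Since each atomic message in those proofs is $O(k\log n)$ bits and is destined for a unique (fragment, path) pair, a careful accounting shows that the total load on the BFS tree and within each fragment grows only by a $\poly\log n$ factor, keeping the overall complexity at $\tilde O(D+k\sqrt n)$.
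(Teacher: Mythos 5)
Your proposal is correct and follows essentially the same route as the paper: both preceding lemmas already deliver $\cutInfo(t,P)$, $\fragInfo(t,F,P)$, and $\fragInfo_h(t,F,P)$ for every candidate $P\in Q_t$ within $\tilde O(D+k\sqrt n)$, and the paper then selects $P_t$ as the maximizer of $\sum_F \fragInfo(t,F,P)$, justified exactly as you do (every reported cut is a genuine min 2-respecting cut, and the true $P_t$ reports all of them). The only nitpick is that membership of $P_t$ in $Q_t$ comes from Lemmas~\ref{lem:highway_internal} and~\ref{lem:highway_bound} rather than from Lemma~\ref{unique_path_lemma}, which only asserts the existence of such a path.
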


\section{Computing the Cost-Effectiveness} \label{sec:cost_ef}

We next explain how to exploit the cut information to compute cost-effectiveness of edges.
Let $H$ be a $k$-edge-connected subgraph we want to augment to be $(k+1)$-edge-connected, and let $e = \{u,v\} \not \in H$.
Recall that in order to estimate the cost-effectiveness of $e$ it is enough to estimate the number of min 2-respecting cuts that $e$ covers with respect to a spanning tree $T$ of $H$, as was shown in Lemma \ref{lem:reduction}. A min 2-respecting cut has at most 2 tree edges. While our main focus is on cuts that have exactly 2 edges, we first note that it is easy to compute the number of min 1-respecting cuts covered by $e$.

%Recall that our goal is to estimate the number of cuts of size $k$ in $H$ covered by $e$.
%To do so, we estimate the number of minimum 2-respecting cuts that $e$ covers in the trees of the tree packing. As $e \not \in H$, then $e$ is a non-tree edge in all these trees. We start by focusing on one tree $T$. 

\paragraph{1-respecting cuts.} It is easy to learn about all min 1-respecting cuts covered by $e$. 
As explained in Section \ref{pre_cov_cut}, the value of the 1-respecting cut defined by a tree edge $t$ is equal to $\cov(t)$. Hence, $t$ defines a min 1-respecting cut iff $\cov(t)=k$. Also, all tree edges know their cover value from Claim \ref{claimLearnCov}. From Claim \ref{claim_one_cut_cover}, the min 1-respecting cuts covered by $e$ are exactly those where $t \in P_e$ and $\cov(t)=k$. Let $m_t$ be an indicator variable that equals 1 iff $\cov(t)=k$. To learn about the number of min 1-respecting cuts in $T$ covered by $e$, it should compute $\sum_{t \in P_e} m_t$. From Claim \ref{aggregate_Pe}, all non-tree edges can learn these values in $\tilde{O}(D+\sqrt{n})$ time.

\paragraph{2-respecting cuts.} The rest of the section focuses on min 2-respecting cuts. Recall that from Claim \ref{claim_cut_cover}, $e$ covers a 2-respecting cut $\{t,t'\}$ iff exactly one of $t,t'$ is in $P_e$, where $P_e$ is the tree path between the endpoints of $e$. To compute cost-effectiveness we prove the following useful lemma.

\begin{restatable}{lemma}{LemSketch} \label{lemma_sketches}
Assume that for any tree edge $t$, and for any min 2-respecting cut of the form $\{t,t'\}$ there is a specific vertex in the fragment of $t$ that knows a unique $O(\log n)$-bit name of the cut, such that the same name is known also to one specific vertex in the fragment of $t'$. Then, in $\tilde{O}(D+\sqrt{n})$ time, all tree edges $t$ can learn a value $\sketch(t)$ with the following properties.
\begin{enumerate}
\item $\sketch(t)$ has size $\Theta(\log^3 n)$ bits.
\item For any tree path $P$, given $\oplus_{t \in P} \sketch(t)$, we can compute a value $\tilde{k}\in [k, 8k]$, where $k$ denotes the number of min 2-respecting cuts with exactly one edge in $P$, with high probability.
\end{enumerate} 
\end{restatable}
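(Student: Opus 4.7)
The plan is to instantiate a standard linear $F_0$-estimation sketch on the sets
\[ C_t \;=\; \{\,\text{names of min 2-respecting cuts of the form } \{t,t'\}\,\}, \]
exploiting the fact that XOR cancels pairs. If each $\sketch(t)$ is a linear sketch of the indicator vector of $C_t$ over $\mathrm{GF}(2)$, then for any tree path $P$ the XOR $\oplus_{t\in P}\sketch(t)$ is a sketch of the symmetric difference $S_P = \triangle_{t\in P} C_t$. Since every min 2-respecting cut has exactly two tree edges, $|S_P|$ equals the number $k$ of such cuts with exactly one edge in $P$ (cuts with both edges in $P$ appear twice and cancel; cuts with no edge in $P$ never appear).

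\textbf{Sketch construction and estimator.} Using a global BFS tree, first broadcast $O(\log^2 n)$-bit seeds for two families of $O(\log n)$-wise independent hash functions, $h_{\ell,i}\colon\{0,1\}^*\to\{0,1\}$ with $\Pr[h_{\ell,i}(x)=1]=2^{-\ell}$ and signature hashes $s_{\ell,i}\colon\{0,1\}^*\to\{0,1\}^{\Theta(\log n)}$, for $\ell=0,\dots,\Theta(\log n)$ and $i=1,\dots,\Theta(\log n)$; this takes $\tilde O(D)$ rounds. Define
\[
  \sketch(t)[\ell,i] \;=\; \bigoplus_{c\in C_t\,:\,h_{\ell,i}(c)=1} s_{\ell,i}(c),
\]
so $\sketch(t)$ has $\Theta(\log^3 n)$ bits in total. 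To estimate $k$ from $\oplus_{t\in P}\sketch(t)$, let $\ell^\star$ be the largest level at which a constant fraction of the $\Theta(\log n)$ repetitions yield a nonzero XORed signature, and output $\tilde k = c\cdot 2^{\ell^\star}$. A standard analysis shows that when $k/2^\ell$ is large, many elements of $S_P$ are sampled and the $\Theta(\log n)$-bit signature XOR is nonzero w.h.p.\ (a Schwartz--Zippel-type bound rules out accidental cancellations), whereas when $k/2^\ell$ is small nothing is sampled at all. Hence $2^{\ell^\star}=\Theta(k)$, and after tuning constants a union bound over all $O(n^2)$ non-tree edges $e$ yields $\tilde k \in [k,8k]$ with high probability.

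\textbf{Distributed computation of $\sketch(t)$.} By hypothesis, for each cut $c=\{t,t'\}$ a specified vertex $v_{t,c}\in F_t$ knows the name $c$, and can therefore locally precompute its contribution $h_{\ell,i}(c)\cdot s_{\ell,i}(c)$ to $\sketch(t)[\ell,i]$ for every $(\ell,i)$. For each fragment $F$, the sketch $\sketch(t)$ for each $t\in F$ is the XOR of such contributions over all vertices of $F$. This is a fragment-local aggregation of $O(\sqrt n)$ sketches of $\tilde O(1)$ words each, carried out by pipelined convergecast on the internal spanning structure of $F$ (whose diameter is $O(\sqrt n)$) in $\tilde O(\sqrt n)$ rounds. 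All $O(\sqrt n)$ fragments run this procedure in parallel on edge-disjoint subgraphs, giving total time $\tilde O(D+\sqrt n)$ as required.

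\textbf{Main obstacle.} The chief technical subtlety is the estimator analysis: choosing the number of levels, repetitions, and signature length so that at level $\ell^\star$ \emph{both} the detection event (nonzero XORed signature whenever samples exist) and the null event (no samples surviving at too-low levels) hold simultaneously with probability $1-n^{-\Omega(1)}$, uniformly over all $O(n^2)$ non-tree edges. This is what drives the $\Theta(\log^3 n)$ sketch size. A second, more delicate point is that all fragments must use the \emph{same} hash seeds, so that the two contributions of each cut $c=\{t,t'\}$ (from $v_{t,c}\in F_t$ and $v_{t',c}\in F_{t'}$) cancel exactly under XOR — this is why the seeds are broadcast globally at the start.
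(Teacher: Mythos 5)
Your proposal follows essentially the same route as the paper: geometric subsampling levels with $\Theta(\log n)$ repetitions each, random signatures XORed together so that the linear (GF(2)) structure makes cuts with both tree edges on $P$ cancel, an estimator that returns $2^{\ell^\star}$ for the highest level with a constant fraction of nonzero repetitions, and a fragment-local pipelined aggregation (with globally broadcast seeds) to compute $\sketch(t)$ in $\tilde O(D+\sqrt n)$ rounds. All of that matches the paper's Lemma on sketches and its proof.

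The one place where your instantiation is weaker is the non-cancellation of the XORed signatures. The paper generates the per-relation identifiers from an $\epsilon$-biased space with $\epsilon=1/4$, which guarantees that for \emph{any} nonempty set $R'$ of relations, $\oplus_{r\in R'}RID(r)$ is nonzero except with probability $(3/4)^{C\log n}$ --- uniformly over the size of $R'$ --- using only $O(\log^2 n)$ shared random bits. You instead draw the signatures $s_{\ell,i}$ from an $O(\log n)$-wise independent family and invoke a ``Schwartz--Zippel-type bound''; but limited independence only makes the XOR of a set behave as uniform when the set has size at most the independence parameter. At the critical level $\ell\approx\log k$ the surviving sample has expected size $O(1)$, yet under merely pairwise-independent sampling you only get a Markov-type tail on its size, which is not strong enough to conclude, with probability $1-n^{-\Omega(1)}$ simultaneously for all repetitions and all $O(n^2)$ query paths, that every nonempty surviving set is small enough for your signature argument to apply. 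This is fixable --- either replace the signature hash by an $\epsilon$-biased family as the paper does, or add an argument bounding the surviving set size at the relevant levels --- but as written it is a gap in the detection-probability analysis. The rest (estimator thresholds, Chernoff over repetitions, union bound over edges, and the distributed aggregation) is sound and matches the paper.
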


Note that once we compute sketches with the above properties, we can estimate the number of min 2-respecting cuts covered by all non-tree edges $e$ as follows. First, each edge $e$ learns the value $\oplus_{t \in P_e} \sketch(t)$, this can be done in $\tilde{O}(D+\sqrt{n})$ time for all edges using Claim \ref{aggregate_Pe} as XOR is an aggregate function and the sketches have poly-logarithmic size. From the second property of the sketches this gives a constant approximation to the number of min 2-respecting cuts with exactly one edge in $P_e$, which are exactly the min 2-respecting cuts covered by $e$.

The proof of the lemma is based on \emph{sketching} ideas and appears in Section \ref{sec:sketches}. To use the lemma, we need to show that for any edge $t$ and any min 2-respecting cut $\{t,t'\}$ there is a specific vertex in the fragment of $t$ that knows a unique name of the cut $\{t,t'\}$. This is proved in Section \ref{sec:cut_names}. At a high-level, we exploit the information $\cutInfo(t)$ to obtain this. To illustrate the idea, we first focus on cuts $\{t,t'\}$ where there is an edge $\{v,v'\}$ between the fragments $F,F'$ of $t,t'$. The vertices $v,v'$ can learn the information $\cutInfo(t),\cov(t)$ for all edges in their fragment in $O(k\sqrt{n})$ time. Then, by the definition of $\cutInfo(t)$, they can deduce the names of all cuts $\{t,t'\}$ where $t \in F, t' \in F'$ as needed. The more challenging case is that there is no edge between the fragments. Here we use the fact there is no edge to show that the min 2-respecting cuts have a very specific structure that allows tree edges to learn enough information about these cuts. In this case, the vertices do not know the original name of cuts $\{t,t'\}$ but we can generate unique name for each cut known both to $t$ and $t'$. See Section \ref{sec:cut_names} for the full details.

\remove{
\paragraph{2-respecting cuts.}
We next focus on 2-respecting cuts.
For simplicity, we start by describing a variant of the algorithm that works if each tree edge $t$ knows about all min 2-respecting cuts $\{t,t'\}$ it participates in. Later we show how to extend the algorithm to the case that it knows only a compressed information representing the cuts.

\mtodo{update this part}

\subsection{Warm-up: tree edges know their min 2-respecting cuts}

Let $e=\{u,v\}$ be a non-tree edge, and let $P_e$ be the tree path between $u$ and $v$. From Claim \ref{claim_cut_cover}, $e$ covers a min 2-respecting cut $\{t,t'\}$ iff exactly one of $\{t,t'\}$ is in $P_e$. To compute the cost-effectiveness of $e$, the goal is to compute the number of such cuts. A naive algorithm could be to go over all the tree edges $t$ in the path $P_e$, learn about all the min 2-respecting cuts $\{t,t'\}$ they participate in, and count in how many of them the second edge $t'$ in the cut is not in $P_e$. The main problem with this algorithm is that it requires collecting a lot of information, which is not efficient. We next show how using \emph{sketching} ideas we can estimate the number of edges that cross the cut. The main idea is to collect a compressed representation of the min 2-respecting cuts.

The high-level idea is as follows. Each tree edge $t$ uses the information about the min 2-respecting cuts $\{t,t'\}$ to compute a \emph{sketch} $\sketch(t)$ of the min 2-respecting cuts, with the following properties.

\begin{enumerate}
\item The sketch has poly-logarithmic size.
\item Given a tree path $P$, if we look at $\oplus_{t \in P} \sketch(t)$ we can obtain a constant approximation to the number of min 2-respecting cuts that have exactly one tree edge in $P$, with high probability. Here $\oplus$ represents the bitwise XOR of the sketches.\label{approxCuts}
\end{enumerate}

We later show how we can compute sketches with the desired properties. 
Given the sketches, we can obtain an efficient algorithm for approximating the cost-effectiveness of all non-tree edges, as follows.

Each non-tree edge $e=\{u,v\}$ should learn $\oplus_{t \in P_e} \sketch(t)$, as this is an aggregate function of the edges in $P_e$, all non-tree edges can compute this information in $\tilde{O}(D+\sqrt{n})$ time by Claim \ref{aggregate_Pe}. 
From property \ref{approxCuts} of the sketches, this gives a constant approximation to the number of min 2-respecting cuts covered by $e$, as needed.

Note that for the algorithm to work, tree edges do not need to know all their min 2-respecting cuts, they just need to be able to compute the values $\sketch(t)$. Our main goal is to show that this information can be computed efficiently.
}

\subsection{Computing Sketches} \label{sec:sketches}

To prove Lemma \ref{lemma_sketches}, we first prove the following lemma. Later when we use it, for each min 2-respecting cut $\{t,t'\}$ we define a relation $r$.
%To compute cost-effectiveness, we rely on the following lemma for computing sketches. 
We note that the ideas used for building this lemma are standard in the streaming and sketching literature, see e.g., \cite{ahn2012analyzing}. However, we did not find a concrete result statement in the literature that provides our desired properties (i.e., approximation of count, linearity of sketches, and with explicit mention of $\poly(\log n)$ bits of shared randomness), without requiring some extra writing. Thus, instead, we here provide a simple and nearly self-contained proof of the lemma.

\begin{lemma}\label{lem:Sketches}
Suppose that we have a set of virtual pairwise relations, where each relation $r$ has a unique $O(\log n)$-bit name that is known to two vertices $v$ and $u$. There is a scheme that uses $O(\log^3 n)$ bits of shared randomness and then, for each relation $r$ generates a $\Theta(\log^3 n)$-bit  $\sketch(r)$, such that, we have the following: for any set $R$ of relations, from  $\oplus_{r \in R} \sketch(r)$, we can deduce an approximation $\tilde{k} \in [k, 8k]$ for $k=|R|$.
\end{lemma}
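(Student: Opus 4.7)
I would build a linear (XOR-compatible) sketch by combining geometric subsampling with $1$-sparse recovery, a standard toolkit from the streaming literature. Using the shared randomness, fix $L = O(\log n)$ hash functions $h_0, \ldots, h_L$ that independently include each name with probabilities $2^0, 2^{-1}, \ldots, 2^{-L}$, together with a random fingerprint function $f$ drawn from a $2$-universal family from names to $\Theta(\log n)$-bit strings, and replicate the whole construction in $T = \Theta(\log n)$ independent copies for amplification. This spends $\Theta(L \cdot T \cdot \log n) = \Theta(\log^3 n)$ bits of shared randomness overall.

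For each relation $r$, the sketch $\sketch(r)$ is indexed by the $L \cdot T$ cells $(\ell, t)$: cell $(\ell, t)$ holds the pair $(\mathrm{name}(r),\, f(\mathrm{name}(r)))$ if $r$ is sampled at level $\ell$ in trial $t$ and $(0,0)$ otherwise. Each cell occupies $\Theta(\log n)$ bits, giving the desired $\Theta(\log^3 n)$ total size, and the scheme is trivially linear under XOR: for any $R$, $\oplus_{r \in R} \sketch(r)$ at cell $(\ell,t)$ equals $(A_{\ell,t}, B_{\ell,t}) := \bigl(\oplus_{r \in R_{\ell,t}} \mathrm{name}(r),\, \oplus_{r \in R_{\ell,t}} f(\mathrm{name}(r))\bigr)$, where $R_{\ell,t} \subseteq R$ is the subset sampled in that cell.

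To decode $\tilde k$, I would declare cell $(\ell,t)$ a success when $A_{\ell,t} \neq 0$ and $f(A_{\ell,t}) = B_{\ell,t}$, let $\hat\ell$ be the highest level admitting any success, and output $\tilde k = c \cdot 2^{\hat\ell}$ for a small constant $c$ to be tuned. Two-universality of $f$ implies that a cell with $|R_{\ell,t}| \geq 2$ is a false success with probability $n^{-\Omega(1)}$; a union bound over the $O(\log^2 n)$ cells then ensures, w.h.p., that every success corresponds to a cell with $|R_{\ell,t}| = 1$. At level $\ell^\star := \lfloor \log k \rfloor - 1$ the expected sample size per trial is $\Theta(1)$ and each trial has constant probability of sampling exactly one relation, so $T = \Theta(\log n)$ independent trials guarantee $\hat\ell \geq \ell^\star$ w.h.p., giving $\tilde k \geq k$. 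Conversely, at any level $\ell \geq \log k + D$ for a constant $D$, Markov's inequality gives $\Pr[|R_{\ell,t}| \geq 1] \leq 2^{-D}$ per trial, which combined with $T$ trials and a union bound over levels bounds $\hat\ell$ above by $\log k + O(1)$ w.h.p.; choosing $c$ appropriately yields $\tilde k \leq 8k$.

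The main obstacle I anticipate is pinning down the exact constants needed to land in $[k, 8k]$ rather than a slightly weaker $[k, O(k)]$ window, since the upper tail of $\hat\ell$ decays only geometrically in the level and one must carefully balance $T$, $c$, and the confidence parameter. I would resolve this by either increasing $T$ (which still fits within the $\Theta(\log^3 n)$ sketch budget) to sharpen the tail, or by reinterpreting $\tilde k$ via a weighted combination across the top few successful levels; either route preserves linearity under XOR and keeps the sketch size within $\Theta(\log^3 n)$ bits.
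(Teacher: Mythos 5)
There is a genuine gap in your decoding step. Your estimator outputs $c\cdot 2^{\hat\ell}$ where $\hat\ell$ is the \emph{highest level admitting any success across the $T$ trials}, and you claim $\hat\ell \leq \log k + O(1)$ w.h.p.\ via Markov plus a union bound. But the union bound gives $\Pr[\exists t:\ |R_{\ell,t}|\geq 1] \leq T\cdot k2^{-\ell}$, which for $T=\Theta(\log n)$ is only $n^{-\Omega(1)}$ once $\ell \geq \log k + \Omega(\log n)$. In fact the expected number of spurious successes at levels $\geq \log k + D$ is $\sum_{j\geq D} T2^{-j} = \Theta(2^{-D}\log n)$, which is $\omega(1)$ for any constant $D$; so with probability close to $1$ some trial at a level well above $\log k$ samples a single relation, and your output overshoots $8k$ by a super-constant factor. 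Your proposed fixes make things worse or do not resolve this: increasing $T$ only adds more chances for a high-level spurious success, and the ``weighted combination of top levels'' is not specified. The paper avoids this entirely by not taking a max over trials: it runs $\Theta(\log n)$ identical repetitions per level and defines $i^*$ as the largest level at which at least a $1/4$ \emph{fraction} of repetitions are non-zero. Since the per-repetition success probability is $>3/8$ at level $\lceil\log k\rceil$ (via Bonferroni with pairwise independence) and $\leq 1/8$ at levels $\geq \log k+3$ (via union bound), a Chernoff bound on the fraction pins $i^*$ to within the $[k,8k]$ window w.h.p.

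A secondary concern: your verification test for $1$-sparsity relies on $2$-universality of the fingerprint $f$ to rule out false successes when $|R_{\ell,t}|\geq 2$. Under XOR aggregation this is delicate --- if $f$ happens to be $\mathbb{F}_2$-linear (which many $2$-universal families are), then $f(\oplus_i x_i) = \oplus_i f(x_i)$ identically and every non-empty cell passes your test; and even for non-linear families, bounding $\Pr[f(\oplus_{x\in S}x)=\oplus_{x\in S}f(x)]$ for $|S|\geq 2$ requires independence on $|S|+1$ points, not $2$. The paper sidesteps $1$-sparse recovery altogether: it only needs to distinguish ``empty cell'' from ``non-empty cell,'' which it does by assigning each relation a random identifier from an $\epsilon$-biased space, so that the XOR of the identifiers of any non-empty subset is non-zero with probability $1-n^{-10}$ using only $O(\log^2 n)$ shared random bits.
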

\begin{proof}
Before describing the sketches, we first generate a random identifier for each  relation. More concretely, for each  relation $r$, we generate a $C\log n$-bit random identifier $RID(r)$, for a large enough constant $C$, where each bit is set to be $0$ or $1$ using an \emph{$\epsilon$-bias space} with $\epsilon=1/4$~\cite{naor1993small}. This guarantees the following two properties for these random identifiers: (1) Sharing $O(\log^2 n)$ bits of randomness among all vertices suffices to generate these random identifiers, as we use $O(\log n)$ bits of shared randomness for each bit of the random identifiers. Since we shared the randomness, the two vertices $v$ and $u$ that know the unique identifier of the relation $r$ generate the same random identifier for $r$. (2) For any particular non-empty set of relations $R'$, the probability that the bit-wise XOR of the random identifiers $\oplus_{r \in R'} RID(r)$  equals a vector of all zeros is at most $(3/4)^{C\log n} \leq 1/n^{10}$.

\paragraph{Sketches.} The sketches are built as follows: The sketch has $\log n$ density levels, and $O(\log n)$ identical repetitions per level. At level $i$, in each repetition, we sample each  relation to be \emph{active} with probability $1/2^{i}$ and inactive otherwise. It suffices if the active/inactive choices of different relations are merely pairwise independent (instead of fully independent) and this allows us to use only $O(\log n)$ bits of shared randomness for each repetition\footnote{Here is a sketch of one standard method for building such pairwise independent variables: Suppose that the relation identifiers are in $\{1, 2, \dots, N\}$, where $N\in n^{O(1)}.$ Let $K = \max\{N, n^{10}\}$ and let $K'$ be a prime in $[K, 2K]$. We pick two independent random numbers $a, b$ in $\{0, \dots, K'-1\}$, using $O(\log n)$ bits. Then, for each relation with identifier $x\in \{1, 2, \dots, N\}$, define $h(x)=ax+b \textit{ mod } K'$. These $h(x)$ values provide $N$ many random variables which are pairwise independent, and each is uniformly distributed in $\{0, \dots, K' \}$. Then, we can say relation number $x$ is active iff $h(x)\leq K' \cdot 2^{-i}$. This gives our desired active/inactive random variable modulo a negligible additive $1/n^{10}$ error in the probability.} so that the two endpoints of any relation know whether the relation is active or not. The sketch is then defined as follows: for each relation $r$, $\sketch(r)$ is made of $O(\log^2 n)$ parts, each of length $C\log n$: each part is set either equal to $RID(r)$, if relation $r$ is active in the corresponding repetition of that density level, and is otherwise set equal to a $C\log n$-bit string of all zeros. Notice that overall, we have used $O(\log^2 n)$ bits of shared randomness to generate the random identifiers, and $O(\log^3 n)$ bits of shared randomness to determine the active/inactive relations, where we used $O(\log n)$ bits for each of the $O(\log n)$ repetitions of each of the $O(\log n)$ density levels.

\paragraph{Count Approximation Given a Linear Mixture of Sketches.} Given $\oplus_{r \in R} RID(r)$, to output an approximation of $k=|R|$, we compute $i^*$ which is the largest density level for which at least a $1/4$  fraction of the repetitions in the density level $i$ turned out to be non-zero. We then output $2^{i^*}$ as our approximation.

\paragraph{Approximation Analysis.} Given $\oplus_{r \in R} \sketch(r)$, we next argue that the output $\tilde{k}= 2^{i^*}$ will be in $[k, 8k]$, with high probability. 

Let $i$ be $\lceil\log k\rceil$ and $z=\frac{k}{2^{\lceil\log k\rceil}}$. Notice that $z\in (1/2,1]$. In the level $i$ of sketches, in each repetition, the probability that at least one relation is active is at least $k 2^{-i} - \binom{k}{2} 2^{-2i} \geq \frac{k}{2^{\lceil\log k\rceil}} - \frac{k^2}{2 \cdot 2^{2\lceil\log k\rceil} } = z - z^2/2 > 3/8,$ by Bonferroni's inequality (i.e., truncated inclusion-exlusion)\cite{bonferroni1936teoria}. Notice that in this analysis, we used only pairwise independence among the random choices of different  relations being active or inactive. If we have at least one relation that is active, we have the bit-wise XOR of a non-empty set of relations. This XOR is non-zero with probability at least $1/n^{10}$. We have $O(\log n)$ identical repetitions in this density level and and thus, by Chernoff bound, with high probability, in at least $1/4$ of them, we see a non-zero XOR. This means, with high probability, we will have $i^*\geq \frac{k}{2^{\lceil\log k\rceil}}$, and thus the output value $2^{i^*}$ is at least $2^{i^*}\geq k$.

Now, let $j$ be a density level such that $j \geq \log k +3$. By union bound, the probability of any relation being active is at most $k2^{-j} \leq 1/8$. When no relation is active, the resulting XOR is all zeros. Hence, by Chernoff bound, in the course of the $O(\log n)$ identical repetitions, with high probability, we have strictly less than $1/4$ non-zero XORs. This means, with high probability, we will have $i^* \leq \log k +3$ and thus the output value $2^{i^*}$ is at most $2^{\log{k}+3} =  8k$.
\end{proof}

\remove{
\begin{lemma} \label{lemma_sketches}
Assume that for any tree edge $t$, and for any min 2-respecting cut of the form $\{t,t'\}$ there is a specific vertex in the fragment of $t$ that knows a unique $O(\log n)$-bit name of the cut, such that the same name is known also to one specific vertex in the fragment of $t'$. Then, in $\tilde{O}(D+\sqrt{n})$ time, all tree edges $t$ can learn a value $\sketch(t)$ with the following properties.
\begin{enumerate}
\item $\sketch(t)$ has size $\Theta(\log^3 n)$ bits.
\item For any tree path $P$, given $\oplus_{t \in P} \sketch(t)$, we can compute a value $\tilde{k}\in [k, 8k]$, where $k$ denotes the number of min 2-respecting cuts with exactly one edge in $P$, with high probability.
\end{enumerate} 
\end{lemma}
}

We next use Lemma \ref{lem:Sketches} to prove Lemma \ref{lemma_sketches}.

\LemSketch*

\begin{proof} For each two tree edges $t$ and $t'$ such that $\{t,t'\}$ is a min 2-respecting cut, we think of this as a virtual relation $r_{t, t'}$ between $t$ and $t'$, in the context of  \Cref{lem:Sketches}. We assume that this relation, and in particular its unique  $O(\log n)$-bit name, is known to one specific vertex $v$ in the fragment of edge $t$ and one specific  vertex $v'$ in the fragment of edge $t'$.  We share $O(\log^3 n)$ bits of randomness among all vertices of the graph, in $O(D+\log^2 n)$ rounds. Then, we apply \Cref{lem:Sketches} so that each vertex that is an endpoint of a relation generates a sketch of it. In particular, vertex $v$ will know $\sketch(r_{t, t'})$, which is a $\Theta(\log^3 n)$-bit message. 

For each tree edge $t$, let us define $\sketch(t)= \oplus_{r \in R_t} \sketch(r)$, where $R_t$ denotes the set of all relations $r_{t,t'}$ between edge $t$ and edges $t'$ in a min 2-respecting cut with $t$. We use $\tilde{O}(D+\sqrt{n})$ rounds of communication to compute $\sketch(t)$ for all tree edges $t$. In particular, $\sketch(t)$ will be known to all vertices of the fragment that contains $t$. For that, we use a separate aggregate computation for each tree edge $t$, where each vertex $v$ of the fragment starts with the XOR of all the relations $r_{t,t'}$ that it knows between tree edge $t$ and other tree edges. Since the fragment has $O(\sqrt{n})$ edges, these are $O(\sqrt{n})$ separate aggregate computations in each fragment. Using standard pipelining in each fragment---see, e.g., \cite{peleg2000distributed}, we can perform all these aggregate computations in parallel in $\tilde{O}(\sqrt{n})$ rounds. At the end, for each tree edge $t$, the two endpoints of $t$ (and in fact all vertices of the fragment of $t$) know $\sketch(t)= \oplus_{r \in R_t} \sketch(r)$.

Finally, consider an arbitrary tree path $P$, and suppose that we are given $\oplus_{t \in P} \sketch(t)$. Notice that relations where zero or two of the endpoints of the relation are in path $P$ do not contribute to  $\oplus_{t \in P} \sketch(t)$; in particular, relations where both edges are in $P$ are added twice to the XOR and are thus canceled out. Hence, $\oplus_{t \in P} \sketch(t)$ is simply equal to $\oplus_{r \in R} \sketch(r)$ where $R$ is the set of virtual relations (i.e., min 2-respecting cuts) where exactly one endpoint of the relation (correspondingly, exactly one of the two edges defining the min 2-respecting cut) is in the path $P$. By \Cref{lem:Sketches}, since we know $\oplus_{r \in R} \sketch(r)$, we get an approximation $\tilde{k}\in [k, 8k]$, where $k=|R|$. That is, $k$ is equal to the number of min 2-respecting cuts with exactly one edge in $P$.
\end{proof}

\subsection{Learning the Cut Names} \label{sec:cut_names}

%As explained above, if we manage to construct such sketches, all non-tree edges get a constant approximation of their cost-effectiveness by computing  $\oplus_{t \in P_e} \sketch(t)$. 
To use Lemma \ref{lemma_sketches}, we next show that for any min 2-respecting cut $\{t,t'\}$, there are indeed specific vertices in the fragments of $t$ and $t'$ that know a unique name of the cut.

\begin{lemma} \label{lem:names}
There is an $\tilde{O}(D+k\sqrt{n})$-round algorithm for learning the names of the cuts.
Concretely, after the algorithm, for any tree edge $t$, and for any min 2-respecting cut of the form $\{t,t'\}$ there is a specific vertex in the fragment of $t$ that knows a unique $O(\log n)$-bit name of the cut, such that the same name is known also to one specific vertex in the fragment of $t'$.
\end{lemma}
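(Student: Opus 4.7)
The plan is to handle each min 2-respecting cut $\{t,t'\}$ differently based on whether $F_t$ and $F_{t'}$ are joined by an inter-fragment edge, and in each case give the cut a canonical $O(\log n)$-bit name that both designated vertices can compute independently.

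First I do preprocessing: within each fragment $F$, every vertex learns $\{(t,\cov(t),\cutInfo(t)): t\in F\}$ together with the summary $(F,c_F,n_F)$ via pipelined intra-fragment broadcast, taking $\tilde{O}(k\sqrt{n})$ rounds in parallel across fragments because each fragment has $O(\sqrt{n})$ tree edges and each record is $O(k\log n)$ bits. I also disseminate the $O(\sqrt{n})$ summaries $\{(F,c_F,n_F)\}$ over the BFS tree in $\tilde{O}(D+\sqrt{n})$ rounds using Claim \ref{claim_broadcast}.

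For the adjacent case, every inter-fragment edge $\{v,v'\}$ with $v\in F,\ v'\in F'$ exchanges the complete cut records of the two fragments; since each vertex can use its incident edges in parallel in each round and the per-edge payload is $\tilde{O}(k\sqrt{n})$ bits, all such exchanges finish in $\tilde{O}(k\sqrt{n})$ rounds. By Claim \ref{claimFrag}, the endpoints $v,v'$ jointly enumerate every min 2-respecting cut with one edge in $F$ and one in $F'$. For each unordered pair $(F,F')$ possessing at least one inter-fragment edge, I fix a canonical such edge $e_{F,F'}$ (e.g.\ the one with the lexicographically smallest endpoint LCA labels), declare its two endpoints to be the designated vertices in $F$ and $F'$ for every cut between those fragments, and name each such cut by the lexicographically ordered pair $\langle \mathrm{LCA}(t),\mathrm{LCA}(t')\rangle$, which both endpoints compute from the received records.

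The hard case is the non-adjacent one, and I rely on the structural observation from Step 3 of Lemma \ref{lem_nh_frag_info} and the no-internal-edge branch of Lemma \ref{lem:cut_info_highway}: whenever $F_t$ and $F_{t'}$ share no edge, the partner set of $t$ inside $F'$ is either empty or equals precisely the $n_{F'}$ highway edges of $F'$ with cover value $c_{F'}$, indexed canonically $1,\dots,n_{F'}$ along the highway, and non-emptiness is decided by the scalar relation $k-\cov(t)+2\cov(t,F')=c_{F'}$ which $t$ can check locally from data it already owns. I name the $i$-th such cut $\langle \mathrm{LCA}(t), F', i\rangle$; any vertex that knows $t$ and $F'$ (e.g.\ an endpoint of $t$ after preprocessing) can enumerate all $n_{F'}$ names. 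The remaining obstacle is to make the $i$-th designated highway edge inside $F'$ learn $\mathrm{LCA}(t)$; I plan to pipeline over the global BFS tree, for each non-empty no-edge pair $(t,F')$, the short announcement $(\mathrm{LCA}(t),F')$ and then, via an intra-fragment scan of $F'$, attach each received $\mathrm{LCA}(t)$ to the correctly indexed highway edges. The main technical obstacle is bounding the number of such announcements by $\tilde{O}(k\sqrt{n})$; this should follow by charging each non-empty pair to one of the $O(k)$ segments of $\cutInfo(t)$ and observing that a segment $S^t_i$ only hosts no-edge partner fragments whose minimum highway cover value matches the specific target $\cov(t)+k-2i$ dictated by the Succinct Representation Lemma.
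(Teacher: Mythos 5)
Your preprocessing and your adjacent-fragment case match the paper's Case~1 (the paper restricts to \emph{internal} edges so that the canonical edges for different fragment pairs are vertex-disjoint, but this is a technicality). The non-adjacent case, however, has two genuine gaps.

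First, your structural claim --- that when $F_t$ and $F_{t'}$ share no edge the partner set of $t$ inside $F'$ is either empty or exactly the $n_{F'}$ highway edges of minimum cover value $c_{F'}$, decidable by the scalar test $k-\cov(t)+2\cov(t,F')=c_{F'}$ --- is only valid when $t$ is a \emph{non-highway} edge (the paper's Case~3). When both $t$ and $t'$ are highway edges, the quantity $\cov(t,t')$ is \emph{not} constant over $t'\in h_{F'}$: it decomposes as (edges covering both whole highways) $+$ (edges with an internal endpoint in $F$ covering all of $h_{F'}$, contributing $\cov(t,F')$) $+$ (edges with an internal endpoint in $F'$ covering all of $h_{F}$, contributing $\cov(t',F)$), and the last term varies with $t'$. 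Consequently the partner set is the set $B$ of edges minimizing $\cov(t')-2\cov(t',F)$, which in general differs from the set of edges with $\cov(t')=c_{F'}$; symmetrically, only a subset $A\subseteq h_F$ of the $t$'s participate. Moreover the first (type-1) term is never computed --- doing so for all fragment pairs would cost linear time --- so no local scalar test can decide non-emptiness; the paper instead decides it via the precomputed $\fragInfo_h(t,F')$. Your proposal as written would assign names to the wrong set of cuts in this case.

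Second, your communication plan for the non-adjacent case does not meet the round budget. Broadcasting an announcement $(\mathrm{LCA}(t),F')$ for every non-empty no-edge pair over the global BFS tree requires one message per participating tree edge $t$; even granting your charging to the $O(k)$ segments of $\cutInfo(t)$, that is $O(k)$ per tree edge and hence $O(kn)$ announcements overall, i.e.\ $O(D+kn)$ rounds --- and in the highway--highway case a single fragment pair can already contribute $\Theta(\sqrt{n})$ participating edges on each side. The paper sidesteps this by never shipping individual edge identities: cuts are named \emph{positionally} as $(F,F',i,j)$, where $i,j$ are the ranks of $t$ and $t'$ within the canonically ordered sets $A$ and $B$; each side learns only $|B|$ (respectively $|A|$) via $\fragInfo_h$ or via a broadcast of $O(\sqrt{n})$ fragment-pair-level counts $(F,F',n_{F,F'})$, whose sparsity follows from the unique-path lemma (each highway has non-highway partners in at most two fragments). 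You need both the correct characterization of $A$ and $B$ and this ordinal naming device to close the argument.
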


\begin{proof}
Let $\{t,t'\}$ be a min 2-respecting cut, where $F,F'$ are the fragments of $t,t'$, respectively. The proof has several cases, depending if there is an internal edge between the fragments of $t,t'$. At a high-level, if there is an internal edge between $F$ and $F'$, the endpoints of the edge can use the cut information and cover values of edges in the fragments to learn about all the relevant cuts. The more challenging case is that there are no internal edges between the fragments. Here we can exploit the special structure of min 2-respecting cuts in this case. At a high-level, we show that there are two sets of edges, $A,B$ in $F,F'$, respectively, such that the min 2-respecting cuts $\{t,t'\}$ are exactly those where $t \in A, t' \in B$, and we show that the edges $t,t'$ can learn the size of the sets $A,B$. Here although there is no vertex in the fragments of $t,t'$ that knows the original name of the cut $\{t,t'\}$, we can use the information known to create a unique name for each such cut, known both to $t$ and $t'$.

\paragraph{Case 1: There is an internal edge between $F$ and $F'$.} Here we can use the cut information and Claim \ref{claimFrag} to prove the lemma. Note that using aggregate computation in each one of the fragments $F$ and $F'$, they can check if there is an internal edge between the fragments, and can also find the first such edge, sorting the edges according to the ids of vertices. Doing so for all pairs of fragments takes $O(\sqrt{n})$ time using pipelining. Let $e=\{v,v'\}$ be the edge found between $F$ and $F'$. Then, $v$ and $v'$ are going to be the specific vertices in $F$ and $F'$ that know about all cuts of the form $\{t,t'\}$ where $t \in F, t' \in F'$. Since $e$ has endpoints in both fragments, the vertex $v$ can learn $\{(t,\cutInfo(t),\cov(t))\}_{t \in F}$ and $v'$ can learn $\{(t',\cutInfo(t'),\cov(t'))\}_{t' \in F'}$. This takes $O(k\sqrt{n})$ time by just broadcasting the cut information and cover values of all edges in the fragment to the whole fragment. Sending the whole information over the edge $e$ takes $O(k\sqrt{n})$ time (note that since $e$ is an internal edge it connects exactly two fragments, hence we can compute this in parallel for edges that connect different pairs of fragments). Now from this information and from Claim \ref{claimFrag}, $v$ and $v'$ can learn about all min 2-respecting cuts $\{t,t'\}$ where $t \in F, t' \in F'.$ In this case, $\{t,t'\}$ is the unique name of the cut.

\paragraph{Case 2: There is no internal edge between $F$ and $F'$, and both $t,t'$ are highway edges.} Here our goal would be to exploit the very specific structure of min 2-respecting cuts in this case. We follow the approach in \cite{DBLP:journals/corr/abs-2004-09129} (see Lemma 6.18). 
Let $h_{F},h_{F'}$ be the highways of the fragments $F,F'$, respectively.  
From Claim \ref{covClaim}, $\cut(t,t')=\cov(t)+\cov(t')-2\cov(t,t')$. We observe that in the special case that there are no internal edges between the fragments, the value $\cov(t,t')$ can be broken to a part that only depends on $t$, and a part that only depends on $t'$. Specifically, the edges that cover both $t$ and $t'$ can be divided to 3 parts:
\begin{enumerate}
\item Edges $e=\{u,v\}$ where both $u$ and $v$ are not internal vertices in $F \cup F'$, that cover the whole highways $h_F$ and $h_{F'}$.
\item Edges that have an internal endpoint in $F$ and an endpoint that is not an internal vertex in $F \cup F'$, that cover the whole highway $h_{F'}$.
\item Edges that have an internal endpoint in $F'$ and an endpoint that is not an internal vertex in $F \cup F'$, that cover the whole highway $h_{F}$.
\end{enumerate}
From Claim \ref{globalCover} these are indeed all the options. Concretely, any edge that covers $t,t$ and that both its endpoints are not internal vertices in $F \cup F'$ covers the complete highways $h_F,h_{F'}$. Similarly, any edge that covers $t'$ and both its endpoints are not internal vertices in $F'$ covers the whole highway $h_{F'}$.
Note that the edges from the first type cover any pair of edges $t \in h_F,t' \in h_{F'}$, and hence they do not have any affect on minimizing the expression $\cut(t,t')$. To deal with edges of types 2 and 3, we denote by $\cov(t,F')$ the sum of costs of edges from the second type that cover $t$ (they do not depend on the specific choice of $t'$ as they cover the whole highway $h_{F'}$). Similarly we denote by $\cov(t',F)$ the sum of costs of edges from the third type that cover $t'$. Note that $\cov(t,F')$ and $\cov(t',F)$ can be computed in $O(\sqrt{n})$ time. For example, for computing $\cov(t,F')$ since all edges of type 2 have an endpoint in the fragment $F$, we can run an aggregate computation in the fragment to compute those for all $t$ and for all fragment $F'$, the whole computation takes $O(\sqrt{n})$ time using pipelining. For full details see Claim 6.17 in \cite{DBLP:journals/corr/abs-2004-09129}. 
Now minimizing the expression $\cut(t,t')$ boils down to minimizing both expressions: $(\cov(t)-2\cov(t,F'))$ and $(\cov(t')-2\cov(t',F))$. The first one only depends on $t$, and the second only depends on $t'$. We can use a simple scan of each one of the fragments to minimize the two expressions locally in each one of the fragments. Note that any edge $t \in h_F$ already knows the value $\cov(t)-2\cov(t,F')$, so we just need to find the minimum of all values computed. We denote by $A$ the set of edges in $h_{F}$ that minimize $(\cov(t)-2\cov(t,F'))$, and denote by $B$ the set of edges in $h_{F'}$ that minimize $(\cov(t')-2\cov(t',F))$. From the discussion above, for any pair of edges $t \in A, t' \in B$ the value $\cut(t,t')$ is identical and minimal.  Hence, all the edges in $A$ are candidates to be in a min 2-respecting cut with the edges in $B$. However, knowing that an edge is in $A$ is not sufficient to decide if it is indeed part of a min 2-respecting cut, as the value $\cut(t,t')$ depends also on the information $(\cov(t')-2\cov(t',F))$ known only to edges in $F'$, and on the sum of costs of edges from type 1, that is not computed (computing it for all pairs of fragments would take a linear time). 

To decide if there are indeed min 2-respecting cuts with edges in both fragments, we use $\fragInfo_h(t,F')$. Recall that each tree edge $t$ already computed the number of min 2-respecting cuts with the second edge in $h_{F'}$ (see Lemma \ref{cut_info_pt_highway}). 
If there is at least one edge $t \in h_F$ with $\fragInfo_h(t,F')>0$, we know that there are such min 2-respecting cuts, and moreover we know that any pair of edges $t \in A, t' \in B$ defines a min 2-respecting cut because the value $\cut(t,t')$ is identical for all these pairs. Moreover, an edge $t \in A$ knows $|B|$, as this is just equal to $\fragInfo_h(t,F')$. Similarly, an edge $t' \in B$ knows $|A|$. While this does not provide the original names $\{t,t'\}$ of the min 2-respecting cuts, this is enough to provide unique names for them, which is enough for our purposes. We denote by $a_1,a_2,...,a_{|A|}$ the edges in $A$ ordered according to their position in $h_F$ from the lowest to the highest, and by $b_1,b_2,...,b_{|B|}$ the edges in $B$, from the lowest to the highest. We denote by $(h,F,F',i,j)$ the name of the min 2-respecting cut $\{a_i,b_j\}$, where the order between $F$ and $F'$ is determined by the ids of the fragments (having the fragment with lower id first). This name is different for different cuts. To complete the proof, we need to show that there is a vertex in each one of the fragments that knows the names of the cuts. For each edge $a_i$ we show that its lower endpoint can be this special vertex, the same can be shown for the edges $b_j$. To do so, we run an aggregate computation in which each vertex $u \in h_F$ learns how many edges in $A$ are below it in $h_F$. For the lower endpoint of $a_i \in A$ this number is just $i-1$, and since it knows that $a_i \in A$ and the size of $B$, it knows that it is the $i$'th edge in $A$ and can deduce all the names $(h,F,F',i,j)$ of cuts $\{a_i,b_j\}$. As we only run one aggregate computation inside $F$ per fragment $F'$, we can run the computation for all pairs of fragments in $O(\sqrt{n})$ time. This completes the proof of this case.

\paragraph{Case 3: There is no internal edge between $F$ and $F'$, and at least one of $t,t'$ is a non-highway edge.}
First, if both $t,t'$ are non-highway edges, then there is an internal edge between $F$ and $F'$ from Claim \ref{subtreeEdge}. Concretely, there is an edge connecting the subtrees $T_t$ and $T_{t'}$ below $t$ and $t'$, that include only internal vertices in $F$ and $F'$ in the case $t$ and $t'$ are non-highway edges.
Hence, we need to take care only of the case that exactly one of $t,t'$ is a non-highway edge. Assume without loss of generality that $t \in F$ is a non-highway edge, where $t' \in F'$ is a highway edge in $h_{F'}$.

Let $(c_{F'},n_{F'})$ be the minimum cover value in $h_{F'}$ and the number of edges in $h_{F'}$ with cover value $c_{F'}$. This information can be globally known for all $F'$ in $O(D+\sqrt{n})$ time. We show that either these edges are exactly all the edges $t' \in h_{F'}$ where $\cut(t,t')=k$ or there are no edges in $h_{F'}$ where $\cut(t,t')=k$, and that $t$ can distinguish between the two cases. 
First, note that since $t$ is a non-highway edge, any edge that covers it has an endpoint in the subtree below $t$ that has only internal vertices in $t$'s fragment. Since there are no internal edges between $F$ and $F'$, all the edges that cover $t$ and $t' \in h_{F'}$ are global edges that cover the whole highway $h_{F'}$ from Claim \ref{globalCover}. It follows that $\cov(t,t')$ is fixed for $t' \in h_{F'}$, and that $t$ can compute this value by aggregate computation in its fragment, summing the cost of all the edges with endpoint in the subtree of $t$ that cover the whole highway $h_{F'}$. Doing so for all fragments takes $O(\sqrt{n})$ time via pipelining. Now, $\cut(t,t')=\cov(t)+\cov(t')-2\cov(t,t')$. As $\cov(t,t')$ is fixed for edges in $h_{F'}$, the value $\cut(t,t')$ is minimized for $t' \in h_{F'}$ with minimum cover value. As $t$ knows $\cov(t),\cov(t,t')$ and the minimum cover value $\cov(t')$ for $t' \in h_{F'}$, it can compute the minimum cut value $\cut(t,t')$ for $t' \in h_{F'}$. If it equals $k$, it follows that any edge with cover value $c_{F'}$ is in a min 2-respecting cut with $t$, and otherwise $\cut(t,t')>k$ for any edge $t' \in h_{F'}$. This gives $t$ sufficient information about min 2-respecting cuts $\{t,t'\}$ where $t' \in h_{F}$. The main challenge is to show that $t'$ can also learn about these cuts.

We suggest the following algorithm. For any pair of fragments $F$ and $F'$ where there are no internal edges between the fragments, we count the number $n_{F,F'}$ of non-highway edges in $F$ that participate in a min 2-respecting cut with an edge in $h_{F'}$. If this number is greater than 0 we broadcast the tuple $(F,F',n_{F,F'})$ to the whole graph. Note that since non-highway edges already know if they participate in a min 2-respecting cut with an edge in $h_{F'}$, we can compute the values $n_{F,F'}$ for any $F'$ locally in $F$ in $O(\sqrt{n})$ time via pipelining. At a first glance, it is not clear why broadcasting all relevant values $n_{F,F'}$ is efficient, as there are $\Omega(n)$ pairs of fragments $F,F'$. We next show that only in $O(\sqrt{n})$ cases the values computed are non-zero, so we just need to broadcast $O(\sqrt{n})$ values. For this, we use the following observations:
\begin{enumerate}
\item For each tree edge $t'$, there is a unique path $P_{t'}$ where all the edges $t$, where $\cut(t,t')=k$ are in $P_{t'}$ (Lemma \ref{unique_path_lemma}). If the endpoints of $P_{t'}$ are $u$ and $v$, then $P_{t'}$ is composed of parts in the fragments of $u$ and $v$ and complete highways of other fragments. In particular, $P_{t'}$ may contain non-highway paths in just 2 fragments.
\item As shown above, if there are no internal edges between $F$ and $F'$, then for any non-highway edge $t \in F$, either all the edges in $h_{F'}$ with cover value $c_{F'}$ are in a min 2-respecting cut with $t$, or none of the edges in $h_{F'}$ is in such cut.
\end{enumerate}

From the following observations, if there is an edge $t' \in h_{F'}$ where $\cut(t,t')=k$ for a non-highway edge $t \in F$ from this case, there are only 2 possible options $F_1,F_2$ for the identity of the fragment $F$. Moreover, it cannot be that there is another edge $t'' \in h_{F'}$ where $\cut(t,t'')=k$ for a non-highway edge $t$ from a different fragment $F_3 \neq F_1,F_2$  where there are no internal edges between $F'$ and $F_3$, as otherwise it follows from the second observation that also $\cut(t,t')=k$, but for $t'$ the only relevant fragments could be $F_1$ and $F_2$ from the first observation. To conclude, there can only be two fragments $F_1,F_2$ not connected by an internal edge to $F'$ where there are non-highway edges in a min 2-respecting cut with an edge in $h_{F'}$. It follows that we only broadcast $O(\sqrt{n})$ values, as needed.
Note that from the observations above, the non-highway edges in $F_1$ and $F_2$ that are in a min 2-respecting cut with highway edges in $F$ are just on one non-highway path in each of the fragments, we use it to order them.

To complete the proof, we show that from the observations above, both $t$ and $t'$ learn about a unique name of the cut $\{t,t'\}$. Let $A$ be the set of non-highway edges in $F$ in a cut with an edge in $h_{F'}$, and let $B$ be the set of highway edges in $h_{F'}$ with a minimum cover value. Denote by $a_1,a_2,...,a_{|A|}$ the edges in $A$ from the lowest to the highest (from the discussion above they are all in one path). Denote by $b_1,...,b_{|B|}$ the edges of $B$ from the lowest to the highest. Each edge in $A$ or $B$ can learn its name in the sequence, by computing the number of non-highway edges below them in $A$ or $B$, respectively (each edge already knows if it is in one of the sets). We denote by $(n,F,F',i,j)$ the name of the cut $\{a_i,b_j\}$, where the fragment $F$ with non-highway edges in the cut appears first. These names are unique for different cuts, and all tree edges in $A$ and $B$ can compute the names of the cuts they participate in. This follows as edges in $A$ know their number $i$ in the sequence and the size of $B$, and edges in $B$ know their number $j$ in the sequence and the size of $A$ (they learned it from the broadcast). This completes the proof. 
\end{proof}

\subsection{Putting Everything Together} \label{sec:puttingEverything}

We next prove the main results of the paper.

\CostEfThm*

\begin{proof}
From Lemma \ref{lem:reduction}, it is enough to explain how each edge $e \not \in H$ learns $O(1)$-approximation for the number of minimum 2-respecting cuts $e$ covers with respect to a spanning tree $T$ of $H$.
The number of min 1-respecting cuts can be computed exactly in $\tilde{O}(D+\sqrt{n})$ time as explained at the beginning of Section \ref{sec:cost_ef}. The number of min 2-respecting cuts with exactly 2 edges can be approximated using Lemmas \ref{lemma_sketches} and \ref{lem:names}. More concretely, they allow the computation of the sketches $\sketch(t)$. If each edge $e$ learns $\oplus_{t \in P_e} \sketch(t)$ this gives $O(1)$-approximation to the number of min 2-respecting cuts covered by $e$ from Lemma \ref{lemma_sketches} and Claim \ref{claim_cut_cover}. Computing the values $\oplus_{t \in P_e} \sketch(t)$ for all edges $e$ can be done in $\tilde{O}(D+\sqrt{n})$ time using Claim \ref{aggregate_Pe}.
All the computations in the algorithm take at most $\tilde{O}(D+k\sqrt{n})$ time and work with high probability as discussed throughout.
This gives $O(1)$-approximation to the number of min 2-respecting cuts covered by each edge $e$, and from Lemma \ref{lem:reduction} also an $O(1)$-approximation to the number $|C_e|$ of minimum cuts of size $k$ in $H$ covered by $e$. In particular, this gives an $O(1)$-approximation to the cost-effectiveness of edges $\rho(e) = \frac{|C_e|}{w(e)}$.
\end{proof}

As an immediate corollary of Theorems \ref{thm_greedy_augi} and \ref{thm_cost_ef} we get the following.

\mainThm*

\remove{
\begin{theorem}
There is a distributed $O(\log{k}\log{n})$-approximation algorithm for the minimum cost $k$-edge-connected spanning subgraph problem that takes $\tilde{O}(k(D+k\sqrt{n}))$ time in the \congest model. The algorithm works with high probability.
\end{theorem}
}

We next prove \cref{crl:bicriteria}.

\corBic*

\begin{proof}[Proof of \Cref{crl:bicriteria}]
Fix a value $\epsilon=\Theta(1/\log n)$. For $k=O(\log n/\epsilon^2)$, the result follows directly from \Cref{thm:main}. Suppose $k=\Omega(\log n/\epsilon^2)$. Let $\rho=C\log n/\epsilon^2$, where $C\geq 1$ is a large enough constant. Partition the edges of the graph randomly into $\frac{k}{\rho}$ parts, where each edge is placed in one of these parts randomly. In each part, we use \Cref{thm:main} to compute an $O(\log n\log\log n)$ approximation of $(\rho(1-\epsilon))$-connected spanning subgraph. Since $\rho=C\log n/\epsilon^2$, the algorithm for each part would run in $\tilde{O}(D+\sqrt{n})$ rounds, and since there are no more than $k$ algorithms, we can run them all in $\tilde{O}(k(D+\sqrt{n}))$ rounds. Overall, this is a graph with $\frac{k}{\rho} \cdot \rho(1-\epsilon)= k(1-\epsilon)$ connectivity. Let us now argue about the cost. Let $H$ be the minimum-cost $k$-edge-connected spanning subgraph. Then, for each part $i\in [k/\rho]$, the edges of $H$ in part $i$ form a $\rho(1-\epsilon)$-edge connected spanning subrgaph. This follows from Karger's random edge sampling result~\cite{karger1993global} which shows that if we sample edges randomly and independently such that the expected minimum-cut size is at least $C\log n/\epsilon^2$, then the deviation from expectation in the size of each cut is at most a $1\pm \epsilon$ factor. Hence, the algorithm that we had pays a cost within an $O(\log{n}\log{\rho})=O(\log n\log\log n)$ factor of the portion of $H$ in part $i$. This means, overall, our algorithm pays a cost within an $O(\log n\log\log n)$ factor of the cost of $H$.
\end{proof}

%\footnote{\label{note1} We note that this complexity can be improved to  $\tilde{O}(D+\sqrt{nk}))$ using a simple rebalancing of the parameters of underlying MST algorithm\cite{kutten1998fast}. This is not easy to elaborate succinctly as it requires recalling the algorithm of \cite{kutten1998fast}. We thus defer the formal description to a full version of this paper. For the reader familiar with that algorithm, we note the change is to simply define \textit{large} fragments to be those with size greater than $\sqrt{nk}$ (instead of size greater than $\sqrt{n}$, as standard). Then, small fragments communicate only inside small fragments, which are in edge disjoint parts. There are overall $\tilde{O}(k) \cdot \frac{n}{\sqrt{nk}}=\tilde{O}(\sqrt{nk})$ large fragments and their messages can be pipelined using the global BFS tree in $\tilde{O}(D+\sqrt{nk}))$ time.}

\section*{Acknowledgements}

This work was supported in part by funding from the European Research Council (ERC) under the European Union’s Horizon 2020 research and innovation programme (grant agreement No. 853109), and the Swiss National Foundation (project grant 200021-184735).

\bibliographystyle{alpha}
\bibliography{kECSS}

\appendix

\section{Lower Bounds} \label{sec:lowerBounds}

In this section, we show how to extend lower bounds from \cite{sarma2012distributed,censor2020fast,DBLP:conf/wdag/GhaffariK13}, to prove a lower bound for minimum $k$-ECSS.
First, it is easy to show an $\Omega(D)$ lower bound, even in the \local model where the size of messages is unbounded.

\begin{lemma} \label{lemma_LB_D}
Any $\alpha$-approximation algorithm for minimum $k$-ECSS requires $\Omega(D)$ rounds, even in the \local model.
\end{lemma}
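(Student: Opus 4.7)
My plan is a standard locality / indistinguishability argument in the \local model. I will produce two $k$-edge-connected input graphs $G_1$ and $G_2$ on the same vertex set, each of diameter $\Theta(D)$, that agree on all edge weights except for a single edge $e^*$ sitting inside a small gadget region $R$. The weights on $e^*$ in the two instances will be chosen so that, for any $\alpha$-approximate minimum $k$-ECSS solution, the inclusion decision for a specific edge $e'$ incident to a distinguished vertex $v$ at distance strictly greater than $D/2$ from $R$ must differ in $G_1$ versus $G_2$. Because $v$'s $o(D)$-radius view is identical in the two instances, any algorithm running in $o(D)$ rounds is forced to output the same set of adjacent edges at $v$ in both graphs, contradicting the $\alpha$-approximation guarantee on at least one of them.

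For the construction, a convenient template is as follows. Take a backbone of length $\Theta(D)$ between two anchors $s$ and $t$ that is $(k+1)$-edge-connected, providing slack so that one well-chosen edge can be dropped while preserving $k$-edge-connectivity. Place $v$ roughly in the middle of the backbone and let $e'$ be a backbone edge incident to $v$. Assign weights so that the minimum $k$-ECSS has two natural candidate forms: one that uses a cheap far-away shortcut $e^* = (s,t)$ located inside $R$ and drops $e'$, and one that retains $e'$ and discards $e^*$. Setting $w(e^*) = \varepsilon$ in $G_1$ and $w(e^*) = M\cdot w(e')$ in $G_2$, with $M$ large enough as a function of $\alpha$, forces any $\alpha$-approximation to flip the inclusion of $e'$ between the two instances. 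Short auxiliary "bead" subgraphs can be attached at internal backbone vertices so that $k$-edge-connectivity survives the removal of $e'$, without disturbing the local views at distance greater than $D/2$ from $R$.

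The main obstacle is engineering the backbone together with the bead gadgets so that three competing conditions hold simultaneously: (a) both $G_1$ and $G_2$ are strictly $k$-edge-connected (to meet the promise of the problem), (b) the two candidate $k$-ECSS solutions each achieve the optimum in the appropriate instance with a cost gap that survives the $\alpha$-approximation slack, and (c) the perturbation remains confined to a ball of radius $o(D)$ around one anchor so that $v$'s neighborhood in the \local model is oblivious to which of the two weight assignments is in effect. Once these are in place, the indistinguishability step is immediate: $v$ produces the same output in both instances, which is inconsistent with $\alpha$-approximation on one of them, and the $\Omega(D)$ lower bound follows.
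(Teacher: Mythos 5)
Your overall strategy---two weight assignments that differ only in a region $R$, together with an indistinguishability argument showing that a far-away vertex $v$ must make opposite inclusion decisions in the two instances---is exactly the approach the paper takes, and your $k=1$ instantiation (a cycle with the varied edge and the decision edge at antipodal positions) is essentially the paper's construction with the roles of the two special edges swapped. That part is fine.

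For general $k$, however, the gadget you describe would fail as stated. You propose a backbone that is ``$(k+1)$-edge-connected, providing slack so that one well-chosen edge can be dropped while preserving $k$-edge-connectivity.'' But if the backbone alone (without $e^*$) is already $(k+1)$-edge-connected, then $G \setminus \{e', e^*\}$ is still $k$-edge-connected, so the optimum in \emph{both} instances simply discards both $e'$ and $e^*$, and the inclusion decision at $v$ never needs to flip. What you actually need is the threshold property that $G \setminus \{e'\}$ and $G \setminus \{e^*\}$ are each $k$-edge-connected while $G \setminus \{e', e^*\}$ is not---i.e., there is a cut of size exactly $k+1$ containing both special positions and no smaller cut elsewhere. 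Moreover, for $k \geq 2$ a single-edge flip is delicate: a cut of the right size will generally contain $k$ edges near $v$ and $k$ edges in $R$, and the approximation guarantee only forces the solution to pick at least $k$ of these $2k$ edges, not to include or exclude one designated edge $e'$; so the indistinguishability must be argued for the whole set of special edges near $v$, not for $e'$ alone. The paper resolves both issues cleanly by arranging $n/k$ cliques of size $k$ in a cycle, joined by $k$-edge matchings with two expensive/varied matchings at distance $\Omega(D)$: every relevant cut crosses exactly two matchings, so the solution must take at least $k$ edges from the union of the two special matchings, and the weights force it to take essentially all of one matching or all of the other depending on the far-away input. You would need to supply a gadget with this kind of explicit structure before your sketch becomes a proof.
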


\begin{proof}
To illustrate the proof we first focus on the case that $k=1$ (we just want to find a minimum spanning tree). Look at a graph that is a cycle with $n$ vertices, all edges of the cycle have weight 0, except of two edges $e,e'$ of distance $n/2 -1 = \Omega(D)$ from each other. The edge $e$ has weight $\alpha + 1$, where the edge $e'$ can have weight either $1$ or $\alpha(\alpha+1) + 1$. Note that in the first case the edge $e'$ is a part of the MST and $e$ is not, and adding $e$ would increase the cost of the MST by more than $\alpha$ factor. On the other hand, in the second graph $e$ is in the MST and $e'$ is not. Here, if we do not take $e$ we either do not get a spanning tree, or have a spanning tree with cost more than $\alpha$ times the optimal if we take $e'$. However, in less than $n/2-1$ rounds, $e$ cannot distinguish between the two cases, which means that in both cases we either take $e$ or do not take $e$ to the solution, hence in at least one of the graphs we do not get $\alpha$-approximation for the MST.

To extend this proof idea to general $k$, we replace the vertices of the cycle by cliques of $k$ vertices. Here we have $n/k$ cliques of size $k$. In each clique all vertices are connected by zero weight edges. We have a cycle structure over the $n/k$ cliques, where two adjacent cliques are connected by a matching of $k$ edges, where the $k$ vertices of one clique are connected to the $k$ vertices of the second clique. All the matching edges have weight 0, except of the edges of two matchings that are at distance $\Omega(n/k)=\Omega(D)$ from each other. In one of these matchings the cost of all edges is $\alpha k + 1$, where in the second matching the cost of all edges is either 1 or $(\alpha k + 1) \alpha k + 1$. Again, it is easy to check that in the second graph we need to take all the edges of the first matching to get an $\alpha$-approximate minimum $k$-ECSS, where in the first graph we should avoid taking these edges, but in less than $\Omega(D)$ rounds, the vertices adjacent to these edges cannot distinguish between the two cases.
\end{proof}

We next show stronger lower bounds in the \congest model. We start by proving an $\tilde{\Omega}(\sqrt{n})$ lower bound for graphs with parallel edges, and then adapt the construction to simple graphs.

\subsection{Construction with Parallel Edges}

We follow a graph construction from \cite{DBLP:journals/corr/abs-1305-5520}, that is a variant of the constructions used in \cite{sarma2012distributed,censor2020fast}.
The basic idea is to reduce our problem to a problem in communication complexity, and use lower bounds from communication complexity to obtain a lower bound. We focus on the set disjointness problem, in which two players, Alice and Bob, receive input strings $a=(a_1,...,a_{\ell}),b=(b_1,...,b_{\ell})$ of $\ell$ bits and their goal is to determine if their inputs are disjoint, i.e., to determine if there is an index $j$ such that $a_j=b_j=1$ or not. It is well-known that the communication complexity of set disjointness is $\Omega(\ell)$, meaning that Alice and Bob should exchange at least $\Omega(\ell)$ bits to solve the problem, even if they are allowed to use randomization.

\begin{lemma}[\cite{razborov1992distributional}]
The communication complexity of set disjointness is $\Omega(\ell)$, even using randomized protocols.
\end{lemma}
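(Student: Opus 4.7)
The plan is to establish this classical lower bound via the standard distributional-complexity framework, following the high-level strategy of Razborov. First, I would invoke Yao's minimax principle to reduce the randomized question to a distributional one: it suffices to exhibit a probability distribution $\mu$ on input pairs $(a,b) \in \{0,1\}^{\ell} \times \{0,1\}^{\ell}$ such that every deterministic protocol computing set-disjointness with error at most $1/3$ under $\mu$ must exchange $\Omega(\ell)$ bits. A convenient hard distribution is supported on balanced pairs with $|a|=|b|=\ell/4$, where with probability $3/4$ the pair is drawn uniformly from disjoint instances and with probability $1/4$ uniformly from instances with exactly one intersection; this forces the protocol to genuinely detect the rare intersection event.

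Next, I would use the fundamental structural property of deterministic communication protocols: any protocol exchanging $c$ bits partitions the input space into at most $2^{c}$ combinatorial rectangles $A \times B$, each labeled by a single output. Thus any low-communication protocol that is correct under $\mu$ induces a family of at most $2^{c}$ rectangles, most of which (by mass) must be labeled ``disjoint'' while collectively covering almost all of $\mu$. The core of the argument, which I expect to be the main obstacle, is the so-called corruption (or one-sided discrepancy) bound: any combinatorial rectangle $R = A \times B$ whose $\mu$-measure restricted to the disjoint part is at least $\alpha$ must contain $\mu$-measure at least $\Omega(\alpha) - 2^{-\Omega(\ell)}$ of intersecting inputs. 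Establishing this is the technical heart of Razborov's proof and requires a symmetrization argument on the ``disjointness matrix'' restricted to fixed Hamming-weight slices, analyzing how large a monochromatic ``disjoint'' rectangle can be under the uniform distribution on $(a,b)$ with $|a|=|b|=\ell/4$.

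Finally, I would combine the two bounds. If a correct protocol uses only $c$ bits, then some rectangle labeled ``disjoint'' has $\mu$-mass on the disjoint part at least $\Omega(1)/2^{c}$; the corruption bound then forces this rectangle to contain $\Omega(1)/2^{c} - 2^{-\Omega(\ell)}$ mass on intersecting inputs, which the protocol mislabels. Demanding overall error below $1/3$ therefore requires $2^{-c} \leq 2^{-\Omega(\ell)}$, yielding $c = \Omega(\ell)$, which completes the proof. As an alternative to Razborov's direct combinatorial proof of the corruption bound, one could instead follow the information-complexity route of Bar-Yossef, Jayram, Kumar and Sivakumar, reducing multi-coordinate disjointness to single-coordinate AND via a direct-sum theorem in information complexity; I would only fall back on this if the rectangle-corruption step proved too cumbersome to reproduce cleanly.
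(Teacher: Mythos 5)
This lemma is imported by citation: the paper gives no proof of its own and simply invokes Razborov's theorem as a black box, so there is no in-paper argument to compare against. Your outline correctly reproduces the standard route of the cited source --- Yao's minimax principle to pass to a hard distribution supported on weight-$\ell/4$ inputs with at most one intersection, the rectangle decomposition of a $c$-bit deterministic protocol, and the corruption bound showing that any rectangle carrying a non-negligible fraction of the disjoint mass must also carry a proportional (up to $2^{-\Omega(\ell)}$) fraction of intersecting mass. The one caveat is that your proposal names but does not carry out the corruption bound itself, which is essentially the entire technical content of the theorem; as written, the argument is a correct reduction of the lemma to that unproven rectangle lemma rather than a self-contained proof. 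For the purposes of this paper that is immaterial, since the result is used only as a cited lower bound in the reduction from set disjointness, but if you intended a full proof you would still need to supply Razborov's symmetrization argument (or the Bar-Yossef et al.\ information-complexity direct-sum argument you mention as a fallback).
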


To show a reduction, we define a family of graphs with the following structure. 
Given a parameter $x$, we define a graph $H$ that has $x$ paths of length $y=n/x$, and additional edges that reduce the diameter of the graph to $O(\log{n})$.
Formally, the graph has $n=xy$ vertices $v_{i,j}$ for $0 \leq i \leq x-1, 0 \leq j \leq y-1$. For each vertex we assign a unique id from $\{0,...,n-1\}$ where for $v_{i,j}$ the id is $i+xj$.
The graph has 3 sets of edges $E_{H,1}, E_{H,2}, E_{H,3}$, where 
$$E_{H,1} = \{\{v_{i,j}, v_{i,j+1}\} | 0 \leq i \leq x-1, 0 \leq j \leq y-2\},$$
$$E_{H,2} = \{ \{v_{0,j}, v_{0,j'} \} |\exists s \in  \mathbb{N} \ s.t. \ j \equiv 0 \ (mod\ 2^s) \ and \ j' = j+2^s \},$$
$$E_{H,3} = \{ \{v_{0,j},v_{i,j}\} | 1 \leq i \leq x-1, 0 \leq j \leq y-1\}.$$

The edges $E_{H,1}$ are the edges of the $x$ paths, the edges of $E_{H,2}$ reduce the diameter of the first path to $O(\log{n})$ (similarly to the role of the tree in the constructions in \cite{sarma2012distributed,censor2020fast}), where the edges of $E_{H,3}$ are star edges that connect the $j$'th vertex of the first path with the $j$'th vertices of other paths. See Figure \ref{LBpic} for illustration.

\setlength{\intextsep}{0pt}
\begin{figure}[h]
\centering
\setlength{\abovecaptionskip}{-4pt}
\setlength{\belowcaptionskip}{4pt}
\includegraphics[scale=0.6]{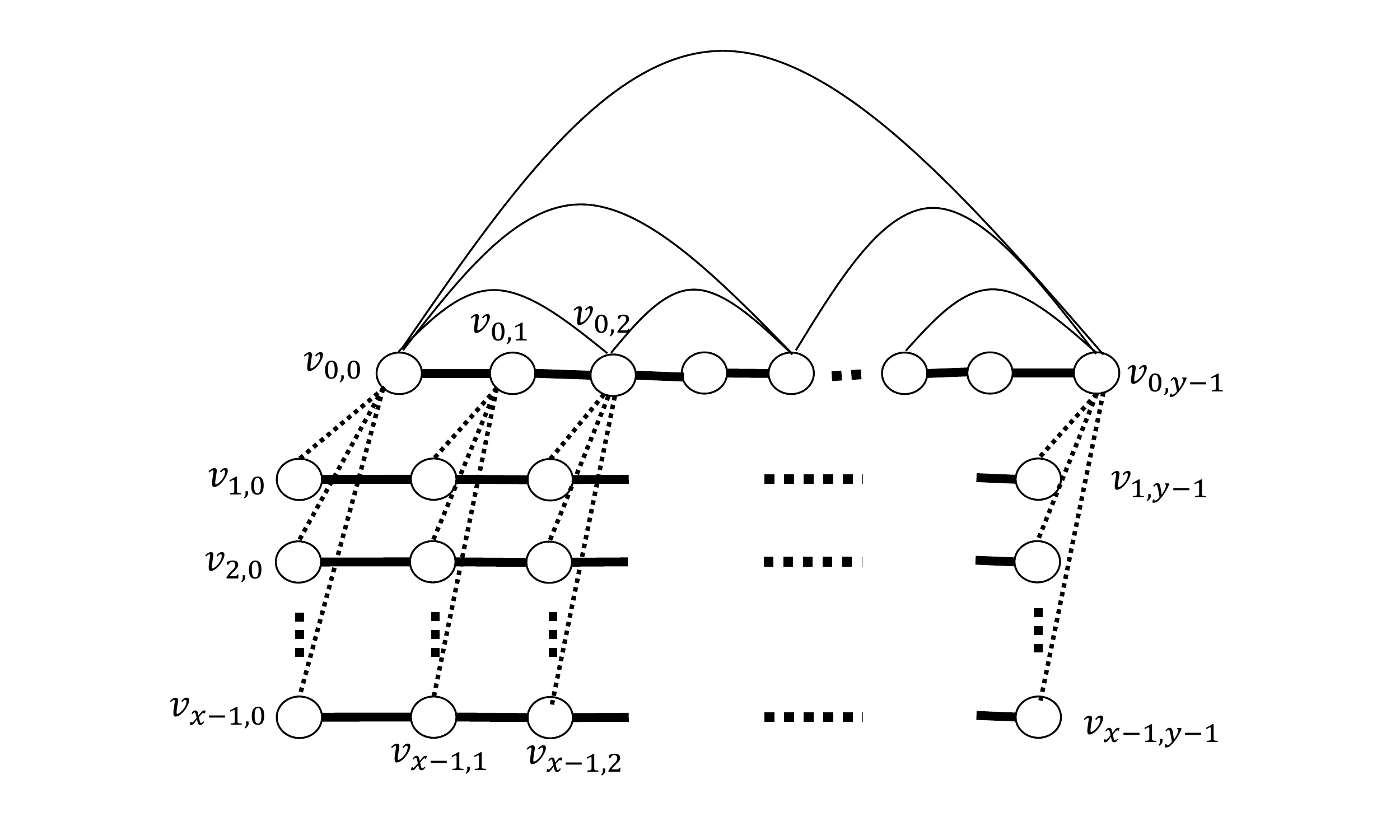}
 \caption{The graph $H$. The bold edges are the edges of the paths, the dashed edges are star edges, and the edges on the top are edges of $E_{H,2}$ that reduce the diameter of the graph.}
\label{LBpic}
\end{figure}

To use the above graph to show a lower bound, we first make sure that $H$ is $k$-edge-connected. Additionally, we define a family of graphs $H_{a,b}$ that are obtained from $H$ by adapting the weights of the edges in $H$ according to two input strings $a,b$ of length $x-1$.
For this, we first replace each one of the edges in  $E_{H,1}$ (the edges of the paths) by $k$ parallel edges. All these edges have weight 0. We leave the edges of $E_{H,2}$ as is, and give them weight 0 as well. We leave all the edges of $E_{H,3}$ as is and give them weight $\alpha x  k+ 1$, except of the edges of the first and last stars that depend on the inputs. The edges of the first and last star are determined by two inputs of $x-1$ bits, denoted by $a=(a_1,...,a_{x-1}), b=(b_1,...,b_{x-1})$, as follows. The edge $\{v_{0,0},v_{i,0}\}$ of the first star is replaced by $k$ parallel edges. These edges have weight 1 if $a_i=0$ and have weight $\alpha x k + 1$ if $a_i=1$. Similarly, each edge $\{v_{0,y-1},v_{i,y-1}\}$ of the last star is replaced by $k$ parallel edges with weight 1 or $\alpha x k + 1 $ depending if $b_i$ is equal to 0 or 1.
We start with the following observation. 

\begin{claim} \label{claim_disjoint}
The graph $H_{a,b}$ is $k$-edge-connected. The cost of the minimum cost $k$-ECSS of $H_{a,b}$ is at most $x k$ if $a$ and $b$ are disjoint, and is at least $\alpha x k + 1$ if they are not disjoint.
\end{claim}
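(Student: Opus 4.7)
The claim has three parts, which I would prove separately: (i) $H_{a,b}$ is $k$-edge-connected; (ii) if $a$ and $b$ are disjoint, there is a $k$-ECSS of cost at most $xk$; (iii) if $a$ and $b$ are not disjoint, every $k$-ECSS has cost at least $\alpha x k + 1$. The construction is engineered so that the $k$-fold replication of every path edge (in $E_{H,1}$) and of every first/last-star edge gives the graph its $k$-edge-connectivity, while the ``switchable'' weights of the first- and last-star edges encode the inputs $a$ and $b$.

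For (i), the plan is to verify that every cut of $H_{a,b}$ has size at least $k$. Every non-first-path vertex has $\geq 2k$ adjacent edges (two bundles of $k$ parallel path edges, plus at least one star edge), and every first-path vertex has even larger degree because of the stars to all the other arms. Hence all singleton cuts have size $\geq 2k \geq k$. For larger cuts I would group them by which arm-pieces they separate: any cut that splits a sub-path of the $i$-th arm away from the rest already cuts $\geq k$ parallel path edges; any cut that isolates a portion of the first path still cuts $\geq k$ parallel edges of the first path or of the star bundles at $j=0$ or $j=y-1$. Bookkeeping of these cases gives $k$-edge-connectivity. For (ii), I would exhibit a concrete $k$-ECSS $H'$: take all zero-weight edges of $E_{H,1}\cup E_{H,2}$, and for each $i\in\{1,\dots,x-1\}$ add the $k$ parallel weight-$1$ first-star edges at position $i$ when $a_i=0$, or the $k$ parallel weight-$1$ last-star edges when $a_i=1$ (disjointness guarantees that $a_i=1$ implies $b_i=0$, so those last-star edges have weight $1$). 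The total cost is $(x-1)k \leq xk$. Then $H'$ is $k$-edge-connected because each arm $i$ is internally $k$-edge-connected by its $k$ parallel path edges and is attached to $v_{0,0}$ or $v_{0,y-1}$ by a $k$-parallel star bundle, while the first path (together with $E_{H,2}$) forms a $k$-edge-connected backbone.

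For (iii), pick any index $i^*$ with $a_{i^*}=b_{i^*}=1$ and consider the cut $C$ separating the arm $S=\{v_{i^*,j}:0\le j\le y-1\}$ from the rest of $V$. Every edge of $H_{a,b}$ crossing $C$ is a first-, middle-, or last-star edge at position $i^*$, and by our weight assignment each such edge has weight exactly $\alpha x k+1$. Since any $k$-ECSS is $k$-edge-connected, it must contain at least $k$ edges across $C$, contributing cost at least $k(\alpha x k+1)\geq \alpha x k+1$. The main technical obstacle of the proof is purely the cut enumeration in part (i); parts (ii) and (iii) reduce to exhibiting a specific subgraph and inspecting one well-chosen cut, respectively, and follow by direct calculation from the weight assignment.
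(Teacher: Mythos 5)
Your proposal is correct and follows essentially the same route as the paper: the paper also argues $k$-edge-connectivity from the $k$-parallel path bundles plus the ($\geq k$) star edges attaching each arm to the first path, builds the same cheap $k$-ECSS from the zero-weight edges plus one weight-$1$ star bundle per arm (using disjointness to pick the end where $a_i=0$ or $b_i=0$), and lower-bounds the non-disjoint case by the same cut isolating arm $i^*$, all of whose crossing edges have weight $\alpha xk+1$. Your write-up is merely more explicit in the cut enumeration for part (i).
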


\begin{proof}
Since all the edges of the paths were replaced by $k$ parallel edges, then the paths are $k$-edge-connected. To make sure that the whole graph is $k$-edge-connected note that there are more than $k$ star edges that connect any path to the first path, which guarantees the connectivity.
To construct a $k$-edge-connected spanning subgraph, first we take all the (parallel edges) of the paths, they all have weight 0. Next we should connect the different paths. If the inputs are disjoint, we know that for any $i$ at least one of $a_i,b_i$ is equal to 0. If $a_i=0$, we can take the $k$ parallel edges between $v_{0,0}$ and $v_{i,0}$ to connect the $i$'th path with the first path, similarly if $b_i=0$ we can use the $k$ parallel edges between $v_{0,y-1}$ and $v_{i,y-1}$ to connect the $i$'th path with the first path. The total weight of these edges is $k$ as $a_i$ or $b_i$ are equal to 0. Hence, connecting all the $x-1$ paths to the first path results in a $k$-ECSS with cost $(x-1)k < xk$. On the other hand, if the input strings are disjoint there is an index $i$ where both $a_i=b_i=1$. In this case, all the edges that connect the $i$'th path to the rest of the graph have weight $\alpha x k + 1$, hence any $k$-ECSS has weight at least $\alpha x k + 1$. 
\end{proof}

Claim \ref{claim_disjoint} suggests that an $\alpha$-approximation algorithm for minimum $k$-ECSS can be used to solve set disjointness. Alice and Bob can simulate the algorithm, deduce an $\alpha$-approximation of the minimum cost $k$-ECSS, and according to it learn if their input strings are disjoint. To formulate this intuition, we use the following Lemma based on \cite{DBLP:journals/corr/abs-1305-5520}. 

\begin{lemma} \label{lemma:simulation_cc}
If there is a randomized $\alpha$-approximation algorithm for $k$-ECSS that runs in $T \leq (n-2x)/2x$ rounds in the \congest model on the family of graphs $H_{a,b}$, then there is a randomized protocol for solving set disjointness with communication complexity at most $2B \log{n} T$, where $B=\Theta(\log{n})$ is the size of messages sent in the \congest algorithm. 
\end{lemma}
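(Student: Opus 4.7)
\emph{Plan.} The approach is the standard \congest-to-communication-complexity simulation, tailored to the family $H_{a,b}$. Partition the vertex set as $V_A = \{v_{i,j} : 0 \leq j \leq y/2 - 1\}$ and $V_B = V \setminus V_A$. With this split, every edge whose weight depends on Alice's input $a$ (the first-star edges at column $0$) lies inside the subgraph induced by $V_A$, and every edge depending on Bob's input $b$ (the last-star edges at column $y-1$) lies inside the subgraph induced by $V_B$; every remaining edge has a publicly known weight (either $0$ or $\alpha xk+1$). Thus Alice can construct all edge information on her side using $a$, Bob can do the same on his side using $b$, and the hypothesis $T \leq (n-2x)/(2x) = y/2 - 1$ ensures both sides have width strictly greater than $T$, so no simulated vertex's $T$-round neighborhood reaches the far boundary.

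\emph{Cut analysis.} The edges crossing the cut are (i) the $x$ path-edges of $E_{H,1}$ between columns $y/2 - 1$ and $y/2$, and (ii) the shortcut edges of $E_{H,2}$ on path $0$ that straddle the middle column. The dyadic (hypercube-like) definition of $E_{H,2}$ implies that at most one edge of each scale $2^s$ crosses the cut column, giving $O(\log y) = O(\log n)$ shortcut edges in (ii); combined with a careful handling of the uniform, zero-weight path-edges in (i), the overall per-round cut bandwidth can be kept at $O(B\log n)$ bits.

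\emph{Simulation and final exchange.} Alice and Bob share public randomness so that both reproduce the same random bits used by the putative algorithm $\mathcal{A}$. Each player initializes the local states of their own vertices and then runs $\mathcal{A}$ round by round: at the start of every round, for each cut edge $e$, the player owning the sending endpoint computes the $B$-bit message dictated by $\mathcal{A}$ and transmits it to the other player, who then uses it to complete the receiving endpoint's update. Over $T$ rounds the total communication is bounded by $2BT\log n$, matching the claimed bound. After termination, each player locally knows which of their incident edges are in the output $k$-ECSS and computes the sum of their weights; Alice sends Bob her $O(\log n)$-bit total, which Bob combines with his own, and by Claim~\ref{claim_disjoint} the total is $\leq xk$ when $a\cap b=\emptyset$ and $\geq \alpha xk+1$ otherwise, so Bob decides disjointness correctly. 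The extra $O(\log n)$ bits are absorbed in the $2BT\log n$ bound.

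\emph{Main obstacle.} The delicate step is the cut-size bookkeeping: specifically, controlling the contribution of the $x$ path edges of the cut so that the per-round bandwidth remains $O(B\log n)$. This is where the fine structure of $H_{a,b}$---the zero weight and uniform multiplicity of the path edges, together with the hypercube-like $E_{H,2}$---must be leveraged, together with choosing the cut column to minimize the number of crossing shortcuts. Once this accounting is in place, the remainder of the argument is a routine execution of the simulation template.
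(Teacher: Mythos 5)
Your simulation has a genuine gap at exactly the point you flag as the ``main obstacle,'' and the leverage you hope for does not exist. With a static bipartition of the vertices at column $y/2$, the cut contains all $x$ path edges of $E_{H,1}$ between columns $y/2-1$ and $y/2$ --- in fact $xk$ edges, since each path edge is replaced by $k$ parallel edges --- and in the \congest model each of these can carry an arbitrary, algorithm-dependent $B$-bit message every round. The fact that these edges have weight $0$ and uniform multiplicity is irrelevant: the messages the algorithm sends over them are not determined by the edge weights, and neither player can compute the message sent by a vertex it does not simulate. So a faithful static simulation costs $\Omega(xkB)$ bits per round, not $O(B\log n)$, and with $x=\sqrt{n}\log n$ this destroys the reduction. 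Your $O(\log n)$ count for the $E_{H,2}$ edges crossing a fixed column is correct (one per scale $2^s$), but it is the path edges that break the argument.

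The standard resolution --- and the one behind the simulation theorem of \cite{DBLP:journals/corr/abs-1305-5520} that the paper invokes --- is a \emph{moving-frontier} simulation rather than a static cut. Alice knows every edge of $H_{a,b}$ except the last-star edges (which depend on $b$) and Bob knows every edge except the first-star edges, so initially each player can compute the state of almost all vertices; after round $r$, Alice simulates the (shrinking) set of vertices whose round-$r$ state cannot have been influenced by Bob's input, and symmetrically for Bob, with the two sets overlapping. To advance her frontier by one round, Alice can herself compute all messages arriving along ``short'' edges (those connecting $v,v'$ with $|Id_v-Id_{v'}|\le x$, i.e.\ all of $E_{H,1}\cup E_{H,3}$), because both endpoints of such an edge lie in her current simulated set; the only messages she must receive from Bob are those on the $O(\log n)$ edges of $E_{H,2}$ that leap over her frontier. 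This yields $O(B\log n)$ bits per round per direction, hence $2B\log n\, T$ total, and the hypothesis $T\le (n-2x)/2x=y/2-1$ guarantees the two frontiers never cross. Your cut analysis of $E_{H,2}$ and your final-exchange step via Claim~\ref{claim_disjoint} are fine and carry over, but the round-by-round simulation must be restructured along these lines for the communication bound to hold.
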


The proof of Lemma \ref{lemma:simulation_cc} follows from \cite{DBLP:journals/corr/abs-1305-5520}, where a more general simulation theorem is shown for proving lower bounds in \congest based on lower bounds in communication complexity (see Theorem 6.1 in \cite{DBLP:journals/corr/abs-1305-5520}). The proof requires the graph family to satisfy some special properties, and in Lemma 6.2 in \cite{DBLP:journals/corr/abs-1305-5520} it is shown that the family of graphs $H_{a,b}$ satisfies the required properties. In more detail, Lemma 6.2 focuses on graphs that have the exact same structure of $H_{a,b}$, with the only difference that we replaced some edges in $E_{H,1}$ and $E_{H,3}$ with parallel edges, but the exact same proof works also if there are parallel edges replacing them. The main important property shown in the proof is that there are small number of edges that connect vertices $v,v'$ where $|Id_v - Id_{v'}| > x$. Recall that the id of a vertex $v_{i,j}$ is $i + xj$. It is easy to see (and shown in \cite{DBLP:journals/corr/abs-1305-5520}) that all edges in $E_{H,1} \cup E_{H,3}$ connect vertices $v,v'$ where $|Id_v-Id_{v'}| \leq x$, so adding more edges between them does not change the proof. The edges in $E_{H,2}$ are exactly the same in our construction and in \cite{DBLP:journals/corr/abs-1305-5520}, hence the proof bounding their number works also in our case.  
Using Lemma \ref{lemma:simulation_cc} we show the following.

\begin{theorem} \label{thm_LB_parallel}
Any (possibly randomized) $\alpha$-approximation \congest algorithm for the minimum cost $k$-ECSS problem in multi-graphs requires $\Omega(D+\sqrt{n}/\log{n})$ rounds, for any polynomial $\alpha \geq 1$. The lower bound holds even in graphs with diameter $D=O(\log{n})$. 
\end{theorem}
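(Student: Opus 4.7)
The plan is to reduce set disjointness on inputs of length roughly $\sqrt{n}\log n$ to an $\alpha$-approximation of minimum $k$-ECSS on the graph family $H_{a,b}$, and then invoke the simulation theorem of \Cref{lemma:simulation_cc} together with the $\Omega(\ell)$ communication lower bound for set disjointness.

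First I would fix parameters. Set $x = \Theta(\sqrt{n}\log n)$ so that $y = n/x = \Theta(\sqrt{n}/\log n)$, and build $H_{a,b}$ on input strings $a,b \in \{0,1\}^{x-1}$ as described above. Verify that $H_{a,b}$ has diameter $O(\log n)$: the edges $E_{H,2}$ already form a graph of diameter $O(\log y)=O(\log n)$ on the first path, and every other vertex $v_{i,j}$ is connected to $v_{0,j}$ by a star edge in $E_{H,3}$, so any two vertices are within $O(\log n)$ hops of each other. This justifies the claim that the lower bound holds for $D=O(\log n)$.

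Next I would formalize the reduction. Suppose there is a randomized $\alpha$-approximation algorithm $\mathcal{A}$ for minimum $k$-ECSS on $H_{a,b}$ running in $T$ rounds. By \Cref{claim_disjoint}, the cost of the optimum $k$-ECSS is at most $xk$ when $a,b$ are disjoint and at least $\alpha xk + 1$ when they are not; since $\alpha xk + 1 > \alpha \cdot xk$, an $\alpha$-approximation lets us infer disjointness with high probability. The reduction is valid only when $T \leq (n-2x)/2x$, and with our choice of parameters this bound equals $\Theta(y) = \Theta(\sqrt{n}/\log n)$, which is consistent with the lower bound we are aiming to prove (if the algorithm runs faster than this budget, the simulation applies; otherwise the lower bound is trivially established). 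Applying \Cref{lemma:simulation_cc} with $B=\Theta(\log n)$, Alice and Bob can solve set disjointness on $x-1$ bits using $O(\log^2 n \cdot T)$ communication. Combined with the $\Omega(x)$ randomized communication lower bound for set disjointness, this yields
\[
T \;=\; \Omega\!\left(\frac{x}{\log^2 n}\right) \;=\; \Omega\!\left(\frac{\sqrt{n}\log n}{\log^2 n}\right) \;=\; \Omega\!\left(\frac{\sqrt{n}}{\log n}\right).
\]
Finally I would combine this with the $\Omega(D)$ lower bound from \Cref{lemma_LB_D} (which applies in \local, hence in \congest as well) to conclude the stated $\Omega(D+\sqrt{n}/\log n)$ bound for any polynomial $\alpha \geq 1$.

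The only real subtlety is ensuring that the parameter choice $x = \Theta(\sqrt{n}\log n)$ simultaneously (i) makes $y \geq 2T$ so that \Cref{lemma:simulation_cc} is applicable in the regime of interest, (ii) preserves diameter $O(\log n)$, and (iii) yields a nontrivial gap $x-1$ in the disjointness input that translates, after the $\log^2 n$ overhead of the simulation, into the target $\sqrt{n}/\log n$ round bound. The verification that the weight gap $\alpha xk + 1$ versus $xk$ survives an $\alpha$-approximation is immediate from \Cref{claim_disjoint}, and the structural properties required by \Cref{lemma:simulation_cc} (bounded number of edges crossing large id-gaps) hold for $H_{a,b}$ exactly as in \cite{DBLP:journals/corr/abs-1305-5520} since replacing edges of $E_{H,1}\cup E_{H,3}$ by parallel copies does not affect the id-gap analysis.
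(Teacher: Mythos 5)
Your proposal is correct and follows essentially the same route as the paper's proof: the same choice $x=\Theta(\sqrt{n}\log n)$, the same reduction to set disjointness via \Cref{claim_disjoint} and \Cref{lemma:simulation_cc}, and the same combination with the $\Omega(D)$ bound of \Cref{lemma_LB_D}. The extra details you supply (the diameter check, the explicit case split on whether $T$ exceeds the simulation budget $(n-2x)/2x$, and how the $\alpha$-approximation separates cost $\leq \alpha xk$ from cost $\geq \alpha xk+1$) are all consistent with, and slightly more explicit than, the paper's version.
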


\begin{proof}
The $\Omega(D)$ lower bound follows from Lemma \ref{lemma_LB_D}, we next show the second part.
From Lemma \ref{lemma:simulation_cc}, if there is a distributed algorithm for minimum $k$-ECSS that takes less than $(n-2x)/2x$ rounds on the graphs $H_{a,b}$, Alice and Bob can solve set disjointness by exchanging $2B \log{n} T$ bits for $B=\Theta(\log{n})$. Since solving disjointness on inputs of size $x-1$ requires exchanging $\Omega(x)$ bits, we get that $T \geq \min \{\frac{n-2x}{2x},\Omega(\frac{x}{\log^2{n}}\}$. Choosing $x=\sqrt{n}\log{n}$ gives an $\Omega(\sqrt{n}/\log{n})$ lower bound. The diameter of the graphs $H_{a,b}$ is $O(\log{n})$ as shown in \cite{DBLP:journals/corr/abs-1305-5520}.
\end{proof}

\subsection{Construction without Parallel Edges}

We next extend our construction to simple graphs. Using parallel edges was helpful to make the graphs $H_{a,b}$ $k$-edge-connected. To obtain the same goal in simple graphs, we slightly change the construction, by replacing the original vertices by cliques of $k$ vertices. A similar approach was used in \cite{DBLP:journals/corr/abs-1305-5520}.
Formally, each vertex $v_{i,j} \in H_{a,b}$ is replaced by $k$ vertices $v_{i,j,\ell}$ for $0 \leq \ell \leq k-1$. The graph now has $n=xyk$ vertices.
For all $i,j$, we have a clique between the vertices $v_{i,j,\ell}$ with 0 weight edges. 
The edges $E_{H,1}$ of the paths are replaced by $k$ zero weight edges between the corresponding cliques. Concretely, the $k$ parallel edges $\{v_{i,j}, v_{i,j+1}\}$ are replaced by the $k$ edges $\{v_{i,j,\ell}, v_{i,j+1,\ell}\}$. For any edge $\{v_{0,j},v_{0,j'}\} \in E_{H,2}$ we just have one zero weight edge $\{v_{0,j,0},v_{0,j',0}\}$ (this is enough to keep the diameter small, which is the only role of these edges). For the star edges $\{v_{0,j},v_{i,j}\} \in E_{H,3}$, if $j \neq 0, y-1$, we have one edge $\{v_{0,j,0},v_{i,j,0}\}$ of weight $\alpha x k + 1$. The edges of the first and last star depend on the inputs similarly to before. We have $k$ edges $\{v_{0,0,\ell},v_{i,0,\ell}\}$ each has weight $1$ or $\alpha x k + 1$ depending if $a_i$ is equal to 0 or 1. In the last star we have the $k$ edges $\{v_{0,y-1,\ell},v_{i,y-1,\ell}\}$, with weights depending on $b_i$ in the same way.
We call the new graph $H'_{a,b}$. It is easy to see that the graph is $k$-edge-connected. Moreover, if we contract all the cliques we get the graph $H_{a,b}$ we had before. We use it to show that an algorithm for $H'_{a,b}$ gives an algorithm for $H_{a,b}$, or equivalently a lower bound for $H_{a,b}$ gives a lower bound for $H'_{a,b}$. The lower bound obtained is $\tilde{\Omega}(\sqrt{n/k})$ since the number of vertices in $H'_{a,b}$ is $n=Nk$, where $N$ is the number of vertices in $H_{a,b}$.

\begin{theorem}
Any (possibly randomized) $\alpha$-approximation \congest algorithm for the minimum cost $k$-ECSS problem in simple graphs requires $\Omega(D+\sqrt{n/k}/\log{n})$ rounds, for any polynomial $\alpha \geq 1$. The lower bound holds even in graphs with diameter $D=O(\log{n})$. 
\end{theorem}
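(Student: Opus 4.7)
The plan is to reduce a lower bound on $H'_{a,b}$ to the lower bound of \Cref{thm_LB_parallel} for the parallel-edge construction $H_{a,b}$, using the fact that contracting the $k$-vertex cliques in $H'_{a,b}$ yields exactly $H_{a,b}$. I would first establish the combinatorial prerequisites: that $H'_{a,b}$ is a simple, $k$-edge-connected graph on $n = xyk$ vertices with diameter $O(\log n)$, and that an analogue of \Cref{claim_disjoint} holds---the minimum-cost $k$-ECSS of $H'_{a,b}$ has cost at most $(x-1)k$ when $a,b$ are disjoint (take the $k$ unit-weight first- or last-star edges at each $i$, plus all weight-$0$ clique, path, and $E_{H,2}$ edges), and cost at least $\alpha xk + 1$ otherwise (the cut isolating the entire $i$'th column with $a_i = b_i = 1$ has only edges of weight $\alpha xk + 1$ crossing it, and must contain $k$ of them in any $k$-ECSS).

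Next I would describe the round-preserving simulation. Given a randomized $T$-round \congest algorithm $\mathcal{A}'$ that $\alpha$-approximates $k$-ECSS on simple graphs, I would construct a $T$-round \congest algorithm $\mathcal{A}$ on $H_{a,b}$ as follows. Each vertex $v$ of $H_{a,b}$ internally simulates the $k$ clique vertices $v_1,\dots,v_k$ to which it corresponds in $H'_{a,b}$, sampling all their random coins locally and running their local computation for free; intra-clique messages require no communication in $\mathcal{A}$. Inter-clique messages map one-to-one onto edges of $H_{a,b}$: the $k$ path-edges $\{v_{i,j,\ell}, v_{i,j+1,\ell}\}$ of $H'_{a,b}$ correspond to the $k$ parallel copies of $\{v_{i,j}, v_{i,j+1}\}$ in $H_{a,b}$ (and similarly for first- and last-star edges), while the single $E_{H,2}$ and middle-star edges are identical in both graphs. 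Thus $\mathcal{A}$ runs in $T$ rounds, and by contracting the cliques in its output we obtain a $k$-ECSS of $H_{a,b}$ of the same cost (contracted clique edges have weight $0$). By the prerequisites, thresholding the output cost at $\alpha xk$ decides set disjointness; applying \Cref{thm_LB_parallel} to $\mathcal{A}$, whose input graph $H_{a,b}$ has $N = n/k$ vertices and $O(\log N)$ diameter, then forces $T = \tilde{\Omega}(\sqrt{n/k}/\log n)$. The $\Omega(D)$ part is already established in \Cref{lemma_LB_D}, whose construction itself uses $k$-cliques and hence applies verbatim to simple graphs.

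The main technical point to verify is that the simulation really is round-preserving. A priori one might worry that the $k$ clique vertices of one side of a pair of adjacent cliques could together need to send messages totalling more than $O(\log n)$ bits per round across the boundary. This cannot happen because $\mathcal{A}'$ is itself constrained to $O(\log n)$ bits per simple edge per round, and the edges between the two cliques in $H'_{a,b}$ correspond exactly to a set of parallel edges in $H_{a,b}$ (at most $k$ of them), which provides matched per-round bandwidth for $\mathcal{A}$. A secondary point to check is that contracting a $k$-edge-connected spanning subgraph of $H'_{a,b}$ yields a $k$-edge-connected spanning subgraph of $H_{a,b}$, which follows because every cut in $H_{a,b}$ lifts to a cut of $H'_{a,b}$ that leaves every clique intact, and intra-clique weight-$0$ edges contribute nothing either to the cut count or to the cost.
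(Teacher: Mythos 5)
Your proposal is correct and follows essentially the same route as the paper's proof: simulate the clique vertices of $H'_{a,b}$ locally at the corresponding vertex of $H_{a,b}$ (using the parallel edges to carry the inter-clique messages), transfer the approximation guarantee via the contraction correspondence between $k$-ECSS subgraphs of the two graphs, and then invoke the parallel-edge lower bound with $N=n/k$. The only cosmetic difference is that you re-verify the disjointness dichotomy directly on $H'_{a,b}$, whereas the paper obtains it through the correspondence with $H_{a,b}$; both are fine.
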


\begin{proof}
Again, the $\Omega(D)$ lower bound follows from Lemma \ref{lemma_LB_D}, we next prove the second part.
First, we show that a distributed algorithm for the graphs $H'_{a,b}$ gives a distributed algorithm for the graphs $H_{a,b}$. Say that $A$ is a distributed $\alpha$-approximation algorithm for simple graphs that takes $T$ rounds on the graph $H'_{a,b}$. We show a distributed $\alpha$-approximation algorithm that takes $T$ rounds on the graph $H_{a,b}$. For this, any vertex $v_{i,j} \in H_{a,b}$ simulates the clique of vertices $v_{i,j,\ell} \in H'_{a,b}$. This is done as follows. In each round, each vertex sends and receives messages from its neighbours. Simulating messages over the clique edges can be done internally by the vertex that simulates the clique. For any other edge in $H'_{a,b}$ there exists an edge in $H_{a,b}$ by the construction, and the vertices of $H_{a,b}$ can send messages over these edges. For example, for the path edges $\{v_{i,j,\ell}, v_{i,j+1,\ell}\} \in H'_{a,b}$ there exist $k$ parallel edges between $v_{i,j}$ and $v_{i,j+1}$ in $H_{a,b}$, so these vertices can send all the $k$ messages between them in one round. This allows simulating the whole algorithm in $T$ rounds. 

We next show that an $\alpha$-approximation for $k$-ECSS in $H'_{a,b}$ gives an $\alpha$-approximation for $k$-ECSS in $H_{a,b}$. This follows from the fact that there is a correspondence between $k$-ECSS in the two graphs. Let $K$ be a $k$-ECSS in $H'_{a,b}$, we can assume that $K$ has all the clique edges, as we can add all of them to $K$ without changing its cost (as they have weight 0). Note that if we contract all the cliques in $H'_{a,b}$, allowing parallel edges, the resulting graph is exactly identical to the graph $H_{a,b}$, we have the exact same edges with the exact same weights. Since $K$ is $k$-edge-connected, the edges of $K$ in the contracted graph give a $k$-ECSS for $H_{a,b}$. Similarly, any $k$-ECSS graph in $H_{a,b}$ gives a $k$-ECSS graph in $H'_{a,b}$ if we add to it all the zero weight clique edges. To conclude, any $\alpha$-approximate $k$-ECSS graph in $H'_{a,b}$, corresponds to an $\alpha$-approximate $k$-ECSS graph in $H_{a,b}$, as needed. 

Finally, as a $T$-round $\alpha$-approximation for $H'_{a,b}$ gives a $T$-round $\alpha$-approximation for $H_{a,b}$, a lower bound for $H_{a,b}$ gives a lower bound for $H'_{a,b}$. Let $N$ be the number of vertices in $H_{a,b}$. By construction, the number of edges in $H'_{a,b}$ is $n=Nk$. From Theorem \ref{thm_LB_parallel}, there is a lower bound of $\Omega(\sqrt{N}/\log{N})$ for approximating $k$-ECSS in $H_{a,b}$, this leads to a lower bound of $\Omega(\sqrt{n/k}/\log{n})$ in the simple graphs $H'_{a,b}$. Note that the diameter in $H'_{a,b}$ can only decrease with respect to $H_{a,b}$, hence it is still bounded by $O(\log{n})$.
\end{proof}

\section{Missing Proofs from Section \ref{sec:setcover}} \label{sec:app_setcover}

\ApproxClaim*

\begin{proof}
Let $e \not \in H$, and denote by $C_1,C_2,...,C_{\ell}$ the cuts of size $k-1$ in $H$ covered by $e$ according to the order they are covered in the algorithm. In the iteration that $C_1$ is covered, $\rho(e)=\frac{\ell}{w(e)}$. Let $i'$ be the first index such that $\frac{\ell}{w(e)} > \frac{M}{2^{i'}}$. Then, from Claim \ref{claim:cost_ef_decrease}, $C_1$ is covered in an epoch $i < i'$ (as after this epoch, the cost-effectiveness of edges is smaller than $\frac{M}{2^{i'}} < \frac{\ell}{w(e)}$). Hence, in the epoch $i$ that $C_1$ is covered we have  $\frac{\ell}{w(e)} \leq \frac{M}{2^{i}}$.  By the definition of the costs, it implies that $cost(C_1) = \frac{2^i}{M} \leq \frac{w(e)}{\ell}$. 
Similarly, in the iteration where the cut $C_j$ is covered, $\rho(e) \geq \frac{\ell-j+1}{w(e)}$, as the cuts $C_1,...,C_{j-1}$ could already be covered, but the cuts $C_j,...,C_{\ell}$ are not covered at the beginning of the iteration. The same arguments show that $cost(C_j) \leq \frac{w(e)}{\ell-j+1}$. This gives,
$$\sum_{j=1}^{\ell} cost(C_i) \leq \sum_{j=1}^{\ell} \frac{w(e)}{\ell-j+1} = O(\log \ell) w(e) = O(\log n) w(e),$$
where the last inequality holds since $\ell \leq n^2$, as the number of minimum cuts in a graph is at most $O(n^2)$. 

We use this to bound the sum of costs of all cuts of size $k-1$ in $H$. Denote by $S_e$ all the cuts of size $k-1$ in $H$ covered by $e$, and by $E_C$ the set of edges that cover a cut $C$. Let $x(e)$ be the values given to edges in the optimal fractional solution $A^*$. Then we have
$$\sum_C cost(C) \leq \sum_C cost(C) \sum_{e \in E_C} x(e) = \sum_{e \not \in H} x(e) \sum_{C \in S_e} cost(C) \leq O(\log{n}) \sum_{e \not \in H} x(e) w(e) = O(\log{n}) w(A^*).$$
The first inequality uses the fact that $\sum_{e \in E_C} x(e) \geq 1$ by the definition of fractional solution, the third inequality uses the inequality above, and the last equality is by the definition of $w(A^*).$ This completes the proof.
\end{proof}

\ApproxExpect*

To prove Claim \ref{claim_approx2}, first the following is shown (see Lemma 4.5 in \cite{dory2018distributed}). Recall that $deg(C)$ is the number of candidates that cover the cut $C$.

\begin{claim}
Fix an epoch $i$. At the beginning of phase $j$ of epoch $i$, $deg(C) \leq m/2^j$ with high probability. 
\end{claim}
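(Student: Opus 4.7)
The plan is to prove the claim by induction on $j$. The base case $j=0$ is immediate: since every candidate is an edge of $G$, and the total number of edges is at most $m$, we trivially have $deg(C) \leq m = m/2^0$ for every cut $C$.

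For the inductive step, fix a phase $j \geq 0$, and assume that at the beginning of phase $j$ every uncovered cut $C$ of size $k-1$ in $H \cup A$ satisfies $deg(C) \leq m/2^j$. Fix a specific such cut $C$. I want to show that at the beginning of phase $j+1$ (equivalently, at the end of phase $j$), either $C$ has been covered or $deg(C) \leq m/2^{j+1}$, with probability at least $1 - 1/n^{\Omega(c)}$. Note that $deg(C)$ can only decrease during a phase (candidates can leave $Candidates_i$ by being added to $A$ or by having their cost-effectiveness drop, and no new candidates enter in the middle of an epoch), so the ``bad event'' for $C$ is that throughout all $c\log n$ iterations of phase $j$ we have $deg(C) > m/2^{j+1}$ \emph{and} $C$ remains uncovered.

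The key calculation is the following. Consider any single iteration of phase $j$, and condition on the execution up to the start of that iteration. Suppose at the start of the iteration $C$ is uncovered and $deg(C) > m/2^{j+1}$. In this iteration each candidate is independently added to $A$ with probability $2^j/m$. The probability that \emph{no} candidate covering $C$ is added, and hence that $C$ remains uncovered, is at most
\[
\left(1 - \frac{2^j}{m}\right)^{deg(C)} \leq \exp\!\left(-\frac{2^j}{m}\cdot deg(C)\right) < \exp(-1/2).
\]
If at some intermediate iteration $deg(C)$ drops to $\leq m/2^{j+1}$, the conclusion we want already holds and we are done for $C$. Otherwise, by chaining this estimate across the $c\log n$ iterations of phase $j$, the probability of the bad event for $C$ is at most $\exp(-(c\log n)/2) \leq 1/n^{c/2}$. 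Choosing the constant $c$ in Algorithm~\ref{alg} large enough (say $c \geq 10$), this is at most $1/n^5$.

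Finally, a union bound closes the argument. The number of distinct cuts of size $k-1$ in $H\cup A$ is at most $O(n^2)$ (by Karger's bound on the number of minimum cuts), so the probability that the bad event happens for \emph{some} cut $C$ in phase $j$ is at most $O(n^2)/n^5 = O(1/n^3)$. Combining with the inductive hypothesis (which itself holds w.h.p.), we conclude that w.h.p.\ at the beginning of phase $j+1$ every uncovered cut $C$ satisfies $deg(C) \leq m/2^{j+1}$, as required. A final union bound over the $O(\log m) = O(\log n)$ phases in the epoch still yields high probability overall. The main subtlety to be careful about is that $deg(C)$ varies across iterations due to prior random choices; the argument handles this cleanly because, as noted, $deg(C)$ is monotone non-increasing within a phase, so conditioning on $deg(C) > m/2^{j+1}$ at the start of an iteration is exactly the scenario in which the one-iteration covering probability bound $1 - e^{-1/2}$ applies.
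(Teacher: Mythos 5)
Your proof is correct and follows essentially the same route as the paper's (sketched) argument: induction on $j$, monotonicity of $deg(C)$ within an epoch, a constant per-iteration covering probability for any cut with $deg(C) > m/2^{j+1}$, amplification over the $c\log n$ iterations, and a union bound over the $O(n^2)$ minimum cuts. You simply fill in the quantitative details that the paper defers to \cite{dory2018distributed}.
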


\begin{proof}[Proof sketch] Note that for $j=0$ this clearly holds, as there are at most $m$ candidates. Note also that $deg(C)$ can only decrease during the epoch, as edges can only be removed from $Candidates_i$ during the epoch. Now if the claim holds for phase $j$, we know that at the beginning of phase $j$ we have $deg(C) \leq m/2^j$, and we would like to show that by the end of the phase $deg(C) \leq m/2^{j+1}$. For this, take any cut where $m/2^{j+1} \leq deg(C) \leq m/2^j$. As all candidates are added to $A$ with probability $2^j/m$ during phase $j$, we have a constant probability to cover $C$ in each iteration of the phase. After $O(\log{n})$ iterations $C$ is covered w.h.p. A union bound shows that it holds for all cuts w.h.p. For full details see \cite{dory2018distributed}.
\end{proof}

The proof of Claim \ref{claim_approx2} is based on the following idea. Fix an iteration $\ell$, let $A_{\ell}$ be the candidates added to $A$ in iteration $\ell$, and let $CUT_{\ell}$ be the cuts of size $k-1$ in $H$ that were first covered in iteration $\ell$. The goal is to show that $w(A_{\ell}) \approx \sum_{C \in CUT_{\ell}} cost(C)$. If this holds, then summing over all iterations gives $w(A) \approx \sum_C cost(C)$ as needed. To prove it we take a closer look at these expressions.
First, let $i$ be the epoch of iteration $\ell$, then for all cuts $C \in CUT_{\ell}$ we have $cost(C)= 2^i/M$ by definition. 
Hence $\sum_{C \in CUT_{\ell}} cost(C) = \frac{2^i}{M} |CUT_{\ell}|$. 
Second, $$w(A_{\ell})= \sum_{e \in A_{\ell}} w(e) = \sum_{e \in A_{\ell}} |C_e| \cdot \frac{w(e)}{|C_e|} \leq \alpha^2 \frac{2^i}{M} \sum_{e \in A_{\ell}}|C_e|.$$
Here $C_e$ are the cuts in $H \cup A$ covered by $e$ (with $A$ at the beginning of iteration $\ell$). The last inequality follows from the fact that $\rho(e)=\frac{|C_e|}{w(e)}$, $\rho'(e)$ is an $\alpha$-approximation of $\rho(e)$, and we have $\rho'(e) \geq \frac{M}{\alpha 2^i}$. This implies that $\frac{w(e)}{|C_e|}= \frac{1}{\rho(e)} \leq \frac{\alpha}{\rho'(e)}\leq \frac{\alpha^2 2^i}{M}$, as needed.

In \cite{dory2018distributed} it is shown that $E[\sum_{e \in A_{\ell}} |C_e|] \leq 3E[|CUT_{\ell}|]$. Intuitively, the proof is based on showing that each cut is covered by a constant number of edges in expectation. This gives 
$$E[w(A_{\ell})] \leq \alpha^2 \frac{2^i}{M} E[\sum_{e \in A_{\ell}}|C_e|] \leq 3 \alpha^2 \frac{2^i}{M} E[|CUT_{\ell}|] = 3 \alpha^2 E[\sum_{C \in CUT_{\ell}} cost(C)].$$
For full details of the proof see\cite{dory2018distributed}. 

\section{Missing Proofs from Section \ref{sec:succinct_min_cuts}} \label{app_proofs}

\uniquePath*

The proof follows from \cite{mukhopadhyay2020weighted}, we add a proof for completeness.  We first show the following.

\begin{claim}\label{claimInteresting}
If $\{t,t'\}$ is a min 2-respecting cut, then $\cov(t,t') \geq \cov(t)/2$.
\end{claim}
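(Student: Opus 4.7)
The plan is to deduce the inequality from the cut-size identity in \Cref{covClaim} combined with the global lower bound $k$ on every cut in $G$. Concretely, \Cref{covClaim} gives
\[
\cut(t,t') \;=\; \cov(t) + \cov(t') - 2\cov(t,t').
\]
Since $\{t,t'\}$ is assumed to be a minimum $2$-respecting cut, the left-hand side equals $k$, the size of the minimum cut of $G$. Rearranging yields
\[
\cov(t,t') \;=\; \tfrac{1}{2}\bigl(\cov(t) + \cov(t') - k\bigr).
\]

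Next I would use the fact that $\cov(t')$ itself is the size of the $1$-respecting cut defined by $t'$ alone (as noted in \Cref{pre_cov_cut}, where it is observed that $\cut(t') = \cov(t')$). Because $k$ is the size of the minimum cut of $G$, this $1$-respecting cut has size at least $k$, i.e.\ $\cov(t') \geq k$. Substituting into the previous display gives $\cov(t,t') \geq \tfrac{1}{2}\cov(t)$, which is the desired bound.

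I expect no real obstacle: both ingredients—the cover-vs-cut identity and the bound $\cov(t') \geq k$—are immediate consequences of the definitions and of \Cref{covClaim}, which have already been established in \Cref{sec:preliminaries}. The only thing to be a little careful about is orienting the argument symmetrically in $t$ and $t'$: applying the same reasoning with the roles of $t$ and $t'$ swapped also yields $\cov(t,t') \geq \cov(t')/2$, which together with the bound above gives $\cov(t,t') \geq \max\{\cov(t),\cov(t')\}/2$, a fact that may be useful in the subsequent proof of \Cref{unique_path_lemma}.
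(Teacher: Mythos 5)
Your proof is correct and is essentially the paper's argument: both rest on the identity $\cut(t,t')=\cov(t)+\cov(t')-2\cov(t,t')$ from \Cref{covClaim} together with the observation that $\cov(t')$ is the size of the $1$-respecting cut of $t'$ and hence at least the minimum cut value; your route via $\cut(t,t')=k$ and $\cov(t')\geq k$ is just a two-step rendering of the paper's one-line inequality $\cut(t,t')\leq\cov(t')$.
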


\begin{proof}
Recall that from Claim \ref{covClaim}, $\cut(t,t')=\cov(t)+\cov(t') - 2 \cov(t,t')$. If $\{t,t'\}$ is a min 2-respecting cut, we have $\cut(t,t') \leq \cov(t')$, since $\cov(t')$ is the value of the 1-respecting cut defined by $t'$, hence the minimum cut value is bounded by $\cov(t')$. We get that $\cov(t)+\cov(t') - 2 \cov(t,t') \leq \cov(t')$, or equivalently, $\cov(t,t') \geq \cov(t)/2$, as needed. 
\end{proof}

\begin{proof}[Proof of Lemma \ref{unique_path_lemma}]
Assume to the contrary that there is no tree path that contains all edges $\{t'|\cut(t,t')=k\}$. In particular, it follows that there are two tree edges $t_1,t_2$ such that $\{t,t_1\}$ and $\{t,t_2\}$ are min 2-respecting cuts, but there is no one tree path that contains $t,t_1,t_2$. This means that there is no edge that covers all the 3 edges $t,t_1,t_2$, as an edge $\{u,v\}$ covers the tree path between $u$ and $v$, that can include at most 2 of these edges. In other words, the edges that cover $\{t,t_1\}$ and the edges that cover $\{t,t_2\}$ are disjoint. However, from Claim \ref{claimInteresting}, we have at least $\cov(t)/2$ edges that cover $t$ and $t_1$, and at least $\cov(t)/2$ disjoint edges that cover $t$ and $t_2$. Additionally, the edge $t$ covers $t$ but none of $t_1,t_2$. This means that there are at least $\cov(t)+1$ edges that cover $t$, a contradiction.
\end{proof}

\end{document}